\documentclass[11pt]{article} 
\usepackage[letterpaper,margin=1in]{geometry} 

\usepackage{setspace}
\usepackage[english]{babel}

\usepackage{amsthm} 
\usepackage{mathrsfs}
\usepackage{amssymb}
\usepackage{thmtools}      
\usepackage{amsmath}
\usepackage{cancel}
\usepackage[all, pdf]{xy}
\usepackage{geometry}
\usepackage{graphicx}
\usepackage{tikz}
\usetikzlibrary{arrows.meta, positioning, fit, calc}
\usetikzlibrary{positioning,arrows.meta,decorations.pathmorphing,shapes.geometric,shapes.arrows}
\usepackage{adjustbox} 
\usepackage{caption} 
\usepackage{diagbox} 
\usepackage{pgf}
\usepackage{pgfplots}
\usepackage{comment}
\usepackage{xcolor}
\usepackage[bookmarks=true]{hyperref} 
\usepackage{bookmark}

\theoremstyle{definition} 

\sffamily
\newtheorem{thm}{Theorem}[section]

\newtheorem{cor}[thm]{Corollary}
\newtheorem{lem}[thm]{Lemma}
\newtheorem{prop}[thm]{Proposition}
\newtheorem*{prop*}{Proposition} 
\newtheorem{defn}[thm]{Definition}
\newtheorem{fact}[thm]{Fact}

\def\B#1{\mathbb #1}
\def\C#1{\mathcal #1}

\def\san#1{\textsf{#1}}
\def\F#1{\textbf{#1}}

\def\Pow{\mathscr{P}}
\def\RT{\text{RTC}}
\def\TC{\text{TC}}
\newcommand{\bisim}{\mathrel{\raisebox{0.2ex}{$\underline{\leftrightarrow}$}}}

\def\PP{\Pow^+}
\def\PPX{\Pow^+(\C X)}
\def\PPA{\Pow^+(\C A)}

\def\E[#1]{\exists^{#1}}
\def\A[#1]{\forall^{#1}}

\newcommand{\circsurd}{\tikz[baseline=(char.base)]{
  \node[draw,circle,inner sep=0.4pt,line width=0.3pt] (char) {$\surd$};
}} 

\usepackage{layout}      
\usepackage{printlen}    
\uselengthunit{in}

\begin{document}


\begin{titlepage}
  \thispagestyle{empty} 

  \vspace*{1.6cm}

  \begin{center}
    {\LARGE\bfseries Uniform Interpolation in Distributed Knowledge Modal Logics}\\[1.5em]

    {\large
      Kexu Wang\\
      \textit{Jinan University, Guangzhou, China}\\
      \texttt{wangkexuphy@gmail.com}
    }\\[1em]

    {\large
      Liangda Fang\\
      \textit{Jinan University, Guangzhou, China}\\
      \texttt{fangld@jnu.edu.cn}
    }

    \vspace{1.5em}

    \begin{minipage}{0.86\textwidth}
      \small
      \textbf{Abstract.} 
Uniform interpolation is the property that, for any formula and set of atoms, there exists the strongest consequence omitting those atoms. It plays a central role in knowledge representation and reasoning tasks such as knowledge update and information hiding. This paper studies the uniform interpolation property in epistemic modal logics with \emph{distributed knowledge}, which captures agents’ collective reasoning abilities. Building on the bisimulation-quantifier perspective, we extend the canonical-formula and literal-elimination framework of Fang, Liu, and van Ditmarsch to distributed knowledge settings and introduce the concept of collective $p$-bisimulation. We show that, for distributed knowledge modal logics $\textsf{K}_n\textbf{D}$, $\textsf{D}_n\textbf{D}$, and $\textsf{T}_n\textbf{D}$, every satisfiable canonical formula's uniform interpolant omitting an atom $p$ is exactly its remainder of eliminating $p$. Then, we provide a finer analysis for the transitive and Euclidean systems $\textsf{K45}_n\textbf{D}$, $\textsf{KD45}_n\textbf{D}$, and $\textsf{S5}_n\textbf{D}$, and prove that every formula of modal depth $k + 1$ has a uniform interpolant of modal depth $2 k + 1$. Thus, we prove the uniform interpolation property in all the six distributed knowledge modal logics. Finally, we generalize the results to some variants with propositional common knowledge and discuss the method's limitations.
%
    \end{minipage}

  \end{center}

  \vfill
\end{titlepage}

%

\newpage
\setcounter{page}{0}

\pdfbookmark{\contentsname}{}
\tableofcontents

\newpage
\setcounter{page}{1}
\section{Introduction}
A logic has the \emph{uniform interpolation} property if for every formula and every set of atoms, there exists a formula that omits those atoms while preserving exactly the same consequences over the remaining vocabulary. This property is closely tied to propositional quantifier elimination and forgetting in knowledge representation \cite{eiter2019brief, QIU2024104077}, and it underpins knowledge update \cite{Liu2011Progression, feng2023knowledge}, information hiding \cite{Halpern2004Anonymity}, and ontology reuses \cite{Koopmann2015Ontology}. Uniform interpolation is well understood in propositional logic and multi-agent epistemic modal logics. However, its status in the logics with \emph{distributed knowledge}—which allow reasoning about what a set of agents collectively knows—has remained unclear. In this paper, we investigate uniform interpolation in epistemic modal logics with distributed knowledge.

Uniform interpolation originates in Henkin’s work \cite{Henkin1963} as a strengthened version of Craig’s interpolation. While a Craig’s interpolant depends on both an antecedent and a consequent, a uniform interpolant depends only on the antecedent and the set of atoms to retain, so uniform interpolation implies Craig’s interpolation. Over the decades, research on uniform interpolation has followed various lines. Researchers have adopted proof-theoretic or algebraic techniques, seeking constructive procedures or algebraic conditions to ensure the uniform interpolation property. Another influential line of research seeks to ensure uniform interpolation through bisimulation quantifiers. It was independently initiated by Ghilardi and Zawadowski \cite{ghilardi1995undefinability} and by Visser \cite{visser1996uniform}. The intuitive idea is an extension of the classical bisimulation: Two models are $p$-bisimilar if they are bisimilar except possibly on an atom $p$. A logic is $p$-bisimulation invariant if every two $p$-bisimilar models satisfy the same formulas of this logic without $p$ occurring. For every formula $\phi$, we intend to find a formula $\psi$ without $p$ occurring such that a model $M'$ satisfies $\psi$ if and only if, there exists a model $M$ of $\phi$ such that $M$ and $M'$ are $p$--bisimilar. This $\psi$ is called a result of forgetting $p$. Observe that $\psi$ exactly serves as a uniform interpolant of $\phi$, and one may write $\psi$ in a quantified form to emphasize that this $p$-bisimulation behaves like an existential quantifier. This bisimulation-quantifier line of research is also what our paper follows.

Historically, all these research lines, proof-theoretic, algebraic, or bisimulation-quantifier, have offered complementary insights and jointly shaped today’s understanding of uniform interpolation. Pitts \cite{Pitts1992} gave the first syntactic proof for intuitionistic propositional logic, followed by Shavrukov’s result for G{\"o}del-L{\"o}b logic (\san{GL}) \cite{shavrukov1993subalgebras}. Ghilardi \cite{GHILARDI1995189} provided an algebraic proof for the basic modal logic $\mathsf K$. Visser gave the bisimulation-quantifier proofs for both $\san K$ and \san{GL} \cite{visser1996uniform}. While alongside Pitt's research line, B{\'i}lkov{\'a} \cite{Bilkova2007Uniform} provided a proof-theoretic proof for uniform interpolation in $\san{K}$ and $\san{T}$. In contrast, $\san{K4}$ and $\san{S4}$ are known to have no uniform interpolation property \cite{ghilardi1995undefinability, Bilkova2007Uniform}. It is noteworthy that $\san{K4}$ and $\san{S4}$ are classic examples that enjoy Craig's interpolation but have no uniform interpolation property. Wolter \cite{Wolter96} showed uniform interpolation in \san{S5} and proposed a syntactic method, the fusion of modal logics, which preserves the uniform interpolation property, and it lifts the property from single-agent to multi-agent settings. Pattinson \cite{pattinson2013logic} proved that any multi-agent rank-1 modal logics (namely, axiomatized by depth-one formulas, such as $\mathsf{K}_n$ and $\mathsf{D}_n$) have uniform interpolation. Studer \cite{STUDER2009611} showed the failure of uniform interpolation in the logics with common knowledge, $\san{K}_n \F{C}$ and $\san{K4}_n \F{C}$, by disproving their Beth property. D'Agostino and Hollenberg showed that $\mu$-calculus, an extension of $\san K$ with a fixed point operator, has uniform interpolation \cite{d2000logical}. Building on the disjunctive-normal-form technology of Janin and Walukiewicz \cite{Janin1995Aut}, D'Agostino and Lenzi \cite{DAGOSTINO2006256} first proposed the method of literal elimination and gave explicit constructions of uniform interpolants for disjunctive $\mu$-calculus. Moss \cite{moss2007finite} developed the notion of \emph{canonical formulas}, and showed that every formula can be represented as a disjunction of a unique set of canonical formulas. Then, Fang, Liu, and van Ditmarsh \cite{fang2019forgetting} proposed the literal elmination on canonical formulas and showed comprehensive bisimulation-quantifier conclusions for multi-agent modal logics $\san{K}_n$, $\san{D}_n$, $\san{T}_n$, $\san{K45}_n$, $\san{KD45}_n$, and $ \san{S5}_n$, as well as their propositional common knowledge versions. In parallel, the non-bisimulation-quantifier lines are also fruitful: Seifan, Schr{\"o}der, and Pattinson \cite{Seifan2017Coalgebraic} proposed the condition for uniform interpolation in coalgebraic modal logics, namely, rank-1 modal logics and provided new results more than $\san{K}$ and $\san{D}$. Iemhoff \cite{Iemhoff2019Uniform} linked the uniform interpolation property to the existence of sequent calculi. van der Giessen, Jalali, and Kuznets \cite{vanderGiessen2025Nested} presented a new proof-theoretic result for $\mathsf{K}$, $\mathsf{D}$, $\mathsf{T}$, and $\mathsf{S5}$ by generalized sequent caculi. \cite{Feree2024Coq} realized the uniform interpolation proofs for  $\san{K}, \san{GL}, \san{iSL}$ via the interactive theorem prover Coq. A stronger variant of uniform interpolation, uniform Lyndon interpolation, has been explored recently in intuitionistic logics \cite{Tabatabai2022iM, Tabatabai2024Conditional} and in the extensions of $\san{K5}$ \cite{Giessen2023K5} using proof-theoretic methods. Despite this rich development, an important notion in multi-agent modal logic has not been delved into, that is, distributed knowledge.

Distributed knowledge characterizes the agents' cooperation in multi-agent modal logic. A new modality $\F D_{\C B}$ is introduced for every nonempty set $\C B$ of agents, which characterizes ``the agents in $\C B$ collectively know that ...'', and the intersection of the relations of agents in $\C B$ acts as its accessibility relation. This operator is monotone, that is, $\F{D}_{\C B} \phi \to \F{D}_{\C C} \phi$ if $\C C \subseteq \C B$. The notion of distributed knowledge dates back to Halpern and Moses’s work on distributed systems \cite{Halpern1990}. Then, \cite{Fagin1995reason} provided the conclusions of soundness and completeness for distributed knowledge modal logics. However, the classical bisimulation no longer works in the scenario of distributed knowledge. Roelofsen \cite{roelofsen2007distributed} introduced the concept of \emph{collective bisimulation} that characterizes the comparison of models and the expressiveness of distributed knowledge. Then, researchers have also studied how distributed knowledge interacts with the other agent operators. W{\'a}ng and {\AA}gotnes \cite{Wang2011Public} extended public announcement logic with distributed knowledge. Then, {\AA}gotnes and W{\'a}ng \cite{AGOTNES20171} further defined resolution operators and axiomatized logics combining distributed knowledge and common knowledge. Then, {\AA}gotnes, Alechina, and Galimullin \cite{aagotnes2022logics} studied group-announcement operators interacting with distributed knowledge. Apart from the Kripke semantics, \cite{Ditmarsch2022Knowledge} and \cite{Goubault2023Semi} introduced simplicial semantics for epistemic logic with distributed knowledge and provided their axiomatization. The work  \cite{castaneda2023communication} has further explored distributed knowledge through dynamic and topological perspectives.
On the applied side, distributed knowledge has become a key tool in multi-agent systems: it is used in verification tools such as MCMAS \cite{lomuscio2017mcmas} in distributed decision making. Murai and Sano have established their proof-theoretic proof for Craig interpolation in epistemic logics $\san{K}_n$, $\san{T}_n$, $\san{K4}_n$, and $\san{S4}_n$ with distributed knowledge \cite{Murai2020CraigDistr}, and intuitionistic epistemic logics extended with distributed knowledge are also developed \cite{su2021artemov, murai2022intuitionistic}. However, the work of Murai and Sano leaves the cases of the distributed knowledge $\san{K45}_n$, $\san{KD45}_n$, $\san{S5}_n$ open. As a strengthened property, uniform interpolation in distributed knowledge modal logics has also remained an open problem — a gap this paper aims to address.

This paper's most closely related work is by Fang, Liu, and van Ditmarsch \cite{fang2019forgetting}, who, alongside the bisimulation-quantifier line, proved the closure under forgetting, and hence, the uniform interpolation property, for a broad family of modal systems $\san{K}_n$, $\san{D}_n$, $\san{T}_n$, $\san{K45}_n$, $\san{KD45}_n$, and $\san{S5}_n$. Since every formula can be represented as the disjunction of a unique set of canonical formulas, and forgetting is closed under disjunction, they reduce the problem to finding the canonical formulas' results of forgetting (uniform interpolants). They proposed the literal elimination of canonical formulas, and through the bismulation-based model construction, they proved that for each canonical formula $\delta$, its remainder $\delta^p$ of eliminating a literal $p$ is exactly a result of forgetting $p$ (uniform interpolant). Thus, uniform interpolation is proved. We extend these techniques to the distributed knowledge settings, and then obtain the analogous uniform interpolation results for the corresponding distributed knowledge extensions: $\san{K}_n\F D$, $\san{D}_n\F D$, and $\san{T}_n\F D$. However, the distributed knowledge extensions $\san{K45}_n \F D$, $\san{KD45}_n \F D$, and $\san{S5}_n \F D$ are more complicated. Distributed knowledge operators require the inclusion relations between sets of agents, and the models of the latter three logics are both transitive and Euclidean, which makes the bisimulation-based model construction rather challenging. We show some canonical formulas in the latter three logics whose uniform interpolants are \emph{inequivalent} to their remainders of literal elimination. Thence, we refine the bisimulation-based model construction. The key notion is to combine the model construction with operating on canonical formulas' modal depth (pruning). We prove that for every canonical formula $\delta$ of depth $k + 1$, its uniform interpolant is constructed as the disjunction of all the remainders of those $(2k + 1)$-depth canonical formulas that entail $\delta$. This means every formula of depth $k + 1$ has a uniform interpolant of depth at least $2k + 1$. Hence, uniform interpolation in $\san{K45}_n \F D$, $\san{KD45}_n \F D$, and $\san{S5}_n \F D$ is proved. The contributions of our paper are summarized as follows:
\begin{itemize}
\item We introduce collective $p$-bisimulation as a combination of collective bisimulation and bisimulation quantifiers, and show its adequacy lemma, which means these logics that we discuss are collective-$p$-bisimulation invariant. (Subsection \ref{subsec: Bisimulation})

\item We introduce distributed knowledge canonical formulas and prove each formula is equivalent to the disjunction of a unique set of such canonical formulas. We also extend the definitions of pruning and literal elimination to the case of distributed knowledge canonical formulas. (Subsection \ref{subsec: CanonForm} \& \ref{subsec: LiteralElimination})

\item We show that in $\san{K}_n\F D$, $\san{D}_n\F D$,  and $\san{T}_n\F D$, every canonical formula's uniform interpolant is equivalent to its remainder of literal elimination (Lemmas \ref{lem: KandD} \& \ref{lem: T}), and thus prove the uniform interpolation property in the three logics (Theorem \ref{thm: KDT}).

\item We prove the uniform interpolation property in $\san{K45}_n \F D$, $\san{KD45}_n \F D$ (Theorem \ref{thm: K45KD45}), and $\san{S5}_n \F D$ (Theorem \ref{thm: K45KD45}) by combining bisimulation-based model construction with pruning (Lemmas \ref{lem: ConstructionK45} \& \ref{lem: ConstructionS5}) and constructing uniform interpolants for distributed canonical formulas (Lemmas \ref{lem: K45KD45} \& \ref{lem: S5}). We also analyze cases where the uniform interpolant is equivalent (Subsection \ref{subsec: SpecialCase}) or inequivalent (Subsection \ref{subsec: Counterexample}) to the remainder of literal elimination.

\item We extend the techniques to the versions augmented with propositional common knowledge, however, except for $\san{K45}_n \F{DPC}$ and $\san{KD45}_n \F{DPC}$ (Section \ref{sec: CommonKnowledge}).

\end{itemize}
The paper is organized as follows. Section \ref{sec: Preliminaries} gives preliminaries. Section \ref{sec: SyntacticMethods} defines collective $p$-bisimulation and provides syntactic methods, including  distributed knowledge canonical formulas, literal elimination, etc. Section \ref{sec: KDT} proves the uniform interpolation property in $\san{K}_n \F D$, $\san{D}_n \F D$, and $\san{T}_n \F D$. Section \ref{sec: K45KD45S5} proves the uniform interpolation property in $\san{K45}_n \F D$, $\san{KD45}_n \F D$, and $\san{S5}_n \F D$. Section \ref{sec: CommonKnowledge} discusses the results about the propositional common knowledge versions. Section \ref{sec: Conclusion} concludes this paper. Full proofs appear in the appendix.

\section{Preliminaries}\label{sec: Preliminaries}
Let's fix $\C A$ to be the set of $n$ agents. The lowercase letter $i, j$ range over agents while the calligraphic letters $\C B, \C C, \C D, \C X$ range over nonempty subsets of $\C A$. The function $\PP(\cdot)$ indicates the nonempty subset of a given set, for example, $\PPA$ as all the nonempty subsets of $\C A$. For every set of agents $\C B$, its complement $\C B^c = \C A \setminus \C B$. 

Let's fix the calligraphic letter $\C P$ to be the countable set of all the atoms (or propositional letters). A \emph{literal} is an atom $p$ (positive literal) or its negation $\neg p$ (negative literal). Given a finite subset $P \subseteq \C P$, a \emph{minterm} of $P$ is a conjunction of literals that uses only the members of $P$ and where each atom in $P$ occurs exactly once.

\begin{defn}
The language $\C L^{\F D}_{\F C}$ is generated by BNF:
\[\phi :: = p \mid \neg \phi \mid \phi \land \phi \mid \phi \lor \phi \mid \F K_i \phi \mid \F D_{\C B} \phi \mid \F{C} \phi\]
for every $p \in \C P$, $i \in \C A$, and $\C B \in \PPA$. 
If for every formula in the form $\F C\phi$, $\phi$ is restricted to be propositional, we denote the language by $\C L^{\F D}_\F{PC}$. If the formulas in the form $\F{C} \phi$ are absent, we denote the language by $\C L_{\F D}$. 

Given finite $P \subseteq \C P$, we say a formula $\phi$ is over $P$ if every atom occurring in $\phi$ belongs to $P$; conversely, denote the atoms occurring in $\phi$ by $\C P(\phi)$. Let $\top$ and $\bot$ be true and false, respectively. The uppercase letters $\Phi$ and $\Psi$ range over finite sets of formulas. The duals, $\hat{\F K}_i$, $\hat{\F D}_{\C B}$, and $\hat{\F C}$, are the short forms of ``$\neg \F K_i \neg$'', ``$\neg \F D_{\C B} \neg$'', and ``$\neg \F C \neg$'', respectively. The modal depth of a formula $\phi$, denoted by $dep(\phi)$, is the depth of nesting of modal operators in $\phi$.
\end{defn}

\begin{defn}
A \emph{frame} is a pair $(S, R)$, where
\begin{itemize}
\item $S$ is a nonempty set of possible worlds;

\item $R$ is a function that maps each agent $i \in \C A$ to a binary relation $R_i \subseteq S \times S$, called \emph{the accessibility relation} for $i$.
\end{itemize}
A \emph{Kripke model} is a triple $M = (S, R, V)$, where $(S, R)$ is a frame and 
\begin{itemize}
\item $V : S \to \Pow(\C P)$ is a valuation map.
\end{itemize}
A \emph{pointed (Kripke) model} is a pair $(M, s)$, where $M$ is a Kripke model and $s \in S$, called the \emph{actual} world.
\end{defn}

Given a model $M$, for each world $t$, let $R_i(t) = \{u \mid (t, u) \in R_i\}$, which are called \emph{$i$-successors} of $t$. For $\C B \in \PPA$, denote $R_{\C B} = \bigcap_{i \in \C B}R_i$. Thus, $\C B$-successors of $t$, $R_{\C B}(t)$, is similarly defined. Furthermore, for any possible relation $R_x$ in our discussion, let $R^{- 1}_x = \{(u, t) \mid (t, u) \in R_x\}$; let $\TC(t, R_x)$ be the smallest set satisfying
\begin{itemize}
\item $R_x(t) \subseteq \TC(t, R_x)$,

\item for every $u \in \TC(t, R_x)$ and $v \in R_x(u)$, we have $v \in \TC(t, R_x)$.
\end{itemize}
Note that $\TC(t, R_x)$ is the set of the worlds reachable from $t$ via $R_x$, or in other words, the image of $t$ under the transitive closure of $R_x$. Let $\RT(t, R_x) = \TC(t, R_x) \cup \{t\}$. $\RT(t, R_x)$ is the image of $t$ under the reflexive-transitive closure of $R_x$. Particularly, when $R_x$ is $\bigcup_{i \in \C A} R_i$, let's write $\RT(t)$ and $\TC(t)$ instead, the latter of which is all the worlds that are reachable from $t$.
\begin{defn}
Let $(M, s) = (S, R, V, s)$ be a pointed model. The satisfaction relation for $\C L^{\F D}_{\F C}$ is inductively defined as follows:
\begin{itemize}
\item $M, s \vDash p$ iff $p \in V(s)$;

\item $M, s \vDash \neg \phi$ iff $M, s \not \vDash \phi$;

\item $M, s \vDash \phi \land \psi$ iff $M, s \vDash \phi$ and $M, s \vDash \psi$;

\item $M, s \vDash \phi \lor \psi$ iff $M, s \vDash \phi$ or $M, s \vDash \psi$;

\item $M, s \vDash \F K_i \phi$ iff for every $t \in R_i(s)$, $M, t \vDash \phi$;

\item $M, s \vDash \F D_{\C B} \phi$ iff for every $t \in R_{\C B}(s)$, $M, t \vDash \phi$;

\item $M, s \vDash \F C \phi$ iff for every $t \in \TC(s)$, $M, t \vDash \phi$;

\end{itemize}
\end{defn}
Observe that for every $i \in \C A$, $M, s \vDash \F D_{\{i\}} \phi$ if and only if $M, s \vDash \F K_i \phi$. Therefore, we take the modalities $\F K_i$ and $\F D_{\{i\}}$ the same thing.

The axioms and modal systems of $\san K$, $\san{D}$, $\san T$, $\san{K45}$, $\san{KD45}$, and $\san{S5}$ are well-known; if $\san L$ is any of the six system, then ``$\san{L}_n$'' refers to the case of $n$ agents, where $n = |\C A|$, ``$\san{L}_n \F D$'' the case with distributed knowledge, and ``$\san{L}_n \F{DPC}$'' the case with both distributed knowledge and propositional common knowledge. Readers can turn to the book \cite{Fagin1995reason} for more information of these systems, like soundness and completeness. We still denote any system above by $\san L$. A model is an $\san L$-model, if it satisfies all the tautologies of the system $\san L$. A formula $\phi$ is $\san L$-satisfiable, if there exists an $\san L$-model $(M, s)$ such that $M, s \vDash \phi$. We write $\phi \vDash_{\san L} \psi$ if for every $\san L$-model $(M, s)$, if $M, s \vDash \phi$ then $M, s \vDash \psi$. We write $\phi \equiv_{\san L} \psi$ if both $\phi \vDash_{\san L} \psi$ and $\psi \vDash_{\san L} \phi$.

\begin{table}[t]
\centering
\begingroup
\setlength{\arrayrulewidth}{1pt}
\renewcommand{\arraystretch}{1.15}

\begin{tabular}{@{} l  *{6}{c} @{}}
\hline
& $\san{K}$ & $\san{D}$ & $\san{T}$ & $\san{K45}$ & $\san{KD45}$ & $\san{S5}$ \\
\hline
serial &  & $\surd$ & $\surd$ & & $\surd$ & $\surd$ \\
reflexive &  &  & $\surd$ &  &  & $\surd$ \\
transitive &  &  &  & $\surd$ & $\surd$ & $\surd$ \\
Euclidean &  &  &  & $\surd$  & $\surd$ & $\surd$ \\
\hline
\end{tabular}

\endgroup
\caption{Systems and any accessibility relation $R_i$}
\label{tab: LvsRelation}
\end{table}

\section{Uniform Interpolation and Syntactic Methods}\label{sec: SyntacticMethods}
In this section, we will give the definitions of uniform interpolation and collective $p$-bisimulation, and thus extend the syntactic methods in \cite{fang2019forgetting} to the distributed knowledge case, and discuss their properties. 

\subsection{Uniform interpolation, bisimulation, and forgetting}\label{subsec: Bisimulation}
\begin{defn}[Uniform Interpolation]\label{defn: UniformInterpolation}
A modal system $\san L$ has the \emph{uniform interpolation property}, if for every $P$, every $\san{L}$-satifiable $\phi$ over $P$, and every $p \in P$, there is an $\san{L}$-satifiable formula $\psi$ over $P \setminus \{p\}$ such that for every formula $\chi$ without $p$ occurring, \[\phi \vDash_{\san L} \chi \Longleftrightarrow \psi \vDash_{\san L} \chi.\]
We call $\psi$ a uniform interpolant of $\phi$ in $\san L$ over $P \setminus \{p\}$.
\end{defn}

Roelofsen extends the concept of bisimulation to collective bisimulation as a sufficient condition of two models satisfying the same distributed knowledge formulas \cite{roelofsen2007distributed}.
We can extend the concept of collective bisimulation to collective $p$-bisimulation, which describes the similarity of two models ignoring the valuation on $p$.

\begin{defn}[Collective $p$-Bisimulation]\label{defn: col-p-bisim-pointed}
Let $(M, s)$ and $(M', s')$ be two pointed models where $M = (S, R, V)$ and $M' = (S', R', V')$. Let $p \in \C P$. A \emph{collective $p$-bisimulation} between $(M, s)$ and $(M', s')$ is a relation $\rho \subseteq S \times S'$ such that $(s, s') \in \rho$ and for $\C C \in \PPA$, $u \in S$, and $u' \in S'$, if $(u, u') \in \rho$, then the following hold:
\begin{itemize}
\item \F{Atoms}: $V(u) \sim_p V'(u')$, that is, for every $q \in \C P \setminus \{p\}$, $q \in V(u)$ if and only if $q \in V'(u')$;

\item \F{Forth}: For every $v \in R_{\C C}(u)$, there exists $v' \in R'_{\C C}(u')$ such that $(v, v') \in  \rho$;

\item \F{Back}: For every $v' \in R'_{\C C}(u')$, there exists $v \in R_{\C C}(u)$ such that $(v, v') \in  \rho$.
\end{itemize}
If this relation $\rho$ exists, we say the two models are collectively $p$-bisimilar, or $\rho$ witnesses the collective $p$-bisimulation between $(M, s)$ and $(M', s')$, denoted by $\rho: (M, s) \bisim^{col}_p (M', s')$.
\end{defn}
If we replace the Atoms condition with ``$V(u) = V'(u')$'', this definition becomes that of collective bisimulation. Naturally, we can consider the adequacy lemma of collective $p$-bisimulation w.r.t $\C L^{\F D}_{\F C}$, which means that $\C L^{\F D}_{\F C}$ is collective-$p$-bisimulation invariant.
\begin{restatable}[the Adequacy Lemma of Collective $p$-Bisimulation]{lem}{Adequacy}
\label{lem: col-p-bisim-adequacy}
Let $(M, s)$ and $(M', s')$ be two models, $p \in \C P$, and $\phi$ be any $\mathcal{L}^{\F D}_{\F C}$ formula without $p$ occurring. 
If $(M, s) \bisim^{col}_p (M', s')$, then 
$M, s \vDash \phi \Longleftrightarrow M', s' \vDash \phi$.
\end{restatable}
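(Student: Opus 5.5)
We prove the stronger statement that the same $\rho$ works at every pair it relates: for every $\mathcal{L}^{\F D}_{\F C}$ formula $\phi$ without $p$ occurring and every pair $(u, u') \in \rho$, we have $M, u \vDash \phi$ if and only if $M', u' \vDash \phi$. The lemma is then the instance $(u,u') = (s,s')$. The argument is by induction on the structure of $\phi$. Since the relation $\rho$ is fixed and the claim is quantified over all its pairs, the induction hypothesis can be used at successor pairs.

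The base and Boolean cases are routine. For an atom $q \neq p$, the Atoms condition gives $q \in V(u)$ iff $q \in V'(u')$. The cases $\neg$, $\land$ and $\lor$ follow directly from the induction hypothesis at the same pair $(u,u')$.

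For $\F D_{\C B}\psi$, which covers $\F K_i\psi = \F D_{\{i\}}\psi$, suppose $M, u \vDash \F D_{\C B}\psi$ and let $v' \in R'_{\C B}(u')$. The Back condition with $\C C = \C B$ yields $v \in R_{\C B}(u)$ with $(v,v') \in \rho$. Then $M, v \vDash \psi$, so $M', v' \vDash \psi$ by the induction hypothesis. The converse direction uses Forth symmetrically. It is essential here that Forth and Back are stated for the intersection relations $R_{\C C}$ for every $\C C \in \PPA$, not only for single agents. That is exactly what makes the distributed case go through, whereas an ordinary $p$-bisimulation would not suffice.

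The main obstacle is the common knowledge case $\F C\psi$, because $\TC(u)$ is not a one-step image. We first establish an auxiliary path-lifting claim. If $(u,u') \in \rho$ and $v' \in \TC(u')$, then there is $v \in \TC(u)$ with $(v,v') \in \rho$, and symmetrically in the other direction. To prove it, take a finite path $u' = w'_0 R'_{i_1} w'_1 \cdots R'_{i_m} w'_m = v'$ with $m \ge 1$. Since $\TC(u')$ is the least set containing $R'_{\C A}$-images as in the definition, every member of $\TC(u')$ is reached by such a path. We then apply Back with singleton $\C C = \{i_k\}$ step by step to obtain $w_0 = u, w_1, \dots, w_m$ with $w_{k-1} R_{i_k} w_k$ and $(w_k, w'_k) \in \rho$, so $w_m \in \TC(u)$. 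Forth gives the other direction in the same way.

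With this claim in hand, the $\F C$ case mirrors the modal case. Suppose $M,u \vDash \F C\psi$ and $v' \in \TC(u')$. Take the matching $v \in \TC(u)$ with $(v,v') \in \rho$. Then $M, v \vDash \psi$, so $M', v' \vDash \psi$ by the induction hypothesis, and the converse direction is symmetric. The same argument covers the restricted languages $\C L^{\F D}_{\F{PC}}$ and $\C L_{\F D}$, since they are fragments of $\mathcal{L}^{\F D}_{\F C}$.
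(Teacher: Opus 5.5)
Your proof is correct and follows essentially the paper's route: a structural induction in which the $\F D_{\C B}$ case uses Forth/Back for the intersection relation $R_{\C B}$ itself, and the $\F C$ case lifts a finite path step by step through $\rho$. The paper argues contrapositively and lifts $R_{\C B_i}$-steps with Forth, while you use single-agent steps and apply Back and Forth directly; the two are equivalent.

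There is one notational slip. $\TC(u')$ is generated by $\bigcup_{i \in \C A} R'_i$, not by $R'_{\C A}$, which in the paper's notation means the intersection $\bigcap_{i \in \C A} R'_i$. Your path already uses single-agent steps $R'_{i_k}$, so nothing else in the argument changes.
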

With the help of collective $p$-bisimulation, another important concept, forgetting, is introduced as follows. Forgetting is the semantic counterpart of uniform interpolation. The results of forgetting can act as uniform interpolants.
\begin{defn}[Forgetting]\label{defn: forgetting}
Consider the context of a modal system $\san L$. Let $\phi \in \C L^{\F D}_{\F C}$ and $p$ an atom. A formula $\psi$ satisfying that $\C P(\psi) \subseteq \C P(\phi) \backslash \{p\}$ is a result of forgetting $p$ in $\phi$, written as $dforget_{\san L}(\phi, p) \equiv_{\san L} \psi$, if the following conditions hold,
\begin{itemize}
\item \textbf{Forth}: For any $\san L$-models $(M, s)$ and $(M', s')$, if $M, s\vDash \phi$ and $(M, s) \bisim^{col}_p (M', s')$, then $M', s' \vDash \psi$; 

\item \textbf{Back}: For any $\san L$-model $(M', s')$, if $M', s' \vDash_{\san L} \psi$, then there is a $\san L$-model $(M, s)$ s.t. $M, s \vDash \phi$ and  $(M, s) \bisim^{col}_p (M', s')$.
\end{itemize}
We say $dforget_{\san L}(\phi, p)$ exists if there is an $\san L$-satisfiable formula $\psi$ so that $dforget_{\san L}(\phi, p) \equiv_{\san L} \psi$; otherwise, we say $dforget_{\san L}(\phi, p)$ does not exist.
We say $\san L$ is closed under forgetting if for every $\san L$-satisfiable formula $\phi$ and every $p \in \C P$, $dforget_{\san L}(\phi, p)$ exists.
\end{defn}
\begin{restatable}{prop}{UIforgetting}\label{prop: UI-forgetting}
For every modal system $\san L$, if it is closed under forgetting, then it has the uniform interpolation property.  More precisely, for every $P \subseteq \C P$, every $p \in \C P$, every $\san L$-satisfiable formula $\phi$ over $P$, and every $\san L$-satisfiable $\psi$ over $P \setminus \{p\}$, 
if $dforget_{\san L}(\phi, p) \equiv_{\san L} \psi$, then $\psi$ is a uniform interpolant of $\phi$ in $\san L$ over $P \setminus \{p\}$.
\end{restatable}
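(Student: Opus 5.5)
The plan is to verify the two directions of the biconditional in Definition \ref{defn: UniformInterpolation} directly from the \textbf{Forth} and \textbf{Back} clauses of Definition \ref{defn: forgetting}, using the Adequacy Lemma (Lemma \ref{lem: col-p-bisim-adequacy}) to transfer $p$-free formulas across collective $p$-bisimulations. The first claim (closure under forgetting implies uniform interpolation) then follows from the second. Given $\phi$ over $P$ and $p \in P$, closure under forgetting supplies an $\san L$-satisfiable $\psi$ with $dforget_{\san L}(\phi, p) \equiv_{\san L} \psi$. Definition \ref{defn: forgetting} requires $\C P(\psi) \subseteq \C P(\phi) \setminus \{p\} \subseteq P \setminus \{p\}$, so $\psi$ is over $P \setminus \{p\}$. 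If $p \notin P$ the statement is trivial: take $\psi = \phi$, or note that the same argument still goes through.

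Fix $\chi$ without $p$ occurring. For the direction $\phi \vDash_{\san L} \chi \Rightarrow \psi \vDash_{\san L} \chi$, let $(M', s')$ be an $\san L$-model with $M', s' \vDash \psi$. By \textbf{Back} there is an $\san L$-model $(M, s)$ with $M, s \vDash \phi$ and $(M, s) \bisim^{col}_p (M', s')$. Then $M, s \vDash \chi$ by assumption, and Lemma \ref{lem: col-p-bisim-adequacy} gives $M', s' \vDash \chi$.

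For the converse, assume $\psi \vDash_{\san L} \chi$ and let $(M, s)$ be an $\san L$-model with $M, s \vDash \phi$. The identity relation on $S$ is a collective $p$-bisimulation from $(M, s)$ to itself: Atoms holds trivially, and Forth and Back hold for each $R_{\C C}$ because the relation is the identity. Applying \textbf{Forth} with $(M', s') = (M, s)$ gives $M, s \vDash \psi$, hence $M, s \vDash \chi$.

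No step is a real obstacle. The only points needing care are noticing that the reflexive instance of \textbf{Forth} yields $\phi \vDash_{\san L} \psi$, and checking that $\psi$ lies over $P \setminus \{p\}$ as Definition \ref{defn: UniformInterpolation} requires, which comes from the atom condition built into Definition \ref{defn: forgetting}. The satisfiability of $\psi$ is given by hypothesis.
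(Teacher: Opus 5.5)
Your proof is correct and follows essentially the same argument as the paper. The reflexive instance of \textbf{Forth}, via the identity collective $p$-bisimulation, gives $\phi \vDash_{\san L} \psi$, and \textbf{Back} combined with the Adequacy Lemma transfers $\chi$ to every $\san L$-model of $\psi$. Your explicit derivation of the first claim and your check that $\C P(\psi) \subseteq P \setminus \{p\}$ are steps the paper leaves implicit.
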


Observe that forgetting is closed under disjunction:
\begin{restatable}{fact}{DforgetVee}\label{fact: dforget-vee}
For every $\san L$-satisifiable formulas $\phi_1, \phi_2$, and $p \in \C P$, 
\[dforget_{\san L}(\phi_1 \lor \phi_2, p) \equiv_{\san L} dforget_{\san L}(\phi_1, p) \lor dforget_{\san L}(\phi_2, p).\]
\end{restatable}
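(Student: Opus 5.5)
The plan is to verify the Forth and Back conditions of Definition \ref{defn: forgetting} directly. Suppose $dforget_{\san L}(\phi_1, p) \equiv_{\san L} \psi_1$ and $dforget_{\san L}(\phi_2, p) \equiv_{\san L} \psi_2$. We then show that $\psi_1 \lor \psi_2$ satisfies both conditions for $\phi_1 \lor \phi_2$. The vocabulary requirement is immediate: $\C P(\psi_1 \lor \psi_2) \subseteq (\C P(\phi_1) \cup \C P(\phi_2)) \setminus \{p\} = \C P(\phi_1 \lor \phi_2) \setminus \{p\}$. Satisfiability of $\psi_1 \lor \psi_2$ follows from satisfiability of either disjunct.

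For \textbf{Forth}, take $\san L$-models with $M, s \vDash \phi_1 \lor \phi_2$ and $(M, s) \bisim^{col}_p (M', s')$. Say $M, s \vDash \phi_1$. The Forth clause for $\psi_1$ then gives $M', s' \vDash \psi_1$, hence $M', s' \vDash \psi_1 \lor \psi_2$. The case $\phi_2$ is symmetric.

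For \textbf{Back}, suppose $M', s' \vDash \psi_1 \lor \psi_2$, say $M', s' \vDash \psi_1$. The Back clause for $\psi_1$ yields an $\san L$-model $(M, s)$ with $M, s \vDash \phi_1$, hence $M, s \vDash \phi_1 \lor \phi_2$, and $(M, s) \bisim^{col}_p (M', s')$. The case $\psi_2$ is symmetric.

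The equivalence is meant up to $\equiv_{\san L}$, so one should also check that any $\psi$ satisfying the forgetting conditions is unique up to $\equiv_{\san L}$. Let $\psi, \psi'$ both be results of forgetting $p$ in $\phi$, and let $M', s' \vDash \psi$. Back for $\psi$ gives $(M, s) \vDash \phi$ with $(M, s) \bisim^{col}_p (M', s')$. Forth for $\psi'$ then gives $M', s' \vDash \psi'$. By symmetry, $\psi \equiv_{\san L} \psi'$.

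This well-definedness point is the only mild subtlety. The statement also tacitly presupposes that the two individual results exist, and I would read it conditionally in that way.
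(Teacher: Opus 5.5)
Your proof is correct, and it is organised differently from the paper's. The paper presupposes that a result of forgetting $p$ in $\phi_1 \lor \phi_2$ exists and proves mutual entailment directly. In one direction, a model of $dforget_{\san L}(\phi_1 \lor \phi_2, p)$ is pulled back by its Back clause to a model of (w.l.o.g.) $\phi_1$, then pushed forward by the Forth clause of $dforget_{\san L}(\phi_1, p)$. The converse swaps the roles. You split the argument into two parts: (i) a direct check that $\psi_1 \lor \psi_2$ meets the vocabulary, Forth and Back conditions for $\phi_1 \lor \phi_2$; and (ii) a separate lemma that any two results of forgetting are $\equiv_{\san L}$-equivalent. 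The ingredients are the same, since your uniqueness lemma is the paper's Back-then-Forth chaining stated abstractly. Your decomposition gains something: step (i) shows that $dforget_{\san L}(\phi_1 \lor \phi_2, p)$ \emph{exists} whenever the two individual results do. That is exactly what Proposition \ref{prop: delta-phi} uses when it concludes that $dforget_{\san L}(\phi, p)$ exists. The paper's proof starts from a model of the disjunction's forgetting result, so it assumes that result exists rather than establishing it. Step (ii) also makes explicit that the notation $dforget_{\san L}(\phi, p) \equiv_{\san L} \psi$ is well defined up to $\equiv_{\san L}$, which the paper leaves tacit.
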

These inspire us to represent a result of forgetting (uniform interpolant) as those of some more ``elementary'' formulas. That is, d-canonical formulas, which will be defined in the following.

\subsection{d-canonical formulas and pruning}\label{subsec: CanonForm}
In the subsection, let's focus on the distributed knowledge epistemic modal logic, $\C L_{\F D}$.

For the set $\Phi$, denote $\bigvee_{\phi \in \Phi} \phi$ (resp. $\bigwedge_{\phi \in \Phi} \phi$) by $\bigvee \Phi$ (resp. $\bigwedge \Phi$). For $\C B \in \PPA$, let $\hat{\F D}_{\C B} \Phi = \{\hat{\F D}_{\C B} \phi \mid \phi \in \Phi\}$, and let
\[\nabla_{\C B} \Phi = \F D_{\C B}(\bigvee \Phi) \land (\bigwedge \hat{\F D}_{\C B} \Phi)\]
It is easy to check that for every model $(M, s)$, $M, s \vDash \nabla_{\C B} \Phi$, if and only if, for all $t \in R_{\C B}(s)$, there is $\phi \in \Phi$ such that $M, t \vDash \phi$, and for all $\phi \in \Phi$, there is $t \in R_{\C B}(s)$ such that $M, t \vDash \phi$.
\begin{defn}
Let $P \subseteq \C P$ be finite. Define the set $D^P_k$ inductively as follows:
\begin{itemize}
\item $D^P_0 = \{\bigwedge_{p \in S} p \land \bigwedge_{p \in (P \setminus S)} \neg p \mid S \subseteq P\}$, i.e., the set of the minterms of $P$.

\item $D^P_{k+1}$ consists of formulas in the form 
\[\delta_0 \land \bigwedge_{\C B \in \PPA}\nabla_{\C B} \Phi_{\C B},\]
where $\delta_0 \in D^P_0$ and $\Phi_{\C B} \subseteq D^P_k$.
\end{itemize}
Let's call the formulas in the form above \emph{distributed knowledge canonical formulas}, or \emph{d-canonical formulas} for short, of $P$. For $\delta = \delta_0 \land \bigwedge_{\C B \in \PPA}\nabla_{\C B} \Phi_{\C B}$, let's write $w(\delta) = \delta_0$ and $R_{\C B}(\delta) = \Phi_{\C B}$. (In particular, when $k = 0$, $w(\delta) = \delta$.) We write $R_i(\delta)$ rather than $R_{\{i\}}(\delta)$ for simplicity. $R_i(\delta)$ and $R_{\C B}(\delta)$ are also called \emph{$i$-successors} and \emph{$\C B$-successors} of $\delta$, respectively. Given a modal system $\san L$, let's denote by $D^P_k(\san L)$ the members of $D^P_k$ that are satisfiable in $\san L$. For any d-canonical formulas $\delta$ and $\eta$, we say $\eta$ \emph{wholly occurs in} $\delta$, if either $\eta$ is $\delta$ or there exists $\C B \in \PPA$ and $\eta_0 \in R_{\C B}(\delta)$ such that $\eta$ wholly occurs in $\eta_0$.
\end{defn}

Then, we have three important propositions in parallel with the multi-agent epistemic modal logics \cite{fang2019forgetting, moss2007finite}. Every d-canonical formula can act as a model for a formula with no larger depth and every formula can be represented as a unique set of d-canonical formulas.

\begin{restatable}{prop}{Canondichotomy}\label{prop: canon-dichotomy}
Let $\san L$ be any modal system, $\delta \in D^P_k(\san L)$ where $k \in \B N$, and $P \subseteq \C P$ is finite. Let $\phi \in \C L_{\F D}$ such that $dep(\phi) \leq k$ and $\C P(\phi) \subseteq P$. Then, either $\delta \vDash_{\san L} \phi$ or $\delta \vDash_{\san L} \neg \phi$. 
\end{restatable}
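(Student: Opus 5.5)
We argue by induction on $k$, proving the statement simultaneously for all formulas $\phi$ of depth at most $k$ over $P$. Note first that the claim is purely semantic: $\delta \vDash_{\san L} \phi$ or $\delta \vDash_{\san L} \neg \phi$ holds iff all $\san L$-models of $\delta$ agree on $\phi$. Since $\delta$ is $\san L$-satisfiable, the two alternatives are also exclusive, although we do not need this. We therefore fix $\san L$-models $(M,s)$ and $(M',s')$ of $\delta$ and show that $M,s\vDash\phi$ iff $M',s'\vDash\phi$. The argument uses nothing about $\san L$ beyond the fact that its models are Kripke models, so it applies uniformly to every system in the paper.

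\textbf{Base case $k=0$.} Here $\phi$ is a Boolean combination of atoms in $P$, and $\delta$ is a minterm of $P$. Any two models of $\delta$ assign the same truth values to every atom of $P$, and hence to $\phi$.

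\textbf{Inductive step.} Let $\delta = \delta_0 \land \bigwedge_{\C B}\nabla_{\C B}\Phi_{\C B} \in D^P_{k+1}(\san L)$ and $dep(\phi)\le k+1$. We induct on the structure of $\phi$. Atoms are handled by $\delta_0$, as in the base case. Boolean connectives are immediate, since agreement on $\phi$ and on $\psi$ gives agreement on $\neg\phi$, $\phi\land\psi$ and $\phi\lor\psi$. Because $\F K_i$ coincides with $\F D_{\{i\}}$, it remains to treat $\phi=\F D_{\C B}\psi$ with $dep(\psi)\le k$.

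Suppose $M,s\vDash\F D_{\C B}\psi$, and take any $t'\in R'_{\C B}(s')$. By the first conjunct of $\nabla_{\C B}\Phi_{\C B}$, some $\eta\in\Phi_{\C B}$ holds at $t'$. By the second conjunct, applied in $M$, some $t\in R_{\C B}(s)$ satisfies $\eta$. Since $M,s\vDash\F D_{\C B}\psi$, we have $M,t\vDash\psi$. Now $\eta\in D^P_k$, and it is $\san L$-satisfiable because it holds at $t$. The outer induction hypothesis applied to $\eta$ therefore gives $\eta\vDash_{\san L}\psi$ or $\eta\vDash_{\san L}\neg\psi$. The second option is impossible, since $M,t\vDash \eta\land\psi$. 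Hence $\eta\vDash_{\san L}\psi$, and so $M',t'\vDash\psi$. This gives $M',s'\vDash\F D_{\C B}\psi$, and the converse direction follows by symmetry.

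\textbf{Main obstacle.} The key subtlety is that the induction hypothesis requires $\eta\in D^P_k(\san L)$. One must not assume that every member of $\Phi_{\C B}$ is $\san L$-satisfiable. Instead, satisfiability is read off from the witnessing world: $\eta$ holds at $t$ (or at $t'$), which is a world of an $\san L$-model. For frame classes that are not closed under taking successors we need some care here. In that case the hypothesis should be phrased over "satisfiable at some world of an $\san L$-model", which is what the argument actually uses, since every world $t$ in an $\san L$-model $M$ yields the pointed $\san L$-model $(M,t)$.
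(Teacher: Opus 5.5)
Your proof is correct and is essentially the paper's argument. Both run an induction in which the modal case $\F D_{\C B}\psi$ is settled by applying the dichotomy to the successor formulas $\eta \in R_{\C B}(\delta)$ and using both conjuncts of $\nabla_{\C B} R_{\C B}(\delta)$; your ``any two models of $\delta$ agree'' formulation is a repackaging of the paper's case split (some $\eta_0$ entails $\neg\psi$, or every $\eta$ entails $\psi$), your explicit outer induction on $k$ makes precise what the paper leaves implicit when it applies its structural hypothesis to the formulas $\eta$, and your satisfiability concern is harmless, since every member of $R_{\C B}(\delta)$ holds at a world of a model of $\delta$ and the dichotomy is vacuous for unsatisfiable formulas anyway.
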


\begin{restatable}{prop}{Modelsuniquecanon}\label{prop: models-unique-canon}
Let $(M, s)$ be a pointed model and $k \in \B N$. Let $P \subseteq \C P$ be finite. Then, there exists a unique $\delta \in D^P_k$ such that $M, s \vDash \delta$. 
\end{restatable}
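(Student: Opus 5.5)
We argue by induction on $k$, proving existence and uniqueness together for every pointed model $(M, s)$ at once. For $k = 0$, $D^P_0$ is the set of minterms of $P$. The valuation $V(s) \cap P$ determines exactly one set $S \subseteq P$ such that $M, s \vDash \bigwedge_{p \in S} p \land \bigwedge_{p \in P \setminus S} \neg p$. Any other minterm differs from this one on some atom of $P$, so it is false at $s$. This gives both existence and uniqueness.

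For the inductive step, assume the claim holds for depth $k$ at every world of every model.

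\emph{Existence.} Let $\delta_0$ be the unique minterm true at $s$. For each $\C B \in \PPA$, define
\[\Phi_{\C B} = \{\eta \in D^P_k \mid M, t \vDash \eta \text{ for some } t \in R_{\C B}(s)\}.\]
This set is finite because $D^P_k$ is finite, which follows by an easy induction since $P$ and $\PPA$ are finite. By the induction hypothesis, every $t \in R_{\C B}(s)$ satisfies some (unique) $\eta \in \Phi_{\C B}$, so $M, s \vDash \F D_{\C B}(\bigvee \Phi_{\C B})$. By the definition of $\Phi_{\C B}$, every $\eta \in \Phi_{\C B}$ is true at some $\C B$-successor of $s$, so $M, s \vDash \bigwedge \hat{\F D}_{\C B} \Phi_{\C B}$. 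Note that when $R_{\C B}(s) = \emptyset$ we get $\Phi_{\C B} = \emptyset$, and then $\nabla_{\C B} \emptyset = \F D_{\C B} \bot$, which is indeed true at $s$. Using the characterization of $\nabla_{\C B}$ stated just before the definition, $\delta = \delta_0 \land \bigwedge_{\C B} \nabla_{\C B} \Phi_{\C B}$ holds at $s$.

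\emph{Uniqueness.} Suppose $\delta' = \delta'_0 \land \bigwedge_{\C B} \nabla_{\C B} \Phi'_{\C B} \in D^P_{k+1}$ also holds at $s$. By the base case, $\delta'_0 = \delta_0$. Now fix $\C B$.
\begin{itemize}
\item If $\eta \in \Phi'_{\C B}$, the diamond conjunct gives some $t \in R_{\C B}(s)$ with $M, t \vDash \eta$, so $\eta \in \Phi_{\C B}$.
\item If $\eta \in \Phi_{\C B}$, pick $t \in R_{\C B}(s)$ with $M, t \vDash \eta$. The box conjunct of $\delta'$ gives some $\eta' \in \Phi'_{\C B}$ with $M, t \vDash \eta'$. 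Both $\eta$ and $\eta'$ lie in $D^P_k$ and hold at $t$, so uniqueness at depth $k$ yields $\eta = \eta'$, hence $\eta \in \Phi'_{\C B}$.
\end{itemize}
Thus $\Phi'_{\C B} = \Phi_{\C B}$ for every $\C B$, and $\delta' = \delta$.

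The only delicate point is what "unique" means: uniqueness should be read syntactically, up to the fixed ordering and grouping of the conjunctions and disjunctions, so that a d-canonical formula is determined by the data $(\delta_0, (\Phi_{\C B})_{\C B \in \PPA})$. It is also worth noting that no frame conditions are used, so the statement holds for arbitrary models. This is why $D^P_k$, and not $D^P_k(\san L)$, appears in the statement.
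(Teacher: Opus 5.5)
Your proof is correct and follows the same route as the paper: induction on $k$, taking the minterm at $s$ together with, for each $\C B$, the set of depth-$k$ canonical formulas realized at $\C B$-successors of $s$. Your two-inclusion uniqueness argument (diamond conjuncts for $\Phi'_{\C B} \subseteq \Phi_{\C B}$, box conjuncts plus the inductive uniqueness at $t$ for the converse) spells out carefully what the paper only asserts in one line.
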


\begin{restatable}{prop}{Uniquecanon}\label{prop: unique-canon}
Let $\san L$ be any modal system, $\phi \in \C L_\F{D}$, $k \geq dep(\phi)$, and $P = \C P(\phi)$. Then, there exists a unique set $\Phi \subseteq D^P_k(\san L)$ such that $\phi \equiv_{\san L} \bigvee \Phi$.
\end{restatable}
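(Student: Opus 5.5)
Existence comes from Proposition \ref{prop: models-unique-canon} and Proposition \ref{prop: canon-dichotomy}, and uniqueness from the pairwise exclusivity of d-canonical formulas. Throughout, fix $\phi$, $k \geq dep(\phi)$ and $P = \C P(\phi)$.

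\emph{Existence.} Let
\[\Phi = \{\delta \in D^P_k(\san L) \mid \delta \vDash_{\san L} \phi\}.\]
This set is finite because $D^P_k$ is finite, which follows by induction on $k$ since $\PPA$ and $P$ are finite. Clearly $\bigvee \Phi \vDash_{\san L} \phi$. For the converse, take any $\san L$-model $(M,s)$ with $M,s \vDash \phi$.
\begin{itemize}
\item By Proposition \ref{prop: models-unique-canon} there is a (unique) $\delta \in D^P_k$ with $M,s \vDash \delta$.
\item Since $(M,s)$ is an $\san L$-model, $\delta$ is $\san L$-satisfiable, so $\delta \in D^P_k(\san L)$.
\item By Proposition \ref{prop: canon-dichotomy}, applicable because $dep(\phi) \le k$ and $\C P(\phi) \subseteq P$, either $\delta \vDash_{\san L} \phi$ or $\delta \vDash_{\san L} \neg\phi$.
\item The second option is impossible, since $M,s \vDash \delta \land \phi$.
\end{itemize}
Hence $\delta \in \Phi$ and $M,s \vDash \bigvee\Phi$, so $\phi \equiv_{\san L} \bigvee \Phi$.

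\emph{Uniqueness.} Suppose $\Phi, \Psi \subseteq D^P_k(\san L)$ both satisfy $\phi \equiv_{\san L} \bigvee\Phi \equiv_{\san L} \bigvee\Psi$, and let $\delta \in \Phi$. Since $\delta$ is $\san L$-satisfiable, choose an $\san L$-model $(M,s)$ of $\delta$. Then $M,s \vDash \bigvee\Psi$, so $M,s \vDash \eta$ for some $\eta \in \Psi$. Both $\delta$ and $\eta$ lie in $D^P_k$ and are true at $(M,s)$, so the uniqueness clause of Proposition \ref{prop: models-unique-canon} gives $\delta = \eta$ as formulas, hence $\delta \in \Psi$. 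By symmetry $\Phi = \Psi$.

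The main subtlety is that uniqueness requires every member of $\Phi$ to be satisfiable; this is exactly why the statement restricts to $D^P_k(\san L)$ rather than $D^P_k$, since unsatisfiable disjuncts could otherwise be added freely. A second point is that the uniqueness in Proposition \ref{prop: models-unique-canon} must be syntactic identity. This holds only if formulas in $\nabla_{\C B}\Phi_{\C B}$ are read up to the set $\Phi_{\C B}$, i.e.\ with a fixed ordering of conjunctions and disjunctions. I would state that convention explicitly when invoking that proposition. Beyond that, the argument reduces directly to the two earlier propositions.
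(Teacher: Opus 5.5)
Your proof is correct and follows the paper's argument. Existence uses the set of all $\delta \in D^P_k(\san L)$ with $\delta \vDash_{\san L} \phi$ together with Propositions \ref{prop: models-unique-canon} and \ref{prop: canon-dichotomy}, and uniqueness rests on the uniqueness of the d-canonical formula true at a model; the only minor difference is that you apply that uniqueness symmetrically for both inclusions, whereas the paper uses the dichotomy for $\Phi' \subseteq \Phi$ and model-uniqueness only for $\Phi \subseteq \Phi'$.
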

This following proposition characterizes the monotonicity of distributed knowledge from the angles of model and formula.
\begin{restatable}{prop}{BCinclude}\label{prop: BC-include}
Let $\C B, \C C \in \PPX$ such that $\C B \subsetneq \C C$.
\begin{enumerate}
\item For every model $(M, s)$, $R_\C{C}(s) \subseteq R_{\C B}(s)$.

\item For every satisfiable d-canonical formula $\delta$, $R_\C{C}(\delta) \subseteq R_{\C B}(\delta)$.
\end{enumerate}
\end{restatable}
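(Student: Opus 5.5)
Part 1 is immediate from the definitions. Since $\C B \subsetneq \C C$, we have $R_{\C C} = \bigcap_{i \in \C C} R_i \subseteq \bigcap_{i \in \C B} R_i = R_{\C B}$, because intersecting over a larger index set yields a smaller relation. Taking images at $s$ gives $R_{\C C}(s) \subseteq R_{\C B}(s)$.

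For part 2, I will go through models. Let $\delta \in D^P_{k+1}$ be satisfiable, so there is a pointed model $(M, s)$ with $M, s \vDash \delta$. (If $k = 0$, the successor sets are not defined, so the claim is vacuous.) Fix an arbitrary $\eta \in R_{\C C}(\delta) = \Phi_{\C C}$; we must show $\eta \in \Phi_{\C B}$. The steps are as follows.

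\textbf{Step 1: a $\C B$-successor satisfies $\eta$.} The conjunct $\nabla_{\C C}\Phi_{\C C}$ contains $\hat{\F D}_{\C C}\eta$, so there is some $t \in R_{\C C}(s)$ with $M, t \vDash \eta$. By part 1, this $t$ also lies in $R_{\C B}(s)$.

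\textbf{Step 2: some member of $\Phi_{\C B}$ holds at $t$.} The conjunct $\F D_{\C B}(\bigvee \Phi_{\C B})$ gives some $\eta' \in \Phi_{\C B}$ with $M, t \vDash \eta'$.

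\textbf{Step 3: $\eta' = \eta$.} Both $\eta$ and $\eta'$ belong to $D^P_k$ and are true at $(M, t)$. By the uniqueness in Proposition \ref{prop: models-unique-canon}, they must be the same formula. Hence $\eta \in \Phi_{\C B}$.

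Since $\eta$ was arbitrary, $R_{\C C}(\delta) \subseteq R_{\C B}(\delta)$.

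The only delicate point is that satisfiability must be taken in some model. The definition of $D^P_k(\san L)$ uses $\san L$-models, but any satisfying model works for this argument, so no frame conditions are needed. Beyond that, the argument relies only on the uniqueness in Proposition \ref{prop: models-unique-canon}, which identifies the canonical formulas syntactically. That uniqueness holds because distinct members of $D^P_k$ are mutually inconsistent: minterms differ, or some $\nabla$ clause forces incompatible successor sets by induction.
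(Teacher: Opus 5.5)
Your proof is correct and follows the paper's argument essentially step for step. Part 1 is the intersection over a larger index set. Part 2 takes a $\C C$-successor $t$ witnessing $\eta$, notes via part 1 that $t$ is also a $\C B$-successor, and then uses the uniqueness in Proposition \ref{prop: models-unique-canon} to identify $\eta$ with the member of $R_{\C B}(\delta)$ that holds at $t$.
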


Observe that every d-canonical formula $\delta$ can be represented as a tree, with every d-canonical formula $\eta$ that wholly occurs in it as a node ``$w(\eta)$'' and with ``$R_{\C B}$'' labeling its edges. Intuitively, ``$\delta^\downarrow$'' removes the leaves of the largest distance from the root. We also introduce a notation ``$\delta^{\uparrow l}$'' for the remaining tree whose depth is at least $l$. These are illustrated in Figure \ref{fig: pruning}.

\begin{figure}[htbp]

\centering
\resizebox{0.9\linewidth}{!}{
\begin{tikzpicture}[
    block/.style={draw, darkgreen, dashed, thick, rounded corners, inner sep=10pt},
    world/.style={thick, minimum size=8mm},
    edgei/.style={ ->, black, shorten >=2pt, shorten <=2pt},
    label/.style={font=\footnotesize}
]


\node[world] (lh) at (0,0) {\(p \land q\)};

\node[world] (l0) at (-2,-2) {\(\neg p \land q\)};
\node[world] (l1) at (2,-2) {\(p \land \neg q\)};

\node[world] (l00) at (-3,-4) {\( p \land q\)};
\node[world] (l01) at (-1,-4) {\(\neg p \land \neg q\)};

\node[world] (l000) at (-3,-6) {\(\neg p \land q\)};
\node[world] (l010) at (-1,-6) {\(p \land \neg q\)};

\draw[edgei] (lh) -- (l0)node[midway, above left] {\(R_{\C B_1}\)};
\draw[edgei] (lh) -- (l1) node[midway, above right] {\(R_{\C B_2}\)};

\draw[edgei] (l0) -- (l00)node[midway, above left] {\(R_{\C C_1}\)};
\draw[edgei] (l0) -- (l01) node[midway, above right] {\(R_{\C C_2}\)};
\draw[edgei] (l00) -- (l000)node[midway, left] {\(R_{\C D_1}\)};
\draw[edgei] (l01) -- (l010) node[midway, right] {\(R_{\C D_2}\)};


\node[world] (mh) at (8,0) {\(p \land q\)};

\node[world] (m0) at (6,-2) {\(\neg p \land q\)};
\node[world] (m1) at (10,-2) {\(p \land \neg q\)};

\node[world] (m00) at (5,-4) {\( p \land q\)};
\node[world] (m01) at (7,-4) {\(\neg p \land \neg q\)};

\draw[edgei] (mh) -- (m0)node[midway, above left] {\(R_{\C B_1}\)};
\draw[edgei] (mh) -- (m1) node[midway, above right] {\(R_{\C B_2}\)};

\draw[edgei] (m0) -- (m00)node[midway, above left] {\(R_{\C C_1}\)};
\draw[edgei] (m0) -- (m01) node[midway, above right] {\(R_{\C C_2}\)};


\node[world] (rh) at (16,0) {\(p \land q\)};

\node[world] (r0) at (14,-2) {\(\neg p \land q\)};
\node[world] (r1) at (18,-2) {\(p \land \neg q\)};

\draw[edgei] (rh) -- (r0)node[midway, above left] {\(R_{\C B_1}\)};
\draw[edgei] (rh) -- (r1) node[midway, above right] {\(R_{\C B_2}\)};

\node[draw, single arrow, minimum height=1.4cm, minimum width=0.5cm,
      single arrow head extend=0.1cm, rotate=0] (boxarrow1) at (4,-1) {};
\node at (boxarrow1.north) [above=2pt] {\LARGE $\downarrow$};

\node[draw, single arrow, minimum height=1.4cm, minimum width=0.5cm,
      single arrow head extend=0.1cm, rotate=0] (boxarrow2) at (12,-1) {};
\node at (boxarrow2.north) [above=2pt] {\LARGE $\downarrow$};

\node[draw, single arrow, minimum height=12cm, minimum width=0.5cm,
      single arrow head extend=0.1cm, rotate=8] (boxarrow3) at (8.5,-5) {};
\node at (boxarrow3.south) [below = 2pt, rotate=8] {\LARGE $\downarrow 2$ \text{ or } $\uparrow$};

\node[label] at (lh.north) [above=11pt] {\LARGE \(\delta\)};
\node[label] at (mh.north) [above=11pt] {\LARGE \(\delta^\downarrow \text{ and } \delta^{\uparrow 2}\)};

\node[label] at (rh.north) [above=11pt] {\LARGE \(\delta^{\downarrow 2} \text{ and } \delta^{\uparrow}\)};
\end{tikzpicture}
}

  \caption{Pruning}
  \label{fig: pruning}
\end{figure}

\begin{defn}\label{defn: uparrow}
Let $P \subseteq \C P$ be finite. Let $k \in \B N$ and $\delta \in D^P_k$. Then, $\delta^\downarrow$ is inductively defined as follows:
\[
\delta^\downarrow = 
\begin{cases}
\delta & \text{if $k = 0$;}\\
w(\delta) & \text{if $k = 1$;}\\
w(\delta)\land \bigwedge_{\C B \in \PPA}\nabla_{\C B} (R_{\C B}(\delta))^\downarrow & \text{otherwise.}\\
\end{cases}
\]
where for any set $\Phi$, $\Phi^\downarrow = \{\phi^\downarrow \mid \phi \in \Phi\}$.
Let $l \in \B N$ and $\delta^{\downarrow l}$ is inductively defined as follows:
\[
\delta^{\downarrow l} = 
\begin{cases}
\delta & \text{if $l = 0$;}\\
(\delta^{\downarrow l - 1})^\downarrow & \text{otherwise.}\\
\end{cases}
\]
Denote $\Phi^{\downarrow l} = \{\phi^{\downarrow l} \mid \phi \in \Phi\}$. Let $\delta^{\uparrow 0} = w(\delta)$ and for every $l \geq 1$, let
\[
\delta^{\uparrow l} = 
\begin{cases}
\delta & \text{if $dep(\delta) \leq l$}\\

w(\delta)\land \bigwedge_{\C B \in \PPA}\nabla_{\C B} R_{\C B}(\delta)^{\uparrow (l - 1)} & \text{ otherwise}
\end{cases}
\]
where for any set $\Phi$, $\Phi^{\uparrow l} = \{\delta^{\uparrow l} \mid \delta \in \Phi\}$. When $l = 1$, we write $\delta^{\uparrow}$ instead of ``$\delta^{\uparrow 1}$''.
\end{defn}

Naturally, $ \varnothing^{\downarrow l} = \varnothing^{\uparrow l} = \varnothing$. The following two propositions characterize the properties of pruning. Note that ``$R_{\C B}(\delta^{\uparrow l}) = R_{\C B}(\delta)^{\uparrow l}$'' is not true by Definition \ref{defn: uparrow}.

\begin{restatable}{prop}{Distcutinto}\label{prop: dist-cut-into}
Let $P \subseteq \C P$ be finite and let $\delta \in D^P_k$. Then, for any $l < k$ and $\C B \in \PPA$,
\[R_{\C B}(\delta^{\downarrow l}) = R_{\C B}(\delta)^{\downarrow l}.\]
\end{restatable}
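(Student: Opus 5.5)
We argue by induction on $l$, with a simultaneous claim about depths. First we record the auxiliary fact that for $\delta\in D^P_k$ with $k\ge 1$ we have $\delta^\downarrow\in D^P_{k-1}$, and $\delta^{\downarrow}=\delta$ when $k=0$. This follows by induction on $k$ directly from Definition \ref{defn: uparrow}. For $k=1$, $\delta^\downarrow=w(\delta)\in D^P_0$. For $k\ge 2$, every member of $R_{\C B}(\delta)$ lies in $D^P_{k-1}$, so its pruning lies in $D^P_{k-2}$ by the induction hypothesis. Hence $w(\delta)\land\bigwedge_{\C B}\nabla_{\C B}R_{\C B}(\delta)^\downarrow\in D^P_{k-1}$. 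Iterating, $\delta^{\downarrow l}\in D^P_{k-l}$ whenever $l\le k$.

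For the base case $l=0$, both sides equal $R_{\C B}(\delta)$ by definition. For the inductive step, assume $R_{\C B}(\delta^{\downarrow l})=R_{\C B}(\delta)^{\downarrow l}$ for some $l$ with $l+1<k$, and all $\C B$. Write $\eta=\delta^{\downarrow l}$. By the auxiliary fact, $\eta\in D^P_{k-l}$ with $k-l\ge 2$, so the third clause of the definition applies:
\[\eta^\downarrow=w(\eta)\land\bigwedge_{\C B\in\PPA}\nabla_{\C B}\,R_{\C B}(\eta)^\downarrow .\]
Reading off the $\C B$-successors of this d-canonical formula gives
\[R_{\C B}(\delta^{\downarrow l+1})=R_{\C B}(\eta)^\downarrow=\bigl(R_{\C B}(\delta)^{\downarrow l}\bigr)^\downarrow=R_{\C B}(\delta)^{\downarrow l+1}.\]
The last equality is elementwise: $(\phi^{\downarrow l})^\downarrow=\phi^{\downarrow l+1}$ by the definition of iterated pruning, applied to each $\phi\in R_{\C B}(\delta)$.

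The only delicate point is the one that forces the hypothesis $l<k$. If $k-l=1$, then $\eta^\downarrow=w(\eta)$ has no $\nabla$-conjuncts, so the left side would be undefined or empty while the right side need not be. Tracking the depth via the auxiliary fact is what guarantees that the third clause is used at every step. A second subtlety is that $R_{\C B}(\cdot)$ must be read syntactically, i.e.\ as the set $\Phi_{\C B}$ in the displayed form. If two distinct successors collapse to the same pruned formula, this is harmless, since both sides are taken as sets of pruned formulas.
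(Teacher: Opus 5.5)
Your proof is correct and follows the paper's argument: induction on $l$, unfolding the third clause of Definition~\ref{defn: uparrow} at $\delta^{\downarrow l}$, then applying $(\phi^{\downarrow l})^{\downarrow}=\phi^{\downarrow (l+1)}$ to each member of $R_{\C B}(\delta)$. The one difference is that you prove the depth invariant $\delta^{\downarrow l}\in D^P_{k-l}$ directly from the definition, which justifies that the third clause applies at every step; the paper leaves this implicit, and it only derives the invariant later, in Proposition~\ref{prop: uparrow-property}, using the present proposition, so your independent proof also avoids any circularity.
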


\begin{restatable}{prop}{Uparrowproperty}\label{prop: uparrow-property}
Consider the context of a modal system $\san L$. For every $\delta \in D^P_{k}(\san L)$ and $k, l \in \B N$, if $k \geq l$, then the following properties hold:
\begin{enumerate}
\item $\delta \vDash \delta^{\downarrow l}$.

\item $\delta^{\downarrow l} \in D^P_{k - l}(\san L)$.

\item $\delta^{\uparrow l} \in D^P_l(\san L)$.

\item For every $h \in \B N$ and $\gamma \in D^P_{k + h}(\san L)$, $\gamma^{\downarrow h} = \gamma^{\uparrow k} = (\gamma^{\downarrow h})^{\uparrow k}$.

\item For $k_1, k_2 \in \B N$, if $l \leq \min\{k, k_1, k_2\}$, then $
(\delta^{\uparrow k_1})^{\uparrow l} = (\delta^{\uparrow k_2})^{\uparrow l}
$.
\end{enumerate}
\end{restatable}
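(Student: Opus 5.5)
We prove the five items essentially by structural induction on d-canonical formulas. We use item 1 to transfer satisfiability, and the syntactic identity of item 4 to reduce the upward pruning $\cdot^{\uparrow l}$ to the downward pruning $\cdot^{\downarrow l}$. Items 1, 2 and 4 come first; items 3 and 5 are then derived from them.

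For item 1 it suffices to show $\delta \vDash \delta^\downarrow$ for every $\delta \in D^P_k$; the general case follows by induction on $l$ and transitivity of $\vDash$. We argue by induction on $k$. The cases $k=0,1$ are immediate, since $\delta^\downarrow$ is $\delta$ or $w(\delta)$, a conjunct of $\delta$. For $k\ge 2$ the minterm $w(\delta)$ is kept, so it suffices to show $\nabla_{\C B}\Phi \vDash \nabla_{\C B}\Phi^\downarrow$ for $\Phi \subseteq D^P_{k-1}$. By the induction hypothesis each $\phi \vDash \phi^\downarrow$. Hence $\bigvee\Phi \vDash \bigvee\Phi^\downarrow$, which gives $\F D_{\C B}(\bigvee \Phi)\vDash \F D_{\C B}(\bigvee \Phi^\downarrow)$, and $\hat{\F D}_{\C B}\phi \vDash \hat{\F D}_{\C B}\phi^\downarrow$ handles the diamond part. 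The semantic reading of $\nabla_{\C B}$ stated before the definition makes this immediate.

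For item 2, syntactic membership $\delta^{\downarrow l}\in D^P_{k-l}$ follows by the same induction on $k$, then on $l$. Satisfiability in $\san L$ follows from item 1: any $\san L$-model of $\delta$ is a model of $\delta^{\downarrow l}$.

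For item 4, take $\gamma\in D^P_{k+h}$ and show $\gamma^{\downarrow h} = \gamma^{\uparrow k}$ by induction on $k$. If $k=0$, both sides equal $w(\gamma)$: $h$-fold pruning of a formula of index $h$ leaves the minterm. If $k\ge 1$, compare the two formulas conjunct by conjunct. Proposition \ref{prop: dist-cut-into} gives $R_{\C B}(\gamma^{\downarrow h}) = R_{\C B}(\gamma)^{\downarrow h}$ with members in $D^P_{k-1+h}$, and the definition of $\cdot^{\uparrow k}$ gives $R_{\C B}(\gamma)^{\uparrow(k-1)}$ in the non-trivial case; the induction hypothesis identifies the two. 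The equality $(\gamma^{\downarrow h})^{\uparrow k}=\gamma^{\downarrow h}$ holds because $\gamma^{\downarrow h}\in D^P_k$ by item 2 and so has depth $\le k$.

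Item 3 then follows. Writing $k = l + (k-l)$, item 4 gives $\delta^{\uparrow l} = \delta^{\downarrow(k-l)}$, and item 2 places this in $D^P_l(\san L)$. For item 5, both sides reduce to $\delta^{\uparrow l}$. Apply item 4 to $\delta$ to get $\delta^{\uparrow k_i} = \delta^{\downarrow(k-k_i)}$ (for $k_i\le k$; if $k_i\ge k$ then $\delta^{\uparrow k_i}=\delta$). Apply item 4 again to $\delta^{\downarrow(k-k_i)}\in D^P_{k_i}$ with $l\le k_i$: this gives $(\delta^{\downarrow(k-k_i)})^{\uparrow l} = \delta^{\downarrow(k-k_i)+(k_i-l)} = \delta^{\downarrow(k-l)}$, which is independent of $i$.

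The main obstacle is the mismatch between the two notions of size. The $\cdot^{\uparrow}$ operator tests the actual modal depth $dep(\delta)$, whereas the induction runs on the index $k$ of $D^P_k$. A formula with all successor sets empty, so that $\nabla_{\C B}\varnothing = \F D_{\C B}\bot$, has depth $1$ yet lies in every $D^P_k$ with $k\ge1$. So in item 4 I would first prove an auxiliary lemma: for $\gamma \in D^P_{m}$ we have $dep(\gamma)\le m$, and whenever $dep(\gamma)<m$ both pruning operators fix the relevant subtrees, so that $\gamma^{\downarrow h}$ and $\gamma^{\uparrow k}$ agree on those shallow branches. With this lemma the case split in Definition \ref{defn: uparrow} matches the inductive step above.
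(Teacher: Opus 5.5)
Your proof is correct. For items 1 and 2 it coincides with the paper's argument. For items 3--5 you take a genuinely different route.

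\textbf{How the paper does it.} The paper first shows the syntactic membership $\delta^{\uparrow l}\in D^P_l$ by induction on $l$. It then identifies $\delta^{\uparrow l}$ with $\delta^{\downarrow(k-l)}$ by appeal to Proposition \ref{prop: models-unique-canon}, the uniqueness of the canonical formula of a given index true at a pointed model. Item 4 is read off from this identification, and item 5 again from uniqueness.

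\textbf{How you do it.} You reverse the order. You prove the identity $\gamma^{\downarrow h}=\gamma^{\uparrow k}$ purely syntactically, by induction on $k$ via Proposition \ref{prop: dist-cut-into}. Items 3 and 5 then become short calculations, both sides of item 5 reducing to $\delta^{\downarrow(k-l)}$.

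\textbf{What each buys.} The semantic route is shorter, but applying uniqueness requires a common model of $\delta^{\uparrow l}$ and $\delta^{\downarrow(k-l)}$, i.e.\ $\delta\vDash\delta^{\uparrow l}$. In item 5 it likewise needs a common model of the two sides. The paper does not separately establish either. Your route holds for every $\gamma\in D^P_{k+h}$, satisfiable or not, and yields $\delta\vDash\delta^{\uparrow l}$ as a by-product of item 1.

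\textbf{Your anticipated obstacle.} The auxiliary lemma you plan is not needed. If $dep(\gamma)\le k$, every $\eta\in R_{\C B}(\gamma)$ has $dep(\eta)\le k-1$, so $\eta^{\uparrow(k-1)}=\eta$; when $k-1=0$ this forces $\eta$ to be a minterm, vacuously if $h\ge 1$. The induction hypothesis then gives $\eta^{\downarrow h}=\eta$. Proposition \ref{prop: dist-cut-into} yields $\gamma^{\downarrow h}=\gamma=\gamma^{\uparrow k}$. So the trivial branch of the definition of $\cdot^{\uparrow k}$ is absorbed by the same induction.

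Your observation that formulas whose successor sets are all empty lie in every $D^P_k$ ($k\ge 1$) does have one real consequence, shared by both proofs. The value of $\cdot^{\downarrow}$ depends on which index one assigns, so every statement here must be read with that index tracked.
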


\subsection{Forgetting as literal elimination}\label{subsec: LiteralElimination}
Every formula's result of forgetting (uniform interpolant) can be represented as the disjunction of those of its d-canonical formulas. 
\begin{restatable}{prop}{Deltaphi}\label{prop: delta-phi}
If $dforget_{\san L}(\delta, p)$ exists for every $p$ and every $\san L$-satisfiable d-canonical formula $\delta$, then $\san L$ is closed under forgetting.
\end{restatable}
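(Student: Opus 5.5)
We need to prove Proposition \ref{prop: delta-phi}: if $dforget_{\san L}(\delta, p)$ exists for every $p$ and every $\san L$-satisfiable d-canonical formula $\delta$, then $\san L$ is closed under forgetting.

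\emph{Reduction to d-canonical formulas.} Fix an $\san L$-satisfiable $\phi \in \C L_{\F D}$ and an atom $p$, and let $P = \C P(\phi)$.

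\emph{Case $p \notin P$.} Here we may simply take $\psi = \phi$. Forth holds by the Adequacy Lemma \ref{lem: col-p-bisim-adequacy}. Back holds by taking $(M,s) = (M',s')$ with the identity relation, which is trivially a collective $p$-bisimulation.

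\emph{Case $p \in P$.} We will:
\begin{itemize}
\item set $k = dep(\phi)$ and invoke Proposition \ref{prop: unique-canon} to obtain a set $\Phi \subseteq D^P_k(\san L)$ with $\phi \equiv_{\san L} \bigvee \Phi$;
\item note that $\Phi \neq \varnothing$, since $\phi$ is $\san L$-satisfiable, and that $\Phi$ is finite because $D^P_k$ is finite;
\item use the hypothesis to get, for each $\delta \in \Phi$, an $\san L$-satisfiable $\psi_\delta$ over $P \setminus \{p\}$ with $dforget_{\san L}(\delta, p) \equiv_{\san L} \psi_\delta$;
\item put $\psi = \bigvee_{\delta \in \Phi} \psi_\delta$.
\end{itemize}
This $\psi$ is $\san L$-satisfiable, and $\C P(\psi) \subseteq P \setminus \{p\}$.

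\emph{Verifying $\psi$ is a result of forgetting $p$ in $\phi$.} One could cite Fact \ref{fact: dforget-vee} inductively over $\Phi$. I would instead verify Forth and Back directly, which also covers any gap in using the Fact for finitely many disjuncts.
\begin{itemize}
\item \textbf{Forth.} Let $M, s \vDash \phi$ and $(M,s) \bisim^{col}_p (M',s')$. Since $\phi \equiv_{\san L} \bigvee\Phi$, some $\delta \in \Phi$ has $M, s \vDash \delta$. Forth for $\psi_\delta$ gives $M', s' \vDash \psi_\delta$, hence $M', s' \vDash \psi$.
\item \textbf{Back.} Let $M', s' \vDash \psi$. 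Then $M', s' \vDash \psi_\delta$ for some $\delta \in \Phi$. Back for $\psi_\delta$ yields an $\san L$-model $(M,s)$ with $M, s \vDash \delta$ and $(M,s) \bisim^{col}_p (M',s')$. Since $\delta \vDash_{\san L} \bigvee\Phi \equiv_{\san L} \phi$, we get $M, s \vDash \phi$.
\end{itemize}

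\emph{Main obstacle.} The only delicate point is the side condition $\C P(\psi) \subseteq \C P(\phi) \setminus \{p\}$. It holds because each $\psi_\delta$ is over $P \setminus \{p\}$, and $\delta$ is over $P = \C P(\phi)$, so the result of forgetting for $\delta$ already satisfies $\C P(\psi_\delta) \subseteq \C P(\delta) \setminus \{p\} \subseteq P \setminus \{p\}$. One should also check that the models used are $\san L$-models throughout. This is guaranteed because Forth and Back for each $\psi_\delta$ are stated over $\san L$-models, and the equivalence $\phi \equiv_{\san L} \bigvee\Phi$ holds on all $\san L$-models.
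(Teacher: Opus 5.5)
Your proof is correct and follows the same route as the paper. Both arguments decompose $\phi$ via Proposition~\ref{prop: unique-canon} into a finite disjunction of d-canonical formulas and take the disjunction of their forgetting results. The only differences are that you check Forth and Back directly instead of citing Fact~\ref{fact: dforget-vee}, and that you restrict to $D^P_k(\san L)$, which the paper's proof also needs since the hypothesis covers only satisfiable $\delta$.
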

\begin{proof}
Let $\phi$ be any $\san L$-satisfiable formula and let $k = dep(\phi)$. Let $\Phi = \{\delta \in D^P_k \mid \delta \vDash_{\san L} \phi\} $. By Proposition \ref{prop: unique-canon}, $\phi \equiv_{\san L} \bigvee \Phi$. Since $dforget_{\san L}(\delta, p)$ exists for every $\delta \in \Phi$ and by Fact \ref{fact: dforget-vee}, we have that
\[dforget_{\san L}(\phi, p) \equiv_{\san L} \bigvee_{\delta \in \Phi}dforget_{\san L}(\delta, p).\]
Then, $dforget_{\san L}(\phi, p)$ exists. Therefore, $\san L$ is closed under forgetting.
\end{proof}
Therefore, we only need to consider how ``$dforget_{\san L}(\delta, p)$'' is obtained. The concept of literal elimination can also be applied to $\C L^{\F D}_{\F C}$.
\begin{defn}[Literal Elimination]
Let $\phi \in \C L^{\F D}_{\F C}$ and $p$ be an atom. We denote by $\phi^{p}$ the formula obtained from $\phi$ by substituting all occurrences of $\neg p$ with $\top$ and \emph{subsequently}, substituting all occurrences of $p$ with $\top$. We call $\phi^p$ the \emph{remainder} of eliminating $p$.
\end{defn}
Similar to ``$\Phi^\downarrow$'', let's denote $\Phi^p = \{\phi^p \mid \phi \in \Phi\}$. It is also easy to check the fact that $(\delta^{\downarrow l})^p = (\delta^p)^{\downarrow l}$ and $(\delta^{\uparrow l})^p = (\delta^p)^{\uparrow l}$.

\begin{restatable}{prop}{Deltapimply}\label{prop: deltap-imply}
If $\delta \in D^P_k$ and $p$ is an atom, then, $\delta \vDash \delta^p$
\end{restatable}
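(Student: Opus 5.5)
We argue by induction on $k$, after one preliminary observation about what literal elimination does to a d-canonical formula. In $\delta$, atoms occur only inside minterms $w(\eta)$ for the d-canonical formulas $\eta$ that wholly occur in $\delta$. Negation appears only in front of literals, apart from the connectives hidden in $\hat{\F D}_{\C B}$. We therefore first unfold $\hat{\F D}_{\C B}\phi$ as a primitive diamond-type modality, so that $\delta$ becomes a negation-normal-form formula built from literals, $\land$, $\lor$, $\F D_{\C B}$ and $\hat{\F D}_{\C B}$. Since $\Phi\subseteq D^P_k$, the set $\Phi$ is explicit, and $\nabla_{\C B}\Phi$ is literally $\F D_{\C B}(\bigvee\Phi)\land\bigwedge_{\phi\in\Phi}\hat{\F D}_{\C B}\phi$. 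Literal elimination then replaces each literal $p$ or $\neg p$ by $\top$ and leaves everything else unchanged. Concretely, $\delta^p = w(\delta)^p\land\bigwedge_{\C B}\big(\F D_{\C B}(\bigvee R_{\C B}(\delta)^p)\land\bigwedge_{\phi\in R_{\C B}(\delta)}\hat{\F D}_{\C B}\phi^p\big)$. Here $w(\delta)^p$ is the minterm with the $p$-literal deleted, or $\top$ if that literal was the only conjunct.

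\textbf{Base case} ($k=0$). $\delta$ is a conjunction of literals and $\delta^p$ is the conjunction of those literals other than $p,\neg p$. Hence $\delta\vDash\delta^p$ holds propositionally.

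\textbf{Inductive step.} Suppose $\eta\vDash\eta^p$ for every $\eta\in D^P_k$, and let $\delta\in D^P_{k+1}$ with $M,s\vDash\delta$. We check each conjunct of $\delta^p$.
\begin{itemize}
\item The minterm part: $M,s\vDash w(\delta)$, and $w(\delta)^p$ is a sub-conjunction of $w(\delta)$ (or $\top$), so it holds at $s$.
\item The $\F D_{\C B}$ part: take any $t\in R_{\C B}(s)$. Some $\phi\in R_{\C B}(\delta)$ holds at $t$, so by the induction hypothesis $\phi^p$ holds at $t$. Thus $\bigvee R_{\C B}(\delta)^p$ holds at $t$, which gives $M,s\vDash \F D_{\C B}(\bigvee R_{\C B}(\delta)^p)$.
\item The $\hat{\F D}_{\C B}$ part: for each $\phi\in R_{\C B}(\delta)$ there is a witness $t\in R_{\C B}(s)$ with $M,t\vDash\phi$. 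By the induction hypothesis $M,t\vDash\phi^p$, so $M,s\vDash\hat{\F D}_{\C B}\phi^p$.
\end{itemize}
Hence $M,s\vDash\delta^p$.

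\textbf{Main obstacle.} The only delicate point is the syntactic one: making sure the substitution commutes with the modal structure as claimed. One might worry that $\hat{\F D}_{\C B}=\neg\F D_{\C B}\neg$ introduces negations above atoms and so breaks positivity. The literal-elimination definition replaces the literals $\neg p$ and $p$ as they occur in the minterms, before any unfolding of the dual. So every occurrence of $p$ sits in a literal under only $\land,\lor,\F D_{\C B},\hat{\F D}_{\C B}$, and all of these are monotone. With that settled, an equivalent and cleaner alternative proof uses monotonicity directly. Each literal satisfies $\ell\vDash\top$, and every connective in the NNF form is monotone. Replacing positively occurring subformulas by weaker ones therefore yields a consequence, and no induction over canonical structure is needed beyond this structural monotonicity lemma.
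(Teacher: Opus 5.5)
Your proof is correct and follows the paper's own argument. The paper also inducts on $k$, handles the minterm propositionally, and checks the $\F D_{\C B}$ and $\hat{\F D}_{\C B}$ halves of $\nabla_{\C B} R_{\C B}(\delta)^p$ by applying the induction hypothesis at $\C B$-successors; your point about treating $\hat{\F D}_{\C B}$ as primitive only makes explicit the identity $\delta^p = w(\delta)^p \land \bigwedge_{\C B}\nabla_{\C B}R_{\C B}(\delta)^p$ that the paper uses tacitly when it takes $R_{\C B}(\delta^p)=R_{\C B}(\delta)^p$.
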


Literal elimination is a robust tool to construct the forgetting result. Intuitively, ``$\delta^p \equiv_{\san L} dforget_{\san L}(\delta, p)$'' may be concluded, but it is not always true (c.f. in \ref{subsection: K45cou}). Here, we provide a sufficient condition for the existence of $dforget_{\san L}(\delta, p)$.

\begin{thm}\label{thm: guideline}
Consider the context of a modal system $\san L$. Let $P \subseteq \C P$ be finite and $p \in P$. For every $k \geq 0$ and every $\delta \in D^P_k(\san L)$, if the following condition holds:
\begin{quote}
There is $l \geq k$ such that for every $\gamma \in D^{P}_{l}(\san L)$ and every $\san L$-model $(M', s')$ of $\gamma^p$, if $\gamma^{\uparrow k} = \delta$, then there is an $\san L$-model $(M, s)$ such that
\begin{itemize}
\item $M, s \vDash \delta$,

\item $(M, s) \bisim^{col}_p (M', s')$.
\end{itemize}
\end{quote}
then $dforget_{\san L}(\delta, p) \equiv_{\san L} \bigvee \{ \gamma \in D^{P}_{l}(\san L) \mid \gamma^{\uparrow k} = \delta\}^p$.
\end{thm}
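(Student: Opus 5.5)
Let $\Gamma = \{ \gamma \in D^{P}_{l}(\san L) \mid \gamma^{\uparrow k} = \delta\}$ and $\psi = \bigvee \Gamma^p$. The plan is to verify directly the two clauses of Definition \ref{defn: forgetting} for $\psi$. The syntactic requirement $\C P(\psi) \subseteq P\setminus\{p\}$ is immediate from the definition of literal elimination. Satisfiability of $\psi$ holds because $\Gamma$ is nonempty: since $\delta$ is $\san L$-satisfiable, take an $\san L$-model $(M,s)$ of $\delta$. By Proposition \ref{prop: models-unique-canon} there is a unique $\gamma \in D^P_l$ with $M,s\vDash\gamma$, and this $\gamma$ is then in $D^P_l(\san L)$. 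We then have to check $\gamma^{\uparrow k}=\delta$. Here I would use that $\gamma\vDash\gamma^{\uparrow k}$ (via Proposition \ref{prop: uparrow-property}(4), $\gamma^{\uparrow k}=\gamma^{\downarrow(l-k)}$, and item (1)). So $M,s$ satisfies both $\delta$ and $\gamma^{\uparrow k}$, which are in $D^P_k$, and uniqueness in Proposition \ref{prop: models-unique-canon} forces them to be equal. Finally Proposition \ref{prop: deltap-imply} gives $\gamma\vDash\gamma^p$, so $\psi$ is satisfiable.

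For \textbf{Forth}, let $M,s\vDash\delta$ and $(M,s)\bisim^{col}_p(M',s')$. By the argument just given, the unique $\gamma\in D^P_l$ true at $(M,s)$ lies in $\Gamma$. By Proposition \ref{prop: deltap-imply}, $M,s\vDash\gamma^p$. Since $\gamma^p$ contains no $p$, the Adequacy Lemma \ref{lem: col-p-bisim-adequacy} gives $M',s'\vDash\gamma^p$, hence $M',s'\vDash\psi$.

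For \textbf{Back}, let $M',s'\vDash\psi$. Then $M',s'\vDash\gamma^p$ for some $\gamma\in\Gamma$. This is exactly the situation covered by the hypothesis of the theorem, which supplies an $\san L$-model $(M,s)$ with $M,s\vDash\delta$ and $(M,s)\bisim^{col}_p(M',s')$.

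The only delicate step is the identity $\gamma^{\uparrow k}=\delta$ for the canonical formula realized at a model of $\delta$, that is, correctly invoking Proposition \ref{prop: uparrow-property}(4) with $h=l-k$ together with the uniqueness in Proposition \ref{prop: models-unique-canon}. The edge case where $\gamma$ is not satisfiable cannot arise, since $\gamma$ has a model. Everything else is bookkeeping.
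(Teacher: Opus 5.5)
Your proof is correct and follows the paper's argument essentially step for step. Forth goes through the unique $\gamma \in D^P_l$ realized at $(M,s)$, identifies $\gamma^{\uparrow k}=\delta$ via Proposition \ref{prop: uparrow-property} and the uniqueness in Proposition \ref{prop: models-unique-canon}, and then transfers $\gamma^p$ across the bisimulation; Back is read off directly from the hypothesis. Your extra checks (nonemptiness of $\Gamma$, and the explicit appeals to Proposition \ref{prop: deltap-imply} and Lemma \ref{lem: col-p-bisim-adequacy}) only spell out steps the paper leaves implicit.
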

\begin{proof}
Let $\Phi = \{ \gamma \in D^{P}_{l}(\san L) \mid \gamma^{\uparrow k} = \delta\}$. Let's check $\bigvee \Phi^p$ satisfies the Forth and the Back conditions of forgetting.

For every $\san L$-model $(M, s)$ of $\delta$ and every $\san L$-model $(M', s')$ such that $(M, s) \bisim^{col}_p (M', s')$, there is $\gamma \in D^{P}_{l}(\san L)$ such that $M, s \vDash \gamma$. By Proposition \ref{prop: uparrow-property}, $\gamma \vDash \gamma^{\uparrow k}$ and $\gamma^{\uparrow k} \in D^P_{k}(\san L)$. Then by Proposition \ref{prop: models-unique-canon}, $\delta = \gamma^{\uparrow k}$. So, $\gamma \in \Phi$. Since $(M, s) \bisim^{col}_p (M', s')$, we have $M', s' \vDash \gamma^p$. So, $M', s' \vDash \bigvee \Phi^p$. Then, the Forth condition of forgetting is satisfied. 

Conversely, for every $\san L$-model $(M', s')$ of $\bigvee \Phi^p$, there is $\gamma \in \Phi$ such that $M', s' \vDash \gamma^p$. By the premise, there is an $\san L$-model $(M, s)$ such that
\begin{itemize}
\item $M, s \vDash \delta$,

\item $(M, s) \bisim^{col}_p (M', s')$.
\end{itemize}
Then, the Back condition of forgetting is satisfied. 

Therefore, $dforget_{\san L}(\delta, p) \equiv_{\san L} \bigvee \{ \gamma \in D^{P}_{l}(\san L) \mid \gamma^{\uparrow k} = \delta\}^p$.
\end{proof}
Observe that, if the sufficient condition holds for $l = k$, then by Proposition \ref{prop: models-unique-canon}, $\delta$ is the only member of $\Phi$ and then, $\delta^p \equiv_{\san L} dforget_{\san L}(\delta, p)$. Theorem \ref{thm: guideline} provides us with a guideline for proving the uniform interpolation property. In the following sections, we will settle the parameter $l$ and then construct the model $(M, s)$ to meet the sufficient condition in Theorem \ref{thm: guideline} for each system $\san L$. Thus, it will be proved that each $\san L$ among $\san{K}_n\F D$, $\san{D}_n\F D$, $\san{T}_n\F D$, $\san{K45}_n \F D$, $\san{KD45}_n \F D$, and $\san{S5}_n \F D$ is closed under forgetting and then has the uniform interpolation property.

\section{Uniform Interpolation in $\san K_n \F D$, $\san {KD}_n \F D$, and $\san T_n \F D$}\label{sec: KDT}
The modal systems $\san K_n \F D$, $\san {KD}_n \F D$, $\san T_n \F D$ have the uniform interpolation property. We will prove this by showing the sufficient condition of Theorem \ref{thm: guideline} is satisfied when ``$l = k$''. 
More precisely, we will show that $\delta^p \equiv_{\san L} dforget_{\san L}(\delta, p)$, for every $\san L$ among the three systems. For convenience, let's uniformly denote by $P$ the discussed finite subset of $\C P$ and $p$ is an atom in $P$ to be forgotten.

Let's first consider $\san K_n \F D$ and $\san D_n \F D$.
\begin{lem}[Lemma for $\san K_n \F D$ and $\san D_n \F D$]\label{lem: KandD}
Let $\san L$ be $\san K_n \F D$ or $\san D_n \F D$, and $\delta \in D^P_k(\san L)$ for $k \geq 0$. For any model $(M', s')$, if $M', s' \vDash \delta^p$, then there is an $\san L$-model $(M, s)$ such that $M, s \vDash \delta$ and  $(M, s) \bisim^{col}_p (M', s')$.
\end{lem}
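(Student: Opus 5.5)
We argue by induction on $k$, building $(M,s)$ as a tree-like unravelling that mirrors $\delta$ while copying the $\C B$-successor structure of $(M',s')$ on the $P\setminus\{p\}$ part.

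\textbf{Base case $k=0$.} Here $\delta$ is a minterm and $\delta^p$ is the minterm with the $p$-literal removed. Take $M$ to be $M'$ with a fresh copy $s$ of $s'$ added. The copy $s$ has the same outgoing $R'_i$-edges as $s'$, and its valuation on $p$ is changed to agree with $\delta$. In the $\san D_n\F D$ case seriality is inherited from $M'$, since $M'$ is an $\san L$-model; implicitly we assume this, and if $M'$ is arbitrary we first replace it by an $\san L$-model of $\delta^p$ that is collectively bisimilar, or simply note that the statement is used only for $\san L$-models. The relation that pairs $s$ with $s'$ and is the identity elsewhere is a collective $p$-bisimulation.

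\textbf{Inductive step.} Let $\delta=\delta_0\wedge\bigwedge_{\C B}\nabla_{\C B}\Phi_{\C B}$, so $\delta^p=\delta_0^p\wedge\bigwedge_{\C B}\nabla_{\C B}\Phi^p_{\C B}$.
\begin{enumerate}
\item For every $\C C$-successor $t'\in R'_{\C C}(s')$, choose $\C C_{t'}=\{i\mid t'\in R'_i(s')\}$, the maximal agent set reaching $t'$. Since $M',s'\vDash\F D_{\C C_{t'}}\bigvee\Phi^p_{\C C_{t'}}$, pick $\eta_{t'}\in\Phi_{\C C_{t'}}$ with $M',t'\vDash\eta_{t'}^p$.
\item By the induction hypothesis, obtain an $\san L$-model $(M_{t'},t)$ of $\eta_{t'}$ that is collectively $p$-bisimilar to $(M',t')$.
\item For every $\C B$ and every $\eta\in\Phi_{\C B}$ that is not yet realised at some chosen successor, the conjunct $\hat{\F D}_{\C B}\eta^p$ gives some $t'\in R'_{\C B}(s')$ with $M',t'\vDash\eta^p$. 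Add an extra copy, via the induction hypothesis, of a model of $\eta$ bisimilar to $(M',t')$, and link it from $s$ by exactly the agents in $\C C_{t'}$.
\item Set $s$ with valuation $\delta_0$ (agreeing with $s'$ off $p$), and make $s$ $i$-related to the root of each copy precisely for $i\in\C C_{t'}$. The models are taken disjoint, so the intersections $R_{\C B}(s)$ are exactly the copies whose agent set contains $\C B$.
\end{enumerate}

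\textbf{Checking $M,s\vDash\delta$.} This is the main obstacle. For a copy attached with agent set $\C C_{t'}\supseteq\C B$, its root satisfies some $\eta\in\Phi_{\C C_{t'}}$. We need $\eta\in\Phi_{\C B}$, which holds by Proposition \ref{prop: BC-include}(2). This gives the box part of $\nabla_{\C B}$. The diamond part holds by step 3, since the extra copy is attached with $\C C_{t'}\supseteq\C B$.

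One subtlety: for an extra copy realising $\eta\in\Phi_{\C B}$ but attached with the larger set $\C C_{t'}$, we need $\eta\in\Phi_{\C C}$ for every $\C C$ with $\C B\subseteq \C C\subseteq\C C_{t'}$. Since $t'\in R'_{\C C}(s')$ satisfies only $\eta^p$, the box condition at $\C C$ gives only some $\eta'\in\Phi_{\C C}$ with $\eta'^p$ true at $t'$, not necessarily $\eta$ itself. The fix is to treat the choice uniformly: use the maximal set $\C C_{t'}$ and choose $\eta_{t'}\in\Phi_{\C C_{t'}}$ first. For the $\hat{\F D}_{\C B}\eta$ requirements, note that the pair $(\eta,t')$ with $\eta^p$ true at $t'$ may not fit in $\Phi_{\C C_{t'}}$. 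In that case attach the copy of $\eta$ with exactly the agent set $\C B$ instead. Its $\C B'$-membership for $\C B'\subseteq\C B$ then follows from Proposition \ref{prop: BC-include}(2).

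\textbf{Collective $p$-bisimulation.} The copy is still matched to $t'$, because $t'\in R'_{\C B}(s')$; but we lose Back for the larger sets only if the copy were required there, and it is not. Back for $\C C$ at $s'$ is covered by step 1. Forth holds because every copy attached with agent set $\C X$ sits under a $t'\in R'_{\C X}(s')$. The relation is the union of the induced bisimulations together with $(s,s')$.

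For $\san D_n\F D$ seriality of $s$ holds, since each $\Phi_{\{i\}}$ is nonempty when $\delta$ is $\san D_n\F D$-satisfiable; the other worlds are serial by the induction hypothesis.
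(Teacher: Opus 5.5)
Your argument is correct in its final form, i.e.\ with the fix from your ``subtlety'' paragraph. The first version of step 3, which links an extra copy of $\eta\in\Phi_{\C B}$ by the whole set $\C C_{t'}$, does break the box part of $\nabla_{\C C_{t'}}\Phi_{\C C_{t'}}$. Linking it by exactly $\C B$ puts its root only in $R_{\C B'}(s)$ for $\C B'\subseteq\C B$, where Proposition \ref{prop: BC-include}(2) applies.

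The paper uses the same induction and base case, but a uniform attachment rule. For every $\C B$, every $t'\in R'_{\C B}(s')$ and every $\eta\in R_{\C B}(\delta)$ with $M',t'\vDash\eta^p$, it attaches an inductively obtained model of $\eta$ via exactly the agents of $\C B$, so that $R_{\C B}(s)=\bigcup_{\C B\subseteq\C B_0}T^*_{\C B_0}$. The three conditions then split cleanly:
\begin{itemize}
\item Back at $s$ comes from the box part of $\nabla_{\C B}R_{\C B}(\delta^p)$: each $v'\in R'_{\C B}(s')$ satisfies some $\eta^p$, so it has a partner in $T^*_{\C B}$.
\item The diamonds come from the diamond part of the same formula.
\item The box for $\delta$ comes from Proposition \ref{prop: BC-include}(2).
\end{itemize}

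You instead secure Back with one copy per $t'$, linked by its maximal set $\C C_{t'}$, and use exactly-$\C B$ links only for missing diamond witnesses. This produces fewer copies. However, the exactly-$\C B$ rule you arrive at is precisely the paper's rule, and applying it to all triples $(\C B,t',\eta)$ makes the maximal sets and the case split unnecessary.

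One small correction concerns $\san D_n\F D$. You cannot replace a non-serial $M'$ by a collectively $p$-bisimilar serial model. A world with no $i$-successor cannot be collectively $p$-bisimilar to one that has an $i$-successor, and Back forces every world reachable from $s'$ to have a partner. So the lemma genuinely needs $(M',s')$ to be an $\san L$-model, which the paper also assumes tacitly.
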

\begin{proof}
Let's prove by induction on $k$.

\F{Base case} ($k = 0$): $\delta$ is a minterm, namely, $\delta = w(\delta)$. Let $M = (S, R, V)$ be a copy of $M'$, and construct a new actual world $s$ such that for every $q \in P$, $q \in V(s)$ if and only if $w(\delta) \vDash q$. For every $i \in \C A$ and $t' \in R'_{i}(s')$ in $M'$, if $t$ is the copy of $t'$ in $M$, add $(s, t)$ to $R_i$ in $M$. Let $\rho = \{(u, u') \mid u \text{ is the copy of }u'\} \cup \{(s, s')\}$. Then, it is easy to see that $(M,s) \bisim^{col}_p (M', s')$ and $M, s \vDash \delta$. (For convenience, let's say $s'$ (resp. $\delta$) constructs $s$.)

\F{Inductive step}: Let $s$ be the constructed new actual world by $s'$ as is in the base case. Initially, let $S = \{s\}$, $R_i = \varnothing$ for every $i \in \C A$, and $\rho = \{(s, s')\}$.

For every $\C B \in \PPA$, every $t' \in R'_{\C B}(s')$, and every $\eta \in R_{\C B}(\delta)$, suppose that $M', t' \vDash \eta^p$ and $\eta^p \in R_{\C B}(\delta^p)$. By the inductive hypothesis, there is a model $(M^t, t)$ and a bisimulation $\rho_t$ such that
\[M^t, t \vDash \eta \text{ and } \rho_t: (M^t, t) \bisim^{col}_p (M', t').\]
Denote $M^t = (S^t, R^t, V^t)$.  Then, concatenate $s$ with $(M^t, t)$ in the following way:
\begin{itemize}
\item Extend $S$ with $S\cup S^t$, $R_i$ with $R_i \cup R_i^t$ for $i \in \C A$, $V$ with $V \cup V^t$, and $\rho$ with $\rho \cup \rho_t$.

\item Add $(s, t)$ to $R_j$ for every $j \in \C B$.
\end{itemize}
Denote by $T^*_{\C B}$ the set of all such $t$ with regard to $\C B$.

Then, let $(M, s) = (S, R, V, s)$. Observe that for every $\C B \in \PPA$, $R_{\C B}(s) = \bigcup_{\C B \subseteq \C B_0} T^*_{\C B_0}$. We need to check that $(M, s) \bisim^{col}_p (M', s')$ and $M, s \vDash \delta$.

We have known that $(s, s') \in \rho$. For every pair $(u, u')\in \rho$, $u$ is either a copy of $u'$ or constructed by $u'$ inductively. It is easy to see that $V(u) \sim_p V(u')$. If $u$ is in some submodel $(M^t, t)$, by the construction, $\rho_t: (M^t, t) \bisim^{col}_p (M', t')$ and then the Forth and Back conditions about $u$ are satisified by the inductive hypothesis. We only need to consider the case where $u$ is $s$:
\begin{itemize}
\item For every $\C B \in \PPA$, every $v \in R_{\C B}(s)$, by the construction, there is $\C B_0 \supseteq \C B$ such that $v \in T^*_{\C B_0}$ and $v$ is constructed by $v' \in R_{\C B_0}(s')$ and $(v, v')\in \rho$. Since $R_{\C B_0}(s') \subseteq R_{\C B}(s')$, then $v' \in R_{\C B}(s')$.

\item For every $\C B \in \PPA$, every $v' \in R'_{\C B}(s')$, since $M', s' \vDash \delta^p$, there is $\eta \in R_{\C B}(\delta)$ such that $M', v' \vDash \eta^p$. By the construction, there is $v \in T^*_{\C B} \subseteq R_{\C B}(s)$ constructed by $\eta$ and $v' \in R_{\C B}(s')$ and $(v, v')\in \rho$. 
\end{itemize}
Therefore, $\rho: (M, s) \bisim^{col}_p (M', s')$

Let's check $M, s \vDash \delta$. We have already known $M, s \vDash w(\delta)$ in the construction.
\begin{itemize}
\item For every $\C B \in \PPA$ and every $\eta \in R_{\C B}(\delta)$, as $\eta^p \in R_{\C B}(\delta^p)$ and $M', s' \vDash \nabla_{\C B} R_B(\delta^p)$, there is $t' \in R'_B(s')$, such that $M', t' \vDash \eta^p$. By the construction, there is $t \in T^*_{\C B} \subseteq R_{\C B}(s)$ constructed by $t'$ and $\eta$ so that $M^t, t \vDash \eta$. Note that the construction only adds the edge $R_{\C B}$ from $s$ to $t$ and leaves others intact. So, $M, t \vDash \eta$.

\item For every $\C B \in \PPA$ and every $t \in R_{\C B}(s)$, there is $\C B_0 \supseteq \C B$ such that $t \in T^*_{\C B_0}$ and $t$ is constructed by $t'$ and $\eta \in R_{\C B_0}(\delta) \subseteq R_{\C B}(\delta)$. By the inductive hypothesis of the construction, $M^t, t \vDash \eta$. The construction only adds the edge $R_{\C B_0}$ from $s$ to $t$ and keeps others intact. So, $M, t \vDash \eta$.
\end{itemize}
So, $M, s \vDash \bigwedge_{\C B \in \PPA} \nabla_{\C B}R_{\C B}(\delta)$. Therefore, $M, s \vDash \delta$.

In this construction, for every $i \in \C A$, every $t, t'$ with $(t, t') \in \rho$, when $R'_{i}(t')$ is nonempty, so is $R_{i}(t)$. So, when $\san L$ is $\san{D}_n \F D$, $(M, s)$ is also a $\san{D}_n \F D$-model (a serial model).

Hence, the lemma is proved for $\san{K}_n \F D$ and $\san{D}_n \F D$.
\end{proof}

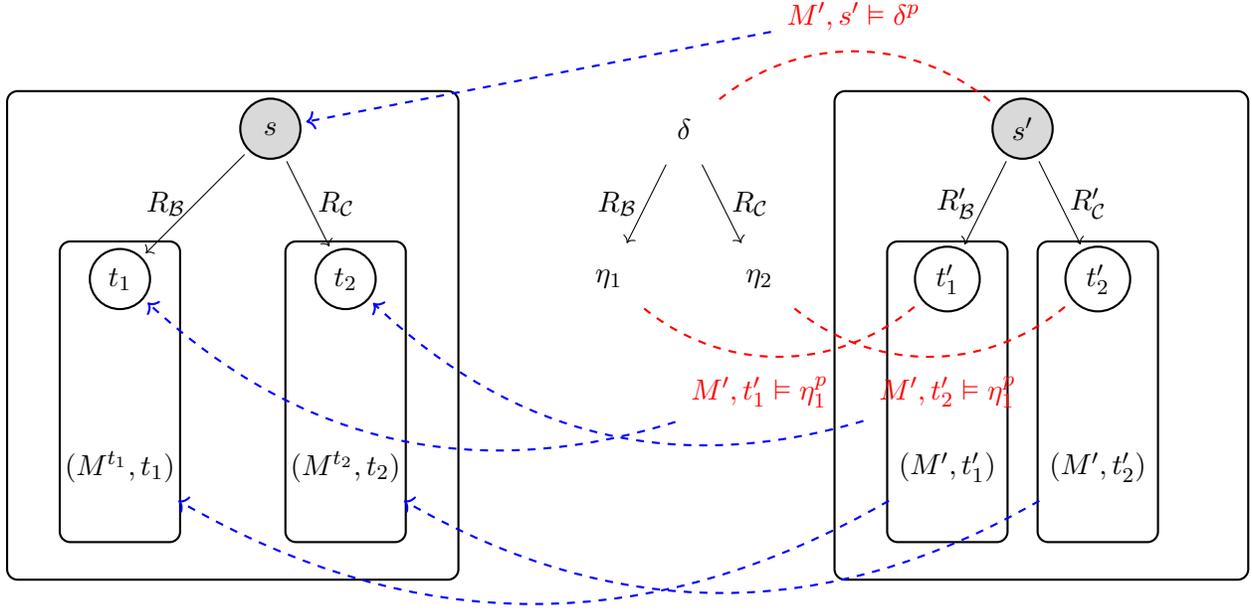
\begin{figure}[htbp]
    \centering
\begin{tikzpicture}[
    block/.style={draw, thick, rounded corners, inner sep=10pt},
    dblock/.style={draw, dashed, thick, rounded corners, inner sep=10pt},
    world/.style={circle, draw, thick, minimum size=8mm},
    formula/.style={thick, minimum size=8mm},
    actual/.style={world, fill=gray!30},
    edgei/.style={->, shorten >=2pt, shorten <=2pt},
    edgej/.style={red, dashed, thick, shorten >=2pt, shorten <=2pt, bend right=40}, 
    edgek/.style={->, blue, dashed, thick, shorten >=2pt, shorten <=2pt, bend left=30}, 
    edgejup/.style={red, dashed, thick, shorten >=2pt, shorten <=2pt, bend left=40}, 
    edgekst/.style={->, blue, dashed, thick, shorten >=2pt, shorten <=2pt}, 
    label/.style={font=\footnotesize}
]

\node[actual] (s') at (0,0) {\(s'\)};
\draw[block] (-2.5, 0.5) rectangle (3,-6);

\node[world] (t'1) at (-1,-2) {\(t'_1\)};
\draw[block] (-1.8, -1.5) rectangle (-0.2,-5.5);
\node[formula] (M't'1) at (-1,-4.5) {\((M', t'_1)\)};
\node[world] (t'2) at (1,-2) {\(t'_2\)};
\draw[block] (0.2, -1.5) rectangle (1.8,-5.5);
\node[formula] (M't'2) at (1,-4.5) {\((M', t'_2)\)};

\draw[edgei] (s') -- (t'1) node[midway, left] {\(R'_{\C B}\)};
\draw[edgei] (s') -- (t'2) node[midway, right] {\(R'_{\C C}\)};

\node[formula] (delta) at (-4.5,0) {\(\delta\)};

\node[formula] (eta1) at (-5.5,-2) {\(\eta_1\)};
\node[formula] (eta2) at (-3.5,-2) {\(\eta_2\)};

\draw[edgei] (delta) -- (eta1) node[midway, left] {\(R_{\C B}\)};
\draw[edgei] (delta) -- (eta2) node[midway, right] {\(R_{\C C}\)};

\node[actual] (s) at (-10,0) {\(s\)};
\draw[block] (-13.5, 0.5) rectangle (-7.5,-6);

\node[world] (t1) at (-12,-2) {\(t_1\)};
\draw[block] (-12.8, -1.5) rectangle (-11.2,-5.5);
\node[formula] (Mt1) at (-12,-4.5) {\((M^{t_1}, t_1)\)};

\node[world] (t2) at (-9,-2) {\(t_2\)};
\draw[block] (-9.8, -1.5) rectangle (-8.2,-5.5);
\node[formula] (Mt2) at (-9,-4.5) {\((M^{t_2}, t_2)\)};

\draw[edgei] (s) -- (t1) node[midway, left] {\(R_{\C B}\)};
\draw[edgei] (s) -- (t2) node[midway, right] {\(R_{\C C}\)};

\draw[edgejup] (delta) to (s');
\node[formula] (deltas') at (-2.25, 1.5) {\color{red}\(M', s' \vDash \delta^p\)};

\draw[edgej] (eta1) to (t'1);
\draw[edgej] (eta2) to (t'2);
\node[formula] (eta1t'1) at (-3.5, -3.5) {\color{red}\(M', t'_1 \vDash \eta_1^p\)};
\node[formula] (eta2t'2) at (-1, -3.5) {\color{red}\(M', t'_2 \vDash \eta_1^p\)};

\draw[edgek] (eta1t'1) to (t1);
\draw[edgek] (eta2t'2) to (t2);
\draw[edgekst] (deltas') to (s);
\draw[edgek] (M't'1) to (Mt1);
\draw[edgek] (M't'2) to (Mt2);



\end{tikzpicture}

  \caption{Inductive step in Lemma \ref{lem: KandD}}
\end{figure}

The case of $\san T_n \F D$ is similar and augmented with the following property about reflexivity:
\begin{restatable}{lem}{Reflexive}\label{lem: reflexive}
Let $\delta \in D^P_k(\san T_n \F D)$ for $k \geq 1$. Let $l \in \{1, \dots, k\}$. Then, for all $\C B \in \PPA$, we have $\delta^{\downarrow l} \in R_{\C B}(\delta^{\downarrow (l-1)})$ and $\delta^{\uparrow (l - 1)} \in R_{\C B}(\delta^{\uparrow l})$.
\end{restatable}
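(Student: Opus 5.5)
Both claims reduce to a single base fact, which I will call the \emph{reflexive fact}: for every $\eta \in D^P_m(\san T_n\F D)$ with $m \ge 1$ and every $\C B \in \PPA$, we have $\eta^{\downarrow} \in R_{\C B}(\eta)$. Once this is in hand, I transport it along the pruning operations using Proposition \ref{prop: dist-cut-into} and Proposition \ref{prop: uparrow-property}.

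For the reflexive fact, I argue semantically. Since $\eta$ is $\san T_n\F D$-satisfiable, fix a reflexive model $(M,s)$ with $M,s\vDash\eta$. Reflexivity of every $R_i$ gives $s \in R_{\C B}(s)$. The conjunct $\F D_{\C B}\bigvee R_{\C B}(\eta)$ then yields some $\theta \in R_{\C B}(\eta)$ with $M,s\vDash\theta$, where $\theta \in D^P_{m-1}$. On the other hand, by Proposition \ref{prop: uparrow-property}(1) we have $M,s\vDash \eta^{\downarrow}$, and $\eta^{\downarrow}\in D^P_{m-1}$. Proposition \ref{prop: models-unique-canon} says $(M,s)$ satisfies a unique member of $D^P_{m-1}$, so $\theta = \eta^{\downarrow}$. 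Hence $\eta^\downarrow \in R_{\C B}(\eta)$.

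For the first claim, I apply the reflexive fact to $\eta = \delta^{\downarrow(l-1)}$. By Proposition \ref{prop: uparrow-property}(2), $\eta \in D^P_{k-l+1}(\san T_n\F D)$, and $k-l+1\ge 1$. Therefore $\delta^{\downarrow l} = \eta^{\downarrow} \in R_{\C B}(\delta^{\downarrow(l-1)})$.

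For the second claim, I apply the reflexive fact to $\eta = \delta^{\uparrow l}$. By Proposition \ref{prop: uparrow-property}(3), $\eta \in D^P_l(\san T_n\F D)$, and $l\ge1$. This gives $(\delta^{\uparrow l})^{\downarrow} \in R_{\C B}(\delta^{\uparrow l})$. It then remains to identify $(\delta^{\uparrow l})^\downarrow$ with $\delta^{\uparrow(l-1)}$. I use Proposition \ref{prop: uparrow-property}(4) with $\gamma = \delta^{\uparrow l}\in D^P_{(l-1)+1}$ and $h=1$, which gives $\gamma^{\downarrow} = \gamma^{\uparrow(l-1)}$. Then I use item (5), taking $k_1 = l$, $k_2 = k$ and replacing $l$ by $l-1$ (the hypotheses hold since $l-1\le\min\{k,l\}$). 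This gives $(\delta^{\uparrow l})^{\uparrow(l-1)} = (\delta^{\uparrow k})^{\uparrow(l-1)}$. Finally, $\delta^{\uparrow k} = \delta$ because $dep(\delta)\le k$. Combining these, $(\delta^{\uparrow l})^{\downarrow} = \delta^{\uparrow(l-1)}$, which finishes the argument.

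The main obstacle is the bookkeeping in the second claim. Item (5) is stated for $\delta$ of depth at least $l$, and the edge cases need checking: when $l-1=0$, both sides should reduce to $w(\delta)$, and when $\delta^{\uparrow l}=\delta$, the statement should collapse to the first claim. If item (5) does not apply cleanly, I would instead prove $(\delta^{\uparrow l})^{\downarrow}=\delta^{\uparrow(l-1)}$ directly by induction on $l$ from Definition \ref{defn: uparrow}, unfolding one layer of successors at a time.
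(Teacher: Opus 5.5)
Your proposal is correct and is essentially the paper's argument: in a reflexive model you have $s\in R_{\mathcal B}(s)$, and uniqueness of the canonical formula of a fixed depth satisfied at $s$ (Proposition~\ref{prop: models-unique-canon}) forces the one-step pruning to lie in $R_{\mathcal B}(\cdot)$. The only difference is bookkeeping in the second claim: the paper rewrites $\delta^{\uparrow l}=\delta^{\downarrow(k-l)}$ and $\delta^{\uparrow(l-1)}=\delta^{\downarrow(k-l+1)}$ using item~4 of Proposition~\ref{prop: uparrow-property} and then reuses the first claim, whereas you apply your one-step fact to $\delta^{\uparrow l}$ and identify $(\delta^{\uparrow l})^{\downarrow}$ with $\delta^{\uparrow(l-1)}$ via items~4 and~5, which apply exactly as you state, so your fallback induction is not needed.
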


\begin{restatable}[Lemma for $\san T_n \F D$]{lem}{LemmaT}\label{lem: T}
Let $\delta \in D^P_k(\san T_n \F D)$ for some $k \geq 0$. For any $\san T_n \F D$-model $(M', s')$, if $M', s' \vDash \delta^p$, then there is a $\san T_n \F D$-model $(M, s)$ such that $M, s \vDash \delta$ and  $(M, s) \bisim^{col}_p (M', s')$.
\end{restatable}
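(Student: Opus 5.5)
We argue by induction on $k$, reusing the construction in the proof of Lemma \ref{lem: KandD}. The new requirement is that the constructed model must be reflexive for every agent. Since $R_{\C B} = \bigcap_{i\in\C B} R_i$, it is enough to make each $R_i$ reflexive; then every $R_{\C B}$ is reflexive as well. The plan is to run the $\san K_n\F D$ construction and then add a loop $(s,s)$ to $R_i$ for every $i$ and every newly constructed actual world $s$. The submodels $(M^t,t)$ come from the inductive hypothesis and are already $\san T_n\F D$-models, so only the new root needs loops.

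\textbf{Base case} ($k=0$). We take a copy of $M'$ and add a fresh world $s$ with valuation given by $w(\delta)$ on $P$ (and agreeing with $s'$ outside $p$). For each $i$, we put $(s,t)$ into $R_i$ whenever $t$ copies some $t'\in R'_i(s')$, and we add the loop $(s,s)$ to every $R_i$. We let $\rho$ be the copy relation together with $(s,s')$. Forth for the loop holds because $(s,s')\in\rho$ and $s'\in R'_{\C B}(s')$ by reflexivity of $M'$. Back holds as before, so $\rho:(M,s)\bisim^{col}_p(M',s')$, and clearly $M,s\vDash\delta$.

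\textbf{Inductive step.} We build $s$ and attach the subtrees $(M^t,t)$ for pairs $(t',\eta)$ with $t'\in R'_{\C B}(s')$, $\eta\in R_{\C B}(\delta)$ and $M',t'\vDash\eta^p$, exactly as in Lemma \ref{lem: KandD}. We then add $(s,s)$ to every $R_i$, so $R_{\C B}(s)=\{s\}\cup\bigcup_{\C B\subseteq\C B_0}T^*_{\C B_0}$.

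The bisimulation conditions at $s$ need one extra check. Forth for $v=s$ is witnessed by $s'\in R'_{\C B}(s')$. Back is unchanged.

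For $M,s\vDash\delta$, the $\hat{\F D}_{\C B}$ conjuncts hold as before. The $\F D_{\C B}$ conjunct now requires that $s$ itself satisfy some $\eta\in R_{\C B}(\delta)$. This is the one real difficulty. By Lemma \ref{lem: reflexive} with $l=1$, we have $\delta^{\downarrow}\in R_{\C B}(\delta)$ for all $\C B$, so it suffices to show $M,s\vDash\delta^{\downarrow}$. Note that $M,s\vDash\delta^\downarrow$ does not follow directly from $M,s\vDash\delta$, since the latter is what we are proving.

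We resolve this with a secondary induction on depth, or equivalently by strengthening the claim. We show that for every $m\le k$, $M,s\vDash\delta^{\downarrow m}$, proceeding by downward induction from $m=k$, where $\delta^{\downarrow k}=w(\delta)$ holds by construction. Suppose $M,s\vDash\delta^{\downarrow(m+1)}$. For $\delta^{\downarrow m}$, Proposition \ref{prop: dist-cut-into} gives $R_{\C B}(\delta^{\downarrow m})=R_{\C B}(\delta)^{\downarrow m}$. Now consider each successor:

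\begin{itemize}
\item Each old successor $t$ satisfies some $\eta\in R_{\C B_0}(\delta)$. By Proposition \ref{prop: uparrow-property}(1) it then satisfies $\eta^{\downarrow m}$.
\item The loop successor $s$ satisfies $\delta^{\downarrow(m+1)}$, and Lemma \ref{lem: reflexive} gives $\delta^{\downarrow(m+1)}\in R_{\C B}(\delta^{\downarrow m})$.
\end{itemize}

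The diamond conjuncts are handled as before, using the successors $T^*_{\C B}$ and pruning. At $m=0$ we obtain $M,s\vDash\delta$.

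Finally, all relations are reflexive: at $s$ by the added loops, and elsewhere by the inductive hypothesis or because $M'$ is reflexive. Hence $(M,s)$ is a $\san T_n\F D$-model.
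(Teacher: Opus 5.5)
Your proof is correct and is essentially the paper's argument: like the paper, you reuse the $\mathsf{K}_n\mathbf{D}$ construction, add loops, and break the resulting circularity by an auxiliary induction on pruning depth driven by Lemma \ref{lem: reflexive}; your $\delta^{\downarrow m}$ is the paper's $\delta^{\uparrow(k-m)}$ by Proposition \ref{prop: uparrow-property}. The only difference is organizational. The paper applies Lemma \ref{lem: KandD} as a black box and then adds loops to every world, so its induction on $l$ has to run over all constructed worlds $t$ with their formulas $\eta_t$. You instead add loops inside the induction on $k$, so the inductive hypothesis already settles the subtrees and the auxiliary induction is needed only at the root.
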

\begin{proof}
Note that $D^P_k(\san T_n \F D) \subseteq D^P_k(\san K_n \F D)$. By Lemma \ref{lem: KandD}, there exists a $\san K_n \F D$-model there exists $(M^*, s^*) = (S^*, R^*, V^*, s^*)$ such that $M^*, s^* \vDash \delta$ and $(M^*,s^*) \bisim^{col}_p (M', s')$. We further extend $(M^*, s^*)$ by adding all the reflexive edges to every world as follows:
\begin{itemize}
\item Let $S = S^*$, $V = V^*$, and $s = s^*$.

\item For every $i \in \C A$, let $R_i = R_i^* \cup \{(t, t) \mid t \in S\}$ and $R = (R_i)_{i \in \C A}$.
\end{itemize}
Thus, let $(M, s) = (S, R, V, s)$, which is obviously a $\san T_n \F D$-model.

Let $\rho$ be the collective $p$-bisimulation that witnesses $(M^*,s^*) \bisim^{col}_p (M', s')$. The conditions of Definition \ref{defn: col-p-bisim-pointed} obviously remain, except for the Forth condition. For any $u \in S^*$ and $u' \in S'$, suppose $u\rho u'$. For any $\C B \subseteq \C A$, for any $v \in R_{\C B}(u)$, if $u \neq v$, the case is no different from the case of $v \in R^*_{\C B}(u)$ in $(M^*, s^*)$; if $u = v$, because $M'$ is reflexive, then $u' \in R'_{\C B}(u')$ and therefore we still have $u\rho u'$. Therefore, the Forth condition also remains. Therefore, $\rho: (M, s) \bisim^{col}_p (M', s')$.

Let's show that $M, s \vDash \delta$. For every $t \in S$ which is not a copy, let's denote by $\eta_t$ the d-canonical formula that constructs $t$. (If $t$ is $s$, then $\eta_t$ is $\delta$.) We will show that $M, t \vDash \eta_t^{\uparrow l}$ for every $l \in \B N$. Let's prove by induction on $l$.
\begin{itemize}
\item When $l = 0$, $\eta_t^{\uparrow 0} = w(\eta_t)$ and by the construction, $M, t \vDash w(\eta_t)$.

\item Suppose $l > 0$. 

\begin{itemize}
\item $dep(\eta_t) \leq l$: $\eta_t^{\uparrow l} = \eta_t^{\uparrow (l - 1)} = \eta_t$. By the inductive hypothesis, $M, t \vDash \eta_t^{\uparrow l}$.

\item $dep(\eta_t) \geq l > 0$: We have known that $M, t \vDash w(\eta_t)$. For every $u \in R_{\C B}(t)$, if $u \neq t$, by the hypothesis, $M, u \vDash \eta_u^{\uparrow (l - 1)}$; if $u = t$, by Lemma \ref{lem: reflexive}, $\eta_t^{\uparrow (l - 1)} \in R_{\C B}(\eta_t^{\uparrow l})$ and by the hypothesis, $M, t \vDash \eta_t^{\uparrow (l - 1)}$. Conversely, for every $\eta \in R_{\C B}(\eta_t)$, there is a world $u \in R_{\C B}(t)$ constructed by $\eta$ and $M, u \vDash \eta^{\uparrow (l - 1)}$ by the inductive hypothesis. Therefore, $M, t \vDash \eta_t^{\uparrow l}$.
\end{itemize}
 
\end{itemize}
Note that $\delta^{\uparrow k} = \delta$. So, we have $M, s \vDash \delta$.

Hence, the lemma for $\san T_n \F D$ is proved.
\end{proof}

\begin{thm}\label{thm: KDT}
$\san K_n \F D$, $\san {KD}_n \F D$, and $\san T_n \F D$ have the uniform interpolation property.
\end{thm}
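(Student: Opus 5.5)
The plan is to assemble the pieces already established, with Theorem \ref{thm: guideline} instantiated at $l = k$. Fix $\san L$ among $\san K_n \F D$, $\san D_n \F D$, $\san T_n \F D$, a finite $P$, an atom $p \in P$, and a satisfiable $\delta \in D^P_k(\san L)$.

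\textbf{Step 1: the sufficient condition of Theorem \ref{thm: guideline} holds with $l = k$.} By Proposition \ref{prop: uparrow-property}, for $\gamma \in D^P_k(\san L)$ we have $\gamma^{\uparrow k} = \gamma$, so the only $\gamma$ with $\gamma^{\uparrow k} = \delta$ is $\delta$ itself. The condition therefore reduces to the following: every $\san L$-model $(M', s')$ of $\delta^p$ admits an $\san L$-model $(M, s)$ of $\delta$ with $(M, s) \bisim^{col}_p (M', s')$. This is exactly Lemma \ref{lem: KandD} for $\san K_n \F D$ and $\san D_n \F D$, and Lemma \ref{lem: T} for $\san T_n \F D$. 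Hence $dforget_{\san L}(\delta, p) \equiv_{\san L} \delta^p$.

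The one point to check is that $\delta^p$ is $\san L$-satisfiable, which the definition of "exists" requires. This follows from Proposition \ref{prop: deltap-imply}: $\delta \vDash \delta^p$, and $\delta$ is $\san L$-satisfiable. I would also note that $\C P(\delta^p) \subseteq P \setminus \{p\}$, since literal elimination removes every occurrence of $p$. Strictly, the forgetting definition requires $\C P(\psi) \subseteq \C P(\phi) \setminus \{p\}$, and this holds whenever $\delta$ is over $P$.

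\textbf{Step 2: closure under forgetting.} By Proposition \ref{prop: delta-phi}, $dforget_{\san L}(\delta,p)$ exists for every satisfiable d-canonical $\delta$, so $\san L$ is closed under forgetting.

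\textbf{Step 3: uniform interpolation.} Proposition \ref{prop: UI-forgetting} then gives the uniform interpolation property. Concretely, for satisfiable $\phi$ over $P$ with $k = dep(\phi)$, the interpolant is $\bigvee\{\delta^p \mid \delta \in D^P_k(\san L),\ \delta \vDash_{\san L} \phi\}$.

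One bookkeeping subtlety is that Proposition \ref{prop: unique-canon} uses $P = \C P(\phi)$. If $p \notin \C P(\phi)$, forgetting is trivial: $\phi$ itself works, since the identity relation is a collective $p$-bisimulation. Otherwise canonical formulas over $\C P(\phi)$ suffice.

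The argument is essentially routine given the lemmas; the only real obstacle, already handled by Lemmas \ref{lem: KandD} and \ref{lem: T}, is the model construction. What remains is the satisfiability and vocabulary checks on $\delta^p$.
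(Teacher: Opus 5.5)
Your proposal is correct and matches the paper's proof: you instantiate Theorem \ref{thm: guideline} at $l = k$ so the disjunction collapses to $\delta^p$, discharge its hypothesis with Lemmas \ref{lem: KandD} and \ref{lem: T}, and conclude via Propositions \ref{prop: delta-phi} and \ref{prop: UI-forgetting}. Your extra checks are details the paper leaves implicit: satisfiability of $\delta^p$ via Proposition \ref{prop: deltap-imply}, the vocabulary condition, and the trivial case $p \notin \C P(\phi)$, where the Forth direction needs Lemma \ref{lem: col-p-bisim-adequacy} in addition to the identity bisimulation.
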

\begin{proof}
Let $\san L$ be any of $\san K_n \F D$, $\san {KD}_n \F D$, and $\san T_n \F D$. By Theorem \ref{thm: guideline}, Lemma \ref{lem: KandD}, and Lemma \ref{lem: T}, for every finite $P$, every $p$, and every $\delta \in D^P_k(\san L)$ for $k \geq 0$, $\delta^p \equiv_{\san L} dforget_{\san L}(\delta, p)$. By Proposition \ref{prop: delta-phi}, for every $\san L$-satisifaible formula $\phi$ has its uniform interpolant over $P \setminus \{p\}$. More precisely,
\[dforget_{\san L}(\phi, p) \equiv_{\san L} \bigvee\{\delta^p \mid \delta \vDash_{\san L} \phi \text{ and } \delta \in D^P_{dep(\phi)}(\san L)\}\]
Therefore, $\san K_n \F D$, $\san {KD}_n \F D$, and $\san T_n \F D$ have the uniform interpolation property.
\end{proof}
\section{Uniform Interpolation in $\san{K45}_n \F D$, $\san{KD45}_n \F D$, and $\san{S5}_n \F D$}\label{sec: K45KD45S5}

In the section, let's show $\san{K45}_n \F D$, $\san{KD45}_n \F D$, and $\san{S5}_n \F D$ have the uniform interpolation property. All the three systems require their models to be transitive and Euclidean.  We will first show a counterexamples such that ``$\delta^p \not \equiv_{\san L} dforget_{\san L}(\delta, p)$''. Then, we will provide a general method to obtain $dforget_{\san L}(\delta, p)$, and then conclude uniform interpolation. Finally, we show that in some special cases where $\delta^p \equiv_{\san L} dforget_{\san L}(\delta, p)$ holds. For convenience, let's uniformly denote by $P$ the discussed finite subset of $\C P$ and $p$ is an atom in $P$ to be forgotten.

\subsection{$\delta^p$ may not be the uniform interpolant}\label{subsection: K45cou}\label{subsec: Counterexample}

In this subsection, let's assume $\C A$ contains at least three agents, i.e., $\{1, 2, 3\}$. For simplicity, we also assume $P = \{p, q\}$. The examples in this subsection can be extended to the case where $P$ is larger.

\begin{restatable}{prop}{Sfivecounter}\label{prop: S5counter}
For any $k \geq 2$, there is $\delta^k_{cou} \in D^P_{k}(\san{S5}_n \F D)$ and a $\san{S5}_n \F D$-model $(M', s')$ of $(\delta^k_{cou})^p$ such that for any $\san{K45}_n \F D$-model $(M, s)$, the following two do not hold simultaneously:
\begin{itemize}
\item $(M, s) \vDash \delta^k_{cou}$,

\item $(M, s) \bisim^{col}_p (M', s')$.
\end{itemize}
Thus, $(\delta^k_{cou})^p \not \equiv_{\san L} dforget_{\san L}(\delta^k_{cou}, p)$, for $\san L$ among $\san{K45}_n \F D$, $\san{KD45}_n \F D$, and $\san{S5}_n \F D$.
\end{restatable}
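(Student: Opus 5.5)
The plan is to build $\delta^k_{cou}$ directly out of a small $\san{S5}_n\F D$-model. The formula should force two worlds that differ only in $p$ to be distinguished at the level of a coalition $\{1,2\}$. Then $M'$ is chosen so that, once $p$ is ignored, the two roles collapse, and transitivity plus Euclideanness make the collapse impossible to undo. The final claim is routine given this. If $(\delta^k_{cou})^p \equiv_{\san L} dforget_{\san L}(\delta^k_{cou},p)$, the Back condition of Definition \ref{defn: forgetting} applied to $(M',s')$ (which is also a $\san{K45}_n\F D$- and $\san{KD45}_n\F D$-model) would produce a model $(M,s)$ of $\delta^k_{cou}$ with $(M,s)\bisim^{col}_p(M',s')$. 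Since every $\san L$-model here is a $\san{K45}_n\F D$-model, this contradicts the first part. So all the work is in the first part.

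\textbf{Step 1: the source model.} I would start from a finite $\san{S5}_n\F D$-model $N$ with agents $1,2,3$ and atoms $p,q$. In $N$ the actual world $s_0$ has an $R_1$-class and an $R_2$-class that overlap in a $p$-world $a$ and a $\neg p$-world $b$, both satisfying $q$. The intersection $R_{\{1,2\}}$ is arranged to contain $s_0$ and $a$ but not $b$; agent $3$'s partition separates $a$ from $b$. I then set $\delta^k_{cou}$ to be the unique member of $D^P_k$ true at $(N,s_0)$ (Proposition \ref{prop: models-unique-canon}), which is automatically $\san{S5}_n\F D$-satisfiable. By Proposition \ref{prop: canon-dichotomy}, $\delta^k_{cou}$ decides every depth-$\le k$ formula over $\{p,q\}$. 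In particular it records facts such as ``$\hat{\F D}_{\{1,2\}}(p\land q)$'', ``$\F D_{\{1,2\}}\neg(\neg p\land q)$'' and, at depth $2$, which $\{1,2\}$- and $3$-successors see which literals. Depth $k\ge 2$ is needed so that these nested constraints survive; for larger $k$ the same $N$ is used.

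\textbf{Step 2: the target model.} Next I would build $M'$ by taking $N$ and re-gluing the equivalence classes so that the $\{1,2\}$-class of $s'$ now contains a copy of $b$ rather than $a$. Since $a$ and $b$ differ only in $p$, this preserves every $p$-free description up to the relevant depth. I would check $M',s'\vDash(\delta^k_{cou})^p$ by walking down the tree of $\delta^k_{cou}$. After literal elimination every node constrains only $q$, so it suffices to verify the $\nabla_{\C B}$ clauses at each level, with $q$-values and successor patterns matched for each coalition. This is bookkeeping in the style of the tree pictures above.

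\textbf{Step 3: impossibility (main obstacle).} Suppose $(M,s)\vDash\delta^k_{cou}$ is a $\san{K45}_n\F D$-model and $\rho:(M,s)\bisim^{col}_p(M',s')$. By the Forth/Back clauses for $\C C=\{1,2\}$, $s$ must have a $\{1,2\}$-successor $x$ matched to the copy of $b$ in $M'$. By $\delta^k_{cou}$ we have $M,x\vDash p$, and $x$ must also realise the depth-$(k-1)$ successor pattern that $\delta^k_{cou}$ assigns to $a$. Using Euclideanness and transitivity of $R_1,R_2,R_3$, we get $R_i(x)=R_i(s)$ for $i\in\{1,2\}$. Moreover $R_3(x)$ is pinned down by the $3$-successors of the $M'$-world bisimilar to $x$. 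Back along $R_3$ then forces $x$ to have a $3$-successor realising $b$'s pattern, i.e.\ a $\neg p$-world that is also $\{1,2\}$-related to $s$. This contradicts the conjunct $\F D_{\{1,2\}}\neg(\neg p\land q)$ of $\delta^k_{cou}$.

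I expect this step to be the real difficulty, and in particular the precise design of $N$ so that the Euclidean propagation really closes the trap while $(\delta^k_{cou})^p$ still holds at $s'$. If the first attempt leaks, I would add a distinguishing $q$-world in agent $3$'s class to rigidify the matching.
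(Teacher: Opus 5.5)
Your proposal has a genuine gap: Steps 1--3 do not yield a concrete counterexample, and the inference meant to close the trap does not go through.

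First, the source model $N$ is inconsistent as described. Since $R_{\{1,2\}}(s_0)=R_1(s_0)\cap R_2(s_0)$, if $a$ and $b$ both lie in the $R_1$-class and the $R_2$-class of $s_0$, then $b\in R_{\{1,2\}}(s_0)$, contrary to your requirement.

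More seriously, the last sentence of Step 3 is unsupported, for two reasons:
\begin{itemize}
\item Transitivity and Euclideanness of $R_3$ only relate $R_3$-classes to one another. Nothing places a $3$-successor of $x$ inside $R_{\{1,2\}}(s)$.
\item A collective $p$-bisimulation can never force a world to satisfy $\neg p$. Matching a world with (a copy of) $b$ transfers only its $q$-value and its $p$-free structure.
\end{itemize}
So you never obtain a $\neg p\land q$ world in $R_{\{1,2\}}(s)$, and there is no contradiction with $\F D_{\{1,2\}}\neg(\neg p\land q)$.

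You also defer the check that $M',s'\vDash(\delta^k_{cou})^p$, which is exactly where the design is constrained. Your remark that ``for larger $k$ the same $N$ is used'' is unjustified as well. A deeper canonical formula records the re-glued structure at more levels, so its $p$-remainder generally fails at $s'$. The paper avoids this by pushing the gadget down a chain $s_3,\dots,s_k$.

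The missing idea is a mechanism with two features:
\begin{itemize}
\item $\delta$ pins down the full type of a world from its $p$-free type;
\item transitivity forces a successor that $\delta$ demands back into the region $\delta$ controls, and Forth then sends that successor to the wrong place in $M'$.
\end{itemize}
The paper achieves this by placing the re-gluing one level below the actual world. Inside the $R_1$-class $\{s_2,t_1,\dots,t_4\}$, the $R_2$-pairs $\{t_1,t_3\},\{t_2,t_4\}$ become $\{t'_1,t'_4\},\{t'_2,t'_3\}$ in $M'$. Agent $3$ ensures that the five depth-$1$ types $\eta_0,\dots,\eta_4$ in $R_1(\delta^2_{cou})$ have pairwise distinct $p$-remainders.

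The contradiction then runs as follows.
\begin{enumerate}
\item Take $t\in R_1(s)$ of type $\eta_2$; it must be matched with $t'_2$.
\item $\delta^2_{cou}$ demands some $u\in R_{\{1,2\}}(t)$ satisfying $p\land q$.
\item Transitivity of $R_1$ gives $u\in R_1(s)$. Hence $u$ has type $\eta_4$, the only $p\land q$ type there.
\item Forth sends $u$ into $R'_{\{1,2\}}(t'_2)=\{t'_2,t'_3\}$, whose $p$-free types force $\eta_2$ or $\eta_3$. This is a contradiction.
\end{enumerate}
Your Step 3 instead uses Back along $\{1,2\}$ and $R_3$. It never applies Forth along a mixed coalition to a successor that $\delta$ demands, so it has no analogue of this argument.
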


\begin{proof}
Let's only show how $\delta^2_{cou}$ and $(M', s')$ are constructed. The full proof is placed in the appendix. We write ``$\{x, y\} \in R_{\C B}$'' instead of ``$(x, y), (y, x) \in R_{\C B}$''. Consider the model $(M_2, s_2) = (S, R, V, s_2)$ as follows:
\begin{itemize}
\item $S = \{s_2, t_1, t_2, t_3, t_4, v, w\}$.

\item For every $x, y \in \{s_2, t_1, t_2, t_3, t_4\}$, $\{x, y\} \in R_1$; $\{t_1, t_3\}, \{t_2, t_4\} \in R_2$; $\{t_1, w\}, \{t_4, v\} \in R_3$. For every $x \in S$, $(x, x) \in R_{\C A}$. These are all the accessibility relations.

\item $V$ is given by the following:
\begin{itemize}
\item $M_2, x \vDash \neg p \land \neg q$, for $x \in \{t_1, s_2, v\}$,

\item $M_2, t_2 \vDash p \land \neg q$,

\item $M_2, x \vDash \neg p \land q$, for $x \in \{t_3, w\}$,

\item $M_2, t_4 \vDash p \land q$.
\end{itemize}

\end{itemize}
Thus, $(M_2, s_2)$ is an $\san{S5}_n \F D$-model, which is illustrated in Figure \ref{fig: Counterexample}. Let $\delta^2_{cou}$ be the unique member of $D^P_2(\san{S5}_n \F D)$ such that $M_2, s_2 \vDash \delta^{2}_{cou}$. 


Let's $(M', s') = (S', R', V', s')$ be a model of $(\delta^2_{cou})^p$, illustrated in Figure \ref{fig: Counterexample}. $S'$ is a copy of $S$ and $s'$ is a copy of $s_2$. Every world $x' \in S'$ inherits the value of $x$ on $q$. Every accessibility relation is also inherited except that $\{t'_1, t'_4\} \in R'_2$ but $\{t'_1, t'_3\} \notin R'_2$; $\{t'_2, t'_3\} \in R'_2$ but $\{t'_2, t'_4\} \notin R'_2$.
\end{proof}

\begin{figure}[htbp]
  \begin{minipage}{0.48\textwidth}
    \centering
\resizebox{0.9\linewidth}{!}{
\begin{tikzpicture}[
    block/.style={draw, darkgreen, dashed, thick, rounded corners, inner sep=10pt},
    world/.style={circle, draw, thick, minimum size=8mm},
    actual/.style={world, fill=gray!30},
    edgei/.style={blue, shorten >=2pt, shorten <=2pt},
    edgej/.style={red, dashed, thick, shorten >=2pt, shorten <=2pt, bend right=20}, 
    edgek/.style={black, thick, shorten >=2pt, shorten <=2pt, dotted}, 
    label/.style={font=\footnotesize}
]

\node[actual] (s) at (0,0) {\(s_2\)};

\node[world] (t1) at (-3,-3) {\(t_1\)};
\node[world] (t2) at (1,-3) {\(t_2\)};
\node[world] (t3) at (-1,-2) {\(t_3\)};
\node[world] (t4) at (3,-2) {\(t_4\)};

\node[world] (w) at (-4, -5) {\(w\)};
\node[world] (v) at (4, -5) {\(v\)};

\draw[edgei] (s) -- (t1);
\draw[edgei] (s) -- (t2);
\draw[edgei] (s) -- (t3);
\draw[edgei] (s) -- (t4) node[midway, above right] {\(R_1\)};
\draw[edgei] (t1) -- (t2) node[midway, below] {\(R_1\)};
\draw[edgei] (t1) -- (t3);
\draw[edgei] (t1) -- (t4);
\draw[edgei] (t2) -- (t3);
\draw[edgei] (t2) -- (t4);
\draw[edgei] (t3) -- (t4);

\draw[edgej] (t1) to node[midway, below, sloped] {\(R_2\)} (t3) ;
\draw[edgej] (t2) to node[midway, below, sloped] {\(R_2\)} (t4) ;

\draw[edgek] (t1) -- (w) node[midway, above left] {\(R_3\)};
\draw[edgek] (t4) -- (v) node[midway, above right] {\(R_3\)};
%

\node[label] at (s.north) [above=0pt] {\(\neg p \land \neg q\)};
\node[label] at (t1.north) [above=0pt] {\(\neg p \land \neg q\)};
\node[label] at (t2.north) [above=0pt] {\(p \land \neg q\)};
\node[label] at (t3.north) [above=0pt] {\(\neg p \land q\)};
\node[label] at (t4.north) [above=0pt] {\(p \land q\)};
\node[label] at (w.north) [above=-2pt] {\(\neg p \land q\)};
\node[label] at (v.north) [above=-2pt] {\(\neg p \land \neg q\)};
\end{tikzpicture}
}

\caption*{$(M_2, s_2)$}
\end{minipage}
\hfill
\begin{minipage}{0.48\textwidth}
\centering
\resizebox{0.9\linewidth}{!}{
\begin{tikzpicture}[
    block/.style={draw, darkgreen, dashed, thick, rounded corners, inner sep=10pt},
    world/.style={circle, draw, thick, minimum size=8mm},
    actual/.style={world, fill=gray!30},
    edgei/.style={blue, shorten >=2pt, shorten <=2pt},
    edgej/.style={red, dashed, thick, shorten >=2pt, shorten <=2pt, bend right=6}, 
    edgek/.style={black, thick, shorten >=2pt, shorten <=2pt, dotted}, 
    label/.style={font=\footnotesize}
]

\node[actual] (s) at (0,0) {\(s'\)};

\node[world] (t1) at (-3,-3) {\(t'_1\)};
\node[world] (t2) at (1,-3) {\(t'_2\)};
\node[world] (t3) at (-1,-2) {\(t'_3\)};
\node[world] (t4) at (3,-2) {\(t'_4\)};

\node[world] (w) at (-4, -5) {\(w'\)};
\node[world] (v) at (4, -5) {\(v'\)};

\draw[edgei] (s) -- (t1);
\draw[edgei] (s) -- (t2);
\draw[edgei] (s) -- (t3);
\draw[edgei] (s) -- (t4) node[midway, above right] {\(R'_1\)};
\draw[edgei] (t1) -- (t2) node[midway, below] {\(R'_1\)};
\draw[edgei] (t1) -- (t3);
\draw[edgei] (t1) -- (t4);
\draw[edgei] (t2) -- (t3);
\draw[edgei] (t2) -- (t4);
\draw[edgei] (t3) -- (t4);

\draw[edgej] (t1) to node[left, below, sloped] {\(R'_2\)} (t4);
\draw[edgej] (t2) to node[midway, above, sloped] {\(R'_2\)} (t3);

\draw[edgek] (t1) -- (w) node[midway, above left] {\(R'_3\)};
\draw[edgek] (t4) -- (v) node[midway, above left] {\(R'_3\)};
%

\node[label] at (s.north) [above=0pt] {\(\neg q\)};
\node[label] at (t1.north) [above=0pt] {\(\neg q\)};
\node[label] at (t2.north) [above=0pt] {\(\neg q\)};
\node[label] at (t3.north) [above=0pt] {\(q\)};
\node[label] at (t4.north) [above=0pt] {\(q\)};
\node[label] at (w.north) [above=-2pt] {\(q\)};
\node[label] at (v.north) [above=-2pt] {\(\neg q\)};
\end{tikzpicture}
}
\caption*{$(M', s')$}
  \end{minipage}

  \caption{$M_2$ and $M'$ in Proposition \ref{prop: S5counter}. (Reflexive edges $R_{\C A}$ are omitted for readability)}
  \label{fig: Counterexample}
\end{figure}

\begin{restatable}{prop}{Morecounter}\label{prop: More-counter}
Let $\delta^2_{cou}$ and $(M', s')$ be defined as in Proposition \ref{prop: S5counter}. For all $\san{S5}_n \F{D}$-satifiable d-canonical formulas $\gamma$ and $\xi$ such that $dep(\gamma), dep(\xi) > 2$, if 
\begin{itemize}
\item $(\xi^{\uparrow 2})^p = (\delta^2_{cou})^p$ and $M', s' \vDash \xi^p$,

\item $\gamma^{\uparrow 2} = \delta^2_{cou}$,
\end{itemize}
then $\gamma \neq \xi$.
\end{restatable}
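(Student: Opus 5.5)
Suppose for contradiction that $\gamma = \xi$. Then $\gamma$ is an $\san{S5}_n \F D$-satisfiable d-canonical formula of depth $l > 2$ with $\gamma^{\uparrow 2} = \delta^2_{cou}$ and $M', s' \vDash \gamma^p$. The plan is to turn the hypothesis $M', s' \vDash \gamma^p$ into structural information about $M'$ that contradicts what $\gamma^{\uparrow 2} = \delta^2_{cou}$ forces. The point is that depth beyond $2$ lets $\gamma$ "look back" along the equivalence relations: in an $\san{S5}_n \F D$-model of $\delta^2_{cou}$ the $R_1$-successors of the actual world form an $R_1$-class, and each of them in turn sees that class again. So depth $3$ already encodes, for each world, which $p$-values sit in its $R_2$-class together with the $R_3$-information.

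First, fix an $\san{S5}_n \F D$-model $(N, u)$ of $\gamma$. Since $\gamma \vDash \gamma^{\uparrow 2} = \delta^2_{cou}$ (Proposition \ref{prop: uparrow-property}), we have $N, u \vDash \delta^2_{cou}$. From the appendix proof of Proposition \ref{prop: S5counter}, in any $\san{K45}_n\F D$-model of $\delta^2_{cou}$ the following holds. Every $R_1$-successor whose valuation is $\neg p \land \neg q$ and which has an $R_3$-successor with $q$ (a "$t_1$-type" world) has, among its $R_2$-successors, only worlds satisfying $\neg p$ with $q$ or $\neg q$. Dually, every "$t_4$-type" world ($p\land q$, with an $R_3$-successor satisfying $\neg q$) has $R_2$-successors only of $p$-type. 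This is the invariant that made the bisimulation impossible, and I would extract it as a separate claim about $\delta^2_{cou}$, stated via the depth-2 successor sets $R_1(\delta^2_{cou})$ and $R_{\{1,2\}}$, $R_{\{1,3\}}$ components.

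Second, use depth $\geq 3$. Since $\gamma$ has depth at least $3$, each $\eta \in R_1(\gamma)$ has depth $\geq 2$, and $R_2(\eta)$ has depth $\geq 1$. Therefore $\gamma$ records, for each $t_1$-type successor $\eta$, the full depth-1 types of its $R_2$-successors, including the $R_3$-successor valuations and the $p$-values. Applying $(\cdot)^p$ erases $p$ but keeps the "shape". Concretely, $\gamma^p$ says: every $R_1$-successor that is $\neg q$ and has an $R_3$-successor with $q$ has its $R_2$-successors of the form "$q$ with no $R_3$-successor of $\neg q$ other than itself", i.e. $t_3$-like. By contrast, in $M'$ the world $t'_1$ has $R'_2$-successor $t'_4$, which satisfies $q$ and has an $R'_3$-successor $v'$ with $\neg q$. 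This is a depth-$3$ property without $p$, phrased as $\hat{\F K}_1(\neg q \land \hat{\F K}_3 q \land \hat{\F K}_2(q \land \hat{\F K}_3\neg q))$. I would check that $M', s'$ satisfies it, while every $\san{S5}_n\F D$-model of $\gamma$ falsifies it. By Proposition \ref{prop: canon-dichotomy}, $\gamma$ then entails its negation, and this negation survives literal elimination since it is $p$-free (via Proposition \ref{prop: deltap-imply} and the dichotomy on $\gamma^p$'s structure). That contradicts $M', s' \vDash \gamma^p$.

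The main obstacle is the step where I assert that every model of $\delta^2_{cou}$, and hence of $\gamma$, falsifies this depth-3 $p$-free formula. This needs the precise description of $\delta^2_{cou}$: one must rule out that some $t_1$-type world in an arbitrary model has an $R_2$-neighbour that is $q$ with an $R_3$-successor $\neg q$. I would try to derive this from the $\{1,2\}$- and $\{1,2,3\}$-successor sets of $\delta^2_{cou}$ computed from $M_2$, together with transitivity and Euclideanness making $R_{\{1,2\}}$-classes inside the $R_1$-class match the $R_2$-classes. The transfer from "$\gamma \vDash \neg\chi$" to "$\gamma^p \vDash \neg\chi$" for $p$-free $\chi$ also needs care. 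I would argue it directly on the tree of $\gamma^p$ rather than semantically, reading off the $R_1$, $R_2$, $R_3$ successor labels after elimination.
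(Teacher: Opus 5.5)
\textbf{There is a genuine gap: the separating formula you chose does not work.}

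Your overall architecture is sound, and that includes the transfer step you are unsure about. $\gamma^p$ is a satisfiable member of $D^{\{q\}}_{dep(\gamma)}$, by Proposition~\ref{prop: deltap-imply} and the satisfiability of $\gamma$. So for a $p$-free $\chi$ with $dep(\chi)=3\le dep(\gamma)$, Proposition~\ref{prop: canon-dichotomy} gives $\gamma^p\vDash\chi$ or $\gamma^p\vDash\neg\chi$. The first option is excluded by $\gamma\vDash\gamma^p$ together with $\gamma\vDash\neg\chi$, so no tree-level argument is needed there.

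The problem is your claim that every model of $\gamma$ falsifies $\chi=\hat{\F K}_1(\neg q\land\hat{\F K}_3 q\land\hat{\F K}_2(q\land\hat{\F K}_3\neg q))$. This is false. Your justification, that inside the $R_1$-class the $R_{\{1,2\}}$-classes match the $R_2$-classes, is exactly what fails. We have $R_{\{1,2\}}(x)=R_1(x)\cap R_2(x)$, and $R_2(x)$ may leave $R_1(s)$. Outside $R_1(s)$, $\delta^2_{cou}$ (of depth $2$) fixes only minterms.

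Here is a concrete counter-model. Add to $M_2$ a world $y$ satisfying $\neg p\land q$ and put it in the $R_2$-class of $t_1$, which becomes $\{t_1,t_3,y\}$. Add a world $z$ satisfying $\neg p\land\neg q$ so that $\{y,z\}$ is an $R_3$-class, with all other new edges reflexive only.
\begin{itemize}
\item This is an $\san{S5}_n\F D$-model.
\item The depth-$1$ types $\eta_0,\dots,\eta_4$ of $s_2,t_1,\dots,t_4$ are unchanged. The new $R_2$-successor contributes only the already present minterm $\neg p\land q$, and the $R_{\{1,2\}}$- and $R_{\{2,3\}}$-classes of $t_1,t_3$ do not change. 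Hence $s_2$ still satisfies $\delta^2_{cou}$.
\item The worlds $t_1,y,z$ witness $\chi$.
\end{itemize}
The depth-$3$ canonical formula $\gamma$ of this model at $s_2$ satisfies $\gamma^{\uparrow 2}=\delta^2_{cou}$ and $\gamma\vDash\chi$. So your argument cannot cover this $\gamma$.

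\textbf{The repair is to use the distributed modality.} Take $\chi'=\hat{\F K}_1(\neg q\land\hat{\F K}_3 q\land\hat{\F D}_{\{1,2\}}(q\land\hat{\F K}_3\neg q))$.
\begin{itemize}
\item $M',s'\vDash\chi'$, because $R'_{\{1,2\}}(t'_1)=\{t'_1,t'_4\}$ and $v'\in R'_3(t'_4)$.
\item $\delta^2_{cou}\vDash\neg\chi'$ already in $\san{K45}_n\F D$. An $R_1$-successor of $s$ satisfying $\neg q\land\hat{\F K}_3 q$ must realize $\eta_1$. Its $R_{\{1,2\}}$-successors lie in $R_1(s)$ by transitivity of $R_1$. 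A $q$-world among them therefore has minterm $\neg p\land q$, hence realizes $\eta_3$, and all $R_3$-successors of an $\eta_3$-world satisfy $q$.
\end{itemize}
With $\chi'$, your transfer then finishes the proof. The hypothesis $dep(\gamma)>2$ is used exactly so that the dichotomy for $\gamma^p$ reaches the depth-$3$ formula $\chi'$. For $(\delta^2_{cou})^p$ it does not, which is consistent with $M',s'\vDash(\delta^2_{cou})^p\land\chi'$.

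\textbf{Comparison with the paper.} The paper instead works inside a model of $\gamma$. It takes $t\in H_2$ and, again via transitivity of $R_1$, finds an $R_{\{1,2\}}$-successor $u\in H_4$. It then argues that $\eta_t^p$, which must hold at $t'_2$, would need an $R'_{\{1,2\}}$-successor realizing $\eta_4^p$, impossible since $R'_{\{1,2\}}(t'_2)=\{t'_2,t'_3\}$. Your corrected route packages a dual instance of the same phenomenon as a single $p$-free depth-$3$ consequence of $\delta^2_{cou}$.
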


\noindent \F{Remark}: Consider the model $(M', s')$ defined as in Proposition \ref{prop: S5counter}. Even if $dforget_{\san L}(\delta^2_{cou}, p)$ exists, Proposition \ref{prop: S5counter} shows that $(M', s')$ is not an $\san L$-model of $dforget_{\san L}(\delta^2_{cou}, p)$. Proposition \ref{prop: More-counter} further explains that for any $\san L$-satisfiable $\gamma$ of larger modal depth than $\delta^2_{cou}$ such that $\gamma \vDash_{\san L} \delta^2_{cou}$, we still have $M', s' \not\vDash \gamma^p$. They inspire us to exclude those ``bad models'' like $(M', s')$ by enlarging the modal depth. When these ``bad models'' are all excluded, we may find the real uniform interpolant. It is actually the notion of Theorem \ref{thm: guideline}. 

Readers may have noticed that Proposition \ref{prop: S5counter} only shows $dforget_{\san L}(\delta^k_{cou}, p)$ is inequivalent to $(\delta^k_{cou})^p$ but does not deny its existence. This subsection claims that $(\delta^k_{cou})^p$ is not a uniform interpolant, but will this claim be completely proved only when we prove $dforget_{\san L}(\delta^k_{cou}, p)$ really exists. 

The next subsection will show that the idea of enlarging the modal depth works and ensures the existence of any $dforget_{\san L}(\delta, p)$, including $dforget_{\san L}(\delta^k_{cou}, p)$.

\subsection{Uniform interpolation in $\san{K45}_n \F D$, $\san{KD45}_n \F D$}\label{subsec: K45KD45}
In this subsection, let's show a general construction of transitive and Euclidean models to meet the sufficient condition of Theorem \ref{thm: guideline}. Let's first consider some key definitions and properties for the transitive and Euclidean case.
\begin{defn}
Let $\Phi$ be a set of formulas and $M = (S, R, V)$ be a model. For any $T \subseteq S$, we say $T$ is $\Phi$-complete, if the following conditions hold
\begin{itemize}
\item for every $t \in T$, there exists $\phi \in \Phi$ such that $M, t \vDash \phi$;

\item for every $\phi \in \Phi$, there exists $t \in T$ such that $M, t \vDash \phi$.
\end{itemize}
\end{defn}
Observe that, $M, s \vDash \nabla_{\C B} \Phi$ iff $R_{\C B}(s)$ is $\Phi$-complete. Naturally, $R_{\C B}(s)$ is $R_{\C B}(\delta)$-complete if $M, s \vDash \delta$. By transitivity and the Euclideanness, it is easy to see that a world $s$ and its $\C B$-successor $t$ share the $\C B$-successors, that is, if $t \in R_{\C B}(s)$, then $R_{\C B}(t) = R_{\C B}(s)$. However, it is slightly different for formulas, because a d-canonical formula $\delta$ and its $\C B$-successor $\eta$ have different modal depths. Then, we have the following property \footnote{In the multi-agent modal logic $\san{K45}_n$ \cite{fang2019forgetting}, this is called ``identical children property'' while $i$-successors are called ``$i$-children'' instead. However, the construction of a $\san{K45}_n \F D$-model becomes so complex that in Lemma \ref{lem: ConstructionK45}, a ``child'' of $t$ may belong to the ``older generation'' than $t$. It may confuse readers if we say ``children''. So, we rename them with another commonly used word, \emph{successor}.}:

\begin{restatable}[Identical Successors Property]{prop}{IdenticalChildren}\label{prop: dist-identical-son}
Let $\delta$ be a $D^P_k(\san{K45}_n \F D)$ for  $k \geq 2$. Let $l \in \B N$ s.t. $1 \leq l < k$. Then, for all $\C B \in \PPA$ and $\eta \in R_{\C B}(\delta)$, $R_{\C B}(\delta)^{\downarrow l} = R_{\C B}(\eta)^{\downarrow l - 1}$.
\end{restatable}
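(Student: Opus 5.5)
The plan is a semantic argument: realize $\delta$ in a model, move to a $\mathcal B$-successor, and use that $\mathcal B$-successor sets coincide in transitive Euclidean frames. Fix $\delta\in D^P_k(\san{K45}_n\F D)$ with $k\ge 2$, $1\le l<k$, $\C B\in\PPA$ and $\eta\in R_{\C B}(\delta)$, so that $dep(\eta)=k-1$ and $R_{\C B}(\eta)\subseteq D^P_{k-2}$. Take a $\san{K45}_n\F D$-model $(M,s)$ of $\delta$. Since $M,s\vDash\nabla_{\C B}R_{\C B}(\delta)$, there is $t\in R_{\C B}(s)$ with $M,t\vDash\eta$. The relations $R_i$ are transitive and Euclidean, and both properties are preserved under intersection, so $R_{\C B}$ is transitive and Euclidean. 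Hence $R_{\C B}(t)=R_{\C B}(s)$, as noted just before the statement.

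The main step compares the sets $\{\xi^{\downarrow(l-1)}\mid \xi\in R_{\C B}(\delta)^{\downarrow 1}\}$ and $R_{\C B}(\eta)^{\downarrow(l-1)}$; it suffices to show $R_{\C B}(\delta)^{\downarrow 1}=R_{\C B}(\eta)$, both being subsets of $D^P_{k-2}$. By Proposition~\ref{prop: uparrow-property}, $\theta\vDash\theta^{\downarrow 1}$, and by Proposition~\ref{prop: models-unique-canon}, each world satisfies exactly one member of $D^P_{k-2}$. The set $W=R_{\C B}(s)=R_{\C B}(t)$ is $R_{\C B}(\delta)$-complete and also $R_{\C B}(\eta)$-complete. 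Every $u\in W$ satisfies some $\theta\in R_{\C B}(\delta)$, hence $\theta^{\downarrow1}$. It also satisfies some $\zeta\in R_{\C B}(\eta)$. By uniqueness in $D^P_{k-2}$, $\theta^{\downarrow1}=\zeta$. Running over all $u\in W$, and using that every member of either set is realized somewhere in $W$, the two sets are equal.

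Then apply $\downarrow(l-1)$ to both sides, using $(\theta^{\downarrow1})^{\downarrow(l-1)}=\theta^{\downarrow l}$, which is immediate from the definition of iterated pruning. This gives $R_{\C B}(\delta)^{\downarrow l}=R_{\C B}(\eta)^{\downarrow(l-1)}$. The bound $l<k$ ensures every formula involved has depth at least $l$, so no pruning bottoms out improperly (the case $dep=0$ is covered by the first clause of the definition).

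The main obstacle is the bookkeeping of depths: $\downarrow$ is defined recursively on the formula's own depth, and one must confirm that $\theta^{\downarrow1}$ for $\theta\in D^P_{k-1}$ lands in $D^P_{k-2}$. This is Proposition~\ref{prop: uparrow-property}(2), which applies because $\theta$ is satisfiable, since it is realized in $M$. The rest is a direct use of Euclideanness together with uniqueness of canonical formulas.
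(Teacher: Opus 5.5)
Your proof is correct and is essentially the paper's argument: realize $\delta$ in a $\san{K45}_n\F D$-model, pick $t\in R_{\C B}(s)$ satisfying $\eta$, use $R_{\C B}(t)=R_{\C B}(s)$, and conclude from the fact that each world satisfies exactly one canonical formula of a given depth. The only difference is cosmetic. You first prove $R_{\C B}(\delta)^{\downarrow 1}=R_{\C B}(\eta)$ and then apply $\downarrow(l-1)$ to both sides, whereas the paper compares the $\downarrow l$ and $\downarrow(l-1)$ sets directly. Your depth indices ($R_{\C B}(\delta)\subseteq D^P_{k-1}$, $R_{\C B}(\eta)\subseteq D^P_{k-2}$) are the correct ones; the paper's are off by one.
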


The following proposition describes how the distributed knowledge transitive and Euclidean relations are intertwined, which we have to consider for the model construction.
\begin{restatable}[Sibling Property]{prop}{Brothers}\label{prop: brothers}
Let $\delta \in D^P_k(\san L)$ where $k \geq 2$. Let $\C B, \C C \in \PPA$ such that $\C C_1 = \C C \cap \C B \neq \varnothing$ and $\C C_2 = \C C \cap \C B^c \neq \varnothing$. For every $\eta \in R_{\C B}(\delta)$ and $\chi \in R_{\C C}(\eta)$, there is $\eta_0 \in R_{\C C_1}(\delta)$ such that 
\begin{itemize}
\item $\eta_0^{\downarrow} = \chi \in R_{\C C}(\eta)$,

\item $R_{\C D}(\eta_0) = R_{\C D}(\eta)$ for every $\C D \in \PP(\C C_2)$,
\end{itemize}
\end{restatable}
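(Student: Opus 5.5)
The plan is to pass from formulas to models and back, using a model of $\delta$ in which the frame facts are transparent. Here $\san L$ is a transitive and Euclidean system, at least $\san{K45}_n \F D$. Since $\delta \in D^P_k(\san L)$ is satisfiable, fix an $\san L$-model $(M, s)$ with $M, s \vDash \delta$. Let $\eta \in R_{\C B}(\delta)$ and $\chi \in R_{\C C}(\eta)$. Because $R_{\C B}(s)$ is $R_{\C B}(\delta)$-complete, there is $t \in R_{\C B}(s)$ with $M, t \vDash \eta$. Because $R_{\C C}(t)$ is $R_{\C C}(\eta)$-complete, there is $u \in R_{\C C}(t)$ with $M, u \vDash \chi$. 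Note that $\eta \in D^P_{k-1}$ and $\chi \in D^P_{k-2}$.

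The key relational step uses $\C C_1 = \C C \cap \C B \subseteq \C B$. By Proposition \ref{prop: BC-include}, $t \in R_{\C C_1}(s)$ and $u \in R_{\C C_1}(t)$. Transitivity of each $R_i$ for $i \in \C C_1$ gives $u \in R_{\C C_1}(s)$. By Proposition \ref{prop: models-unique-canon}, let $\eta_0$ be the unique member of $D^P_{k-1}$ with $M, u \vDash \eta_0$. Completeness of $R_{\C C_1}(s)$ with respect to $R_{\C C_1}(\delta)$ says every $\C C_1$-successor of $s$ satisfies some member of $R_{\C C_1}(\delta)$. Uniqueness at depth $k-1$ then forces $\eta_0 \in R_{\C C_1}(\delta)$.

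Next, $\eta_0^{\downarrow} = \chi$. By Proposition \ref{prop: uparrow-property}, $\eta_0 \vDash \eta_0^{\downarrow}$ and $\eta_0^{\downarrow} \in D^P_{k-2}$. Hence $u$ satisfies both $\eta_0^{\downarrow}$ and $\chi$ at depth $k-2$, and Proposition \ref{prop: models-unique-canon} gives $\eta_0^{\downarrow} = \chi$. The membership $\chi \in R_{\C C}(\eta)$ holds by assumption.

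The remaining claim is $R_{\C D}(\eta_0) = R_{\C D}(\eta)$ for $\C D \in \PP(\C C_2)$. This is where the hypothesis on $\C C_2 = \C C \cap \C B^c$ enters and is the main point. Since $\C D \subseteq \C C$ and $u \in R_{\C C}(t)$, Proposition \ref{prop: BC-include} gives $u \in R_{\C D}(t)$, i.e. $(t,u) \in R_i$ for all $i \in \C D$. For each $i$, transitivity and Euclideanness give $R_i(u) = R_i(t)$. Intersecting over $i \in \C D$ yields $R_{\C D}(u) = R_{\C D}(t)$. This is the model-level identical-successor fact quoted before Proposition \ref{prop: dist-identical-son}.

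Now $t \vDash \eta$ and $u \vDash \eta_0$, both of depth $k-1$. So $R_{\C D}(\eta)$ is exactly the set of depth-$(k-2)$ canonical formulas realized in $R_{\C D}(t)$, and likewise $R_{\C D}(\eta_0)$ for $R_{\C D}(u)$. This uses completeness together with the uniqueness from Proposition \ref{prop: models-unique-canon}: the set $\Phi$ in a satisfied $\nabla_{\C D}\Phi$ at depth $k-1$ is determined by the successor set. Since the two successor sets are equal, the two sets of formulas coincide.

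The only delicate points are these uniqueness arguments and making sure $\eta_0$ has depth exactly $k-1$; the latter needs $k \geq 2$, which is given. Interestingly, the disjointness of $\C C_2$ from $\C B$ is not actually needed for the equality: it holds for any $\C D \subseteq \C C$. The hypothesis only matters for the use that is later made of the statement.
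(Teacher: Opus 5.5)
Your proof is correct and follows essentially the same route as the paper's. You fix a model of $\delta$, locate $t$ and $u$, use transitivity to place $u$ in $R_{\C C_1}(s)$, and get the second bullet from the shared-successors fact $R_{\C D}(t) = R_{\C D}(u)$ together with uniqueness of canonical formulas.

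The only difference is in the first bullet. You obtain $\eta_0^{\downarrow} = \chi$ directly from $\eta_0 \vDash \eta_0^{\downarrow}$ and uniqueness at depth $k-2$, whereas the paper detours through the Identical Successors Property, which is not actually needed there. Your remark that the second bullet holds for every $\C D \in \PP(\C C)$ is also correct, since the paper's argument likewise uses only $\C D \subseteq \C C$.
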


Transitivity and Euclideanness, as well as reflexivity, are global properties, and it is not practical to keep them during each construction step. Let's introduce the following local properties:
\begin{defn}\label{defn: equivalent}
Let $M$ be any Kripke model and $W$ be any set of worlds of $M$. For $i \in \C A$, we say a set $W$ is \emph{quasi-$i$-equivalent}, if 
\begin{itemize}
\item $W = \RT(t, R_i \cup R_i^{- 1})$ for every $t \in W$;

\item for every $t_1, t_2 \in W$, $R_i(t_1) = R_i(t_2)$.
\end{itemize}
We say $W$ is \emph{$i$-equivalent} if it further satisfies that
\begin{itemize}
\item for every $t \in W$, $(t, t) \in R_i$.
\end{itemize}
For every $\C B \in \PPA$, if we replace the ``$i$'' above with $\C B$, we obtain the definitions of \emph{quasi-$\C B$-equivalence} and \emph{$\C B$-equivalence} respectively.
\end{defn}

\begin{restatable}{prop}{QeTranEu}\label{prop: q-e-tran-Eu}
For any model $M$, these two claims are equivalent
\begin{enumerate}
\item for every $t$ in $M$ and every $i \in \C A$, $\RT(t, R_i \cup R_i^{- 1})$ is quasi-$i$-equivalent;

\item $M$ is transitive and Euclidean.
\end{enumerate}
Furthermore, the following are also equivalent
\begin{enumerate}
\item[3] for every $t$ in $M$ and every $i \in \C A$, $\RT(t, R_i \cup R_i^{- 1})$ is $i$-equivalent;

\item[4] $M$ is transitive, Euclidean, and reflexive.
\end{enumerate}
\end{restatable}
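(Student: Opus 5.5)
We prove the two equivalences separately; in each, one direction is an unfolding of definitions and the other needs a short connectivity argument. Throughout, write $E_i(t) = \RT(t, R_i \cup R_i^{-1})$, the set of worlds connected to $t$ by a finite zigzag path of $R_i$-edges taken in either direction. Since $R_i \cup R_i^{-1}$ is symmetric, its reflexive-transitive closure is an equivalence relation. Hence $E_i(t) = E_i(u)$ for every $u \in E_i(t)$, so the first clause of quasi-$i$-equivalence holds automatically for every $E_i(t)$. Claim 1 therefore reduces to the statement: for every $t$ and $i$, all worlds in $E_i(t)$ have the same $R_i$-successors.

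\textbf{(2)$\Rightarrow$(1).} Assume $R_i$ is transitive and Euclidean. Since $E_i(t)$ is connected by single $R_i$-edges traversed in either direction, it suffices to show that $(x,y) \in R_i$ implies $R_i(x) = R_i(y)$.
\begin{itemize}
\item If $z \in R_i(y)$, then $z \in R_i(x)$ by transitivity.
\item If $z \in R_i(x)$, then Euclideanness applied to $(x,y),(x,z) \in R_i$ gives $(y,z) \in R_i$.
\end{itemize}
Propagating this equality along the zigzag path yields equal successor sets throughout $E_i(t)$.

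\textbf{(1)$\Rightarrow$(2).}
\begin{itemize}
\item \emph{Transitivity.} Let $(x,y),(y,z) \in R_i$. Then $y \in E_i(x)$, so $R_i(y) = R_i(x)$, and hence $z \in R_i(x)$.
\item \emph{Euclideanness.} Let $(x,y),(x,z) \in R_i$. Then $y \in E_i(x)$, so $R_i(y) = R_i(x) \ni z$.
\end{itemize}

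\textbf{(3)$\Leftrightarrow$(4).} Given the above, this only adds reflexivity.
\begin{itemize}
\item \emph{(4)$\Rightarrow$(3).} Reflexivity gives $(t,t) \in R_i$ for every world, in particular for every member of each $E_i(t)$. The remaining clauses follow from (2)$\Rightarrow$(1).
\item \emph{(3)$\Rightarrow$(4).} Since $t \in E_i(t)$ and $E_i(t)$ is $i$-equivalent, $(t,t) \in R_i$. Transitivity and Euclideanness follow from (1)$\Rightarrow$(2), because $i$-equivalence implies quasi-$i$-equivalence.
\end{itemize}

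The only step needing any care is the one used in both (1)$\Rightarrow$(2) and (2)$\Rightarrow$(1): that equality of successor sets propagates along paths that use $R_i^{-1}$ as well as $R_i$. This is handled by establishing the single-edge fact $(x,y) \in R_i \Rightarrow R_i(x) = R_i(y)$, which is symmetric in its conclusion and so also covers edges traversed backwards. The same argument applies verbatim with $R_{\C B}$ in place of $R_i$, although the proposition as stated only concerns individual agents.
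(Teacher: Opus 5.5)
Your proof is correct and follows essentially the same route as the paper's. The paper likewise reduces (2)$\Rightarrow$(1) to the single-edge fact that $(x,y)\in R_i$ implies $R_i(x)=R_i(y)$ (its Fact on shared successors), propagated by induction along a zigzag path; it proves (1)$\Rightarrow$(2) with the same two one-line arguments, and it obtains (3)$\Leftrightarrow$(4) by adding reflexivity.
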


In the later construction, if we have constructed some transitive Euclidean models inductively, we will connect them in the following way, so that the new model is still transitive and Euclidean.
\begin{defn}
Given a Kripke model $M$ and $i \in \C A$, let $W_1, W_2, \dots, W_m$ be any distinct quasi-$i$-equivalent sets ($m \geq 2$) in $M$. \emph{Merging them with the relation $R_i$} is the process as follows: for any $t, u \in W_1 \cup W_2 \cup \dots \cup W_m$, if $(u, u) \in R_i$, then add $(t, u) \in R_i$.
\end{defn}

The process of merging also implies that, as the induction proceeds, a set ``$R_{\C B}(t)$'' may become larger. Then, the notation ``$R_{\C B}(t)$'' may become misleading. Let's introduce the following concepts:
\begin{defn}
Let $M = (S, R, V)$ be any model and $\C X \in \PPA$. A \emph{multi-pointed model} is a model in the form $(M, \C U_{\C X})$, where $\C U_{\C X}$ is a family of subsets of $S$ indexed by $\C P^+(\C X)$, that is,
\[\C U_{\C X} = \{T_{\C B} \subseteq S \mid \C B \in \C P^+(\C X)\}\]
\end{defn}
For example, given a model $(M, s)$, if $\C X = \C A$ and $\C U_{\C A} = \{R_{\C B}(s) \mid \C B \in \PPA\}$, then $(M, \C U_{\C A})$ makes a multi-pointed model. We also rewrite Definition \ref{defn: col-p-bisim-pointed} to fit the multi-pointed models as follows:
\begin{defn}\label{defn: col-p-bisim}
Given $\C X \in \PPA$, suppose $(M, \C U_{\C X})$ and $(M', \C U'_{\C X})$ be two multi-pointed models. A collective $p$-bisimulation between $(M, \C U_{\C X})$ and $(M', \C U'_{\C X})$ is a relation $\rho$ between $S$ and $S'$ satisfying the following:
\begin{itemize}
\item For any $\C B \in \PPX$,
\begin{itemize}
\item for every $t \in T_{\C B}$, there is $t' \in T'_{\C B}$ such that $(t, t') \in \rho$;

\item for every $t' \in T'_{\C B}$, there is $t \in T_{\C B}$ such that $(t, t') \in \rho$.
\end{itemize}

\item For any $u \in S$ and $u' \in S'$, whenever $(u, u') \in \rho$, we have
\begin{itemize}
\item \textbf{Atoms}: $V(u) \sim_p V'(u')$;

\item \textbf{Forth}: for all $\C B \subseteq \C A$ and all $v \in S$, if $(u, v) \in R_{\C B}$, then there is $v'$ such that $(u',  v') \in R'_{\C B}$ and $(v, v') \in \rho$;

\item \textbf{Back}: for all $\C B \subseteq \C A$ and all $v' \in S$, if $(u', v') \in R'_{\C B}$, then there is $v$ such that $(u, v) \in R_{\C B}$ and $(v, v') \in \rho$;
\end{itemize}
\end{itemize}
We also say $\rho$ witnesses the collective $p$-bisimulation between $(M, \C U_{\C X})$ and $(M', \C U'_{\C X})$ and denote this by $\rho: (M, \C U_{\C X}) \bisim^{col}_p (M', \C U'_{\C X})$. 
\end{defn}

Now, we can formally show a general construction for $\san{K45}_n \F D$- (resp. $\san{KD45}_n \F D$-) models. Readers will see, the parameters $\gamma$ and $k + h + 1$ act as the ``$\gamma$'' and ``$l$'' in Theorem \ref{thm: guideline}. We use two formulas $\gamma$ and $\xi$ for this construction instead of $\gamma$ alone, so that $\xi$ helps ensure collective $p$-bisimulation while $\gamma$ ensures the satisfaction relation. Note that $\gamma$ and $\xi$ are not necessarily the same.
\begin{restatable}[The Construction Lemma for $\san{K45}_n \F D$ and $\san{KD45}_n \F D$]{lem}{ConstructionLemma}\label{lem: ConstructionK45} Let $\san L$ be either $\san{K45}_n \F D$ or $\san{KD45}_n \F D$. Let $k, d, h$ be natural numbers such that $0 \leq k \leq d \leq h$. Let $\gamma, \xi \in D^P_{k + h + 1}(\san L)$ such that $(\gamma^{\uparrow (k + d + 1)})^p = (\xi^{\uparrow (k + d + 1)})^p$. Let $(M', s') = (S', R', V', s')$
be an $\san L$-model of $\xi^p$. 

Given $\C X \in \PPA$ and the multi-pointed submodel
\[(M', \C U'_{\C X}) = (S', R', V', \C U'_{\C X})\]
where $\C U'_{\C X} = \{T'_{\C B} \mid T'_{\C B} = R'_{\C B}(s')\text{ for }\C B \in \PPX\}$, there is a multi-pointed $\san L$-model 
\[(M, \C U_{\C X}) = (S, R, V, \C U_{\C X})\]
where $\C U_{\C X}= \{T_{\C B} \subseteq S \mid \C B \in \PPX\}$ such that, 
\begin{itemize}
\item $T_{\C B}$ is $\C B$-equivalent in $(M, \C U_{\C X})$, for $\C B \in \PPX$;

\item $(M, \C U_{\C X}) \bisim^{col}_p (M', \C U'_{\C X})$;

\item $T_{\C B}$ is $R_{\C B}(\gamma)^{\uparrow k}$-complete for $\C B \in \PPX$.
\end{itemize}
\end{restatable}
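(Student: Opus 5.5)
I would prove the Construction Lemma by induction on the size of $\C X$, that is, on $|\C X|$. The guiding picture is that $T_{\C B}$ is meant to play the role of $R_{\C B}(s)$ for a future actual world $s$. In the base case $\C X = \{i\}$ there is a single set $T'_{\{i\}} = R'_i(s')$ to imitate. Since $M', s' \vDash \xi^p$, every $t' \in R'_i(s')$ satisfies $\zeta^p$ for some $\zeta \in R_i(\xi)$, and conversely every $\zeta$ is realized there.

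For the base case I would use the Identical Successors Property (Proposition \ref{prop: dist-identical-son}) together with the hypothesis $(\gamma^{\uparrow (k+d+1)})^p = (\xi^{\uparrow (k+d+1)})^p$. Together these match each $\zeta \in R_i(\xi)$, after pruning to depth $k+d$, with some $\eta \in R_i(\gamma)$ satisfying $(\eta^{\uparrow (k+d)})^p = (\zeta^{\uparrow (k+d)})^p$. The matching also runs in the other direction, so that every $\eta \in R_i(\gamma)$ gets a partner. For each such pair, and for each $t'$ realizing $\zeta^p$, I would build a model by recursion on depth. This recursion is a second, nested induction, which is why the lemma carries the parameters $k \le d \le h$: passing to successors lowers $k$ and $h$ by one and shrinks the slack $d$. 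The built model realizes $\eta^{\uparrow k}$ at a root $t$ and is collectively $p$-bisimilar to $(M', t')$. I would then merge all these roots with $R_i$, adding every $R_i$-edge among them and making each root $R_i$-reflexive. The result is $i$-equivalent, and Proposition \ref{prop: q-e-tran-Eu} keeps the model transitive and Euclidean. The set $T_{\{i\}}$ is the collection of these roots, and it is $R_i(\gamma)^{\uparrow k}$-complete by construction. For $\san{KD45}_n\F D$, seriality is inherited along the bisimulation exactly as in Lemma \ref{lem: KandD}.

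For the inductive step I would apply the induction hypothesis to each proper nonempty $\C X_0 \subsetneq \C X$. The new work is the top set $T_{\C X}$. Each of its worlds has to be simultaneously an $\C X$-successor, and hence also a $\C B$-successor for every $\C B \subseteq \C X$ by Proposition \ref{prop: BC-include}. The Sibling Property (Proposition \ref{prop: brothers}) controls what this requires. An $\C X$-successor $\eta$ has $\C C$-successors, for $\C C$ straddling $\C X$ and $\C X^c$, that must appear as $\C C \cap \C X$-successors of $\gamma$ itself. So the $\C X$-worlds are built with the sub-multipointed models for $\C C \cap \C X$ reused via merging, rather than rebuilt. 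Concretely, I would construct $T_{\C X}$ as in the base case. For each agent $i \in \C X$, I would then merge the $i$-equivalence classes of the new worlds with the corresponding $i$-classes already built in $T_{\{i\}}$ and in the other $T_{\C B}$ with $i \in \C B \subseteq \C X$. Afterwards I would check two things. First, the intersections $R_{\C B}$ restricted to the merged classes still equal the intended $T_{\C B}$, so that no spurious $\C B$-edges are created. Second, bisimilarity with $(M', \C U'_{\C X})$ is preserved, because merging only adds edges between worlds whose $M'$-counterparts are already $i$-related. This holds since $R'_i$ is itself an equivalence on the relevant sets.

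The main obstacle is that merging generates unintended intersections. After merging along each agent in $\C B$, two worlds may become $\C B$-related although their counterparts in $M'$ are not. This is precisely the failure exhibited by $\delta^2_{cou}$ in Proposition \ref{prop: S5counter}. To handle it, I would match formulas and worlds using the deeper formula $\xi$, which reflects the actual $R'$-structure. I would first try to prove a separating claim: two merged classes meet under $R_{\C B}$ only if their $\xi$-labels lie in a common $R_{\C B}(\xi)$. This would be an induction using Propositions \ref{prop: dist-identical-son} and \ref{prop: brothers} on $\xi$. The depth margin $h \ge d \ge k$ is spent here. Each merge consumes one level of $\xi$ while the satisfied formula from $\gamma$ only needs depth $k$, so the depth $k+h+1$ of $\xi$ suffices for the roughly $|\C X|$ nested merges.
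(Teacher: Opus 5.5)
Your construction already fails in the base case $\C X=\{i\}$. Since $T_{\{i\}}$ must be $i$-equivalent, $R_i(t)=T_{\{i\}}$ for every $t\in T_{\{i\}}$, so $R_{\C C}(t)\subseteq T_{\{i\}}$ for every $\C C\ni i$. Whenever $R'_{\C C}(t')\neq\varnothing$ for some $\C C\supsetneq\{i\}$, the Back condition (and, for $k\geq 1$, completeness) forces $t$ to have $\C C$-successors. These must be \emph{members of $T_{\{i\}}$}, reached from $t$ along $\C C\setminus\{i\}$, and they must carry the siblings of $\eta_t$ supplied by the Sibling Property.

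You instead hang under each root a full pointed model of $\eta^{\uparrow k}$ bisimilar to $(M',t')$, and then only add $R_i$-edges and loops among the roots. That creates none of these straddling successors. It also breaks the frame conditions: in general a root $t_2$ still has $i$-successors $x$ inside $M^{t_2}$ that are not roots, so $t_1R_it_2R_ix$ holds without $t_1R_ix$. If you repair this by merging those old $i$-classes as well, then either $T_{\{i\}}$ is no longer the whole $R_i$-class, or it must contain worlds built only to depth $k-1$ whose successor structure has just changed; either way completeness fails. So the claim that the merged roots form an $i$-equivalent, $R_i(\gamma)^{\uparrow k}$-complete set ``by construction'' is false.

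The paper handles this at every level, including $|\C X|=1$, with three ingredients:
\begin{itemize}
\item Step~3 joins top-level worlds exactly on the agents in $\C B_1\cap\C B_2$.
\item Step~2-1 creates \emph{new} top-level worlds. For each $t\in T^*_{\C B}$, each $\C C$ meeting both $\C B$ and $\C B^c$, and each $u'\in R'_{\C C}(t')$, it adds a world to $T^*_{\C C\cap\C B}$. That world is labeled by a pair $(\eta,\zeta)$ obtained from the Sibling Property applied separately to $\gamma$ and to $\xi$, with the matching weakened to $(\eta^{\uparrow(k+d-1)})^p=(\zeta^{\uparrow(k+d-1)})^p$.
\item Only the purely $\C B^c$-part below $t$ is built recursively (Step~5, as a $\C B^c$-multi-pointed model with parameters $(k-1,d)$ or $(k-1,d-1)$), and it is then merged in (Step~6).
\end{itemize}

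This is also why induction on $|\C X|$ is the wrong axis. The paper inducts on $k$ (and $d$), builds all $T^*_{\C B}$ for $\C B\in\PPX$ simultaneously at one level, and recurses with index set $\C B^c$. Taking the models for proper subsets $\C X_0\subsetneq\C X$ from an induction hypothesis leaves you with disjoint, unreconciled copies of each $T_{\C B}$. Moreover, merging the new worlds of $T_{\C X}$ into an existing $i$-class changes $R_i$ at every member of that class, so the bisimulation and completeness guarantees of the hypothesis do not transfer.

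Your accounting of the depth margin is also off. The paper needs no separation claim for merged classes. Step~3 edges are justified by Euclideanness of $R'_{\C B_1\cap\C B_2}$, since the counterparts lie in $R'_{\C B_1}(s')$ and $R'_{\C B_2}(s')$. Every other edge is copied from $M'$ or forced there by transitivity and Euclideanness. The slack $d$ is consumed only in Step~2-1, at most once per level of the depth recursion along a branch. That is why the lemma requires $d\geq k$, not a margin of ``roughly $|\C X|$'' merges.
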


\begin{proof}
We only show how $(M, \C U_{\C X})$ is constructed. The full proof is placed in the appendix together with some additional lemmas.

Let's construct by induction on $k$ and $d$. We label the constructed model as $(M, \C U_{\C X})^{\gamma, \xi, k, d}$, in order to show the four parameters' change in the induction. The Greek letter $\eta$ ranges over the d-canonical formulas that wholly occur in $\gamma$, while the Greek letter $\zeta$ those that wholly occur in $\xi$.  Initially, let $S$, $V$, $\rho$, and every relation $R_i$ be empty.

\F{Base case ($k = d = 0$)}: Construct $(M, \C U_{\C X})^{\gamma, \xi, 0, 0}$ as follows.

\begin{itemize}
\item[Step 1] 
For every $\C B \in \PPX$, $t' \in R'_{\C B}(s')$, $\eta \in R_{\C B}(\gamma)$, and $\zeta \in R_{\C B}(\xi)$, if $(\eta^{\uparrow (k + d)})^p = (\zeta^{\uparrow (k + d)})^p$ and $M', t' \vDash \zeta^p$, then construct a world $t$ such that for every $q \in P$, $q \in V(t)$ if and only if $w(\eta) \vDash q$. Then, add $(t, t')$ to $\rho$. Let $T^*_{\C B}$ be all such $t$ with respect to this $\C B$.

(Note that when a world $t$ is constructed, we can determine its unique $t'$, $\eta$, and $\zeta$. Let's write this $\eta$ as $\eta_t$ and this $\zeta$ as $\zeta_t$. Let's say $t'$ (resp. $\eta_t$, $\zeta_t$) constructs $t$.)

\item[Step 2] 
For every $\C B_1, \C B_2 \in \PPX$, every $t_1 \in T^*_{\C B_1}$, $t_2 \in T^*_{\C B_2}$, and $i \in \C B_1^c \cap \C B_2^c$, if $(t'_1, t'_2) \in R'_i$, then add $(t_1, t_2)$ to $R_{i}$. 

\item[Step 3] For every $\C B_1, \C B_2 \in \PPX$, every $t_1 \in T^*_{\C B_1}$, $t_2 \in T^*_{\C B_2}$, and $i \in \C B_1 \cap \C B_2$, add $(t_1, t_2)$ to $R_{i}$. 

\item[Step 4] For every $\C B \in \PPX$, let
\[T_{\C B} = \bigcup_{\C B \subseteq \C C} T^*_{\C C}\]
Extend $S$ with $\bigcup_{\C B \in \PPX}T_{\C B}$ and let $\C U_{\C X} = \{T_{\C B} \mid \C B \in \PPX\}$.

Denote by $(M, \C U_{\C X})^{\gamma, \xi, 0, 0}_4$ the submodel constructed via the first four steps.

\item[Step 5] For every $\C B \in \C P^+(\C X)$ and every $t \in T^*_{\C B}$, if there is $i \in \C B^c$ such that $R'_i(t') \neq \varnothing$, consider the multi-pointed submodel 
$(M', \C U^{t'}_{\C B^c})$, where $\C U^{t'}_{\C B^c} = \{R'_{\C D}(t') \mid \C D \in \PP(\C B^c)\}$. Let $(M^t, \C U^t_{\C B^c})$ be a copy of $(M', \C U^{t'}_{\C B^c})$ and let $\rho_t = \{(u, u') \mid \text{$u$ is a copy of $u'$.}\}$.
Add the submodel $(M^t, \C U^t_{\C B^c})$ to $(S, R, V)$, and extend $\rho$ to be $\rho \cup \rho_t$. 

(Note that so far, no edge is added between $t$ and $(M^t, \C U^{t}_{\C B^c})$. We write ``$(M^t, \C U^{t}_{\C B^c})$'' as ``$(M^t, \C U^{t})$'' if it is clear from context.)

\item[Step 6] For every $\C B \in \C P^+(\C X)$, every $t \in T^*_{\C B}$, every $i \in \C B^c$, let's denote the set $\RT(t, R_i \cup R_i^{- 1})$ within $(M, \C U_{\C X})^{\gamma, \xi, 0, 0}_4$ by $\RT(t, R_i \cup R_i^{- 1})_4$. For every $u \in \RT(t, R_i \cup R_i^{- 1})_4$, if $(M^u, \C U^{u})$ exists, which is an $\san L$-model, there is the quasi-$i$-equivalent set $W^u_i$ in $(M^u, \C U^{u})$ such that $W^u_i \supseteq R'_i(u')$. Then, merge all these sets $W^u_i$ and $\RT(t, R_i \cup R_i^{- 1})_4$ with the relation $R_i$.
\end{itemize}
Thus, $(M, \C U_{\C X})^{\gamma, \xi, 0, 0}$ is constructed. Observe that this construction does not specify the value of $d$, so $(M, \C U_{\C X})^{\gamma, \xi, 0, d}$ is also constructed for any $d \geq 0$.

\F{Inductive step}: Steps 1, 3, and 4 are the same as those of the base case. We show Steps 2, 5, and 6, where Step 2 is divided into two substeps.

\begin{itemize}

\item[Step 2-1] 
For every $\C B \in \PPX$ and every $t \in T^*_{\C B}$ that is constructed in Step 1, for every $\C C \in \PPA$ such that $\C C_1 = \C B \cap \C C \neq \varnothing$ and $\C C_2 = \C B^c \cap \C C \neq \varnothing$, and every $u' \in R'_{\C C}(t')$, \emph{we claim that} there exists $\eta \in R_{\C C_1}(\gamma)$ and $\zeta \in R_{\C C_1}(\xi)$ such that
\begin{itemize}
\item $\eta^{\downarrow} \in R_{\C C}(\eta_t)$,

\item $R_{\C D}(\eta) = R_{\C D}(\eta_t)$ for every $\C D \in \PP(\C C_2)$,

\item $M', u' \vDash \zeta^p$

\item $(\eta^{\uparrow (k + d - 1)})^p =  (\zeta^{\uparrow (k + d - 1)})^p$.
\end{itemize}
Construct a new world $u$ such that $q \in P$, $q \in V(u)$ if and only if $w(\eta) \vDash q$; add $u$ to $T^*_{\C C_1}$. Add $(t, u)$ and $(u, u)$ to $R_i$ for every $i \in \C C_2$. Add $(u, u')$ to $\rho$. (We also say $u$ is constructed by $u'$, $\eta$, and $\zeta$, like those constructed in Step 1.)

\item[Step 2-2] For every $\C B \in \PPX$, every $t \in T^*_{\C B}$, every $i \in \PP(\C B^c)$, and every $u_1, u_2 \in R_i(t)$, then add $(u_1, u_2)$ to $R_i$.
\end{itemize}

We need to prove the claim that the formulas $\eta$ and $\zeta$ exist in Step 2-1. On the one hand, since $M', t' \vDash \zeta_t^p$, by Proposition \ref{prop: brothers} (the Sibling Property), there is $\zeta \in R_{\C C_1}(\xi)$ such that 
\begin{itemize}
\item $M', u' \vDash \zeta^p$,

\item $(\zeta^\downarrow)^p \in R_{\C C}(\zeta_t)^p$

\item $R_{\C D}(\zeta)^p = R_{\C D}(\zeta_t)^p$ for every $\C D \in \PP(\C C_2)$.
\end{itemize}
Note that since $t$ is constructed in Step 1, $(\eta_t^{\uparrow (k + d)})^p = (\zeta_t^{\uparrow (k + d)})^p$. So, there is $\chi \in R_{\C C}(\eta_t)$ such that $(\chi^{\uparrow (k + d - 1)})^p = ((\zeta^\downarrow)^{\uparrow (k + d - 1)})^p$. Note that $\chi, \zeta^\downarrow \in D^P_{k + h - 1}$ and $\zeta \in D^P_{k + h}$ while $k + h \geq k + h - 1 \geq k + d - 1$, so by Proposition \ref{prop: uparrow-property}, we have $(\zeta^\downarrow)^{\uparrow (k + d - 1)} = \zeta^{\uparrow (k + d - 1)}$ and then $(\chi^{\uparrow (k + d - 1)})^p = (\zeta^{\uparrow (k + d - 1)})^p$. On the other hand, by Proposition \ref{prop: brothers} (the Sibling Property) again, there is $\eta \in R_{\C C_1}(\gamma)$ such that 
\begin{itemize}
\item $\chi = \eta^\downarrow \in R_{\C C}(\eta_t)$,

\item $R_{\C D}(\eta) = R_{\C D}(\eta_t)$ for every $\C D \in \PP(\C C_2)$.
\end{itemize}
Similarly, since $\chi = \eta^\downarrow \in D^P_{k + h - 1}$ and $\eta \in D^P_{k + h}$ while $k + h \geq k + h - 1 \geq k + d - 1$, by Proposition \ref{prop: uparrow-property}, we have $\chi^{\uparrow (k + d - 1)} = \eta^{\uparrow (k + d - 1)}$. Therefore, $(\eta^{\uparrow (k + d - 1)})^p = (\zeta^{\uparrow (k + d - 1)})^p$. Hence, the formulas $\eta$ and $\zeta$ are what Step 2-1 requires.

Denote by $(M, \C U_{\C X})^{\gamma, \xi, k, d}_4$ the submodel constructed via the first four steps.

\begin{itemize}
\item[Step 5] For every $\C B \in \C P^+(\C X)$ and every $t \in T^*_{\C B}$, if there is $i \in \C B^c$ such that $R'_i(t') \neq \varnothing$,, consider the multi-pointed submodel 
$(M', \C U^{t'}_{\C B^c})$, where $\C U^{t'}_{\C B^c} = \{T^{t'}_{\C D} \mid T^{t'}_{\C D} = R'_{\C D}(t') \text{ for }\C D \in \PP(\C B^c)\}$. Let's discuss the world $t$.
\begin{itemize}
\item $t$ is constructed in Step 1: Then, $(\eta_t^{\uparrow (k + d)})^p = (\zeta_t^{\uparrow (k + d)})^p$. Since $k \leq d$, then $k - 1 \leq d$. By the inductive hypothesis, a multi-pointed model 
\[(M^t, \C U^t_{\C B^c})^{\eta_t, \zeta_t, (k - 1), d} = (S^t, R^t, V^t, \C U^t_{\C B^c})\]
is constructed, where $\C U^{t}_{\C B^c} = \{T^{t}_{\C D} \mid \C D \in \PP(\C B^c)\}$, such that
\begin{itemize}
\item $T^t_{\C D}$ is $\C D$-equivalent in $(M^t, \C U^t_{\C B^c})^{\eta_t, \zeta_t, (k - 1), d}$, for $\C D \in \PP(\C B^c)$;

\item $\rho_t: (M^t, \C U^t_{\C B^c})^{\eta_t, \zeta_t, (k - 1), d} \bisim^{col}_p (M', \C U^{t'}_{\C B^c})$;

\item $T^t_{\C D}$ is $R_{\C D}(\eta_t)^{\uparrow (k - 1)}$-complete, for $\C D \in \PP(\C B^c)$.
\end{itemize}

\item $t$ is constructed in Step 2-1: Then, $(\eta_t^{\uparrow (k + d - 1)})^p = (\zeta_t^{\uparrow (k + d - 1)})^p$. Since $k \leq d$, then $k - 1 \leq d - 1$. By the inductive hypothesis, 
\[(M^t, \C U^t_{\C B^c})^{\eta_t, \zeta_t, (k - 1), (d - 1)} = (S^t, R^t, V^t, \C U^t_{\C B^c})\]
is constructed, where $\C U^{t}_{\C B^c} = \{T^{t}_{\C D} \mid \C D \in \PP(\C B^c)\}$, such that
\begin{itemize}
\item $T^t_{\C D}$ is $\C D$-equivalent in $(M^t, \C U^t_{\C B^c})^{\eta_t, \zeta_t, (k - 1), (d - 1)}$, for $\C D \in \PP(\C B^c)$;

\item $\rho_t: (M^t, \C U^t_{\C B^c})^{\eta_t, \zeta_t, (k - 1), (d - 1)} \bisim^{col}_p (M', \C U^{t'}_{\C B^c})$;

\item $T^t_{\C D}$ is $R_{\C D}(\eta_t)^{\uparrow (k - 1)}$-complete, for $\C D \in \PP(\C B^c)$.
\end{itemize}
\end{itemize}
Add the submodel, $(M^t, \C U^t_{\C B^c})^{\eta_t, \zeta_t, (k - 1), d}$ or $(M^t, \C U^t_{\C B^c})^{\eta_t, \zeta_t, (k - 1), (d - 1)}$, to $(S, R, V)$, and extend $\rho$ to be $\rho \cup \rho_t$. 

(We write ``$(M^t, \C U^t_{\C B^c})^{\eta_t, \zeta_t, (k - 1), d}$'' or ``$(M^t, \C U^t_{\C B^c})^{\eta_t, \zeta_t, (k - 1), (d - 1)}$'' as ``$(M^t, \C U^t)$'' if it is clear from context.)

\item[Step 6] For every $\C B \in \C P^+(\C X)$, every $t \in T^*_{\C B}$, every $i \in \C B^c$, let $R_i(t)_4$ be the members of $R_i(t)$ within $(M, \C U_{\C X})^{\gamma, \xi, k, d}_4$. For every $u \in R_i(t)_4 \cup \{t\}$ such that $(M^u, \C U^u)$ exists, $T^u_i$ exists, which is an $i$-equivalent set. Then, merge all these sets $T^u_i$ and $R_i(t)_4 \cup \{t\}$ with the relation $R_i$.
\end{itemize}
Then, together with the other steps, $(M, \C U_{\C X})^{\gamma, \xi, k, d}$ is constructed for the inductive step.

Let $(M, \C U_{\C X}) = (M, \C U_{\C X})^{\gamma, \xi, k, d}$ and the construction is finished.
\end{proof}

Then, we conclude $\san{K45}_n \F D$ and $\san{KD45}_n \F D$ meet the sufficient condition of Theorem \ref{thm: guideline}.
\begin{lem}[Lemma for $\san{K45}_n \F D$ and $\san{KD45}_n \F D$]\label{lem: K45KD45}
Let $\delta \in D^P_{k + 1}(\san L)$ for $k \geq 0$. For every $\gamma \in D^P_{2k + 1}(\san L)$ and every $\san L$-model $(M', s')$ of $\gamma^p$, if $\gamma^{\uparrow (k + 1)} = \delta$, then there is an $\san L$-model $(M, s)$ such that
\begin{itemize}
\item $M, s \vDash \delta$,

\item $(M, s) \bisim^{col}_p (M', s')$.
\end{itemize}
\end{lem}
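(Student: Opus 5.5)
We will derive the lemma from the Construction Lemma (Lemma \ref{lem: ConstructionK45}), applied with $\C X = \C A$ and $\gamma = \xi$, and then attach a fresh actual world on top of the resulting multi-pointed model.

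\textbf{Choice of parameters.} Take the given $\gamma \in D^P_{2k+1}(\san L)$ with $\gamma^{\uparrow (k+1)} = \delta$, and set $\xi = \gamma$. In Lemma \ref{lem: ConstructionK45} use the parameters $k$, $d = k$ and $h = k$. Then $\gamma, \xi \in D^P_{k+h+1}(\san L) = D^P_{2k+1}(\san L)$, and the hypothesis $(\gamma^{\uparrow(k+d+1)})^p = (\xi^{\uparrow(k+d+1)})^p$ holds trivially. The given model $(M', s')$ satisfies $\xi^p = \gamma^p$.

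\textbf{Applying the lemma.} With $\C U'_{\C A} = \{R'_{\C B}(s') \mid \C B \in \PPA\}$, the lemma yields a multi-pointed $\san L$-model $(M^0, \C U_{\C A})$ and a relation $\rho$ with the following properties:
\begin{itemize}
\item each $T_{\C B}$ is $\C B$-equivalent;
\item $\rho : (M^0, \C U_{\C A}) \bisim^{col}_p (M', \C U'_{\C A})$;
\item each $T_{\C B}$ is $R_{\C B}(\gamma)^{\uparrow k}$-complete.
\end{itemize}
Since $\gamma^{\uparrow(k+1)} = \delta$ and $dep(\gamma) = 2k+1 > k+1$ when $k \geq 1$, Definition \ref{defn: uparrow} gives $R_{\C B}(\delta) = R_{\C B}(\gamma)^{\uparrow k}$. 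The case $k = 0$ is checked directly in the same way. So each $T_{\C B}$ is $R_{\C B}(\delta)$-complete.

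\textbf{Attaching the actual world.} Add a new world $s$ whose valuation on $P$ is given by $w(\delta) = w(\gamma)$ and which agrees with $s'$ off $p$. For each $i \in \C A$, add edges $(s, u)$ for every $u \in T_{\{i\}}$. The set $T_{\{i\}}$ is $i$-equivalent, meaning that its members share the same $R_i$-successors and each is reflexive. Hence $R_i(s) = T_{\{i\}}$, and the relation $R_i$ stays transitive and Euclidean, because $s$ points into a cluster on which $R_i$ is universal. Nothing points back to $s$.

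\textbf{Main obstacle.} The key step is to show $R_{\C B}(s) = \bigcap_{i \in \C B} T_{\{i\}} = T_{\C B}$. The inclusion $T_{\C B} \subseteq T_{\{i\}}$ follows from the construction in Step 4, since $T_{\C B} = \bigcup_{\C C \supseteq \C B} T^*_{\C C}$. For the converse, suppose a world lies in $T_{\{i\}}$ for every $i \in \C B$. A world constructed in Step 1 or Step 2-1 belongs to exactly one $T^*_{\C C}$, so the membership forces $\C B \subseteq \C C$, that is, the world lies in $T_{\C B}$. I expect to need to inspect the construction, in particular the merging in Step 6, to confirm that merging never places copied worlds from the attached submodels into any $T_{\{i\}}$. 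That is the delicate point.

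\textbf{Concluding.} Once $R_{\C B}(s) = T_{\C B}$ is established, the rest is routine.
\begin{itemize}
\item The relation $\rho \cup \{(s, s')\}$ is a collective $p$-bisimulation, because the first clause of Definition \ref{defn: col-p-bisim} gives Forth and Back at $s$.
\item $M, s \vDash \delta$ follows from the $R_{\C B}(\delta)$-completeness of each $T_{\C B} = R_{\C B}(s)$, together with the fact that $s$ satisfies $w(\delta)$.
\item For $\san{KD45}_n \F D$, seriality at $s$ holds because $T_{\{i\}}$ is nonempty. It is nonempty because $R'_i(s') \neq \varnothing$ and $\rho$ is total onto $T'_{\{i\}}$.
\end{itemize}
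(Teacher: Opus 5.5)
Your proposal is correct and follows the paper's proof. You apply Lemma \ref{lem: ConstructionK45} with $\xi=\gamma$ and $d=h=k$, identify $R_{\C B}(\gamma)^{\uparrow k}$ with $R_{\C B}(\delta)$, and attach a fresh root $s$ with $R_i(s)=T_{\{i\}}$, which gives the same edges as the paper's ``$(s,t)\in R_i$ for $t\in T_{\C B}$, $i\in\C B$''.

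The point you flag as delicate is actually immediate. $T_{\{i\}}$ is a set fixed in Step 4, consisting only of worlds built in Steps 1 and 2-1, each lying in exactly one $T^*_{\C C}$, so your disjointness argument already yields $R_{\C B}(s)=T_{\C B}$. The $\C B$-equivalence clause of the Construction Lemma, which you already use for transitivity and Euclideanness, guarantees that Step 6 never links $T_{\{i\}}$ to other worlds by $R_i$.
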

\begin{proof}
Consider the multi-pointed submodel of $(M', s')$
\[(M', \C U'_{\C A}) = (S', R', V', \C U'_{\C A})\]
where $\C U'_{\C A} = \{T'_{\C B} = R'_{\C B}(s') \mid \C B \in \PPA\}$. It is trivially true that $(\gamma^{\uparrow (k + k + 1)})^p = (\gamma^{\uparrow (k + k + 1)})^p$. Then, by Lemma \ref{lem: ConstructionK45}, we can construct a model
\[(M, \C U_{\C A})^{\gamma, \gamma, k, k} = (S, R, V, \C U_{\C A}),\]
where $\C U_{\C A} = \{T_{\C B} \subseteq S \mid \C B \in \PPA\}$, such that 
\begin{itemize}
\item $T_{\C B}$ is $\C B$-equivalent in $(M, \C U_{\C A})$, for every $\C B \in \PPX$;

\item $(M, \C U_{\C A})^{\gamma, \gamma, k, k} \bisim^{col}_p (M', \C U'_{\C A})$;

\item $T_{\C B}$ is $R_{\C B}(\gamma)^{\uparrow k}$-complete, for every $\C B \in \PPX$.
\end{itemize}
Note that $R_{\C B}(\gamma)^{\uparrow k} = R_{\C B}(\delta)$ exactly, so $T_{\C B}$ is $R_{\C B}(\delta)$-complete. Let's write $(M, \C U_{\C A})^{\gamma, \gamma, k, k}$ as $(M, \C U_{\C A})$ for simplicity.

Add a new actual world $s$ to $S$ such that for every $q \in P$, $q \in V(s)$ if and only if $w(\delta) \vDash q$. Add $(s, s') \in \rho$. Add $(s, t) \in R_i$ for every $t \in T_{\C B}$, every $i \in \C B$, and every $\C B \in \PPA$. Thus, we obtain the pointed $\san L$-model 
\[(M, s) = (S, R, V, s).\]
It is easy to see that $(M, s) \bisim^{col}_p (M', s')$. 

Observe that $M, s \vDash w(\delta)$ and for every $\C B \in \PPA$, $T_{\C B}$ is exactly $R_{\C B}(s)$, so $R_{\C B}(s)$ is $R_{\C B}(\delta)$-complete. Therefore, $M, s \vDash \delta$.
\end{proof}

\begin{thm}\label{thm: K45KD45}
$\san{K45}_n \F D$ and $\san{KD45}_n \F D$ have the uniform interpolation property.
\end{thm}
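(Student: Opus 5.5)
The plan is to combine Theorem \ref{thm: guideline} with Lemma \ref{lem: K45KD45}, and then conclude through Proposition \ref{prop: delta-phi} and Proposition \ref{prop: UI-forgetting}. Fix $\san L$ to be $\san{K45}_n \F D$ or $\san{KD45}_n \F D$, a finite $P$, and $p \in P$. By Proposition \ref{prop: delta-phi}, it suffices to show that $dforget_{\san L}(\delta, p)$ exists for every $\san L$-satisfiable d-canonical formula $\delta$.

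The first step disposes of the degenerate depth $0$. If $\delta \in D^P_0(\san L)$ is a minterm, then its remainder is $\delta^p$. The copy construction in the base case of Lemma \ref{lem: KandD} goes through verbatim once the new root $s$ is given exactly the same successors as $s'$ (its $R_i$-edges go to the copies of $R'_i(s')$). Transitivity and Euclideanness at $s$ are then inherited from those of $s'$, and seriality is preserved too. Alternatively, $\delta$ can be viewed as $\gamma^{\uparrow 0}$ of a depth-one formula and handled as below. Either way, $\delta^p \equiv_{\san L} dforget_{\san L}(\delta, p)$.

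For $\delta \in D^P_{k+1}(\san L)$ with $k \geq 0$, I will apply Theorem \ref{thm: guideline} with the depth parameter $k+1$ and with $l = 2k+1$. Note that $l \geq k+1$ holds. The hypothesis of that theorem asks the following: for every $\gamma \in D^P_{2k+1}(\san L)$ with $\gamma^{\uparrow (k+1)} = \delta$ and every $\san L$-model $(M', s')$ of $\gamma^p$, there must be an $\san L$-model $(M, s)$ of $\delta$ with $(M, s) \bisim^{col}_p (M', s')$. This is exactly Lemma \ref{lem: K45KD45}. Theorem \ref{thm: guideline} then gives
\[dforget_{\san L}(\delta, p) \equiv_{\san L} \bigvee \{\gamma \in D^P_{2k+1}(\san L) \mid \gamma^{\uparrow (k+1)} = \delta\}^p .\]

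It remains to check that this disjunction is $\san L$-satisfiable, so that forgetting "exists" in the sense of Definition \ref{defn: forgetting}. Since $\delta$ is $\san L$-satisfiable, pick a model $(M,s)$ of $\delta$. By Proposition \ref{prop: models-unique-canon}, it satisfies some $\gamma \in D^P_{2k+1}(\san L)$. By Proposition \ref{prop: uparrow-property} and uniqueness, $\gamma^{\uparrow(k+1)} = \delta$. Finally, by Proposition \ref{prop: deltap-imply}, $M, s \vDash \gamma^p$, so the disjunction is satisfiable. Thus $\san L$ is closed under forgetting, and Proposition \ref{prop: UI-forgetting} yields the uniform interpolation property. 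Explicitly, for $\phi$ of depth $k+1$ the uniform interpolant is the disjunction, over the $\delta \in D^P_{k+1}(\san L)$ entailing $\phi$, of the displayed formulas, which have depth $2k+1$.

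The only real content lies in Lemma \ref{lem: K45KD45} (and the Construction Lemma behind it), which is assumed. The step needing the most care is therefore the bookkeeping of depth indices when matching Theorem \ref{thm: guideline} (depth $k+1$ versus $l=2k+1$), together with the separate treatment of depth $0$.
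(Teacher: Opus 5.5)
Your proposal is correct and follows the paper's proof. You treat the depth-$0$ minterms directly. For $\delta \in D^P_{k+1}(\san L)$ you feed Lemma~\ref{lem: K45KD45} into Theorem~\ref{thm: guideline} with $l = 2k+1$, and then conclude via Propositions~\ref{prop: delta-phi} and~\ref{prop: UI-forgetting}. Your extra checks only fill in details the paper leaves implicit: that the copy construction preserves transitivity, Euclideanness and seriality at depth $0$, and that the resulting disjunction, with $\gamma$ ranging over $D^P_{2k+1}(\san L)$, is $\san L$-satisfiable.
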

\begin{proof}
Let $\san L$ be $\san{K45}_n \F D$ or $\san{KD45}_n \F D$. Let $P \subseteq \C P$ be finite and $p$ be an atom. Observe that every member $\delta$ of $D^P_0(\san L)$, which is a minterm, has a uniform interpolant $\delta^p$. For $k \geq 0$ and $\delta \in D^P_{k + 1}(\san L)$, by Lemma \ref{lem: K45KD45} and Theorem \ref{thm: guideline}, $dforget_{\san L}(\delta, p)$ exists. More precisely,
\[dforget_{\san L}(\delta, p) \equiv_{\san L} \bigvee \{\gamma^p \mid \gamma \in D^P_{2k + 1} \text{ and } \gamma^{\uparrow (k + 1)} = \delta\}\]
By Proposition \ref{prop: UI-forgetting} and Proposition \ref{prop: delta-phi}, $\san{K45}_n \F D$ and $\san{KD45}_n \F D$ have the uniform interpolation property.
\end{proof}

\subsection{Uniform interpolation in $\san{S5}_n \F D$}\label{subsec: S5}
\begin{restatable}[The Construction Lemma for $\san{S5}_n \F D$]{lem}{ConstructionSfive}\label{lem: ConstructionS5}
Let $k, d, h$ be natural numbers such that $0 \leq k \leq d \leq h$. Let $\gamma, \xi \in D^P_{k + h + 1}(\san{S5}_n \F D)$ such that $(\gamma^{\uparrow (k + d + 1)})^p = (\xi^{\uparrow (k + d + 1)})^p$. Let $(M', s') = (S', R', V', s')$
be an $\san{S5}_n \F D$-model of $\xi^p$. 

Given $\C X \in \PPA$ and the multi-pointed submodel
\[(M', \C U'_{\C X}) = (S', R', V', \C U'_{\C X})\]
where $\C U'_{\C X} = \{T'_{\C B} \mid T'_{\C B} = R'_{\C B}(s')\text{ for }\C B \in \PPX\}$, there is a multi-pointed $\san{S5}_n \F D$-model 
\[(M, \C U_{\C X}) = (S, R, V, \C U_{\C X})\]
where $\C U_{\C X}= \{T_{\C B} \subseteq S \mid \C B \in \PPX\}$ such that, 
\begin{itemize}
\item $T_{\C B}$ is $\C B$-equivalent, for every $\C B \in \PPX$;

\item $(M, \C U_{\C X}) \bisim^{col}_p (M', \C U'_{\C X})$;

\item $T_{\C B}$ is $R_{\C B}(\gamma)^{\uparrow k}$-complete, for every $\C B \in \PPX$.
\end{itemize}
\end{restatable}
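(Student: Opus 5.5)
We will follow the construction of Lemma \ref{lem: ConstructionK45} and prove the statement by the same double induction on $k$ and $d$, building $(M, \C U_{\C X})^{\gamma, \xi, k, d}$ in Steps 1--6. The only global property that changes is reflexivity. By Proposition \ref{prop: q-e-tran-Eu}, it suffices to ensure that every set $\RT(t, R_i \cup R_i^{-1})$ is $i$-equivalent, not merely quasi-$i$-equivalent. We therefore run the same steps with the following modifications.
\begin{itemize}
\item In Step 3, we add $(t_1, t_2)$ to $R_i$ also when $t_1 = t_2$, so every world of $T^*_{\C B}$ gets a reflexive $i$-loop for $i \in \C B$.
\item In Step 2-1, we already add $(u,u)$ for $i \in \C C_2$. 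For the remaining agents we add the reflexive edges and include $t$ itself in its own $R_i$-cluster.
\item In Step 5, the inductive hypothesis now supplies $\san{S5}_n \F D$ submodels, so the sets $T^u_i$ are genuinely $i$-equivalent.
\item In Step 6, merging via $R_i$ adds $(t,u)$ exactly when $(u,u) \in R_i$. Since every world is reflexive, the merged set is an $i$-equivalence class.
\end{itemize}

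The key steps, in order, are as follows.

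First, we verify the frame conditions. After Steps 1--6 every world carries all its reflexive loops. For every $t$ and $i$, the set $\RT(t, R_i\cup R_i^{-1})$ is either a merged cluster from Step 6, an $i$-clique from Steps 3 or 2-2, or lies inside an inductively built $\san{S5}_n\F D$ submodel. In each case it is $i$-equivalent, so by Proposition \ref{prop: q-e-tran-Eu} the model is an $\san{S5}_n \F D$-model. Moreover, each $T_{\C B}$ is $\C B$-equivalent because it is the intersection of $i$-cliques for $i \in \C B$.

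Second, we check the collective $p$-bisimulation, with $\rho$ built as in the $\san{K45}$ case. The new reflexive edges $(u,u)$ are matched by $(u',u') \in R'_{\C B}$, which holds because $M'$ is reflexive. This is the same argument as in Lemma \ref{lem: T}.

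Third, we check completeness: $T_{\C B}$ should be $R_{\C B}(\gamma)^{\uparrow k}$-complete. Here reflexivity creates new $\C B$-successors of a world, namely the world itself and its clique-mates. We handle this by an inner induction on the pruning level $l$, as in Lemma \ref{lem: T}. For a world $t$ constructed from $\eta_t$, we show $M, t \vDash \eta_t^{\uparrow l}$ for all $l \leq k$. The self-loop is covered by Lemma \ref{lem: reflexive}, which gives $\eta_t^{\uparrow(l-1)} \in R_{\C B}(\eta_t^{\uparrow l})$. Clique-mates are covered by Proposition \ref{prop: dist-identical-son} (the Identical Successors Property), since in $\san{S5}_n\F D$ worlds in one $\C B$-class share successors up to pruning.

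The main obstacle is this third step, specifically the interaction between Step 6 merging and reflexivity. Merging a submodel cluster $T^u_i$ with $R_i(t)_4\cup\{t\}$ makes $t$ an $i$-successor of itself and of the copied worlds. We must show that the formulas realized there are compatible with $R_i(\eta_t)^{\uparrow(k-1)}$. Our first attempt will be to use the completeness of $T^u_i$ for $R_i(\eta_u)^{\uparrow(k-1)}$ from the inductive hypothesis. We then combine it with Proposition \ref{prop: brothers} (the Sibling Property) and Proposition \ref{prop: dist-identical-son} to identify $R_i(\eta_u)^{\downarrow}$ with $R_i(\eta_t)$ up to the needed depth. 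The depth slack $h \geq d \geq k$, via Proposition \ref{prop: uparrow-property}(4), absorbs the one level lost at each merge.
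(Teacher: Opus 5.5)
Your architecture is the paper's. The paper also reuses the construction of Lemma~\ref{lem: ConstructionK45} and changes only Step~2-2, so that every world of $T^*_{\C B}$ receives $(t,t)\in R_j$ for all $j\in\C A$; your Step~3 change is already part of Step~3, which covers $t_1=t_2$. It obtains the frame conditions from Proposition~\ref{prop: q-e-tran-Eu} with $i$-equivalence in place of quasi-$i$-equivalence, and adds only the case $v=u$ to the Forth check. It then reruns Lemma~\ref{lem: Tcomplete}, settling the self-loop case by Lemma~\ref{lem: reflexive}.

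The step that fails as written is your inner invariant. You cannot prove $M,t\vDash\eta_t^{\uparrow l}$ for all $l\le k$ at every constructed world. If $\eta_t\in D^P_{m+h}$ with $m<k$, only $m$ constructed generations lie below $t$. For $m=0$ the sons of $t$ are copies of worlds of $M'$, whose values on $p$ are not constrained by $\eta_t$, so $\eta_t^{\uparrow 1}$ can already fail.

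Lemma~\ref{lem: T} can use every $l$ because its model is built to the full depth of $\eta_t$. Here the claim must be capped at the world's level. The paper proves $M,u\vDash(\eta_u^{\downarrow h})^{\uparrow l}$ for every $l$, which amounts to $\eta_u^{\uparrow l}$ only for $l\le m$. This suffices because every successor of a level-$m$ world with $m\ge 1$ is a constructed world of level at least $m-1$.

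For the same reason, your last paragraph should not import the completeness of $T^u_i$ from the inductive hypothesis. That completeness holds in $(M^u,\C U^u)$, but after Step~6 those worlds acquire new successors. In $\san{S5}_n\F D$ these include their father, which is now reflexive. So satisfaction has to be re-proved in the final model by the capped induction, via the case analysis of Lemma~\ref{lem: Tcomplete}.

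In that analysis the father case uses $\eta_t^{\downarrow}\in R_i(\eta_t)$ (Lemma~\ref{lem: reflexive}) together with $R_i(\eta_u)=R_i(\eta_t)^{\downarrow}$ (Proposition~\ref{prop: dist-identical-son}). The identity runs in that direction. Your version, $R_i(\eta_u)^{\downarrow}=R_i(\eta_t)$, compares formulas of different depths.
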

\begin{proof}
There are only a few nuances in the construction of $\san{S5}_n \F D$-models from the case of $\san{K45}_n \F D$. That is, Step 2-2 of the inductive step should be rewritten as:
\begin{itemize}
\item[Step 2-2] For every $\C B \in \PPX$, every $t \in T^*_{\C B}$, add $(t, t) \in R_j$ for every $j \in \C A$, and subsequently, for every $i \in \PP(\C B^c)$, and every $u_1, u_2 \in R_i(t)$, then add $(u_1, u_2)$ to $R_i$.
\end{itemize}
We leave the full proofs in the appendix. 
\end{proof}

\begin{lem}[Lemma for $\san{S5}_n\F D$]\label{lem: S5}
Let $\delta \in D^P_{k + 1}(\san{S5}_n\F D)$ for $k \geq 0$. For every $\gamma \in D^P_{2k + 1}(\san{S5}_n \F D)$ and every $\san{S5}_n\F D$-model $(M', s')$ of $\gamma^p$, if $\gamma^{\uparrow (k + 1)} = \delta$, then there is an $\san{S5}_n\F D$-model $(M, s)$ such that
\begin{itemize}
\item $M, s \vDash \delta$,

\item $(M, s) \bisim^{col}_p (M', s')$.
\end{itemize}
\end{lem}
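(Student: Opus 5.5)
The plan is to mirror the proof of Lemma \ref{lem: K45KD45}, replacing the construction lemma for $\san{K45}_n \F D$ by Lemma \ref{lem: ConstructionS5}. The only new ingredient is the reflexive loop at the new actual world $s$.

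First, set $\C X = \C A$ and take the multi-pointed submodel $(M', \C U'_{\C A})$ with $T'_{\C B} = R'_{\C B}(s')$ for every $\C B \in \PPA$. Since $\gamma \in D^P_{2k+1}(\san{S5}_n\F D) = D^P_{k+k+1}$, we may instantiate Lemma \ref{lem: ConstructionS5} with $k = d = h$ and $\xi = \gamma$. The hypothesis $(\gamma^{\uparrow(2k+1)})^p = (\xi^{\uparrow(2k+1)})^p$ then holds trivially. The lemma yields an $\san{S5}_n\F D$ multi-pointed model $(M, \C U_{\C A})$ with the following properties:
\begin{itemize}
\item each $T_{\C B}$ is $\C B$-equivalent;
\item the models are collectively $p$-bisimilar via some $\rho$;
\item each $T_{\C B}$ is $R_{\C B}(\gamma)^{\uparrow k}$-complete.
\end{itemize}
Because $\gamma^{\uparrow(k+1)} = \delta$ and $dep(\gamma) > k+1$ (the case $k=0$ is immediate), Definition \ref{defn: uparrow} gives $R_{\C B}(\delta) = R_{\C B}(\gamma)^{\uparrow k}$. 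Hence each $T_{\C B}$ is $R_{\C B}(\delta)$-complete.

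Next I cannot simply add a fresh world $s$ pointing into the $T_{\C B}$, since the result must be reflexive, transitive and Euclidean. Instead, I would choose $s$ inside the constructed model. Since $M'$ is reflexive, $s' \in R'_{\C A}(s') = T'_{\C A}$. By the bisimulation clause for multi-pointed models, there is $s \in T_{\C A}$ with $(s, s') \in \rho$. Because $T_{\C A} \subseteq T_{\C B}$ and each $T_{\C B}$ is $\C B$-equivalent, I then need $R_{\C B}(s) = T_{\C B}$ for every $\C B$. The inclusion $T_{\C B} \subseteq R_{\C B}(s)$ follows from $\C B$-equivalence together with transitivity and Euclideanness. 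The reverse inclusion requires that the equivalence class of $s$ under $R_{\C B}$ is exactly $T_{\C B}$; I expect this to follow from the construction, since $T_{\C B}$ is a full $R_{\C B}\cup R_{\C B}^{-1}$ component by Definition \ref{defn: equivalent}. With this established, $R_{\C B}(s)$ is $R_{\C B}(\delta)$-complete for all $\C B$. Restricting $\rho$ to the generated submodel gives $(M,s)\bisim^{col}_p(M',s')$, because forth and back at $s$ come from the pair $(s,s')$ and the bisimulation conditions.

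The remaining and hardest point is $M, s \vDash w(\delta)$. The valuation of $s$ is $w(\eta_s)$ for the formula $\eta_s$ constructing $s$, and this need not equal $w(\delta)$. I would handle this with Lemma \ref{lem: reflexive}: since $\delta^{\uparrow k} \in R_{\C A}(\delta)$, completeness of $T_{\C A}$ yields some world satisfying $\delta^{\uparrow k}$, hence $w(\delta)$. The task is then to pick such an $s$ that is also $\rho$-related to $s'$. For this I would trace the construction's Step 1, where $t'=s'\in R'_{\C A}(s')$ is matched with $\eta = \delta^{\uparrow k}$ (or its $\gamma$-level counterpart) and $\zeta = \gamma^{\downarrow}$-type successors satisfying $M', s'\vDash \zeta^p$. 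This produces a world $s$ with $w(\eta_s)=w(\delta)$ and $(s,s')\in\rho$.

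If that matching fails, the fallback is to add a fresh $s$ as in Lemma \ref{lem: K45KD45}, with the following modifications:
\begin{itemize}
\item add $(s,s)$ to every $R_i$;
\item for each $i$, merge $\{s\}$ into $T_{\{i\}}$ with $R_i$, as in Step 6;
\item use the bisimulation pair $(s,s')$, where reflexivity of $M'$ supplies the forth and back witnesses for the loop.
\end{itemize}
One must then check that $s$ added to $T_{\C B}$ does not break $R_{\C B}(\delta)$-completeness. This holds because $w(\delta)$ together with the shared successors makes $s$ satisfy $\delta^{\uparrow k}\in R_{\C B}(\delta)$, which can be proved by induction on the $\uparrow$ level exactly as in Lemma \ref{lem: T}.
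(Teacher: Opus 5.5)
Your primary route is correct, and it simplifies the paper's argument rather than repeating it.

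\textbf{How the two routes differ.} The paper mirrors Lemma~\ref{lem: K45KD45}. It adds a fresh world $s$ with loops in every $R_i$ and points $s$ at all the $T_{\C B}$. It then argues that $T_{\C B}\cup\{s\}$ stays $\C B$-equivalent and $R_{\C B}(\delta)$-complete, because $s$ is a twin (same valuation, same relations) of a world $t_s\in T_{\C B}$ with $(t_s,s')\in\rho$. You notice that this twin can itself serve as the actual world. Since $t_s\in T^*_{\C A}\subseteq T_{\C B}$ and $T_{\C B}$ is $\C B$-equivalent, Definition~\ref{defn: equivalent} gives $R_{\C B}(t_s)=T_{\C B}$ directly. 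So you need no merging, and no re-check of equivalence, completeness, or the bisimulation clauses along new edges. The fresh root only buys uniformity with the $\san{K45}_n\F D$ case, where the root need not lie in its own successor sets; in $\san{S5}_n\F D$ it is redundant.

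\textbf{The key point, made precise.} Both routes need $t_s$ to carry $w(\delta)$ on $p$ as well. The paper only asserts this (``$V(t_s)=V(s)$''), and you are right to single it out. Your description of the witness is imprecise, though. It must be the Step~1 world built from $t'=s'$ and $\eta=\zeta=\gamma^{\downarrow}$. It cannot be built from $\delta^{\uparrow k}$, which is not in $R_{\C A}(\gamma)\subseteq D^P_{2k}$ when $k\geq 1$. The choice $\eta=\zeta=\gamma^{\downarrow}$ is admissible for four reasons:
\begin{itemize}
\item $\gamma^{\downarrow}\in R_{\C A}(\gamma)$ by Lemma~\ref{lem: reflexive} with $l=1$;
\item the matching condition holds trivially because $\eta=\zeta$;
\item $M',s'\vDash(\gamma^{\downarrow})^p$, since $(\gamma^{\downarrow})^p=(\gamma^p)^{\downarrow}$ is entailed by $\gamma^p$ (the proof of item~1 of Proposition~\ref{prop: uparrow-property} uses only the syntactic shape of the formula);
\item $w(\gamma^{\downarrow})=w(\gamma)=w(\delta)$.
\end{itemize}
So the matching never fails, and your fallback is unnecessary.

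\textbf{On the fallback.} It is essentially the paper's proof. If you keep it, an induction showing only that $s$ satisfies $\delta^{\uparrow k}$ is not enough, because the old worlds of $T_{\C B}$ gain $s$ as a new successor. What settles it is that $s$ has the same valuation and the same $R_{\C B}$-successor sets as some world of $T_{\C A}$ with valuation $w(\delta)$. Identifying $s$ with that world therefore induces a collective bisimulation, and no truth value changes.
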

\begin{proof}
Analogous to Lemma \ref{lem: K45KD45}, we add a new actual world $s$ to the multi-pointed $\san{S5}_n\F D$-model $(M, \C U_{\C A})$, but in particular, we also add $(s, s)$ to $R_i$ for each $i \in \C A$. It is easy to see that $(M, s) \bisim^{col}_p (M', s')$ and $(M, s)$ is an $\san{S5}_n\F D$-model.

For every $\C B \in \PPA$, $s' \in R_{\C B}(s')$. By the construction, there is $t_s \in T_{\C B}$ such that $(t_s, s') \in \rho$. Then, $V(t_s) = V(s)$. Since $(M, s) \bisim^{col}_p (M', s')$, we have that $t_s$ and $s$ shares the same relations. So, $T_{\C B} \cup \{s\}$ is still $\C B$-equivalent and $R_{\C B}(\delta)$-complete. Therefore, $M, s \vDash \delta$.
\end{proof}

Therefore, we also have the conclusion.
\begin{thm}\label{thm: S5}
$\san{S5}_n\F D$ has the uniform interpolation property.
\end{thm}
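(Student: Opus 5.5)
The proof will mirror the proof of Theorem \ref{thm: K45KD45}, with Lemma \ref{lem: S5} in place of Lemma \ref{lem: K45KD45}. Fix a finite $P \subseteq \C P$, an atom $p \in P$, and write $\san L$ for $\san{S5}_n\F D$. By Proposition \ref{prop: UI-forgetting} it suffices to show that $\san L$ is closed under forgetting. By Proposition \ref{prop: delta-phi}, this in turn reduces to showing that $dforget_{\san L}(\delta, p)$ exists for every $\delta \in D^P_m(\san L)$ and every $m \in \B N$. We split on the depth $m$.

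\emph{Depth zero.} If $m = 0$, then $\delta$ is a minterm. We apply Theorem \ref{thm: guideline} with $l = k = 0$: given any $\san L$-model $(M', s')$ of $\delta^p$, take a copy of $M'$ and reset the value of $p$ at the copy of $s'$ according to $w(\delta)$. This changes no relations, so the result is still an $\san{S5}_n\F D$-model. The identity-on-copies relation witnesses collective $p$-bisimulation, and the new pointed model satisfies $\delta$. Hence $dforget_{\san L}(\delta, p) \equiv_{\san L} \delta^p$.

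\emph{Positive depth.} If $m = k + 1$ with $k \geq 0$, we apply Theorem \ref{thm: guideline} with parameter $l = 2k+1 \geq k+1$. Its hypothesis is exactly the content of Lemma \ref{lem: S5}: for every $\gamma \in D^P_{2k+1}(\san L)$ with $\gamma^{\uparrow(k+1)} = \delta$ and every $\san L$-model $(M', s')$ of $\gamma^p$, there is an $\san L$-model $(M, s)$ of $\delta$ with $(M, s) \bisim^{col}_p (M', s')$. Theorem \ref{thm: guideline} then yields
\[dforget_{\san L}(\delta, p) \equiv_{\san L} \bigvee \{\gamma \in D^P_{2k+1}(\san L) \mid \gamma^{\uparrow (k+1)} = \delta\}^p .\]

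\emph{Satisfiability of the interpolant.} Definition \ref{defn: forgetting} asks for an $\san L$-satisfiable formula, so we must check that this disjunction is satisfiable. Since $\delta$ is satisfiable, take a model $(M, s)$ of $\delta$. By Proposition \ref{prop: models-unique-canon}, $M, s$ satisfies some $\gamma \in D^P_{2k+1}(\san L)$. Proposition \ref{prop: uparrow-property} gives $\gamma \vDash \gamma^{\uparrow(k+1)}$ with $\gamma^{\uparrow(k+1)} \in D^P_{k+1}(\san L)$, and uniqueness in Proposition \ref{prop: models-unique-canon} then forces $\gamma^{\uparrow(k+1)} = \delta$. So the index set is nonempty, and Proposition \ref{prop: deltap-imply} gives $M, s \vDash \gamma^p$.

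\emph{Where the difficulty lies.} The main obstacle is not in this assembly but in its inputs, Lemma \ref{lem: ConstructionS5} and Lemma \ref{lem: S5}. There one must check that the added reflexive loops (Step 2-2 and the loop at $s$) preserve both $\C B$-equivalence of the sets $T_{\C B}$ and the Forth condition of the bisimulation. The Forth condition uses reflexivity of $M'$, exactly as in Lemma \ref{lem: T}. Taking those lemmas as given, the theorem follows by combining the two cases with Propositions \ref{prop: delta-phi} and \ref{prop: UI-forgetting}. This also yields the explicit depth-$(2k+1)$ interpolant for depth-$(k+1)$ formulas announced in the introduction.
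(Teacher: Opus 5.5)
Your proposal is correct and follows the paper's route: the paper proves Theorem~\ref{thm: S5} by repeating the proof of Theorem~\ref{thm: K45KD45} with Lemma~\ref{lem: S5} in place of Lemma~\ref{lem: K45KD45}. That means minterms are handled directly, depth $k+1$ is handled by Theorem~\ref{thm: guideline} with $l = 2k+1$, and the result is assembled via Propositions~\ref{prop: delta-phi} and~\ref{prop: UI-forgetting}. The paper leaves two of your details implicit, and both are sound: the explicit depth-zero construction (copying $M'$ and resetting $p$ at the copy of $s'$, which leaves the $\san{S5}$ frame untouched) and the check that the resulting disjunction is $\san L$-satisfiable.
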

\begin{proof}
This proof is analogous to Theorem \ref{thm: K45KD45}.
\end{proof}

\subsection{Special cases where $\delta^p$ is a uniform interpolant}\label{subsec: SpecialCase}

By Theorem \ref{thm: K45KD45} and Theorem \ref{thm: S5}, we can instantly have the following corollary:
\begin{cor}
Let $\san L$ be any of $\san{K45}_n \F D$, $\san{KD45}_n \F D$, and $\san{S5}_n \F D$. For every $\delta \in D^p_1(\san L)$, $\delta^p \equiv_{\san L} dforget_{\san L}(\delta, p)$.
\end{cor}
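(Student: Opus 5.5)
This is the case $k = 0$ of Theorems \ref{thm: K45KD45} and \ref{thm: S5}, so I will specialize their explicit formula. Let $\delta \in D^P_1(\san L)$, so $\delta = \delta^{k+1}$ with $k = 0$, and $2k+1 = 1$. The proof of Theorem \ref{thm: K45KD45} (and its analogue for $\san{S5}_n \F D$) gives
\[dforget_{\san L}(\delta, p) \equiv_{\san L} \bigvee \{\gamma^p \mid \gamma \in D^P_{1}(\san L) \text{ and } \gamma^{\uparrow 1} = \delta\}.\]
This rests on Lemma \ref{lem: K45KD45} (resp. Lemma \ref{lem: S5}) together with Theorem \ref{thm: guideline} for $l = 2k+1 = 1$.

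Next I identify the index set. For $\gamma \in D^P_1(\san L)$ we have $dep(\gamma) \leq 1$, so by Definition \ref{defn: uparrow}, $\gamma^{\uparrow 1} = \gamma$. The condition $\gamma^{\uparrow 1} = \delta$ therefore forces $\gamma = \delta$. Since $\delta$ is $\san L$-satisfiable, $\delta \in D^P_1(\san L)$, so the set is exactly $\{\delta\}$ and the disjunction collapses to $\delta^p$. This is the same observation the paper makes after Theorem \ref{thm: guideline}: when the sufficient condition holds with $l = k$, the only member of $\Phi$ is $\delta$.

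The only point needing care is matching indices. The corollary is about depth $1 = k+1$ with $k = 0$, and Lemma \ref{lem: K45KD45} then requires $\gamma \in D^P_{1}$; the pruning $\gamma^{\uparrow 1}$ does not alter a depth-one formula. One should also check that the construction lemma (Lemma \ref{lem: ConstructionK45}, resp. \ref{lem: ConstructionS5}) with $k = d = h = 0$ is genuinely covered by its base case. It is, since $\gamma, \xi \in D^P_{1}$ there. Uniform interpolation then follows by Proposition \ref{prop: UI-forgetting}.
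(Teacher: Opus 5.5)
Your proposal is correct and follows essentially the same route as the paper: specialize the explicit forgetting formula of Theorems~\ref{thm: K45KD45} and~\ref{thm: S5} to $k = 0$. Since $\gamma^{\uparrow 1} = \gamma$ for every depth-one $\gamma$, the index set is just $\{\delta\}$ and the disjunction collapses to $\delta^p$.
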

\begin{proof}
When $k = 0$, then $\delta \in D^P_{1}(\san L)$ and $dforget_{\san L}(\delta, p) \equiv_{\san L} \bigvee \{\gamma^p \mid \gamma \in D^P_1 \text{ and } \gamma^{\uparrow 1} = \delta\} = \delta^p$.
\end{proof}

Consider the case where $\C A$ only has two agents, that is, $\C A = \{1, 2\}$. Thus, the three systems become $\san{K45}_2 \F D$, $\san{KD45}_2 \F D$, and $\san{S5}_2 \F D$, and we still denote them by $\san L_2$. We will show that $\delta^p \equiv_{\san L_2} dforget_{\san L_2}(\delta, p)$ for any $\san L_2$-satisfiable $\delta$. This conclusion is based on the much simplified version of the Sibling Property:
\begin{prop}\label{prop: A2-main-lemma}
Let $\delta \in D^P_k(\san L_2)$ for $k \geq 2$ and $(M', s')$ be any $\san L_2$-model of $\delta^p$. For every $i, j \in \C A$ with $i \neq j$, every $\eta \in R_i(\delta)$, every $\chi \in R_{\C A}(\eta)$, every $t' \in R'_i(s')$, and every $u' \in R'_{\C A}(t')$, if $M', t' \vDash \eta^p$ and $M', u' \vDash \chi^p$, then there is $\eta_0 \in R_{i}(\delta)$ such that
\begin{itemize}
\item $\eta_0^\downarrow = \chi \in R_{\C A}(\eta)$,

\item $R_j(\eta) = R_j(\eta_0)$,

\item $M', u' \vDash \eta_0^p$.
\end{itemize}
\end{prop}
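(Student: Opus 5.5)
The plan is to pick $\eta_0$ semantically, in a model of $\delta$, rather than taking an arbitrary witness from Proposition \ref{prop: brothers}. We then verify $M', u' \vDash \eta_0^p$ conjunct by conjunct, using the transitive–Euclidean structure of $M'$.

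\textbf{Choosing $\eta_0$.} Since $\C A=\{1,2\}$, the only $\C C$ meeting both $\{i\}$ and $\{i\}^c$ is $\C C=\C A$, with $\C C_1=\{i\}$ and $\C C_2=\{j\}$.
\begin{itemize}
\item Take an $\san L_2$-model $(N,s)$ of $\delta$, a world $t\in R_i(s)$ with $N,t\vDash\eta$, and a world $u\in R_{\C A}(t)$ with $N,u\vDash\chi$. These exist because $\delta$ and $\eta$ are $\nabla$-complete with respect to their successors.
\item By transitivity, $u\in R_i(s)$, so $u$ satisfies some $\eta_0\in R_i(\delta)$. By Proposition \ref{prop: uparrow-property}(1) and Proposition \ref{prop: models-unique-canon}, $\eta_0^\downarrow=\chi$.
\item Since $u\in R_j(t)\cap R_{\C A}(t)$, transitivity and Euclideanness give $R_j(u)=R_j(t)$ and $R_{\C A}(u)=R_{\C A}(t)$. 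As $\eta_0$ and $\eta$ both lie in $D^P_{k-1}$, uniqueness of canonical formulas yields $R_j(\eta_0)=R_j(\eta)$ and $R_{\C A}(\eta_0)=R_{\C A}(\eta)$.
\item Finally, Proposition \ref{prop: dist-identical-son} with $l=1$ gives $R_i(\eta_0)=R_i(\delta)^\downarrow$.
\end{itemize}
This establishes the first two bullets of the statement, together with the extra identity $R_{\C A}(\eta_0)=R_{\C A}(\eta)$, which the third bullet needs.

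\textbf{Checking $M',u'\vDash\eta_0^p$.} We check each conjunct of $\eta_0^p$.
\begin{itemize}
\item $w(\eta_0)^p=w(\chi)^p$ holds at $u'$ because $M',u'\vDash\chi^p$.
\item Since $u'\in R'_{\C A}(t')\subseteq R'_j(t')$, we get $R'_j(u')=R'_j(t')$. Since also $u'\in R'_{\C A}(t')$, we get $R'_{\C A}(u')=R'_{\C A}(t')$. So the conjuncts $\nabla_j R_j(\eta_0)^p=\nabla_j R_j(\eta)^p$ and $\nabla_{\C A}R_{\C A}(\eta_0)^p=\nabla_{\C A}R_{\C A}(\eta)^p$ transfer from $t'$, where $\eta^p$ holds.
\item For agent $i$: $u'\in R'_i(t')$ and $t'\in R'_i(s')$, so $R'_i(u')=R'_i(s')$. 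We know $M',s'\vDash\nabla_i R_i(\delta)^p$, and we need $\nabla_i (R_i(\delta)^\downarrow)^p$.
\end{itemize}

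\textbf{Auxiliary fact for agent $i$.} This needs $\gamma^p\vDash(\gamma^\downarrow)^p$ for each $\gamma\in R_i(\delta)$. It follows from $(\gamma^\downarrow)^p=(\gamma^p)^\downarrow$, which the paper notes after the definition of literal elimination. It also uses the fact that the argument for Proposition \ref{prop: uparrow-property}(1) is purely syntactic, so it applies to the $p$-free formula $\gamma^p$ as well. If that is not immediate, I would prove it directly by induction on depth, since $\nabla_{\C B}\Phi\vDash\nabla_{\C B}\Phi'$ whenever each member of $\Phi$ entails a member of $\Phi'$ and each member of $\Phi'$ is entailed by some member of $\Phi$.

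\textbf{Main obstacle.} The delicate point is the $\C A$-component: the Sibling Property alone does not control $R_{\C A}(\eta_0)$. That is exactly why $\eta_0$ is chosen as the formula realized at an $\C A$-successor in an actual model, which transfers $R_{\C A}$ from $\eta$ to $\eta_0$. Depth bookkeeping also needs care: $\eta,\eta_0\in D^P_{k-1}$ while $\chi\in D^P_{k-2}$, and all uniqueness arguments must be made at matching depths.
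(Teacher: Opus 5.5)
Your proof is correct. It uses the same ingredients as the paper: sibling and identical-successors reasoning on the side of $\delta$, and sharing of successor sets in the transitive--Euclidean model $M'$. The route differs, though, in a way that matters.

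The paper takes an arbitrary witness $\eta_0$ from Proposition~\ref{prop: brothers}. It then finds a second formula $\eta_1\in R_i(\delta)$ with $M',u'\vDash\eta_1^p$ by rerunning the sibling argument inside $M'$. It concludes via the syntactic identity $\eta_1^p=\eta_0^p$, using Identical Successors for $\delta^p$ and the commutation $(\cdot^\downarrow)^p=(\cdot^p)^\downarrow$ on the $i$-component.

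You instead choose $\eta_0$ at a concrete $\C A$-successor in a model of $\delta$ and verify $M',u'\vDash\eta_0^p$ conjunct by conjunct. This needs only the easy entailment $\gamma^p\vDash(\gamma^\downarrow)^p$.

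The real gain is the $\nabla_{\C A}$ conjunct. The paper justifies $\eta_1^p=\eta_0^p$ by agreement of the $w$-, $R_i$- and $R_j$-components only. But with $\C A=\{1,2\}$, a d-canonical formula also has an $R_{\C A}$-component. The Sibling Property as stated constrains $R_{\C D}(\eta_0)$ only for $\C D=\{j\}$, so an arbitrary witness from it need not have the right $R_{\C A}$-component. Your choice of $\eta_0$ gives $R_{\C A}(\eta_0)=R_{\C A}(\eta)$. Combined with $R'_{\C A}(u')=R'_{\C A}(t')$, this is exactly what closes the step.

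The paper's argument can be patched the same way. The proof of Proposition~\ref{prop: brothers} picks $t_0\in R_{\C C}(t)$, and so it in fact also yields $R_{\C C}(\eta_0)=R_{\C C}(\eta)$.
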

\begin{proof}
By Proposition \ref{prop: brothers} (the Sibling Property), there is $\eta_0 \in R_{i}(\delta)$ such that
\begin{itemize}
\item $\eta_0^\downarrow = \chi \in R_{\C A}(\eta)$,

\item $R_j(\eta) = R_j(\eta_0)$
\end{itemize}
For the model $(M', s')$, there is $\eta_1 \in R_i(\delta)$ such that $M', u' \vDash \eta_1^p$ and $(\eta_1^p)^\downarrow = \chi^p$ and $R'_j(\eta^p) = R'_j(\eta_1^p)$. Furthermore, by Proposition \ref{prop: dist-identical-son}, $R_i(\eta^p) = R_i(\eta_1^p) = R_i(\delta^p)^{\downarrow}$ and $R_i(\eta) = R_i(\eta_0) = R_i(\delta)^{\downarrow}$. Thus, we can see $w(\chi^p) = w(\eta_1^p) = w(\eta_0^p)$, $R_i(\eta_1^p) = R_i(\eta_0^p)$, and $R_j(\eta_1^p) = R_j(\eta_0^p)$. Since $\C A = \{1, 2\}$, we have that $\eta_1^p = \eta_0^p$. Hence, $M', u' \vDash \eta_0^p$.
\end{proof}

\begin{restatable}{lem}{LTwo}\label{lem: L2K45}
Let $\delta \in D^P_{k + 1}(\san L_2)$ for $k \geq 0$ and $(M', s') = (S', R', V', s')$
be an $\san L_2$-model of $\delta^p$. Given $\C X \in \PPA$ and
\[(M', \C U'_{\C X}) = (S', R', V', \C U'_{\C X})\]
where $\C U'_{\C X} = \{T'_{\C B} \mid T'_{\C B} = R'_{\C B}(s') \text{ for } \C B \in \PPX\}$, there is a multi-pointed $\san {L}_2$-model 
\[(M, \C U_{\C X}) = (S, R, V, \C U_{\C X})\]
where $\C U_{\C X}= \{T_{\C B} \subseteq S \mid \C B \in \PPX\}$, and $\rho \subseteq S \times S'$, such that, 
\begin{itemize}
\item $T_{\C B}$ is $\C B$-equivalent in $(M, \C U_{\C X})$, for every $\C B \in \PPX$;

\item $\rho: (M, \C U_{\C X}) \bisim^{col}_p (M', \C U'_{\C X})$;

\item $T_{\C B}$ is $R_{\C B}(\delta)$-complete, for every $\C B \in \PPX$.
\end{itemize}
\end{restatable}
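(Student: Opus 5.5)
The plan is to rerun the construction of Lemma \ref{lem: ConstructionK45} (and, for $\san{S5}_2\F D$, its variant in Lemma \ref{lem: ConstructionS5}) with the single formula $\delta$ playing the roles of both $\gamma$ and $\xi$. The point is to avoid the depth blow-up $k+h+1$. In that lemma the auxiliary formula $\xi$ and the enlarged depth were needed for one reason: when Step 2-1 builds a new world $u$ for $u' \in R'_{\C C}(t')$, it must find a successor formula whose $p$-remainder is actually satisfied at $u'$. With only two agents, the only mixed index sets are $\C C = \C A$ with $\C C_1 = \{i\}$ and $\C C_2 = \{j\}$. In that situation Proposition \ref{prop: A2-main-lemma} hands us directly some $\eta_0 \in R_i(\delta)$ with $\eta_0^\downarrow = \chi$, $R_j(\eta_0) = R_j(\eta_t)$, and $M', u' \vDash \eta_0^p$. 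So the correspondence ``$t'$ satisfies $\eta_t^p$'' is preserved exactly, and no gap between $\gamma$ and $\xi$ ever opens up.

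I will argue by induction on $k$, the depth of $\delta$ being $k+1$, and copy the six steps.

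In Step 1, for each $\C B \in \PPX$, each $t' \in R'_{\C B}(s')$ and each $\eta \in R_{\C B}(\delta)$ with $M', t' \vDash \eta^p$, I create a world $t$ with valuation $w(\eta)$ and put $(t, t')$ into $\rho$. Since $M', s' \vDash \delta^p$, every $t'$ obtains at least one such $t$, and every $\eta$ is realized at least once.

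Steps 2 and 3 add the $R_i$-edges exactly as before. Step 2-1 now uses Proposition \ref{prop: A2-main-lemma} in place of the claim that there proved the existence of $\eta$ and $\zeta$. Each new world $u$ is then constructed by a pair $(u', \eta_0)$ with $M', u' \vDash \eta_0^p$ and $\eta_0 \in R_{\{i\}}(\delta)$ of depth $k$.

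In Step 5, for each constructed $t$ I apply the inductive hypothesis to $\eta_t \in D^P_k(\san L_2)$ and the model $(M', t')$, with $\C X$ replaced by $\C B^c$. This yields a multi-pointed model whose $T^t_{\C D}$ are $\C D$-equivalent, collectively $p$-bisimilar to $R'_{\C D}(t')$, and $R_{\C D}(\eta_t)$-complete. Step 6 then merges the $i$-equivalence classes, as in the earlier lemma.

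The verification proceeds in three parts. $\C B$-equivalence and the $\san{K45}$/$\san{KD45}$/$\san{S5}$ frame conditions follow by the same arguments as in Lemma \ref{lem: ConstructionK45}, via Proposition \ref{prop: q-e-tran-Eu}. The collective $p$-bisimulation is checked by cases on whether a world comes from Steps 1 or 2-1, or from a submodel $(M^t, \C U^t)$. Completeness of $T_{\C B}$ uses two facts: Step 1 realizes every $\eta \in R_{\C B}(\delta)$ thanks to $\nabla_{\C B}$ in $\delta^p$, and every constructed world satisfies its $\eta_t$. For the latter I argue by an inner induction on depth, in the style of Lemma \ref{lem: T}. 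The $R_{\C D}$-successors of $t$ for $\C D \subseteq \C B^c$ come from the inductive submodel. The $\C B$-successors are the siblings in $T_{\C B}$, which are correct by the Identical Successors Property (Proposition \ref{prop: dist-identical-son}). The mixed $\C A$-successors come from Step 2-1, and these satisfy $\eta_0$, hence $\eta_0^\downarrow = \chi \in R_{\C A}(\eta_t)$.

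The main obstacle is showing that the merging in Step 6 does not spoil completeness. Merging enlarges $R_i(t)$ with worlds from other submodels $T^u_i$, and each of these must still satisfy some member of $R_i(\eta_t)$. I would handle this with Proposition \ref{prop: dist-identical-son}. All worlds in one $i$-class were built from formulas whose $R_i$-sets agree up to $\downarrow$, which gives $R_i(\eta_u) = R_i(\eta_t)$ at the relevant depth. In the two-agent case the matching condition $R_j(\eta_0) = R_j(\eta)$ from Proposition \ref{prop: A2-main-lemma} is exactly what keeps the $j$-side consistent.
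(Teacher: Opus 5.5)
Your proposal is essentially the paper's proof. Both rerun the construction of Lemma~\ref{lem: ConstructionK45} with $d=h=0$ and $\delta$ in the roles of $\gamma$ and $\xi$, use Proposition~\ref{prop: A2-main-lemma} in place of the Step~2-1 claim so that the match $M',u'\vDash\eta_u^p$ stays exact, apply the inductive hypothesis to $\eta_t$ in Step~5, and rely on the fact that the verifications in Lemmas~\ref{lem: trans-Euc}, \ref{lem: Common-p-bisim} and \ref{lem: Tcomplete} never use $k\le d\le h$. The only deviation is that you create fresh worlds in Step~2-1, whereas the paper reuses worlds already built in Step~1; this is harmless, and it keeps the Step~2-1/2-2 edge pattern, and hence the argument of Lemma~\ref{lem: trans-Euc-1}, exactly as in the general construction.
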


\begin{prop}
For every $k \geq 0$ and $\delta \in D^P_k(\san L_2)$, 
\[\delta^p \equiv_{\san L_2} dforget_{\san L_2}(\delta, p).\]
\end{prop}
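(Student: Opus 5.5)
The plan is to apply Theorem \ref{thm: guideline} with the parameter choice $l = k$. As observed right after that theorem, this choice forces the disjunction $\bigvee\{\gamma \in D^P_k(\san L_2) \mid \gamma^{\uparrow k} = \delta\}^p$ to collapse to $\delta^p$, by Proposition \ref{prop: models-unique-canon}. So it suffices to show the following. For every $\delta \in D^P_k(\san L_2)$ and every $\san L_2$-model $(M', s')$ of $\delta^p$, there is an $\san L_2$-model $(M, s)$ with $M, s \vDash \delta$ and $(M, s) \bisim^{col}_p (M', s')$. We split on $k$.

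For $k = 0$, $\delta$ is a minterm. We copy $M'$ and redefine the valuation of $p$ at the copy of $s'$ according to $w(\delta)$. This preserves all relational properties, and the identity-on-copies relation is a collective $p$-bisimulation. One caveat: changing $p$ at a world may affect other worlds' formulas. That is harmless, since only $\delta = w(\delta)$ needs to hold at $s$.

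For $k = l+1 \geq 1$, we invoke Lemma \ref{lem: L2K45} with $\C X = \C A$. This gives a multi-pointed $\san L_2$-model $(M, \C U_{\C A})$ and a relation $\rho$ with the following properties: each $T_{\C B}$ is $\C B$-equivalent, $\rho$ witnesses $(M, \C U_{\C A}) \bisim^{col}_p (M', \C U'_{\C A})$ where $T'_{\C B} = R'_{\C B}(s')$, and each $T_{\C B}$ is $R_{\C B}(\delta)$-complete. We then proceed exactly as in Lemma \ref{lem: K45KD45}. We add a fresh world $s$ valued by $w(\delta)$, add $(s,t)$ to $R_i$ for all $t \in T_{\C B}$ and $i \in \C B$, and set $\rho := \rho \cup \{(s, s')\}$. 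For $\san{S5}_2\F D$ we additionally add reflexive loops at $s$ as in Lemma \ref{lem: S5}.

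Several things must then be checked. First, $R_{\C B}(s) = T_{\C B}$, using that $T_{\C A} \subseteq T_{\{i\}}$; this inclusion comes from the construction via $T_{\C B} = \bigcup_{\C B \subseteq \C C} T^*_{\C C}$. Second, transitivity and Euclideanness are preserved at $s$, because each $T_{\{i\}}$ is $i$-equivalent, so $s$ joins an existing equivalence-like cluster. Third, the Atoms, Forth and Back conditions at $(s,s')$ follow from the multi-pointed bisimulation clauses. Fourth, $M, s \vDash \delta$ holds because $R_{\C B}(s)$ is $R_{\C B}(\delta)$-complete and $s \vDash w(\delta)$. In the $\san{S5}$ case there is one further point: the reflexive loop at $s$ must not break completeness. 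Here we repeat the argument of Lemma \ref{lem: S5}. Since $s' \in R'_{\C B}(s')$, some $t_s \in T_{\C B}$ is $\rho$-related to $s'$. Then $s$ behaves like $t_s$, so $T_{\C B} \cup \{s\}$ remains $R_{\C B}(\delta)$-complete.

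The main obstacle is really inside Lemma \ref{lem: L2K45}, which we take as given. In the general construction (Lemma \ref{lem: ConstructionK45}), the ``claim'' in Step 2-1 needed the matching condition $(\eta^{\uparrow(k+d-1)})^p = (\zeta^{\uparrow(k+d-1)})^p$, and this is what forced the depth $2k+1$. With two agents, that condition is supplied directly by Proposition \ref{prop: A2-main-lemma}. The sibling $\eta_0$ itself satisfies $M', u' \vDash \eta_0^p$, so one can take $\gamma = \xi = \delta$ with no depth enlargement. Hence the only delicate step in the present argument is verifying that the glued world $s$ respects the Euclidean closure. This is also where I would look first if something fails, namely whether $R_{\{i\}}(s)$ must be enlarged by merging, as in Step 6. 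Once this is settled, Theorem \ref{thm: guideline} with $l = k$ gives $\delta^p \equiv_{\san L_2} dforget_{\san L_2}(\delta, p)$.
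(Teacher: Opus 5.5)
Your proposal is correct and is essentially the paper's own argument: Theorem \ref{thm: guideline} with $l = k$ (so the disjunction collapses to $\delta^p$), then Lemma \ref{lem: L2K45} with $\C X = \C A$ to obtain $\C B$-equivalent, $R_{\C B}(\delta)$-complete sets $T_{\C B}$, then gluing a fresh actual world $s$ as in Lemmas \ref{lem: K45KD45} and \ref{lem: S5}. The one step to tighten is the reflexive loop at $s$ in $\san{S5}_2\F D$: $s$ and $t_s$ do not literally share successors, so instead prove $M, s \vDash \delta^{\uparrow l}$ by induction on $l$, using $\delta^{\downarrow} \in R_{\C B}(\delta)$ (Lemma \ref{lem: reflexive}) exactly as in Lemma \ref{lem: T}.
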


\begin{proof}
This is proved via Lemma \ref{lem: K45KD45} (Lemma \ref{lem: S5}) and Lemma \ref{lem: L2K45}
\end{proof}

\section{Uniform Interpolation in Propositional Common Knowledge}\label{sec: CommonKnowledge}
In this section, we extend our conclusions to the epistemic modal logic with both distributed knowledge and propositional common knowledge, namely $\C L^{\F D}_\F{PC}$. Our current methods are also available for the systems $\san{K}_n\F{DPC}$, $\san{D}_n\F{DPC}$, $\san{T}_n\F{DPC}$, and $\san{S5}_n\F{DPC}$, however, not for $\san{KD45}_n\F{DPC}$ or $\san{K45}_n\F{DPC}$. We will briefly explain the reasons for the last two systems. For convenience, let's uniformly denote by $P$ the discussed finite subset of $\C P$ and $p$ is an atom in $P$ to be forgotten.

Let's use $\nabla \Phi$ to denote the formula
\[\F C (\bigvee \Phi) \land (\bigwedge \hat{\F C} \Phi)\]
For any model $(M, s)$, $M, s \vDash \nabla \Phi$, if and only if, for all $t \in \TC(s)$, there is $\phi \in \Phi$ such that $M, t \vDash \phi$, and for all $\phi \in \Phi$, there is $t \in \TC(s)$ such that $M, t \vDash \phi$.

\begin{defn}[dpc-canonical formulas]
Define $C^P_k$ to be the set inductively as follows:
\begin{itemize}
\item $C^P_0 = \{\delta_0 \land \nabla \Phi \mid \delta_0 \in D^P_0 \text{ and }\Phi \subseteq D^P_0\}$.

\item $C^P_{k+1}$ consist of formulas in the form 
\[\delta_0 \land \nabla \Phi \land \bigwedge_{\C B \in \PPA}\nabla_{\C B} \Phi_{\C B}\]
where $\delta_0 \in D^P_0$, $\Phi_{\C B} \subseteq C^P_k$, and $\Phi \subseteq D^P_0$.
\end{itemize}
\end{defn}
Let's call them \emph{canonical formulas of distributed knowledge and propositional common knowledge}, or \emph{dpc-canonical formulas} for short. Similar to our early conventions, if $\delta$ is a dpc-canonical formula in the form above, let's write $w(\delta) = \delta_0$, $R_{\C B}(\delta) = \Phi_{\C B}$, and $\TC(\delta) = \Phi$. For any dpc-canonical formulas $\delta$ and $\eta$, we say $\eta$ \emph{wholly occurs in} $\delta$, if either $\eta$ is $\delta$ or there exists $\C B \in \PPA$ and $\eta_0 \in R_{\C B}(\delta)$ such that $\eta$ wholly occurs $\eta_0$. Let's denote by $C^P_k(\san L)$ the members of $C^P_k$ that are satisfiable in the modal system $\san L$. Since we only consider the propositional common knowledge, any formula of the form $\F C \psi$ has a propositional $\psi$. To simplify the proofs, we slightly modify the definition of modal depth. For every propositional common knowledge formula $\F C \psi$, let
\begin{itemize}
\item $dep(\F C \psi) = dep(\psi) = 0$.
\end{itemize}

It is also natural to extend the concepts of pruning and literal elimination to the dpc-canonical formulas. The only noteworthy point is that ``$\nabla \TC(\delta)$'' is always accompanied by $w(\delta)$, which will never be pruned. That is.
\[
\delta^\downarrow = 
\begin{cases}
\delta & \text{if $k = 0$;}\\
w(\delta)\land \nabla \TC(\delta) & \text{if $k = 1$;}\\
w(\delta) \land \nabla \TC(\delta) \land \bigwedge_{\C B \in \PPA}\nabla_{\C B} (R_{\C B}(\delta))^\downarrow & \text{otherwise.}\\
\end{cases}
\]
Let $\delta^{\uparrow 0} = w(\delta)\land \nabla \TC(\delta)$ and for every $l \geq 1$, let
\[
\delta^{\uparrow l} = 
\begin{cases}
\delta & \text{if $dep(\delta) < l$}\\

w(\delta) \land \nabla \TC(\delta) \land \bigwedge_{\C B \in \PPA}\nabla_{\C B} R_{\C B}(\delta)^{\uparrow (l - 1)} & \text{ otherwise}
\end{cases}
\]

And then, we can extend the five important propositions to dpc-canonical formulas, too.
\begin{restatable}{prop}{Dpcdichotomy}\label{prop: dpc-dichotomy}
Consider the context of a modal system $\san L$. Let $\delta \in C^P_k(\san L)$ where $k \in \B N$. For every $\phi$ of $\C L^{\F D}_\F{PC}$ such that $dep(\phi) \leq k$ and $\C P(\phi) \subseteq P$, either $\delta \vDash \phi$ or $\delta \vDash \neg \phi$. 
\end{restatable}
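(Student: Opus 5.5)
The plan is to prove the dichotomy by induction on the structure of $\phi$, for fixed $P$ and $k$, following the proof of Proposition \ref{prop: canon-dichotomy}. Since $\delta$ is $\san L$-satisfiable, the two alternatives cannot both hold. So it is enough to show that $\delta$ entails $\phi$ or entails $\neg\phi$. I would strengthen the claim to: for every $k' \leq k$ and every $\delta \in C^P_{k'}(\san L)$, every $\phi$ with $dep(\phi)\leq k'$ and $\C P(\phi)\subseteq P$ is decided by $\delta$. The induction is then on $dep(\phi)$, with an inner induction on the Boolean structure.

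\textbf{Depth $0$.} The atoms of $\phi$ are decided by the minterm $w(\delta)$. Boolean combinations are handled by the inner induction: if $\delta$ decides $\phi_1$ and $\phi_2$, it decides $\neg\phi_1$, $\phi_1\land\phi_2$ and $\phi_1\lor\phi_2$. The new case is $\F C\psi$ with $\psi$ propositional over $P$; note that $\F C\psi$ has depth $0$ under the modified convention.
\begin{itemize}
\item Write $\psi$ as a disjunction of minterms, i.e.\ as $\bigvee\Psi$ with $\Psi\subseteq D^P_0$.
\item The conjunct $\nabla\TC(\delta)$ says that the minterms true at worlds of $\TC(s)$ are exactly the members of $\TC(\delta)$.
\item If $\TC(\delta)\subseteq\Psi$, then $\F C(\bigvee\TC(\delta))$ gives $\delta\vDash \F C\psi$.
\item Otherwise some $\phi_0\in\TC(\delta)\setminus\Psi$ exists. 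Then $\hat{\F C}\phi_0$ holds, and $\phi_0$ is inconsistent with $\psi$ because distinct minterms are mutually exclusive. Hence $\delta\vDash\neg\F C\psi$.
\end{itemize}
This settles the base case for all $k'\geq 0$, because every dpc-canonical formula carries $\nabla\TC(\delta)$, including the members of $C^P_0$.

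\textbf{Depth $m+1\leq k'$.} Boolean combinations are again routine. Consider $\F D_{\C B}\psi$ with $dep(\psi)\leq m\leq k'-1$; the case $\F K_i$ is the special case $\C B=\{i\}$.
\begin{itemize}
\item Each $\eta\in R_{\C B}(\delta)\subseteq C^P_{k'-1}$ that is $\san L$-satisfiable decides $\psi$ by the induction hypothesis. Members that are $\san L$-unsatisfiable entail everything, but they cannot occur as conjuncts inside the satisfiable $\delta$ under $\hat{\F D}_{\C B}$, so every member of $R_{\C B}(\delta)$ is satisfiable.
\item If every $\eta\in R_{\C B}(\delta)$ entails $\psi$, then $\F D_{\C B}(\bigvee R_{\C B}(\delta))\vDash\F D_{\C B}\psi$.
\item Otherwise some $\eta$ entails $\neg\psi$, and $\hat{\F D}_{\C B}\eta\vDash\neg\F D_{\C B}\psi$.
\item If $R_{\C B}(\delta)=\varnothing$, then $\F D_{\C B}\bot$ holds and $\F D_{\C B}\psi$ is entailed.
\end{itemize}
Formulas $\F C\psi$ occurring at this level have depth $0$ and are already decided.

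\textbf{Main obstacle.} The only delicate point is that the induction on $k'$ applies the hypothesis to successors in $C^P_{k'-1}$ rather than only $D^P$. For this to work, the entailment $\vDash$ must be read as $\vDash_{\san L}$ throughout, and the satisfiability of all successors must be justified: an $\san L$-model of $\delta$ gives a witness world for each $\hat{\F D}_{\C B}\eta$, and that world is an $\san L$-model of $\eta$ when we take the generated submodel.
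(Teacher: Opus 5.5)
Your proposal is correct and follows essentially the paper's route: a structural induction on $\phi$ in which the Boolean and $\F D_{\C B}$ cases are handled as in Proposition \ref{prop: canon-dichotomy}, and the only new case, $\F C\psi$, is settled by the same split between ``every minterm in $\TC(\delta)$ entails $\psi$'' and ``some minterm in $\TC(\delta)$ entails $\neg\psi$''. Your depth-indexed formulation, which quantifies over all $\delta \in C^P_{k'}(\san L)$ with $k' \leq k$, makes explicit what the paper leaves implicit: the induction hypothesis is applied to the successors $\eta \in R_{\C B}(\delta)$, not to $\delta$ itself. The satisfiability worry is harmless, since an unsatisfiable $\eta$ trivially decides every formula.
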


\begin{restatable}{prop}{DpcModelsUniCan}\label{prop: dpc-models-unique-canon}
Let $(M, s)$ be a pointed model and $k \in \B N$. Then, there exists a unique $\delta \in C^P_k$ such that $M, s \vDash \delta$. 
\end{restatable}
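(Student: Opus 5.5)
We argue by induction on $k$, in parallel with Proposition \ref{prop: models-unique-canon} for d-canonical formulas, proving existence and uniqueness separately. The new ingredient is the conjunct $\nabla \TC(\delta)$. Since $\TC(\delta) \subseteq D^P_0$, it only requires the set of minterms true at worlds of $\TC(s)$ to be exactly $\TC(\delta)$. Because $D^P_0$ is a partition of all valuations over $P$, this set is uniquely determined by the model.

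\textbf{Existence.} Given $(M, s)$, let $\delta_0$ be the unique minterm of $P$ true at $s$. Let $\Phi = \{\theta \in D^P_0 \mid M, t \vDash \theta \text{ for some } t \in \TC(s)\}$. Then $M, s \vDash \nabla \Phi$ directly from the semantics of $\nabla$: every $t \in \TC(s)$ satisfies its own minterm, which lies in $\Phi$, and every member of $\Phi$ is witnessed by some $t \in \TC(s)$. This already handles $k = 0$. For $k + 1$, by the inductive hypothesis each $t \in R_{\C B}(s)$ satisfies a unique $\eta_t \in C^P_k$. Set $\Phi_{\C B} = \{\eta_t \mid t \in R_{\C B}(s)\}$, which is a finite set because $C^P_k$ is finite. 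Then $R_{\C B}(s)$ is $\Phi_{\C B}$-complete, so $M, s \vDash \nabla_{\C B}\Phi_{\C B}$. The conjunction $\delta_0 \land \nabla\Phi \land \bigwedge_{\C B}\nabla_{\C B}\Phi_{\C B}$ is therefore a member of $C^P_{k+1}$ true at $s$.

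\textbf{Uniqueness.} Suppose $\delta, \delta' \in C^P_k$ are both true at $s$. Their minterms $w(\delta)$ and $w(\delta')$ coincide, since distinct minterms are contradictory. Next, $\TC(\delta) = \TC(\delta')$: if $\theta \in \TC(\delta)$, then some $t \in \TC(s)$ satisfies $\theta$. Since $M, s \vDash \F C(\bigvee \TC(\delta'))$, $t$ also satisfies some $\theta' \in \TC(\delta')$, and $\theta = \theta'$ by disjointness of minterms. The argument is symmetric. For $k + 1$, take $\eta \in R_{\C B}(\delta)$. It is witnessed by some $t \in R_{\C B}(s)$, and $t$ also satisfies some $\eta' \in R_{\C B}(\delta')$. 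The inductive uniqueness at $t$ gives $\eta = \eta'$, and symmetry yields $R_{\C B}(\delta) = R_{\C B}(\delta')$. Hence $\delta = \delta'$ syntactically, identifying conjunctions with their sets of conjuncts as in the d-canonical case.

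\textbf{Main obstacle.} The one delicate point is the edge case $\TC(s) = \varnothing$ (or $R_{\C B}(s) = \varnothing$). In that case $\Phi = \varnothing$ and $\nabla\varnothing = \F C \bot$, which holds vacuously. So the empty set must be allowed as the index, exactly as in the d-canonical setting. Otherwise the argument is routine bookkeeping, since the propositional common knowledge part never interacts with the induction on depth.
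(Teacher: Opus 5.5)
Your proposal is correct and follows the paper's proof: induction on $k$, with the base case built from the minterm at $s$ together with the set of minterms realized in $\TC(s)$, and the inductive step handled as in Proposition~\ref{prop: models-unique-canon}. Your explicit uniqueness argument for $\TC(\delta)$, and your recomputation of the $\nabla\Phi$ conjunct at every level (which the paper leaves implicit in ``the inductive step is the same''), only add detail.
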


\begin{restatable}{prop}{DpcUniCan}\label{prop: dpc-unique-canon}
Consider the context of a modal system $\san L$. Let $\phi \in \C L^\F{D}_\F{PC}$, $k \geq dep(\phi)$, and $P = \C P(\phi)$. Then, there exists a unique set $\Phi \subseteq C^P_k(\san L)$ such that $\phi \equiv \bigvee \Phi$.
\end{restatable}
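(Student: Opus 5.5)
The plan is to mirror the proof of Proposition \ref{prop: unique-canon}, with Proposition \ref{prop: dpc-dichotomy} and Proposition \ref{prop: dpc-models-unique-canon} playing the roles of Propositions \ref{prop: canon-dichotomy} and \ref{prop: models-unique-canon}. Fix $\phi \in \C L^{\F D}_{\F{PC}}$, $k \geq dep(\phi)$ and $P = \C P(\phi)$. The candidate set is
\[\Phi = \{\delta \in C^P_k(\san L) \mid \delta \vDash_{\san L} \phi\}.\]
We must show that $\phi \equiv_{\san L} \bigvee \Phi$, and that no other subset of $C^P_k(\san L)$ has this property.

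\textbf{Existence.} The direction $\bigvee \Phi \vDash_{\san L} \phi$ is immediate from the definition of $\Phi$. For the converse, let $(M, s)$ be an $\san L$-model with $M, s \vDash \phi$. By Proposition \ref{prop: dpc-models-unique-canon} there is a unique $\delta \in C^P_k$ with $M, s \vDash \delta$. Since $(M, s)$ is an $\san L$-model, $\delta$ is $\san L$-satisfiable, so $\delta \in C^P_k(\san L)$. By Proposition \ref{prop: dpc-dichotomy}, which applies because $dep(\phi) \leq k$ under the modified depth convention, either $\delta \vDash_{\san L} \phi$ or $\delta \vDash_{\san L} \neg\phi$. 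The second option is impossible, since $(M, s)$ satisfies both $\delta$ and $\phi$. Hence $\delta \in \Phi$, and therefore $M, s \vDash \bigvee \Phi$.

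\textbf{Uniqueness.} Suppose $\Psi \subseteq C^P_k(\san L)$ also satisfies $\phi \equiv_{\san L} \bigvee \Psi$.
\begin{itemize}
\item Each $\delta \in \Psi$ entails $\bigvee\Psi$, hence entails $\phi$, so $\Psi \subseteq \Phi$.
\item Conversely, take $\delta \in \Phi$ and an $\san L$-model $(M, s)$ of $\delta$, which exists because $\delta$ is $\san L$-satisfiable. Then $M, s \vDash \phi$, so $M, s \vDash \bigvee\Psi$, and $(M, s)$ satisfies some $\delta' \in \Psi \subseteq C^P_k$. The uniqueness clause of Proposition \ref{prop: dpc-models-unique-canon} forces $\delta' = \delta$, so $\delta \in \Psi$.
\end{itemize}
Thus $\Psi = \Phi$. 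If $\phi$ is not $\san L$-satisfiable, then $\Phi = \varnothing$ and $\bigvee\varnothing = \bot$, which is consistent with the argument above.

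\textbf{Expected obstacle.} The argument itself is routine once the two earlier propositions are available; the only delicate point is making sure they apply. Proposition \ref{prop: dpc-dichotomy} must be used with the modified depth, where $dep(\F C\psi) = 0$. One also has to check that the $\F C$-component of $\delta$ has exactly the form the earlier propositions assume: $\TC(\delta) = \Phi \subseteq D^P_0$ consists of minterms evaluated over $\TC(s)$. This is what lets a depth-$0$ formula $\F C\psi$ with propositional $\psi$ be decided by $\delta$.
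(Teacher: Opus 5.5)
Your proposal is correct and follows essentially the same route as the paper, which proves this proposition by running the argument for Proposition~\ref{prop: unique-canon} with Propositions~\ref{prop: dpc-dichotomy} and~\ref{prop: dpc-models-unique-canon} in place of their $\C L_{\F D}$ counterparts. Your uniqueness step, which gets $\Psi \subseteq \Phi$ directly from $\delta \vDash \bigvee\Psi \equiv_{\san L} \phi$ instead of arguing by contradiction, is a harmless simplification. One small notational point: you reuse $\Phi$ both for your candidate set and for $\TC(\delta)$ in the last paragraph, so rename one of them.
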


\begin{restatable}{prop}{DpcDistcutinto}
Let $P \subseteq \C P$ be finite and let $\delta \in C^P_k$. Then, for any $l < k$ and $\C B \in \PPA$,
\[R_{\C B}(\delta^{\downarrow l}) = R_{\C B}(\delta)^{\downarrow l}.\]
\end{restatable}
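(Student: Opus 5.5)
The plan is an induction on $l$, following the proof of Proposition \ref{prop: dist-cut-into} for d-canonical formulas. The only new ingredient is the component $\nabla \TC(\delta)$. It is carried unchanged by $\downarrow$ and plays no role in the $\C B$-successors, so it should be inert throughout. For $l = 0$ both sides equal $R_{\C B}(\delta)$ because $\delta^{\downarrow 0} = \delta$. The work is therefore the case $l = 1$, after which the general case follows by iterating.

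\textbf{Step 1 ($l = 1$, $k \geq 2$).} Here the third clause of the definition of $\delta^\downarrow$ applies, so
\[\delta^\downarrow = w(\delta) \land \nabla \TC(\delta) \land \bigwedge_{\C B \in \PPA}\nabla_{\C B} (R_{\C B}(\delta))^\downarrow.\]
This formula has exactly the shape of a member of $C^P_{k-1}$. Each $\eta \in R_{\C B}(\delta) \subseteq C^P_{k-1}$ gives $\eta^\downarrow \in C^P_{k-2}$; this is a small auxiliary fact, proved by induction on $k$, that $\downarrow$ maps $C^P_{m}$ into $C^P_{m-1}$. Reading off the $\C B$-component syntactically then gives $R_{\C B}(\delta^\downarrow) = R_{\C B}(\delta)^\downarrow$.

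\textbf{Step 1, boundary case ($k = 1$).} The only relevant value is $l = 0$, which is already covered. The second clause of the definition, $\delta^\downarrow = w(\delta) \land \nabla\TC(\delta)$, only ever occurs at $l = k$, which the hypothesis $l < k$ excludes.

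\textbf{Step 2 (inductive step).} Suppose $l+1 < k$. By definition $\delta^{\downarrow (l+1)} = (\delta^{\downarrow l})^\downarrow$, and by the auxiliary fact $\delta^{\downarrow l} \in C^P_{k-l}$ with $k - l \geq 2$. Step 1 applied to $\delta^{\downarrow l}$ and then the induction hypothesis give
\[R_{\C B}(\delta^{\downarrow(l+1)}) = R_{\C B}(\delta^{\downarrow l})^\downarrow = (R_{\C B}(\delta)^{\downarrow l})^\downarrow = R_{\C B}(\delta)^{\downarrow (l+1)},\]
where the last equality is the definition of $\Phi^{\downarrow(l+1)}$ applied elementwise.

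The main obstacle is bookkeeping rather than mathematics. First, the auxiliary depth fact must correctly track the index, since $C^P_0$ already contains $\nabla$-of-propositional parts. Second, $R_{\C B}(\cdot)$ must be well defined on the pruned formula, meaning that a formula of $C^P_{k-1}$ determines its components uniquely. I would settle this by treating canonical formulas syntactically, as the paper does when it writes $R_{\C B}(\delta) = \Phi_{\C B}$, so that reading off components is unambiguous.
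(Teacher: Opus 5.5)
Your proof is correct and follows the paper's argument: induction on $l$, with the one-step case read off syntactically from the third clause of the definition of $\delta^\downarrow$ (the conjunct $\nabla \TC(\delta)$ is simply carried along), and the inductive step applying that case to $\delta^{\downarrow l}$ and then the induction hypothesis. Your auxiliary fact that $\downarrow$ maps $C^P_m$ into $C^P_{m-1}$ for $m \geq 1$ makes explicit the level bookkeeping the paper leaves implicit, and proving it directly by induction on $m$ avoids circularity: citing item 2 of the later pruning proposition instead would not work, since that item's proof relies on this very statement.
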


\begin{restatable}{prop}{DpcUparrowproperty}
Consider the context of a modal system $\san L$. For every $\delta \in C^P_{k}(\san L)$ and $k, l \in \B N$, if $k \geq l$, then the following properties hold:
\begin{enumerate}
\item $\delta \vDash \delta^{\downarrow l}$.

\item $\delta^{\downarrow l} \in C^P_{k - l}(\san L)$.

\item $\delta^{\uparrow l} \in C^P_l(\san L)$.

\item For every $h \in \B N$ and $\gamma \in C^P_{k + h}(\san L)$, $\gamma^{\downarrow h} = \gamma^{\uparrow k} = (\gamma^{\downarrow h})^{\uparrow k}$.

\item For $k_1, k_2 \in \B N$, if $l \leq \min\{k, k_1, k_2\}$, then $
(\delta^{\uparrow k_1})^{\uparrow l} = (\delta^{\uparrow k_2})^{\uparrow l}
$.
\end{enumerate}
\end{restatable}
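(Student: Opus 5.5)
The plan is to transfer the proof of Proposition \ref{prop: uparrow-property} almost verbatim. The only new ingredient is the conjunct $\nabla \TC(\delta)$. It is attached to $w(\delta)$ and is never removed by either pruning operator, and since $\TC(\delta) \subseteq D^P_0$ it contributes depth $0$ under the modified depth convention $dep(\F C\psi)=0$. Throughout I will use that a dpc-canonical formula in $C^P_m$ has depth exactly $m$. All five items go by induction on the depth $k$, and inside that on $l$ where needed. The base cases are those where a formula collapses to $w(\cdot)\land\nabla\TC(\cdot)$.

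For item 1, it suffices to show $\delta \vDash \delta^\downarrow$ and then iterate $l$ times. I induct on $k$. The cases $k=0$ and $k=1$ are immediate, since $\delta^\downarrow$ is either $\delta$ or the conjuncts $w(\delta)\land\nabla\TC(\delta)$ of $\delta$. For $k\ge 2$, the inductive hypothesis gives $\eta\vDash\eta^\downarrow$ for each $\eta\in R_{\C B}(\delta)$. Hence $\F D_{\C B}(\bigvee R_{\C B}(\delta))$ entails $\F D_{\C B}(\bigvee R_{\C B}(\delta)^\downarrow)$, and $\hat{\F D}_{\C B}\eta$ entails $\hat{\F D}_{\C B}\eta^\downarrow$. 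These entailments hold in every model, so in particular in every $\san L$-model. The same argument, with $\delta^{\uparrow l}$ in place of $\delta^\downarrow$ and induction on $l$, gives $\delta\vDash\delta^{\uparrow l}$.

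For items 2 and 3, syntactic membership in $C^P_{k-l}$, respectively $C^P_l$, is a direct induction on the definitions. For $\uparrow$ one checks that $R_{\C B}(\delta)^{\uparrow(l-1)}\subseteq C^P_{l-1}$ when $dep(\delta)\ge l$, and that $\delta$ is returned unchanged otherwise. $\san L$-satisfiability then follows from the entailments just proved: any $\san L$-model of $\delta$ is an $\san L$-model of $\delta^{\downarrow l}$ and of $\delta^{\uparrow l}$.

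For item 4, I induct on $k$ for fixed $h$. The second equality $\gamma^{\downarrow h}=(\gamma^{\downarrow h})^{\uparrow k}$ is immediate once $\gamma^{\downarrow h}\in C^P_k$, since $\uparrow k$ then does nothing. For the first equality, when $k=0$ the formula $\gamma^{\downarrow h}$ is $w(\gamma)\land\nabla\TC(\gamma)=\gamma^{\uparrow 0}$. This uses that iterating $\downarrow$ never touches $w$ and $\TC$. For $k\ge1$, the root data coincide, and the preceding proposition (the dpc version of Proposition \ref{prop: dist-cut-into}) gives $R_{\C B}(\gamma^{\downarrow h})=R_{\C B}(\gamma)^{\downarrow h}$. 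By the inductive hypothesis applied to each $\eta\in R_{\C B}(\gamma)\subseteq C^P_{k-1+h}$, this equals $R_{\C B}(\gamma)^{\uparrow(k-1)}=R_{\C B}(\gamma^{\uparrow k})$. Item 5 is proved by induction on $l$. Both $(\delta^{\uparrow k_1})^{\uparrow l}$ and $(\delta^{\uparrow k_2})^{\uparrow l}$ share the root $w(\delta)\land\nabla\TC(\delta)$. If $l\ge1$, their $\C B$-successors are $(R_{\C B}(\delta)^{\uparrow(k_i-1)})^{\uparrow(l-1)}$, which agree by the inductive hypothesis because $l-1\le\min\{k-1,k_1-1,k_2-1\}$.

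The main obstacle I expect is bookkeeping at the boundary cases rather than any real idea. The $\uparrow$ definition here uses the test $dep(\delta)<l$ instead of $\le l$, and $\downarrow$ has a separate clause for $k=1$. I would handle these by first proving a small lemma: for $\delta\in C^P_m$, if $l\ge m$ then $\delta^{\uparrow l}=\delta$, and otherwise the recursive clause applies. I would then do a case split on whether $k_i\ge dep(\delta)$ in item 5 and on whether $h=0$ in item 4, checking that the two branches give the same formula.
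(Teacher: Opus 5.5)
Your argument is correct in substance, and for items 1 and 2 it matches the paper. The paper simply says the proof of Proposition~\ref{prop: uparrow-property} carries over because pruning never touches $w(\delta)\land\nabla\TC(\delta)$.

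For items 3--5 you take a different route. The paper proves only syntactic membership by induction and then identifies canonical formulas of the same level \emph{semantically}. Concretely, $\delta^{\uparrow l}$ and $\delta^{\downarrow(k-l)}$ both lie in $C^P_l$ and hold at a model of $\delta$, so they coincide by the uniqueness result (Proposition~\ref{prop: dpc-models-unique-canon}). Item 4 is read off from this identification, and item 5 is one more appeal to uniqueness. You instead prove the identities by structural induction on the formula tree, using the dpc version of the cut-into proposition. The paper's route is shorter, but it only applies to $\san L$-satisfiable formulas. It also tacitly needs $\delta\vDash\delta^{\uparrow l}$, which is never stated there and which your item 1 makes explicit. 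Your route is purely syntactic and gives items 4 and 5 for all of $C^P_k$.

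One premise should be dropped, though: a formula in $C^P_m$ need not have depth $m$. Because $\varnothing\subseteq C^P_{m-1}$, the formula $\delta_0\land\nabla\varnothing\land\bigwedge_{\C B\in\PPA}\nabla_{\C B}\varnothing$ belongs to every $C^P_m$ with $m\geq 1$ but has depth $1$. It is also $\san K_n\F{DPC}$-satisfiable, at a world without successors. Hence for $2\leq l<m$ the test $dep(\delta)<l$ can fire. This has two consequences:
\begin{itemize}
\item The second half of your boundary lemma (``otherwise the recursive clause applies'') is literally false.
\item The ``returned unchanged'' branch of item 3 is not vacuous; it needs $\delta\in C^P_l$. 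The paper's own membership induction passes over this branch silently.
\end{itemize}

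The repair is short. Show that for $\delta\in C^P_m$ with $m,l\geq 1$, $\delta^{\uparrow l}$ always \emph{equals} $w(\delta)\land\nabla\TC(\delta)\land\bigwedge_{\C B\in\PPA}\nabla_{\C B}R_{\C B}(\delta)^{\uparrow(l-1)}$. If $dep(\delta)<l$, then $l\geq 2$, since the $\nabla_{\C B}$ conjuncts force depth at least $1$. Every $\eta\in R_{\C B}(\delta)$ then has $dep(\eta)<l-1$, so $\eta^{\uparrow(l-1)}=\eta$, and the expression is $\delta$ itself. With the lemma in this form, item 3 needs no separate branch, and your inductions for items 4 and 5 go through as written.

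For the same reason, $\downarrow$ depends on the level at which a formula is regarded. The example above has a different $\delta^{\downarrow}$ as a member of $C^P_1$ than as a member of $C^P_2$. So keep the level as an explicit parameter of your inductions, as your item-4 step already does implicitly.
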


Furthermore, we can deconstruct propositional common knowledge successors as follows.
\begin{restatable}{prop}{ReachSplit}\label{prop: Reach-split}
For every satisfiable $\delta \in C^P_k$ where $k \geq 1$ and every $l \geq 1$,
\[\TC(\delta^{\uparrow l}) = \bigcup_{\C B \in \PPA}\bigcup_{\eta \in R_{\C B}(\delta)} \{w(\eta^{\uparrow (l - 1)})\} \cup \TC(\eta^{\uparrow (l - 1)})\]
\end{restatable}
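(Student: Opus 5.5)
The plan is to strip away the pruning first and then argue semantically on a model of $\delta$. By the definitions of $\delta^{\uparrow l}$ and $\delta^{\uparrow 0}$ for dpc-canonical formulas, the conjunct $w(\cdot)\land\nabla\TC(\cdot)$ is never pruned. So for every dpc-canonical $\eta$ and every $m\in\B N$ we have $w(\eta^{\uparrow m})=w(\eta)$ and $\TC(\eta^{\uparrow m})=\TC(\eta)$. This is immediate in both branches of the definition: if $dep(\eta)<m$ then $\eta^{\uparrow m}=\eta$; otherwise $w$ and $\nabla\TC$ are copied verbatim. In particular $\TC(\delta^{\uparrow l})=\TC(\delta)$. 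The statement therefore reduces to
\[\TC(\delta)=\bigcup_{\C B\in\PPA}\bigcup_{\eta\in R_{\C B}(\delta)}\{w(\eta)\}\cup\TC(\eta),\]
where we use $k\ge 1$, so that $R_{\C B}(\delta)$ is genuinely the set of $\C B$-successor formulas of $\delta$.

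Since $\delta$ is satisfiable, fix a pointed model $(M,s)$ with $M,s\vDash\delta$. The key observation concerns any world $u$ satisfying a dpc-canonical formula $\theta$ whose $\TC(\theta)$ consists of minterms. Every world satisfies exactly one minterm of $P$. Hence $M,u\vDash\nabla\TC(\theta)$ says precisely that $\TC(\theta)$ is the set of minterms true at some world of $\TC(u)$. Write $\mathrm{mt}(X)$ for the set of minterms realized in a set of worlds $X$. Then $\TC(\delta)=\mathrm{mt}(\TC(s))$, and likewise $\TC(\eta)=\mathrm{mt}(\TC(u))$ whenever $M,u\vDash\eta$.

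Next I decompose $\TC(s)$ as
\[\TC(s)=\bigcup_{i\in\C A}\Bigl(R_i(s)\cup\bigcup_{u\in R_i(s)}\TC(u)\Bigr).\]
This holds because $\TC(s)$ is the set of worlds reachable in at least one step via $\bigcup_i R_i$. Since $R_{\{i\}}=R_i$ and $R_{\C B}(s)\subseteq R_i(s)$ for $i\in\C B$ (Proposition \ref{prop: BC-include}), the union over $i\in\C A$ equals the union over $\C B\in\PPA$ of $R_{\C B}(s)$. Now use $M,s\vDash\nabla_{\C B}R_{\C B}(\delta)$. Each $u\in R_{\C B}(s)$ satisfies some $\eta\in R_{\C B}(\delta)$, so its minterm is $w(\eta)$ and $\mathrm{mt}(\TC(u))=\TC(\eta)$. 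Conversely, each $\eta\in R_{\C B}(\delta)$ is realized at some $u\in R_{\C B}(s)$. Taking $\mathrm{mt}$ of both sides of the decomposition then yields the two inclusions of the reduced identity.

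The main point needing care is the reduction step, namely checking that the minterm sets are genuinely read off the model. This relies on $\TC(\cdot)\subseteq D^P_0$ and on the fact that minterms are pairwise exclusive and jointly exhaustive, so that $\nabla$ of a set of minterms pins down exactly the realized minterms. The rest is bookkeeping about pruning, which leaves $w$ and $\TC$ untouched.
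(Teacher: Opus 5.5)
Your proof is correct and follows essentially the paper's route. You reduce to the unpruned identity $\TC(\delta)=\bigcup_{\C B}\bigcup_{\eta\in R_{\C B}(\delta)}\{w(\eta)\}\cup\TC(\eta)$, then fix a model of $\delta$, split $\TC(s)$ over the $\C B$-successors of $s$, and use the fact that a set of minterms is determined by any set of worlds it is complete for. Your justification of the reduction, that pruning never alters $w(\cdot)$ or $\TC(\cdot)$, is more explicit than the paper's brief remark.
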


\begin{restatable}{thm}{DPCGuide}\label{thm: dpc-guideline}
For every $k \geq 0$ and every $\delta \in C^P_k(\san L)$, if the following holds:
\begin{quote}
There is $l \geq k$ such that for every $\gamma \in C^{P}_{l}(\san L)$ and every $\san L$-model $(M', s')$ of $\gamma^p$, if $\gamma^{\uparrow k} = \delta$, then there is $(M, s)$ such that
\begin{itemize}
\item $M, s \vDash \delta$,

\item $(M, s) \bisim^{col}_p (M', s')$.
\end{itemize}
\end{quote}
Then, $dforget_{\san L}(\delta, p) \equiv_{\san L} \bigvee \{ \gamma \in C^{P}_{l}(\san L) \mid \gamma^{\uparrow k} = \delta\}^p$.
\end{restatable}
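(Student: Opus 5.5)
The plan is to follow the proof of Theorem \ref{thm: guideline} verbatim, replacing d-canonical formulas by dpc-canonical ones. The earlier results it used are Proposition \ref{prop: models-unique-canon}, the pruning properties, and the adequacy Lemma \ref{lem: col-p-bisim-adequacy}. Their dpc-analogues are Proposition \ref{prop: dpc-models-unique-canon}, the dpc version of the pruning proposition, and Lemma \ref{lem: col-p-bisim-adequacy}; the latter already covers the full language $\C L^{\F D}_{\F C}$, hence $\C L^{\F D}_{\F{PC}}$. Set $\Phi = \{\gamma \in C^P_l(\san L) \mid \gamma^{\uparrow k} = \delta\}$. We then verify the Forth and Back conditions of Definition \ref{defn: forgetting} for $\bigvee \Phi^p$. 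Since $\delta$ is satisfiable, $\Phi$ is nonempty and $\bigvee\Phi^p$ is $\san L$-satisfiable, so it qualifies as a result of forgetting.

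For Forth, let $(M,s)$ be an $\san L$-model of $\delta$ and $(M,s)\bisim^{col}_p(M',s')$. By Proposition \ref{prop: dpc-models-unique-canon} there is a unique $\gamma\in C^P_l$ with $M,s\vDash\gamma$; it is $\san L$-satisfiable, so $\gamma\in C^P_l(\san L)$. By items 1 and 3 of the dpc pruning proposition (using $l\ge k$), $\gamma\vDash\gamma^{\uparrow k}$ and $\gamma^{\uparrow k}\in C^P_k(\san L)$. Thus $M,s$ satisfies both $\delta$ and $\gamma^{\uparrow k}$, and uniqueness in $C^P_k$ forces $\gamma^{\uparrow k}=\delta$, i.e.\ $\gamma\in\Phi$. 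Next, $\gamma\vDash\gamma^p$; this is the dpc analogue of Proposition \ref{prop: deltap-imply}, proved by a trivial induction since every $p$-literal is replaced by $\top$ and all connectives, including $\nabla$ and $\nabla_{\C B}$ as monotone positive contexts, preserve entailment. Since $\gamma^p$ has no $p$, Lemma \ref{lem: col-p-bisim-adequacy} gives $M',s'\vDash\gamma^p$, hence $M',s'\vDash\bigvee\Phi^p$.

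For Back, if $M',s'\vDash\bigvee\Phi^p$ then $M',s'\vDash\gamma^p$ for some $\gamma\in\Phi$. The hypothesis of the theorem then directly yields the required $(M,s)$ with $M,s\vDash\delta$ and $(M,s)\bisim^{col}_p(M',s')$. One point to make explicit is that this $(M,s)$ must be an $\san L$-model, as Definition \ref{defn: forgetting} requires; I read the hypothesis as asserting this, as in Theorem \ref{thm: guideline}.

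The only step needing genuine care is the monotonicity fact $\gamma\vDash\gamma^p$ for dpc-canonical formulas. The subtlety is that $\nabla\Phi$ contains $\hat{\F C}$ applied to minterms, and conjunctions inside minterms may collapse to $\top$ after elimination. Since literal elimination substitutes $\top$ only for literals occurring positively inside the formula built from $\land,\lor,\F C,\hat{\F C},\F D_{\C B},\hat{\F D}_{\C B}$ (the negations appear only on atoms), the replacement is monotone. So the entailment goes through by structural induction, and everything else is bookkeeping identical to Theorem \ref{thm: guideline}.
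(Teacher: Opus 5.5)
Your proposal is correct and follows the paper's route: the paper's proof only says it is analogous to Theorem~\ref{thm: guideline}, and you carry that argument over. You also make explicit two points the paper leaves implicit: the dpc analogue of $\delta \vDash \delta^p$ (treating the duals as primitive so elimination is monotone), and that the hypothesis must supply an $\san{L}$-model, as the Back condition of Definition~\ref{defn: forgetting} requires. One small citation fix: $\gamma \vDash \gamma^{\uparrow k}$ comes from item 1 of the dpc pruning proposition ($\gamma \vDash \gamma^{\downarrow (l-k)}$) combined with item 4 ($\gamma^{\downarrow (l-k)} = \gamma^{\uparrow k}$), not from items 1 and 3 alone.
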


\subsection{Uniform interpolation in $\san{K}_n\F{DPC}$, $\san{D}_n\F{DPC}$, $\san{T}_n\F{DPC}$}
\begin{lem}[Lemma for $\san{K}_n\F{DPC}$, $\san{D}_n\F{DPC}$]\label{lem: dpc-KandD}

Let $\san L$ be $\san{K}_n\F{DPC}$ or $\san{D}_n\F{DPC}$, and $\delta \in C^P_k(\san L)$. For any model $(M', s')$, if $M', s' \vDash \delta^p$, then there is an $\san L$-model $(M, s)$ such that $M, s \vDash \delta$ and  $(M, s) \bisim^{col}_p (M', s')$.
\end{lem}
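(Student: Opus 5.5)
We follow the proof of Lemma \ref{lem: KandD} and argue by induction on $k$. The new ingredient is the conjunct $\nabla \TC(\delta)$, which is a propositional constraint on the set $\TC(s)$ of all reachable worlds. We build the same tree-like model and then check that it satisfies $\nabla\TC(\delta)$ as well as the $\nabla_{\C B}$ conjuncts.

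\textbf{Base case} ($k=0$). Here $\delta = w(\delta)\land\nabla\TC(\delta)$ with $\TC(\delta)\subseteq D^P_0$. As in the base case of Lemma \ref{lem: KandD}, we take a copy of $M'$ and add a fresh root $s$. The root $s$ has valuation $w(\delta)$ on $P$ and copies of the $R'_i$-edges of $s'$. On the copied worlds we cannot keep the $p$-values of $M'$, because $\nabla\TC(\delta)$ constrains them.

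The key observation is that $M',s'\vDash(\nabla\TC(\delta))^p$ amounts to the following, where for a minterm $\phi$ the remainder $\phi^p$ is $\phi$ with the $p$-literal deleted:
\begin{itemize}
\item every world of $\TC(s')$ satisfies some $\phi^p$ with $\phi\in\TC(\delta)$;
\item every $\phi^p$ with $\phi\in\TC(\delta)$ is satisfied somewhere in $\TC(s')$.
\end{itemize}
So we have to choose $p$-values on the worlds reachable from $s$ that realise every member of $\TC(\delta)$ while only using members of $\TC(\delta)$. Reassigning $p$ on a single world may lose a minterm that another world needed. To avoid this, we duplicate worlds rather than reassign them. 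For each reachable $u'$, and each $\phi\in\TC(\delta)$ with $M',u'\vDash\phi^p$, we create a copy $u_\phi$ whose valuation agrees with $\phi$ on $P$. Each copy receives edges to all copies of the successors of $u'$. Relating every $u_\phi$ to $u'$ gives a collective $p$-bisimulation: the Forth and Back conditions hold because each successor has at least one copy, and successors under $R_{\C B}$ are exactly the copies of $R'_{\C B}$-successors, since edges are added uniformly for each agent. Since every reachable world of $M'$ has at least one copy, every reachable world gets its $p$-value from some $\phi\in\TC(\delta)$. Conversely, every $\phi\in\TC(\delta)$ is realised by a copy of a witness of $\phi^p$. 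Hence $M,s\vDash\delta$.

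\textbf{Inductive step.} We proceed exactly as in Lemma \ref{lem: KandD}. For every $\C B$, every $t'\in R'_{\C B}(s')$ and every $\eta\in R_{\C B}(\delta)$ with $M',t'\vDash\eta^p$, the inductive hypothesis yields a model $(M^t,t)$ of $\eta$ that is collectively $p$-bisimilar to $(M',t')$; we attach it below a fresh root $s$ via $R_j$ for all $j\in\B B$. The bisimulation and the $\nabla_{\C B}$ conjuncts are checked verbatim as before. It remains to verify $\nabla\TC(\delta)$ at $s$. Here we would like to use Proposition \ref{prop: Reach-split} with $l=k$, which gives
\[\TC(\delta)=\bigcup_{\C B}\bigcup_{\eta\in R_{\C B}(\delta)}\{w(\eta)\}\cup\TC(\eta).\]
This equality presupposes that $\delta$ is satisfiable; in $\san K_n\F{DPC}$ and $\san D_n\F{DPC}$ this holds because $\delta\in C^P_k(\san L)$. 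In the constructed model, $\TC(s)$ is the union over the attached submodels of $\{t\}\cup\TC(t)$, since the submodels are disjoint and no other edges are added. Each $t$ satisfies $w(\eta_t)$, and by the inductive hypothesis each $\TC(t)$ realises exactly $\TC(\eta_t)$. Every $\eta\in R_{\C B}(\delta)$ is used at least once, because $M',s'\vDash\nabla_{\C B}R_{\C B}(\delta)^p$. Hence $\TC(s)$ is $\TC(\delta)$-complete.

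\textbf{Seriality and the main obstacle.} For $\san D_n\F{DPC}$, seriality is preserved as in Lemma \ref{lem: KandD}: whenever $R'_i(u')\neq\varnothing$, the constructed world related to $u'$ also has an $R_i$-successor. In the base case this holds because every copy inherits copies of all successors. I expect the main obstacle to be the base case, specifically arranging the $p$-values in the copied part of $M'$ so that $\nabla\TC(\delta)$ holds exactly while the copy remains collectively $p$-bisimilar to $M'$. The duplication trick above is my intended resolution; its key point is that collective bisimulation tolerates duplicating worlds as long as edges are added uniformly.
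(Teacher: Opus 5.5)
Your proposal is correct and follows the paper's proof. In the base case the paper likewise creates, for each $t'\in\TC(s')$ and each minterm $\chi\in\TC(\delta)$ with $M',t'\vDash\chi^p$, a fresh world valuated by $\chi$, with edges copied uniformly from $M'$. In the inductive step it attaches the inductively built submodels and verifies $\nabla\TC(\delta)$ via Proposition~\ref{prop: Reach-split}, as you do.

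Your stipulation that edges only enter the copies and never the root $s$ is the right way to read the paper's edge rule. If $s$ also inherited the incoming edges of $s'$, then $(s',s')\in R'_i$ together with $w(\delta)\notin\TC(\delta)$ would violate $\nabla\TC(\delta)$ at $s$.

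As in the paper, the $\san D_n\F{DPC}$ case tacitly needs $M'$ itself to be serial. This holds in the intended application via Theorem~\ref{thm: dpc-guideline}. It is genuinely needed: a reachable world of $M'$ without an $R'_i$-successor cannot be collectively $p$-bisimilar to any world of a serial model.
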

\begin{proof}
Let's prove by induction on $k$. Initially, let $S$, $V$, $\rho$, and $R_i$ for $i \in \C A$ be empty.

\F{Base case} ($k = 0$):
\begin{itemize}
\item Construct $s$ in $S$, such that for every $q \in P$, $q \in V(s)$ if and only if $w(\delta) \vDash q$. Let $(s, s') \in \rho$

\item For every $t' \in \TC(s')$ and every minterm $\chi \in \TC(\delta)$, if $M', t' \vDash \chi^p$, construct a new world $t$ in $S$ such that for every $q \in P$, $q \in V(t)$ if and only if $\chi \vDash q$. Let $(t, t') \in \rho$

\item For every $u, v \in S$ and every $i \in \C A$, if $(u', v')\in R'_i$, then add $(u, v)$ to $R_i$.
\end{itemize}
It is easy to see that $M, s \vDash \delta$ and $(M, s) \bisim^{col}_p (M', s')$ when $k = 0$. Furthermore, if $(M', s')$ is serial, then $(M, s)$ is also serial.

\F{Inductive steps}: The construction is analogous to that of Lemma \ref{lem: KandD}: For every $\C B \in \PPA$, every $\eta \in R_{\C B}(\delta)$ and every $t'\in R_{\C B}(s')$, if $M', t' \vDash \eta^p$, there has been $t \in R_{\C B}(s)$ and a submodel $(M^t, t)$ such that $M^t, t \vDash \eta$ and $\rho_t: (M^t, t) \bisim^{col}_p (M', t')$. Let $\rho$ be the union of $\{(s, s')\}$ and every inductively constructed $\rho_t$. It is easy to check $\rho: (M, s) \bisim^{col}_p (M', s')$. Since there is no edge added from every $t$ to other parts of $M, s$, we have $M, t \vDash \eta$. Thus, we have $M, s \vDash w(\delta) \land \bigwedge_{\C B \in \PPA} \nabla_{\C B}R_{\C B}(\delta)$ as in Lemma \ref{lem: KandD}. We still need to prove $M, s \vDash \nabla \TC(\delta)$:
\begin{itemize}
\item For every $u \in \TC(s)$, there is $t \in R_{\C B}(s)$ such that $u$ is in the submodel $(M^t, t)$. If $u$ is $t$, then $M, u \vDash w(\eta_t)$. By Proposition \ref{prop: Reach-split}, $w(\eta_t) \in \TC(\delta)$. If $u \in \TC(t)$, since $M, t \vDash \eta_t$, there is a minterm $\chi \in \TC(\eta_t)$ such that $M, u \vDash \chi$. By Proposition \ref{prop: Reach-split}, $\chi \in \TC(\eta_t) \subseteq \TC(\delta)$.

\item For every minterm $\chi \in \TC(\delta)$, by Proposition \ref{prop: Reach-split}, there is $\eta \in R_{\C B}(\delta)$ and either $\chi$ is $w(\eta)$ or in $\TC(\eta)$. There is submodel $(M^t, t)$ such that $M^t, t \vDash \eta$. There is $u$ in $(M^t, t)$ such that $M, u \vDash \chi$. It is easy to see $u \in \TC(s)$.
\end{itemize}
Therefore, $M, s \vDash \nabla \TC(\delta)$ and thus, $M, s \vDash \delta$.

In this construction, for every $i \in \C A$, every $t, t'$ with $(t, t') \in \rho$, when $R'_{i}(t')$ is nonempty, so is $R_{i}(t)$. So, when $\san L$ is $\san{D}_n \F{DPC}$, $(M, s)$ is also a $\san{D}_n \F{DPC}$-model (a serial model).

Hence, this lemma is proved.
\end{proof}

In parallel with Lemmas \ref{lem: reflexive} and \ref{lem: T}, the following lemmas for $\san T_n \F{DPC}$ also hold. We place the proof in the appendix.
\begin{restatable}{lem}{DPCReflexive}\label{lem: dpc-reflexive}
Let $\delta \in C^P_k(\san T_n \F{DPC})$ for $k \geq 1$. Let $l \in \{1, \dots, k\}$. Then, for all $\C B \in \PPA$, we have 
\begin{itemize}
\item $\delta^{\downarrow l} \in R_{\C B}(\delta^{\downarrow (l-1)})$,

\item $\delta^{\uparrow (l - 1)} \in R_{\C B}(\delta^{\downarrow l})$,

\item $w(\delta) \in \TC(\delta)$.

\end{itemize}
\end{restatable}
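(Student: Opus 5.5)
The plan is to argue semantically, exactly as one would for Lemma \ref{lem: reflexive}: fix a $\san T_n \F{DPC}$-model $(M, s)$ with $M, s \vDash \delta$, which exists since $\delta \in C^P_k(\san T_n \F{DPC})$. Then read off membership in the successor sets from the reflexive loop at $s$, using the uniqueness of the canonical formula that a world satisfies (Proposition \ref{prop: dpc-models-unique-canon}). The key fact throughout is that $(s, s) \in R_i$ for every $i \in \C A$. Hence $s \in R_{\C B}(s)$ for every $\C B \in \PPA$, and $s \in \TC(s)$.

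For the first item, I would combine items 1 and 2 of the pruning proposition for dpc-canonical formulas. They give $M, s \vDash \delta^{\downarrow (l-1)}$ with $\delta^{\downarrow (l-1)} \in C^P_{k-l+1}$, where $k - l + 1 \geq 1$. They also give $M, s \vDash \delta^{\downarrow l}$ with $\delta^{\downarrow l} \in C^P_{k-l}$. Since $R_{\C B}(s)$ is $R_{\C B}(\delta^{\downarrow(l-1)})$-complete and $s \in R_{\C B}(s)$, some $\chi \in R_{\C B}(\delta^{\downarrow(l-1)}) \subseteq C^P_{k-l}$ satisfies $M, s \vDash \chi$. Uniqueness in $C^P_{k-l}$ then forces $\chi = \delta^{\downarrow l}$.

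For the second item, I read the target set as $R_{\C B}(\delta^{\uparrow l})$, in parallel with Lemma \ref{lem: reflexive}. Under the literal reading $R_{\C B}(\delta^{\downarrow l})$, the members lie in $C^P_{k-l-1}$ while $\delta^{\uparrow(l-1)} \in C^P_{l-1}$, so the depths do not match in general. By item 4 of the pruning proposition, $\delta^{\uparrow l} = \delta^{\downarrow (k-l)}$, so $M, s \vDash \delta^{\uparrow l}$, and likewise $M, s \vDash \delta^{\uparrow (l-1)} \in C^P_{l-1}$. Because $dep(\delta) = k \geq l$, the formula $\delta^{\uparrow l}$ has the unfolded form with $R_{\C B}(\delta^{\uparrow l}) = R_{\C B}(\delta)^{\uparrow (l-1)} \subseteq C^P_{l-1}$. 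The same reflexivity and completeness argument yields a member of this set true at $s$, and uniqueness in $C^P_{l-1}$ identifies it with $\delta^{\uparrow(l-1)}$.

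For the third item, note that $s \in \TC(s)$ because $s$ has a reflexive edge and $\C A \neq \varnothing$. Since $M, s \vDash \nabla \TC(\delta)$, some minterm $\chi \in \TC(\delta)$ holds at $s$. But $M, s \vDash w(\delta)$, and a world satisfies exactly one minterm of $P$, so $\chi = w(\delta)$.

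The only delicate point I expect is the bookkeeping of depths in the second item, namely making sure the ``$\delta$ if $dep(\delta) < l$'' clause of the $\uparrow$ definition never applies. The hypothesis $l \leq k$ guarantees this.
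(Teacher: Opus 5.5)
Your proposal is correct and follows the paper's argument: fix a reflexive model of $\delta$, use $s \in R_{\C B}(s)$ and $s \in \TC(s)$, and conclude by uniqueness of the canonical formula (resp.\ minterm) true at $s$. You are also right that the second item should read $\delta^{\uparrow (l-1)} \in R_{\C B}(\delta^{\uparrow l})$, as in Lemma~\ref{lem: reflexive} and in its use in Lemma~\ref{lem: dpc-T}; the only cosmetic difference is that the paper obtains that item by re-indexing the first item via $\delta^{\uparrow l} = \delta^{\downarrow (k-l)}$, rather than rerunning the reflexivity argument directly on $\delta^{\uparrow l}$.
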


\begin{restatable}[Lemma for $\san{T}_n\F{DPC}$]{lem}{DPCT}\label{lem: dpc-T}

Let $\delta \in C^P_k(\san{T}_n\F{DPC})$. For any $\san{T}_n\F{DPC}$-model $(M', s')$, if $M', s' \vDash \delta^p$, then there is a $\san{T}_n\F{DPC}$-model $(M, s)$ such that $M, s \vDash \delta$ and $(M, s) \bisim^{col}_p (M', s')$.
\end{restatable}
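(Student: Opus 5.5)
The plan is to follow the proof of Lemma \ref{lem: T}: first build a $\san K_n\F{DPC}$-model, then close it under reflexivity, and finally re-verify satisfaction by induction on the $\uparrow$-pruning depth. Since $C^P_k(\san T_n\F{DPC}) \subseteq C^P_k(\san K_n\F{DPC})$, Lemma \ref{lem: dpc-KandD} gives a model $(M^*, s^*)$ and a relation $\rho$ with $M^*, s^* \vDash \delta$ and $\rho: (M^*, s^*) \bisim^{col}_p (M', s')$. That construction has two kinds of worlds: those built from d-canonical successors, and, at the leaves ($k=0$ stage), worlds built from minterms $\chi \in \TC(\eta)$ that copy the $R'$-edges among themselves. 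We then set $R_i = R^*_i \cup \{(t,t) \mid t \in S\}$ for each $i \in \C A$, which yields a $\san T_n\F{DPC}$-model $(M, s)$.

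\textbf{Step 1: $\rho$ is still a collective $p$-bisimulation.} Atoms and Back are unchanged. For Forth, a new edge is a reflexive loop $(u,u) \in R_{\C B}$, and it is matched by $(u', u') \in R'_{\C B}$, which exists because $M'$ is reflexive. The pair $(u,u')$ is already in $\rho$. This is verbatim the argument in Lemma \ref{lem: T}.

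\textbf{Step 2: $M, s \vDash \delta$.} For each constructed world $t$ with generating formula $\eta_t$, we show $M, t \vDash \eta_t^{\uparrow l}$ for all $l$, by induction on $l$. The $\nabla_{\C B}$ parts go exactly as in Lemma \ref{lem: T}: the loop $t \in R_{\C B}(t)$ is handled by the first two clauses of Lemma \ref{lem: dpc-reflexive}, which give $\eta_t^{\uparrow (l-1)} \in R_{\C B}(\eta_t^{\uparrow l})$. The new ingredient is the conjunct $\nabla \TC(\eta_t)$, which every pruning retains. Adding loops changes $\TC(t)$ only by possibly adding $t$ itself, since every other world reachable before is still reachable and no new worlds become reachable. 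For the box half we therefore need $w(\eta_t) \in \TC(\eta_t)$, and this is exactly the third clause of Lemma \ref{lem: dpc-reflexive}. The diamond half is unaffected, because witnesses only gain reachability.

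\textbf{Step 3: leaf worlds.} The worlds created from minterms in the base case of Lemma \ref{lem: dpc-KandD} carry only a minterm together with a $\nabla$ over the minterms realized in their $\TC$. We must check that at $s$ (and at each $t$ whose $\eta_t \in C^P_0$) the set $\TC$ after adding loops still realizes only minterms in $\TC(\eta_t)$. Every leaf world $u$ was built from some $\chi \in \TC(\eta_t)$, so it satisfies a minterm already in the list, and $t$ itself is covered by the third clause of Lemma \ref{lem: dpc-reflexive}. Finally, since $\delta^{\uparrow k} = \delta$, we conclude $M, s \vDash \delta$.

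I expect Step 2 to be the main obstacle. The interaction between reflexive loops and common-knowledge reachability has to be checked carefully, and the check must be carried through the $\uparrow$-induction using Proposition \ref{prop: Reach-split}, so that $\TC(\eta_t^{\uparrow l})$ is computed correctly from the successors at level $l-1$.
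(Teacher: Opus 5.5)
Your proposal is correct and follows essentially the same route as the paper: take the $\san K_n\F{DPC}$-model from Lemma \ref{lem: dpc-KandD}, add all reflexive loops, and show $M, t \vDash \eta_t^{\uparrow l}$ by induction on $l$, with the third clause $w(\eta_t)\in\TC(\eta_t)$ of Lemma \ref{lem: dpc-reflexive} covering the only new element of $\TC(t)$, namely $t$ itself. Your Step 3 is already covered by Step 2. Proposition \ref{prop: Reach-split} is not needed here, because pruning leaves $\TC(\eta_t)$ unchanged, as you yourself noted.
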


Finally, we can conclude for the first three systems:
\begin{thm}\label{thm: dpc-KDT}
$\san{K}_n\F{DPC}$, $\san{D}_n\F{DPC}$, $\san{T}_n\F{DPC}$ have the uniform interpolation property.
\end{thm}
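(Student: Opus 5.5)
The theorem follows the same route as Theorem \ref{thm: KDT}, with dpc-canonical formulas in place of d-canonical ones. Let $\san L$ be any of $\san{K}_n\F{DPC}$, $\san{D}_n\F{DPC}$, $\san{T}_n\F{DPC}$, and fix a finite $P$ and $p \in P$. The first step is to apply Theorem \ref{thm: dpc-guideline} with $l = k$. Then the only $\gamma \in C^P_k(\san L)$ with $\gamma^{\uparrow k} = \delta$ is $\delta$ itself. This holds because $\gamma^{\uparrow k} = \gamma$ for $\gamma \in C^P_k$, by item 4 of the dpc pruning proposition with $h = 0$; alternatively it follows from Proposition \ref{prop: dpc-models-unique-canon}. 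So the hypothesis of Theorem \ref{thm: dpc-guideline} reduces to the following statement: every $\san L$-model $(M', s')$ of $\delta^p$ admits an $\san L$-model $(M, s)$ of $\delta$ with $(M, s) \bisim^{col}_p (M', s')$.

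This statement is exactly Lemma \ref{lem: dpc-KandD} for $\san{K}_n\F{DPC}$ and $\san{D}_n\F{DPC}$, and Lemma \ref{lem: dpc-T} for $\san{T}_n\F{DPC}$. For the serial case I would check that $(M', s')$ being an $\san L$-model suffices: Lemma \ref{lem: dpc-KandD} states that seriality transfers from $M'$ to the constructed $M$. Hence $dforget_{\san L}(\delta, p) \equiv_{\san L} \delta^p$ for every $\delta \in C^P_k(\san L)$.

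Next I lift this from canonical formulas to arbitrary formulas. Take an $\san L$-satisfiable $\phi \in \C L^{\F D}_{\F{PC}}$ with $k = dep(\phi)$, using the modified depth in which $dep(\F C\psi) = 0$. By Proposition \ref{prop: dpc-unique-canon}, $\phi \equiv_{\san L} \bigvee \Phi$ with $\Phi \subseteq C^{\C P(\phi)}_k(\san L)$, and $\Phi$ is nonempty because $\phi$ is satisfiable. Fact \ref{fact: dforget-vee} then gives
\[dforget_{\san L}(\phi, p) \equiv_{\san L} \bigvee\{\delta^p \mid \delta \in \Phi\}.\]
This repeats Proposition \ref{prop: delta-phi} verbatim for dpc-canonical formulas. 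Its proof uses only the unique-decomposition proposition and the disjunction fact, and both are independent of the choice of canonical form. The resulting formula is satisfiable, since each $\delta^p$ is implied by the satisfiable $\delta$ (the analogue of Proposition \ref{prop: deltap-imply}). So $\san L$ is closed under forgetting, and Proposition \ref{prop: UI-forgetting} yields the uniform interpolation property. That proposition relies on the adequacy lemma, Lemma \ref{lem: col-p-bisim-adequacy}, which is already stated for the full language $\C L^{\F D}_{\F C}$.

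The main point to verify is that the cited tools hold in the propositional-common-knowledge setting. Fact \ref{fact: dforget-vee} and Proposition \ref{prop: UI-forgetting} are stated semantically over arbitrary $\san L$, so they apply directly. The one step needing care is that $\delta^p$ is over $P \setminus \{p\}$ and remains a formula of $\C L^{\F D}_{\F{PC}}$: substituting $\top$ for literals inside $\F C(\bigvee \Phi)$ leaves a propositional argument, so this holds. No further obstacle is expected, since the model constructions were done in the two cited lemmas.
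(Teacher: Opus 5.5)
Your proposal is correct and follows the paper's own proof. Theorem~\ref{thm: dpc-guideline} with $l = k$, together with Lemmas~\ref{lem: dpc-KandD} and~\ref{lem: dpc-T}, gives $\delta^p \equiv_{\san L} dforget_{\san L}(\delta, p)$ for every $\delta \in C^P_k(\san L)$, and the decomposition argument of Proposition~\ref{prop: delta-phi} (via Proposition~\ref{prop: dpc-unique-canon} and Fact~\ref{fact: dforget-vee}) plus Proposition~\ref{prop: UI-forgetting} lifts this to arbitrary formulas, with your extra checks (transfer of Proposition~\ref{prop: delta-phi} to dpc-canonical formulas, $\delta^p$ staying in $\C L^{\F D}_{\F{PC}}$, satisfiability of the interpolant) filling in details the paper leaves implicit.
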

\begin{proof}
Let $\san L$ be any of the three systems. By Theorem \ref{thm: dpc-guideline}, Lemma \ref{lem: dpc-KandD}, and Lemma \ref{lem: dpc-T}, for every $p$ and every $\delta \in C^P_k(\san L)$ for $k \geq 0$, $\delta^p \equiv_{\san L} dforget_{\san L}(\delta, p)$. By Proposition \ref{prop: delta-phi}, for every $\san L$-satisifaible formula $\phi$ has its uniform interpolant over $P \setminus \{p\}$. More precisely,
\[dforget_{\san L}(\phi, p) \equiv_{\san L} \bigvee\{\delta^p \mid \delta \vDash_{\san L} \phi \text{ and } \delta \in C^P_{dep(\phi)}(\san L)\}\]
Therefore, $\san{K}_n\F{DPC}$, $\san{D}_n\F{DPC}$, $\san{T}_n\F{DPC}$ have the uniform interpolation property.
\end{proof}


\subsection{Uniform interpolation in $\san{S5}_n\F{DPC}$}
In this subsection, let's focus on $\san{S5}_n\F{DPC}$. This system's Euclideanness and reflexivity ensure that, every dpc-canonical formula shares common knowledge with the dpc-canonical formulas that wholly occur in it.
\begin{prop}\label{prop: KD45S5Reach}\label{prop: ShareCommonKnowledge}
For any $\san{S5}_n\F{DPC}$-satisfiable dpc-canonical formulas $\delta$ and $\eta$, if $\eta$ wholly occurs in $\delta$, then $\TC(\eta) = \TC(\delta)$.
\end{prop}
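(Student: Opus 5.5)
We prove the statement semantically: evaluate $\TC(\eta)$ and $\TC(\delta)$ on a single $\san{S5}_n\F{DPC}$-model and compare the reachable sets. Since $\delta$ is satisfiable, fix an $\san{S5}_n\F{DPC}$-model $(M,s)$ with $M,s\vDash\delta$. By the definition of ``wholly occurs'', there is a chain
\[\delta=\eta_0,\ \eta_1\in R_{\C B_1}(\eta_0),\ \dots,\ \eta_m=\eta\in R_{\C B_m}(\eta_{m-1}).\]
By induction on $m$ it suffices to treat $m=1$, i.e.\ $\eta\in R_{\C B}(\delta)$. Transitivity of equality then finishes the argument; each $\eta_j$ is satisfiable, being realised in the model obtained below.

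\emph{Case $m=1$.} Since $M,s\vDash\nabla_{\C B}R_{\C B}(\delta)$, there is $t\in R_{\C B}(s)$ with $M,t\vDash\eta$. We claim $\TC(t)=\TC(s)$ in any $\san{S5}_n\F{DPC}$-model.

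For $\TC(t)\subseteq\TC(s)$: $t$ is reachable from $s$, so everything reachable from $t$ is reachable from $s$.

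For $\TC(s)\subseteq\TC(t)$: pick $i\in\C B$, so $(s,t)\in R_i$. By Euclideanness, $(s,t),(s,s)\in R_i$ gives $(t,s)\in R_i$, using reflexivity for $(s,s)$. Hence $s\in\TC(t)$, and therefore $\TC(s)\subseteq\TC(t)$. (Reflexivity also gives $s\in\TC(s)$, so no boundary issue arises between $\TC$ and $\RT$.)

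Now $M,s\vDash\nabla\TC(\delta)$ says that $\TC(\delta)$ is exactly the set of minterms of $P$ realised at worlds of $\TC(s)$. Completeness (``every world satisfies some member'') together with the distinctness of minterms gives this exact-set reading. Likewise, $M,t\vDash\nabla\TC(\eta)$ says $\TC(\eta)$ is exactly the set of minterms realised in $\TC(t)$. Since minterms are mutually exclusive and $\TC(s)=\TC(t)$, the two sets coincide, so $\TC(\eta)=\TC(\delta)$.

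The main point needing care is the exact-set reading of $\nabla\Phi$ for sets of minterms. The conjunct $\F C(\bigvee\Phi)$ gives ``realised $\Rightarrow$ in $\Phi$'', and the conjuncts $\hat{\F C}\phi$ give ``in $\Phi$ $\Rightarrow$ realised''. The other point is that Euclideanness alone yields $(t,s)$ only with the help of reflexivity, which is why the statement is restricted to $\san{S5}_n\F{DPC}$ rather than $\san{K45}_n\F{DPC}$ or $\san{KD45}_n\F{DPC}$.
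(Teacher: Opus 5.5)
Your proof is correct and takes essentially the same route as the paper. The paper likewise reduces to one step $\eta_0\in R_i(\delta)$ (by induction on ``wholly occurs''), uses reflexivity plus Euclideanness to get $(t,s)\in R_i$ and hence $\TC(s)=\TC(t)$, and reads off $\TC(\eta_0)=\TC(\delta)$ from the $\nabla\TC(\cdot)$ conjuncts. Your version is somewhat more careful on points the paper leaves implicit: you handle a general $\C B$ by choosing $i\in\C B$, you note that the intermediate formulas are satisfiable, and you justify the exact-set reading of $\nabla$ over minterms.
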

\begin{proof}
Let's prove by induction.
\begin{itemize}
\item \F{Base case}: $\eta = \delta$. It is trivial that $\TC(\eta) = \TC(\delta)$.

\item \F{Inductive steps}: There is $i \in \C A$ and $\eta_0 \in R_i(\delta)$ such that $\eta$ wholly occurs in $\eta_0$. Let $M, s$ be any $\san L$-model of $\delta$. Since $\eta_0 \in R_i(\delta)$, there are $t \in R_i(s)$ such that $M, t \vDash \eta_0$. Since $R_i$ is reflexive, $(s, s) \in R_i$; since $R_i$ is Euclidean, $(t, s) \in R_i$. Therefore, $\TC(s) = \TC(t)$. Since $M, s \vDash \nabla \TC(\delta)$ and $M, t \vDash \nabla \TC(\eta_0)$, we have $\TC(\eta_0) = \TC(\delta)$. By the inductive hypothesis, $\TC(\eta_0) = \TC(\eta)$. Therefore, $\TC(\eta) = \TC(\delta)$.
\end{itemize}
Hence, this proposition is proved.
\end{proof}

Propositions \ref{prop: dist-identical-son} (the Identical Successors Property) and \ref{prop: brothers} (the Sibling Property) can be extended to $\C L^\F{D}_\F{PC}$. Their proofs are not influenced by the addition of common knowledge, so we do not repeat them here.
\begin{prop}\label{prop: dpc-identical-son}
Let $\delta$ be a $C^P_k(\san{S5}_n \F{DPC})$ for  $k \geq 2$. Let $l \in \B N$ s.t. $1 \leq l < k$. Then, for all $\C B \in \PPA$ and $\eta \in R_{\C B}(\delta)$
\[R_{\C B}(\delta)^{\downarrow l} = R_{\C B}(\eta)^{\downarrow l - 1}\]
\end{prop}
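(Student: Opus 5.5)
The plan is to prove the statement semantically, exactly as one would prove Proposition \ref{prop: dist-identical-son} for $\san{K45}_n \F D$, and to check that the extra conjunct $\nabla \TC(\cdot)$ does not interfere. First I would fix an $\san{S5}_n\F{DPC}$-model $(M, s)$ of $\delta$, which exists because $\delta \in C^P_k(\san{S5}_n\F{DPC})$. Then I fix $\C B \in \PPA$ and $\eta \in R_{\C B}(\delta)$. Since $R_{\C B}(s)$ is $R_{\C B}(\delta)$-complete, there is $t \in R_{\C B}(s)$ with $M, t \vDash \eta$.

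The key frame fact is that $R_{\C B} = \bigcap_{i \in \C B} R_i$ is an intersection of equivalence relations, so it is itself an equivalence relation. Hence $t \in R_{\C B}(s)$ gives $R_{\C B}(t) = R_{\C B}(s)$; only transitivity and Euclideanness of each $R_i$ are needed for this. Therefore every world of $R_{\C B}(s) = R_{\C B}(t)$ satisfies a unique member of $C^P_{k-1}$ by Proposition \ref{prop: dpc-models-unique-canon}, and also a unique member of $C^P_{k-2}$.

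Next I would identify the two sets of pruned successors. Since $M, s \vDash \delta$, the set $R_{\C B}(s)$ is $R_{\C B}(\delta)$-complete, and $R_{\C B}(\delta) \subseteq C^P_{k-1}$. Applying the first item of the uparrow-property proposition for dpc-canonical formulas, each $\theta \in R_{\C B}(\delta)$ entails $\theta^{\downarrow (l-1)} \in C^P_{k-l}$. So $R_{\C B}(\delta)^{\downarrow (l-1)}$ is exactly the set of those $C^P_{k-l}$ formulas realised at worlds of $R_{\C B}(s)$, using the uniqueness in Proposition \ref{prop: dpc-models-unique-canon}. In the same way, $R_{\C B}(\eta) \subseteq C^P_{k-2}$ and $R_{\C B}(t)$ is $R_{\C B}(\eta)$-complete. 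So $R_{\C B}(\eta)^{\downarrow (l-2)}$ is exactly the set of $C^P_{k-l}$ formulas realised on $R_{\C B}(t) = R_{\C B}(s)$. Both sets are the set of depth-$(k-l)$ dpc-canonical formulas true somewhere in the same set of worlds, hence they are equal.

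The main obstacle is matching the indices to the statement as written, $R_{\C B}(\delta)^{\downarrow l} = R_{\C B}(\eta)^{\downarrow l-1}$, together with the boundary cases. Proposition \ref{prop: dpc-dist-cut-into} ($R_{\C B}(\delta^{\downarrow l}) = R_{\C B}(\delta)^{\downarrow l}$) lets me restate both sides on pruned formulas. Note that the right-hand side $R_{\C B}(\eta)^{\downarrow l-1}$ is only defined when $l - 1 \le k - 2$, i.e.\ $l < k$, which is the hypothesis. If $R_{\C B}(\delta)^{\downarrow l}$ is read literally as a set of depth-$(k-1-l)$ formulas, I would reconcile this with the displayed identity through item 4 of the uparrow-property proposition (pruning commutes with $\uparrow$). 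I would also check that pruning a depth-$1$ formula to depth $0$ keeps $w \land \nabla\TC$, so the argument goes through at depth zero.

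Finally, the $\nabla\TC$ components need no separate treatment, because the argument compares complete formulas realised at common worlds. Nevertheless, it is consistent with Proposition \ref{prop: ShareCommonKnowledge}, which says all these formulas share the same $\TC$.
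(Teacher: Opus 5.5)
Your argument is the paper's own. The paper only remarks that the proof of Proposition~\ref{prop: dist-identical-son} is unaffected by common knowledge, and that proof is exactly your model-theoretic one: take a model $(M,s)$ of $\delta$ and a world $t\in R_{\C B}(s)$ realising $\eta$, use $R_{\C B}(t)=R_{\C B}(s)$, and conclude by uniqueness of canonical formulas of a fixed depth. Your observation that the $\nabla\TC$ conjuncts need no separate treatment is also correct.

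The one thing that is wrong is the indexing, and it is not a matter of \emph{reconciling} two readings.

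\textbf{What you actually prove.} Your third paragraph establishes $R_{\C B}(\delta)^{\downarrow (l-1)} = R_{\C B}(\eta)^{\downarrow (l-2)}$. This is not the displayed identity, and it is undefined for $l=1$. Item~4 of the pruning proposition cannot repair this: it identifies $\gamma^{\downarrow h}$ with $\gamma^{\uparrow k}$ and does not shift pruning exponents.

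\textbf{The depths already match.} Read literally, the statement needs no adjustment. Since $R_{\C B}(\delta)\subseteq C^P_{k-1}$ and $R_{\C B}(\eta)\subseteq C^P_{k-2}$, both $R_{\C B}(\delta)^{\downarrow l}$ and $R_{\C B}(\eta)^{\downarrow (l-1)}$ lie in $C^P_{k-1-l}$. Moreover $k-1-l\ge 0$ precisely because $l<k$.

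\textbf{The fix.} Run your completeness-plus-uniqueness argument with target depth $k-1-l$, i.e.\ with exponents $l$ and $l-1$ instead of $l-1$ and $l-2$.
\begin{itemize}
\item Since $R_{\C B}(s)$ is $R_{\C B}(\delta)$-complete and $\theta\vDash\theta^{\downarrow l}$, the set of $C^P_{k-1-l}$-formulas true somewhere in $R_{\C B}(s)$ is exactly $R_{\C B}(\delta)^{\downarrow l}$.
\item By the same reasoning applied to $R_{\C B}(t)$, that set is also exactly $R_{\C B}(\eta)^{\downarrow(l-1)}$.
\item Since $R_{\C B}(s)=R_{\C B}(t)$, uniqueness in $C^P_{k-1-l}$ yields the identity for all $1\le l<k$.
\end{itemize}
In the case $l=k-1$, your remark that pruning to depth $0$ keeps $w\land\nabla\TC$ is what makes uniqueness in $C^P_0$ applicable. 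The appeal to $R_{\C B}(\delta^{\downarrow l})=R_{\C B}(\delta)^{\downarrow l}$ is also unnecessary.

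(The paper's own proof of Proposition~\ref{prop: dist-identical-son} also misstates the depths, as $D^P_k$ and $D^P_{k-l}$. It shifts both sides equally, so its conclusion is unaffected.)
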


%

\begin{prop}\label{prop: dpc-brothers}
Let $\delta \in C^P_k(\san{S5}_n \F{DPC})$ where $k \geq 2$. Let $\C B, \C C \in \PPA$ such that $\C C_1 = \C C \cap \C B \neq \varnothing$ and $\C C_2 = \C C \cap \C B^c \neq \varnothing$. For every $\eta \in R_{\C B}(\delta)$ and $\zeta \in R_{\C C}(\eta)$, there is $\eta_0 \in R_{\C C_1}(\delta)$ such that 
\begin{itemize}
\item $\eta_0^{\downarrow} = \zeta \in R_{\C C}(\eta)$,

\item $R_{\C D}(\eta_0) = R_{\C D}(\eta)$ for every $\C D \in \PP(\C C_2)$,
\end{itemize}
\end{prop}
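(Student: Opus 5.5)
We would prove Proposition \ref{prop: dpc-brothers} semantically, in the same way as the Sibling Property (Proposition \ref{prop: brothers}) for d-canonical formulas. The common-knowledge conjunct $\nabla \TC(\cdot)$ is never pruned and, by Proposition \ref{prop: ShareCommonKnowledge}, is the same for all dpc-canonical formulas wholly occurring in $\delta$. So it plays no role in the inclusion relations we need.

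\textbf{Step 1: fix a witnessing model and worlds.} Fix an $\san{S5}_n\F{DPC}$-model $(M, s)$ of $\delta$. Since $M, s \vDash \nabla_{\C B} R_{\C B}(\delta)$, there is $t \in R_{\C B}(s)$ with $M, t \vDash \eta$. Since $M, t \vDash \nabla_{\C C} R_{\C C}(\eta)$, there is $u \in R_{\C C}(t)$ with $M, u \vDash \zeta$.

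\textbf{Step 2: place $u$ as a $\C C_1$-successor of $s$.} Since $\C C_1 \subseteq \C B$, we have $t \in R_{\C C_1}(s)$ by Proposition \ref{prop: BC-include}. Since $\C C_1 \subseteq \C C$, we have $u \in R_{\C C_1}(t)$. Each $R_i$ is transitive, so $R_{\C C_1}$ is transitive and $u \in R_{\C C_1}(s)$. Hence some $\eta_0 \in R_{\C C_1}(\delta)$ satisfies $M, u \vDash \eta_0$. We then have $\eta_0 \vDash \eta_0^{\downarrow}$ and $\eta_0^{\downarrow} \in C^P_{k-2}(\san{S5}_n\F{DPC})$. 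The formula $\zeta$ lies in the same set, since $\eta \in C^P_{k-1}$. Both hold at $u$, so Proposition \ref{prop: dpc-models-unique-canon} gives $\eta_0^\downarrow = \zeta$.

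\textbf{Step 3: show $R_{\C D}(\eta_0) = R_{\C D}(\eta)$ for every $\C D \in \PP(\C C_2)$.} Since $\C D \subseteq \C C$, we have $u \in R_{\C D}(t)$. The relation $R_{\C D}$ is an intersection of equivalence relations, hence an equivalence relation, so $R_{\C D}(u) = R_{\C D}(t)$. Because $M, t \vDash \nabla_{\C D} R_{\C D}(\eta)$ and $M, u \vDash \nabla_{\C D} R_{\C D}(\eta_0)$, both sets describe the same successor set. We must check that these sets consist of canonical formulas of equal depth. This holds because $\eta_0$ and $\eta$ both lie in $C^P_{k-1}$, so their successors lie in $C^P_{k-2}$. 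Each world of the common set $R_{\C D}(t)$ satisfies exactly one member of $C^P_{k-2}$, again by Proposition \ref{prop: dpc-models-unique-canon}. Therefore $R_{\C D}(\eta)$ and $R_{\C D}(\eta_0)$ both equal the set of canonical types realized on $R_{\C D}(t)$, and so they coincide.

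\textbf{Expected obstacle.} The main difficulty is the bookkeeping of depths and the special pruning convention at low $k$. When $k = 2$, $\eta_0^\downarrow$ is $w(\eta_0) \land \nabla \TC(\eta_0)$, and we must confirm that this equals $\zeta$ as an element of $C^P_0$, which includes its $\TC$-part. This follows from uniqueness once we know that $\zeta \in C^P_0$ and $\eta_0^\downarrow \in C^P_0$. We must also use the dpc analogue of Proposition \ref{prop: BC-include}, whose proof carries over unchanged.
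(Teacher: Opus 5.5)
Your proposal is correct and takes essentially the paper's route. The paper gives no separate proof and instead defers to the proof of the Sibling Property (Proposition~\ref{prop: brothers}), which is the same semantic argument: realize $\eta$ at $t \in R_{\C B}(s)$ and $\zeta$ at $u \in R_{\C C}(t)$, use transitivity of $R_{\C C_1}$ to obtain $\eta_0$, then apply uniqueness of canonical types together with $R_{\C D}(t) = R_{\C D}(u)$. The only difference is that for $\eta_0^{\downarrow} = \zeta$ you argue directly from $\eta_0 \vDash \eta_0^{\downarrow}$ and uniqueness in $C^P_{k-2}$, without the paper's detour through the Identical Successors Property, which is if anything slightly cleaner.
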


Then, we have a construction lemma in parallel with Lemma \ref{lem: ConstructionK45} for $\san{S5}_n \F{DPC}$.
\begin{restatable}[The Construction Lemma for  $\san{S5}_n \F{DPC}$]{lem}{ConstructionCommonSfive}\label{lem: ConstructionCommonSfive}
Let $k, d, h$ be natural numbers such that $0 \leq k \leq d \leq h$. Let $\gamma, \xi \in C^P_{k + h + 1}(\san{S5}_n \F{DPC})$ such that $(\gamma^{\uparrow (k + d + 1)})^p = (\xi^{\uparrow (k + d + 1)})^p$. Let $(M', s') = (S', R', V', s')$
be an $\san{S5}_n \F{DPC}$-model of $\xi^p$. 

Given $\C X \in \PPA$ and the multi-pointed submodel
\[(M', \C U'_{\C X}) = (S', R', V', \C U'_{\C X})\]
where $\C U'_{\C X} = \{T'_{\C B} \mid T'_{\C B} = R'_{\C B}(s')\text{ for }\C B \in \PPX\}$, there is a multi-pointed $\san{S5}_n \F{DPC}$-model 
\[(M, \C U_{\C X}) = (S, R, V, \C U_{\C X})\]
where $\C U_{\C X}= \{T_{\C B} \subseteq S \mid \C B \in \PPX\}$ such that, 
\begin{itemize}
\item $T_{\C B}$ is $\C B$-equivalent in $(M, \C U_{\C X})$, for every $\C B \in \PPX$;

\item $(M, \C U_{\C X}) \bisim^{col}_p (M', \C U'_{\C X})$;

\item $T_{\C B}$ is $R_{\C B}(\gamma)^{\uparrow k}$-complete, for every $\C B \in \PPX$.
\end{itemize}
\end{restatable}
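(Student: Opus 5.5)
The plan is to reuse, essentially verbatim, the construction of Lemma \ref{lem: ConstructionS5} (which is Lemma \ref{lem: ConstructionK45} with the reflexive version of Step 2-2). The proof goes by the same double induction on $k$ and $d$. At every stage it produces the multi-pointed model $(M, \C U_{\C X})^{\gamma, \xi, k, d}$ and the relation $\rho$. Each newly constructed world $t$ receives the valuation of $w(\eta_t)$. The submodels $(M^t, \C U^t)$ from the inductive hypothesis are attached in Step 5 and merged in Step 6. None of these steps mention $\nabla\TC$ at all. So the three clauses concerning the $\C B$-successor structure carry over unchanged: $\C B$-equivalence of $T_{\C B}$, the collective $p$-bisimulation, and $R_{\C B}(\gamma)^{\uparrow k}$-completeness. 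The only modification is in the bookkeeping. The Sibling Property and the Identical Successors Property are now invoked in their dpc forms, Propositions \ref{prop: dpc-brothers} and \ref{prop: dpc-identical-son}. The pruning identities ($(\zeta^\downarrow)^{\uparrow(k+d-1)} = \zeta^{\uparrow(k+d-1)}$, etc.) come from the dpc version of the pruning proposition. The claim in Step 2-1 is then justified word for word as before.

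The model must be an $\san{S5}_n\F{DPC}$-model, but since $\F C$ is only applied to propositional formulas and the frame conditions are the same as for $\san{S5}_n\F D$, this is just the requirement that every $R_i$ be an equivalence relation. That is established exactly as in Lemma \ref{lem: ConstructionS5}, via Proposition \ref{prop: q-e-tran-Eu}(3,4): Step 2-2 adds reflexive loops, and the merging in Step 6 preserves $i$-equivalence of the classes $\RT(t, R_i\cup R_i^{-1})$.

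The genuinely new point is that completeness of $T_{\C B}$ for $R_{\C B}(\gamma)^{\uparrow k}$ now also demands that every constructed world $t$ satisfies the conjunct $\nabla\TC(\eta_t)$. Equivalently, the set of minterms realized on $\TC(t)$ must coincide with $\TC(\eta_t)$. I would handle this with Proposition \ref{prop: ShareCommonKnowledge}. Every $\eta_t$ wholly occurs in $\gamma$, so $\TC(\eta_t) = \TC(\gamma)$ for all $t$. Likewise $\TC(\zeta) = \TC(\xi)$ for all $\zeta$ wholly occurring in $\xi$. The hypothesis $(\gamma^{\uparrow(k+d+1)})^p = (\xi^{\uparrow(k+d+1)})^p$ gives $\TC(\gamma)^p = \TC(\xi)^p$, since the $\nabla\TC$ part is never pruned. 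It therefore suffices to show that the minterms realized on the reachable part of $M$ are exactly $\TC(\gamma)$. I would argue two inclusions.

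For \emph{every realized minterm lies in $\TC(\gamma)$}: worlds constructed in Steps 1 and 2-1 carry $w(\eta)$ with $\eta$ wholly in $\gamma$. By Proposition \ref{prop: Reach-split} (or reflexivity, Lemma \ref{lem: dpc-reflexive}), $w(\eta)\in\TC(\eta)=\TC(\gamma)$. Worlds inside the attached submodels realize minterms in $\TC(\eta_t)=\TC(\gamma)$ by the inductive hypothesis. The delicate base case is the worlds copied verbatim from $M'$ in Step 5 at $k=0$. Their $p$-value is arbitrary, so I would modify the base case in the spirit of Lemma \ref{lem: dpc-KandD}. Rather than copying $M'$, I would build the copy by constructing, for each $u'$ in the relevant part of $M'$ and each minterm $\chi\in\TC(\gamma)$ with $M',u'\vDash\chi^p$, a world with valuation $\chi$, and inherit the relations. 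Such $\chi$ exist because $M',u'$ satisfies some $\chi_0^p$ with $\chi_0\in\TC(\xi)$, and $\TC(\xi)^p=\TC(\gamma)^p$. This keeps the collective $p$-bisimulation and preserves the $i$-equivalence structure.

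For \emph{every minterm of $\TC(\gamma)$ is realized}: given $\chi\in\TC(\gamma)$, the minterm $\chi^p\in\TC(\xi)^p$ is realized at some $u'\in\TC(s')$. Since $\TC$ in an $\S5$ model is the whole connected component, $u'$ is reachable from every $t'$. Back-conditions of $\rho$ then yield some $u$ reachable from $t$ related to $u'$, and the construction above can be arranged so that a $\chi$-valued partner of $u'$ exists. I expect this last part, making the "$\chi$-valued copies" compatible with the merges of Step 6 and the equivalence-class requirement, to be the main obstacle. I would first try to add all $\chi$-variants of each copied world simultaneously, as $R_{\C A}$-duplicates with identical relations, which is harmless for bisimulation and equivalence.
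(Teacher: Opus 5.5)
Your proposal is correct and follows essentially the paper's route. The paper likewise keeps the $\san{S5}_n\F D$ construction (reflexive Step 2-2, merging in Step 6). It replaces the verbatim copy in the base-case Step 5 by essentially your $\chi$-variant submodels over $\TC(t')$ with inherited relations, and it verifies $\nabla\TC(\eta_u)$ via Propositions \ref{prop: ShareCommonKnowledge} and \ref{prop: Reach-split}. The only real difference is in how the $\hat{\F C}$ half is organized: the paper argues by induction on the level $m$ of $\eta_u$, handling exact $\C A$-sons, which receive no submodel, through an $R_{\C A}$-linked world in $T^{t*}_i$. Your global argument instead needs one small extra observation, since a Back-path from $t$ may end at a single-valued constructed world. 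That observation is that every constructed world reaches some base-level variant submodel, and such a submodel already contains a $\chi$-variant of every $u'$ in the component.
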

\begin{proof}
Let only show the nuanced steps of construction. The full proof is placed in the appendix.

This construction inherits the steps of Lemma \ref{lem: ConstructionK45}, and we only need to modify the following steps:

\F{Base case}:
\begin{itemize}
\item[Step 5] For every $\C B \in \C P^+(\C X)$ such that $\C B \neq \C A$ and every $t \in T^*_{\C B}$, consider the multi-pointed submodel 
$(M', \C U^{t'}_{\C B^c})$, where $\C U^{t'}_{\C B^c} = \{R'_{\C D}(t') \mid \C D \in \PP(\C B^c)\}$. Construct $(M^t, \C U^t_{\C B^c})$ as follows: Initially, let $S^t$, $V^t$, $\rho_t$, and every edge be empty.
\begin{itemize}
\item For every $u' \in \TC(t')$ and every minterm $\chi \in \TC(\eta_t)$, if $M', u' \vDash \chi^p$, then construct $u$ in $S$ such that for every $q \in P$, $q \in V(u)$ if and only if $\chi \vDash q$. Let $(u, u') \in \rho_t$.

\item For every $\C C \in \PPA$ and every $u, v$ in $S^t$, if $(u', v') \in R'_{\C C}$, add $(u, v)$ to $R_i$ for $i \in \C C$.

\item For every $\C D \in \PP(\C B^c)$, let $U^{t*}_{\C D} = \{u \mid (u, u') \in \rho_t \text{ and } u' \in R_{\C D}(t')\}$ and 
\[U^t_{\C D} = \bigcup_{\C D \subseteq \C D_0}U^{t*}_{\C D_0}\]
Let $\C U^t_{\C B^c} = \{U^t_{\C D} \mid \C D  \in \PP(\C B^c)\}$.

\item For all the worlds $u'$ in $(M', s')$ that are unreachable from $t'$, add the induced submodel $(M', u')$ to $S^t$ as copies, and this submodel inherits their relations and valuation in $M'$ and $(v, v') \in \rho_t$ for every $v'$ and its copy $v$ in the copied submodel.
\end{itemize}
\end{itemize}
For convenience, let's still call all the worlds in $(M^t, \C U^t_{\C B^c})$ the \emph{copies} of their counterparts in $M'$, and call $(M^t, \C U^t_{\C B^c})$ a copy of $(M', \C U^{t'}_{\C B^c})$. Observe that $(M^t, \C U^t_{\C B^c})$ is still an $\san{S5}_n \F{DPC}$-model and every set $U^t_{\C D}$ is $\C D$-equivalent. Recall that we denote the set $\RT(t, R_i \cup R_i^{- 1})$ within $(M, \C U_{\C X})^{\gamma, \xi, 0, 0}_4$ by $\RT(t, R_i \cup R_i^{- 1})_4$. 
\begin{itemize}
\item[Step 6] For every $\C B \in \C P^+(\C X)$, every $t \in T^*_{\C B}$, every $i \in \C B^c$, every $u \in \RT(t, R_i \cup R_i^{- 1})_4$, $U^u_i$ in $(M^u, \C U^u)$ is an $i$-equivalent set. Then, merge all these sets $U^u_i$ and $\RT(t, R_i \cup R_i^{- 1})_4$ with the relation $R_i$.
\end{itemize}

\F{Inductive step}:
\begin{itemize}
\item[Step 2-2] For every $\C B \in \PPX$, every $t \in T^*_{\C B}$, add $(t, t) \in R_j$ for every $j \in \C A$, and subsequently, for every $i \in \PP(\C B^c)$, and every $u_1, u_2 \in R_i(t)$, then add $(u_1, u_2)$ to $R_i$.
\end{itemize}
\end{proof}

\begin{lem}[Lemma for $\san{S5}_n \F{DPC}$]\label{lem: dpc-S5}
Let $\delta$ be any $\san{S5}_n \F{DPC}$-satisfiable dpc-canonical formula.
\begin{itemize}
\item If $\delta \in C^P_0$, then for every $\san L$-model $(M', s')$ of $\delta^p$, there is an $\san{S5}_n \F{DPC}$-model $(M, s)$ such that $M, s \vDash \delta$ and $(M, s) \bisim^{col}_p (M', s')$.

\item Suppose $\delta \in C^P_{k + 1}$ where $k \geq 0$. For every $\gamma \in C^P_{2k + 1}(\san{S5}_n \F{DPC})$ and every $\san{S5}_n \F{DPC}$-model $(M', s')$ of $\gamma^p$, if $\gamma^{\uparrow (k + 1)} = \delta$, then there is an $\san{S5}_n \F{DPC}$-model $(M, s)$ such that $M, s \vDash \delta$ and $(M, s) \bisim^{col}_p (M', s')$.
\end{itemize}
\end{lem}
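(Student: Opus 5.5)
The plan is to treat the two items separately. Both follow the pattern of Lemma \ref{lem: S5}, with Lemma \ref{lem: ConstructionCommonSfive} playing the role of Lemma \ref{lem: ConstructionS5}. The only genuinely new ingredient is to verify the conjunct $\nabla \TC(\delta)$. For that I will rely on Proposition \ref{prop: ShareCommonKnowledge} (all dpc-canonical formulas wholly occurring in $\delta$ share $\TC(\delta)$), and on the fact that in $\san{S5}_n\F{DPC}$-models $\TC(s)$ is the connected component of $s$ and contains $s$.

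\textbf{Base case $\delta \in C^P_0$.} I will copy the base-case construction of Lemma \ref{lem: dpc-KandD} and then close it up to an $\san{S5}$-model. Let $s$ realize $w(\delta)$ and relate it to $s'$. For every $t' \in \TC(s')$ and every minterm $\chi \in \TC(\delta)$ with $M', t' \vDash \chi^p$, create a world $t$ valued by $\chi$ and relate it to $t'$. Put $(u,v) \in R_i$ exactly when $(u',v') \in R'_i$, where $u', v'$ are the $\rho$-images. Because the relation on the new worlds is pulled back from an equivalence relation on $M'$, each $R_i$ is again an equivalence relation. Hence $(M,s)$ is an $\san{S5}_n\F{DPC}$-model. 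The bisimulation clauses for $\F D_{\C B}$ also hold, since $R_{\C B}$ is pulled back as well.

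Completeness of $\TC(s)$ with respect to $\TC(\delta)$ comes from two facts. First, $M', s' \vDash \nabla \TC(\delta)^p$. Second, every $t' \in \TC(s')$ receives at least one copy: some $\chi \in \TC(\delta)$ has $M',t' \vDash \chi^p$. Reflexivity supplies $s' \in \TC(s')$. This gives $M,s \vDash w(\delta) \land \nabla\TC(\delta)$.

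\textbf{Case $\delta \in C^P_{k+1}$.} Apply Lemma \ref{lem: ConstructionCommonSfive} with $\gamma = \xi$, $d = h = k$ and $\C X = \C A$; the hypothesis $(\gamma^{\uparrow(2k+1)})^p = (\xi^{\uparrow(2k+1)})^p$ is trivial. This yields $(M, \C U_{\C A})$ and $\rho$ such that each $T_{\C B}$ is $\C B$-equivalent, $p$-bisimilar to $R'_{\C B}(s')$, and $R_{\C B}(\gamma)^{\uparrow k} = R_{\C B}(\delta)$-complete.

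As in Lemma \ref{lem: S5}, add $s$ valued by $w(\delta)$, together with the edges $(s,t)$ and $(t,s)$ for $t \in T_{\C B}$ and $i \in \C B$, the reflexive loops, and $(s,s') \in \rho$. Using $s' \in R'_{\C A}(s')$, there is $t_s \in T_{\C A}$ with $t_s \rho s'$. Then $s$ behaves as a twin of $t_s$, so each $T_{\C B}\cup\{s\}$ stays $\C B$-equivalent and $R_{\C B}(\delta)$-complete. This gives $M,s \vDash w(\delta) \land \bigwedge_{\C B}\nabla_{\C B}R_{\C B}(\delta)$ and $(M,s)\bisim^{col}_p(M',s')$.

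It remains to show $M,s\vDash \nabla\TC(\delta)$. For every $u \in \TC(s)$, $u$ was constructed from some formula $\eta_u$ wholly occurring in $\gamma$, or is a copy-world built from a minterm of some $\TC(\eta_t)$ in Step 5. In either case Proposition \ref{prop: ShareCommonKnowledge} yields $\TC(\eta_u) = \TC(\gamma) = \TC(\delta)$, the last equality because $\uparrow$ never prunes $\nabla\TC$. So $u$ satisfies some minterm of $\TC(\delta)$. Conversely, for $\chi \in \TC(\delta)$, pick $u' \in \TC(s')$ with $M',u'\vDash\chi^p$, and use the bisimulation together with connectedness of $\TC(s)$ to find a world of $M$ satisfying $\chi$.

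\textbf{Main obstacle.} The hardest step is this last one. The difficulty is the copied submodels added in Step 5 for worlds unreachable from $t'$, together with the merging in Step 6. The candidate world in $M$ above $u'$ must actually realize $\chi$ itself, not merely $\chi^p$. My first attempt is to argue that every world of $M$ reachable from $s$ was created by the clause ``for every $u' \in \TC(t')$ and minterm $\chi \in \TC(\eta_t)$ with $M',u'\vDash\chi^p$''. That clause creates one world per pair $(u',\chi)$, which guarantees a world valued exactly $\chi$. Worlds from the unreachable copies are not connected to $s$ and so lie outside $\TC(s)$.
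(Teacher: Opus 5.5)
Your route is the paper's. For $\delta\in C^P_0$ you use a minterm construction over $\TC(s')$; your pull-back version is fine, provided you note that $s\in\TC(s)$ forces $w(\delta)\in\TC(\delta)$, which holds by reflexivity. For $\delta\in C^P_{k+1}$ you apply Lemma~\ref{lem: ConstructionCommonSfive} with $\gamma=\xi$, $d=h=k$, $\C X=\C A$, and then adjoin $s$ as a twin of some $t_s$, as in Lemma~\ref{lem: S5}.

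The gap is the half of $\nabla\TC(\delta)$ that you leave open in the case $\delta\in C^P_{k+1}$: that every $\chi\in\TC(\delta)$ is realized, with its exact $p$-value, somewhere in $\TC(s)$. Pulling $u'$ back along $\rho$ only gives $\chi^p$, as you note. Your first attempt rests on a false claim. Most worlds reachable from $s$ are built from formulas $\eta$, not by the minterm clause of Step~5: the members of $T_{\C B}$, the Step 2-1 worlds, and all formula-built worlds at deeper levels. To make that route work you would still have to:
\begin{itemize}
\item exhibit a world $t$ at the bottom of the recursion to which the clause is applied (there is none when $|\C A|=1$, since Step~5 only runs for $\C B\neq\C A$);
\item check $u'\in\TC(t')$ and $\chi\in\TC(\eta_t)$;
\item show the new world is connected to $s$.
\end{itemize}

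The missing observation is that no such search is needed. Lemma~\ref{lem: ConstructionCommonSfive} gives completeness with respect to the full dpc-canonical formulas in $R_{\C B}(\delta)=R_{\C B}(\gamma)^{\uparrow k}$. Each of these carries its own conjunct $\nabla\TC(\eta)$; the lemma's proof shows $M,u\vDash\nabla\TC(\eta_u)$ for constructed $u$. So every $\eta\in R_{\C B}(\delta)$ holds at some $t\in R_{\C B}(s)$, and every $t\in R_{\C B}(s)$ satisfies some such $\eta$. Combine this with Proposition~\ref{prop: Reach-split} for $\delta$ and with $\TC(s)=\bigcup_{\C B}\bigcup_{t\in R_{\C B}(s)}(\{t\}\cup\TC(t))$. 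Both halves of $\nabla\TC(\delta)$ then follow exactly as in the inductive step of Lemma~\ref{lem: dpc-KandD}.

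Quicker still: $\TC(s)=\TC(t_s)$ in an $\san{S5}$ model, and $M,t_s\vDash\nabla\TC(\eta_{t_s})$ with $\TC(\eta_{t_s})=\TC(\gamma)=\TC(\delta)$ by Proposition~\ref{prop: ShareCommonKnowledge}. This is what the paper's ``analogous to Lemmas~\ref{lem: K45KD45} and~\ref{lem: S5}'' amounts to.

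For the twin argument you do need $V(t_s)=V(s)$, and an arbitrary $\rho$-preimage of $s'$ only agrees with $s$ off $p$. Take $t_s$ to be the world constructed in Step~1 from $s'$ and $\eta=\zeta=\gamma^{\downarrow}$. This works because $\gamma^{\downarrow}\in R_{\C A}(\gamma)$ by reflexivity and $M',s'\vDash(\gamma^{\downarrow})^p$.
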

\begin{proof}
Suppose $\delta \in C^P_0$ and $M', s' \vDash \delta^p$. Let's construct a new actual world $s$ such that for every $q \in P$, $q \in V(s)$ if and only if $w(\delta) \vDash q$. Then, construct a model $(M^s, \C U^s_{\C A})$ as in Step 5 of the base case of Lemma \ref{lem: ConstructionCommonSfive}, such that $(M^s, \C U^s_{\C A}) \bisim^{col}_p (M', \C U'_{\C A})$. Then concatenate $(M^s, \C U^s_{\C A})$ to $s$ such that for any $t$ in $(M^s, \C U^s_{\C A})$ and any $\C B \in \PPA$, $(s, t) \in R_{\C B}$ if and only if $(s', t') \in R'_{\C B}$. It is easy to see that $(M, s)$ such that $M, s \vDash \delta$ and $(M, s) \bisim^{col}_p (M', s')$.

The proof for the case ``$\delta \in C^P_{k + 1}$'' is analogous to that of Lemma \ref{lem: K45KD45} and Lemma \ref{lem: S5}, since we have proved Lemma \ref{lem: ConstructionCommonSfive}.
\end{proof}

\begin{thm}\label{thm: dpc-KD45S5}
$\san{S5}_n\F{DPC}$ has the uniform interpolation property.
\end{thm}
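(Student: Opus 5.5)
The plan mirrors the proof of Theorem \ref{thm: K45KD45}, with dpc-canonical formulas in place of d-canonical ones. Fix a finite $P \subseteq \C P$, an atom $p \in P$, and an $\san{S5}_n\F{DPC}$-satisfiable formula $\phi$ over $P$. By Proposition \ref{prop: UI-forgetting} it suffices to show that $\san{S5}_n\F{DPC}$ is closed under forgetting. By the dpc-analogue of Proposition \ref{prop: delta-phi}, it suffices in turn to show that $dforget(\delta, p)$ exists for every satisfiable dpc-canonical $\delta$. That analogue has the same proof, now using Proposition \ref{prop: dpc-unique-canon} and Fact \ref{fact: dforget-vee}: write $\phi \equiv \bigvee \Phi$ with $\Phi \subseteq C^P_{dep(\phi)}(\san{S5}_n\F{DPC})$ and distribute forgetting over the disjunction.

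For each $\delta$ we verify the hypothesis of Theorem \ref{thm: dpc-guideline}, splitting by depth.
\begin{itemize}
\item If $\delta \in C^P_0$, take $l = 0$. Then $\gamma^{\uparrow 0} = \delta$ forces $\gamma = \delta$, and the first item of Lemma \ref{lem: dpc-S5} gives the required collectively $p$-bisimilar model of $\delta$. Hence $dforget(\delta, p) \equiv \delta^p$.
\item If $\delta \in C^P_{k+1}$ with $k \geq 0$, take $l = 2k+1$. This satisfies $l \geq k+1$, as Theorem \ref{thm: dpc-guideline} requires. The second item of Lemma \ref{lem: dpc-S5} supplies, for every $\gamma \in C^P_{2k+1}(\san{S5}_n\F{DPC})$ with $\gamma^{\uparrow(k+1)} = \delta$ and every model of $\gamma^p$, a collectively $p$-bisimilar model of $\delta$. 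Theorem \ref{thm: dpc-guideline} then yields
\[dforget(\delta, p) \equiv \bigvee\{\gamma^p \mid \gamma \in C^P_{2k+1}(\san{S5}_n\F{DPC}),\ \gamma^{\uparrow(k+1)} = \delta\}.\]
\end{itemize}

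Before concluding, we confirm that this disjunction is satisfiable and uses only atoms in $P \setminus \{p\}$. The atom condition holds because literal elimination removes $p$. For satisfiability, $\delta$ has some model. By Proposition \ref{prop: dpc-models-unique-canon} that model satisfies a unique $\gamma \in C^P_{2k+1}$, and this $\gamma$ is satisfiable with $\gamma^{\uparrow(k+1)} = \delta$ by the uparrow properties. Since $\gamma \vDash \gamma^p$ (the analogue of Proposition \ref{prop: deltap-imply}), the disjunction is nonempty and satisfiable. This also covers the Forth condition of forgetting, and the Back condition is exactly Lemma \ref{lem: dpc-S5}.

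The main obstacle is not in this assembly but in what it rests on: the Construction Lemma \ref{lem: ConstructionCommonSfive}, and the claim that attaching the new actual world $s$ in Lemma \ref{lem: dpc-S5} preserves $\nabla\TC(\delta)$. There, Proposition \ref{prop: ShareCommonKnowledge} is used to argue that every world reachable in the constructed model realizes a minterm of $\TC(\delta)$, and that every such minterm is realized. Given those lemmas as stated, the theorem follows directly.
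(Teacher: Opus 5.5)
Your proposal is correct and follows essentially the same route as the paper. The paper's proof treats $\delta \in C^P_0$ via the first item of Lemma~\ref{lem: dpc-S5} and $\delta \in C^P_{k+1}$ via the second item with $l = 2k+1$, applies Theorem~\ref{thm: dpc-guideline} in each case, and then concludes through the disjunction reduction; your additional checks (satisfiability of the interpolant, absence of $p$, and the dpc-analogue of Proposition~\ref{prop: delta-phi}) are points the paper leaves implicit.
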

\begin{proof}
Every member $\delta$ of $C^P_0(\san{S5}_n\F{DPC})$, by Lemma \ref{lem: dpc-S5} and Theorem \ref{thm: dpc-guideline}, has a uniform interpolant $\delta^p$. 

For $k \geq 0$ and $\delta \in C^P_{k + 1}(\san{S5}_n\F{DPC})$, by Lemma \ref{lem: dpc-S5} and Theorem \ref{thm: dpc-guideline}, $dforget_{\san{S5}_n\F{DPC}}(\delta, p)$ exists. More precisely,
\[dforget_{\san{S5}_n\F{DPC}}(\delta, p) \equiv_{\san{S5}_n\F{DPC}} \bigvee \{\gamma^p \mid \gamma \in C^P_{2k + 1} \text{ and } \gamma^{\uparrow (k + 1)} = \delta\}\]
Hence, $\san{S5}_n\F{DPC}$ has the uniform interpolation property.
\end{proof}

\subsection{Limitations: $\san{K45}_n\F{DPC}$ and $\san{KD45}_n\F{DPC}$}
The construction lemmas are \emph{not} available for $\san{K45}_n\F{DPC}$ or $\san{KD45}_n\F{DPC}$. Then, the conclusion of uniform interpolation cannot be extended to the two systems.

We conjecture that $\san{KD45}_n\F{DPC}$ has the uniform interpolation property. However, since every model of $\san{KD45}_n\F{DPC}$ is not necessarily reflexive, Proposition \ref{prop: ShareCommonKnowledge} no longer holds. Then, every two constructed worlds may not have the same common knowledge but Euclideanness and transitivity may still connect them. We conjecture that Step 2 of the base case can be revised and then a construction lemma for $\san{KD45}_n\F{DPC}$ is possible. That possible proof has been different from the current ones, so it is beyond the scope of this paper.

The case of $\san{K45}_n\F{DPC}$ is even tougher. Observe Step 5 of the base case: 
\begin{itemize}
\item[Step 5] For every $\C B \in \C P^+(\C X)$ such that $\C B \neq \C A$ and every $t \in T^*_{\C B}$, consider the multi-pointed submodel 
$(M', \C U^{t'}_{\C B^c})$, ...
\end{itemize}
Recall that every model is serial in the cases of $\san{KD45}_n\F{DPC}$ or $\san{S5}_n\F{DPC}$, so $R'_j(t')$ is nonempty for $j \in \C B^c$ and then $(M', \C U^{t'}_{\C B^c})$ exists in $(M', s')$. However, this is no longer ensured in $\san{K45}_n\F{DPC}$. It is likely that for every $j \in \C B^c$, $R'_j(t') = \varnothing$, and then $(M', \C U^{t'}_{\C B^c})$ does not exist at all. Then, the construction will not proceed. Let's see a simple example. Let $P = \{p. q\}$ and $\C A$ contains at least two agents: 1 and 2.
\begin{itemize}
\item Let $\eta = p \land q \land \nabla \{p \land q, \neg p \land q\}$.

\item Let $\gamma = p \land \neg q \land \nabla \{p \land q, \neg p \land q\} \land \nabla_{\{1\}}\{\eta\} \land \bigwedge_{\C B \in \PPA \text{ and }\C B \neq \{1\}} \nabla_{\C B} \varnothing$.
\end{itemize}
Note that both $\gamma$ and $\eta$ are $\san{K45}_n\F{DPC}$-satisfiable, and $\TC(\gamma) = \TC(\eta) = \{p \land q, \neg p \land q\}$. Consider the model $(M', s')$ as follows:
\begin{itemize}
\item $S' = \{s', t'\}$

\item $R'_1 = \{(s', t'), (t', t')\}$ and the other relations are empty.

\item $V'$ is given so that $M', s' \vDash \neg q$ while $M', t' \vDash q$.
\end{itemize}
It is easy to check that $M', s' \vDash \gamma^p$. When we want to construct $(M, \C U_{\C A})^{\gamma, \gamma, 0, 0}$ as we did in Lemma \ref{lem: dpc-S5}, we will find that $S = T_1$ and there is no submodel added in Step 5. It is because $R'_{\C B}(t') = \varnothing$ for every $\C B \neq \{1\}$. For every $t \in T_1$, $M, t \vDash p \land q$, and there is no world in $S$ satisfying the minterm $\neg p \land q$, but $\neg p \land q \in \TC(\gamma)$. The model construction fails. Therefore, the conclusion of uniform interpolation becomes unavailable for $\san{K45}_n\F{DPC}$. The uniform interpolation property for this system is left open.

\section{Conclusion}\label{sec: Conclusion}

We have established the conclusions that distributed knowledge modal logics have the uniform interpolation property. Extending both the bisimulation-quantifier approach and several syntactic techniques—canonical formulas, literal elimination, and the pruning operation—we have shown how uniform interpolants can be constructed as the results of forgetting. This strengthens and generalizes earlier results on uniform interpolation in multi-agent modal logics by incorporating the distributed knowledge modality and also the propositional common knowledge cases, which are summarized in Table \ref{tab: summary}. Beyond the question of uniform interpolation in distributed knowledge, our analysis also clarifies the role of canonical forms in handling epistemic modalities.  

The proofs give rise to an open problem: in $\san{K45}_n\F D$, $\san{KD45}_n\F D$, and $\san{S5}_n\F D$, is $2k + 1$ a strict lower bound on the depth necessary for a uniform interpolant of a formula of depth $k + 1$? Furthermore, the current constructions are not practically efficient, so a natural next step is to seek algorithmic refinements that allow for more computationally feasible construction of interpolants.  Another promising direction is to investigate how these methods extend to richer epistemic languages, such as those with group knowledge or dynamic epistemic logics.

\begin{table}[t]
\centering
\begingroup
\setlength{\arrayrulewidth}{1pt}
\renewcommand{\arraystretch}{1.15}

\begin{tabular}{@{} l | *{6}{c} @{}}
\hline
 & $\san{K}$ & $\san{D}$ & $\san{T}$ & $\san{K45}$ & $\san{KD45}$ & $\san{S5}$ \\
\hline
$\C L^\F{K}_n$ & \circsurd & \circsurd & \circsurd & \circsurd & \circsurd & \circsurd \\
$\C L_\F{D}$ & \circsurd & \circsurd & \circsurd & $\surd$ & $\surd$ & $\surd$ \\
$\C L_\F{D}$ (depth = 1) & \circsurd & \circsurd & \circsurd & \circsurd & \circsurd & \circsurd \\
$\C L_\F{D}$ ($|\C A| = 2$) & \circsurd & \circsurd & \circsurd & \circsurd & \circsurd & \circsurd \\
$\C L^\F{D}_\F{PC}$ & \circsurd & \circsurd & \circsurd & ?  & ? & $\surd$ \\
\hline
\end{tabular}

\endgroup
\caption{The first line contains the conclusions from \cite{fang2019forgetting} where $\C L^\F{K}_n$ is the multi-agent modal language without distributed knowledge or common knowledge. ``$\surd$'' stands for uniform interpolation. ``$\bigcirc$'' means every formula and its uniform interpolants share the same modal depth.}
\label{tab: summary}
\end{table}


\newpage
\appendix
\section{The deterred proofs}
\Adequacy*
\begin{proof}
We will prove by induction on the structure of $\phi$. Let $\rho$ be the collective $p$-bisimulation between $(M, s)$ and $(M', s')$.
\begin{itemize}
\item $\phi$ is an atom $q$ for $q \in \C P \setminus \{p\}$: Since $(s, s') \in \rho$ and by the Atoms condition, we have $M, s \vDash q \Longleftrightarrow M', s' \vDash q$.

\item $\phi$ is in the form $\neg \psi$: Suppose $M, s \vDash \neg \psi$. Then, $M, s \not\vDash \psi$. Observe that $p$ does not occur in $\psi$. By the inductive hypothesis, $M', s' \not\vDash \psi$. Then, $M', s' \vDash \neg \psi$. The other direction is symmetric. So, $M, s \vDash \phi \Longleftrightarrow M', s' \vDash \phi$.

\item $\phi$ is in the form $\psi \land \chi$: Suppose $M, s \vDash \psi \land \chi$. Then, $M, s \vDash \psi$ and $M, s \vDash \chi$. Observe that $p$ occurs neither in $\psi$ nor in $\chi$. By the inductive hypothesis, $M', s' \vDash \psi$ and $M', s' \vDash \chi$. Then, $M', s' \vDash \psi \land \chi$. The other direction is symmetric. So, $M, s \vDash \phi \Longleftrightarrow M', s' \vDash \phi$.

\item $\phi$ is in the form $\F D_{\C B} \psi$: Suppose that $(M, s) \not \vDash \phi$. Then, there exists $t \in R_{\C B}(s)$ such that $(M, t) \not \vDash \psi$. By $(M, s) \bisim^{col}_p (M', s')$, there exists $t' \in R'_{\C B}(s')$ s.t. $(t, t') \in \rho$. By the inductive hypothesis, $M', t' \vDash \neg \psi$. So, $M', s' \not \vDash \phi$. The other direction is symmetric. Therefore, $M, s \vDash \phi \Longleftrightarrow M', s' \vDash \phi$.

\item $\phi$ is in the form $\F C \psi$: Suppose that $(M, s) \not \vDash \phi$. Then, there exists $t \in \TC(s)$ such that $(M, t) \not \vDash \psi$. Then, there is a path
\[s = t_0, t_1, \dots, t_l = t\]
such that for every $i \in \{0, \dots, l - 1\}$, there is $\C B_i \in \PPA$ such that $(t_i, t_{i + 1}) \in R_{\C B_i}$. Let's proceed by induction on $l$.
\begin{itemize}
\item \F{Base case} ($l = 1$): By the Forth condition, there is $t' \in R_{\C B_0}(s') \subseteq \TC(s')$ such that $(t, t') \in \rho$.

\item \F{Inductive steps}: By the inductive hypothesis on $l$, there is $t'_{l - 1} \in \TC(s')$ such that $(t_{l - 1}, t'_{l - 1}) \in \rho$. By the Forth condition, there is $t' \in R_{\C B_0}(t'_{l - 1}) \subseteq \TC(s')$ such that $(t, t') \in \rho$.
\end{itemize}
By the inductive hypothesis, $M', t' \vDash \neg \psi$. So, $M', s' \not \vDash \phi$. The other direction is similar. Therefore, $M, s \vDash \phi \Longleftrightarrow M', s' \vDash \phi$.
\end{itemize}
Hence, the Adequacy Lemma is proved.
\end{proof}

\UIforgetting*
\begin{proof}
We need to verify that $\psi$ meets the definition of uniform interpolant (Definition \ref{defn: UniformInterpolation}). Let $\chi$ be any formula without $p$ occurring.
\begin{itemize}
\item Suppose $\psi \vDash_{\san L} \chi$. Let $(M, s)$ be any $\san L$-model of $\phi$. It is easy to see that $(M, s) \bisim^{col}_p (M, s)$. Then by the Forth condition of forgetting, $M, s \vDash \psi$. So, $\phi \vDash_{\san L} \psi$. Therefore, $\phi \vDash_{\san L} \chi$.

\item Suppose $\phi \vDash_{\san L} \chi$. For every $\san L$-model $(M', s')$ of $\psi$, by the Back condition forgetting, there is an $\san L$-model $(M, s)$ such that $M, s \vDash \phi$ and $(M, s)\bisim^{col}_p (M', s')$. Since $\phi \vDash_{\san L} \chi$, we have $M, s \vDash \chi$. Then by Lemma \ref{lem: col-p-bisim-adequacy} (the Adequacy Lemma), $M', s' \vDash \chi$. So, $\psi \vDash_{\san L} \chi$. 
\end{itemize}
Therefore, $\phi \vDash_{\san L} \chi$ if and only if $\psi \vDash_{\san L} \chi$.

Hence, $\psi$ is a uniform interpolant of $\phi$ in $\san L$ over $P \setminus \{p\}$.
\end{proof}

\DforgetVee*
\begin{proof}
``$\Rightarrow$''. Let $(M', s')$ be any $\san L$-model of $dforget_{\san L}(\phi_1 \lor \phi_2, p)$. By the Back condition of Definition \ref{defn: forgetting}, there is an $\san L$-model $(M, s)$ such that $(M, s) \bisim^{col}_p (M', s')$ and $M, s \vDash \phi_1 \lor \phi_2$. W.l.o.g., we assume $M, s \vDash \phi_1$. By the Forth condition of Definition \ref{defn: forgetting}, $M', s' \vDash dforget_{\san L}(\phi_1, p)$. Then, $M', s' \vDash dforget_{\san L}(\phi_1, p) \lor dforget_{\san L}(\phi_2, p)$.

``$\Leftarrow$''. Let $(M', s')$ be any $\san L$-model of $dforget_{\san L}(\phi_1, p) \lor dforget_{\san L}(\phi_2, p)$. W.l.o.g., we assume $M', s' \vDash dforget_{\san L}(\phi_1, p)$. By the Back condition of Definition \ref{defn: forgetting}, there is an $\san L$-model $(M, s)$ such that $(M, s) \bisim^{col}_p (M', s')$ and $M, s \vDash \phi_1$. Then, $M, s \vDash \phi_1 \lor \phi_2$. By the Forth condition of Definition \ref{defn: forgetting}, $M', s' \vDash dforget_{\san L}(\phi_1 \lor \phi_2, p)$.

Hence, $dforget_{\san L}(\phi_1 \lor \phi_2, p) \equiv_{\san L} dforget_{\san L}(\phi_1, p) \lor dforget_{\san L}(\phi_2, p)$.
\end{proof}

\Canondichotomy*
\begin{proof}
Let's prove by induction on $\phi$.

\F{Base:} Suppose $\phi$ is an atom. As $\C P(\phi) \subseteq P$ and $w(\delta)$ is a minterm of $P$, it is obvious that either $\delta \vDash_{\san L} \phi$ or $\delta \vDash_{\san L} \neg \phi$.

\F{Inductive step:} Suppose $\phi$ is in the form $\neg \psi$. By the inductive hypothesis, either $\delta \vDash_{\san L} \psi$ or $\delta \vDash_{\san L} \neg \psi$. If $\delta \vDash_{\san L} \neg \psi$, then $\delta \vDash_{\san L} \phi$. If $\delta \vDash_{\san L} \psi$, observe that in every modal system $\san L$ within our discussion, $\neg \phi = \psi$, so $\delta \vDash_{\san L} \neg \phi$.

Suppose $\phi$ is in the form $\psi_1\land \psi_2$. If $\delta \vDash_{\san L} \psi_1$ and $\delta \vDash_{\san L} \psi_2$, then $\delta \vDash_{\san L} \phi$. Otherwise, without loss of generality, we assume $\delta \vDash_{\san L} \neg \psi_1$. Then, $\delta \vDash_{\san L} \neg \psi_1 \lor \neg \psi_2$. Then, $\delta \vDash_{\san L} \neg (\psi_1 \land \psi_2)$.

Suppose $\phi$ is in the form $\F D_{\C B} \psi$. Note that $1 \leq dep(\phi) \leq k$. By the inductive hypothesis, there are two possible cases:
\begin{itemize}
\item There is $\eta_0 \in R_{\C B}(\delta)$ such that $\eta_0 \vDash_{\san L} \neg \psi$: For any $\san L$-model $(M, s)$, if $M, s \vDash \delta$, then
\[M, s \vDash \bigwedge \hat{\F D}_{\C B} R_{\C B}(\delta).\]
There exists $t_0 \in R_{\C B}(s)$ such that $M, t_0 \vDash \eta_0$. By $\eta_0 \vDash_{\san L} \neg \psi$, $M, t_0 \vDash \neg \psi$. So, $M, s \vDash \hat{\F D}_{\C B} \neg \psi$. Therefore, $M, s \vDash \neg \F D_{\C B} \psi$

\item For any $\eta \in R_{\C B}(\delta)$, $\eta \vDash_{\san L} \psi$: For any $\san L$-model $(M, s)$, if $M, s \vDash \delta$, then
\[M, s \vDash \F D_{\C B} \bigvee R_{\C B}(\delta).\]
For any $t \in R_{\C B}(s)$, there exists $\gamma \in R_{\C B}(\delta)$ such that $M, t \vDash \gamma$. Because ``for any $\eta \in R_{\C B}(\delta)$, $\eta \vDash_{\san L} \psi$'', we have that $\gamma \vDash_{\san L} \psi$. Note that $\bigvee R_{\C B}(\delta) \vDash_{\san L} \gamma$. So, $\bigvee R_{\C B}(\delta) \vDash_{\san L} \psi$.  According to Axiom $\san K$, we have $\F D_{\C B}\bigvee R_{\C B}(\delta) \vDash_{\san L} \F D_{\C B}\psi$. (In the special case where $R_{\C B}(s) = R_{\C B}(\delta) = \varnothing$, observe that $\nabla_{\C B} R_{\C B}(\delta) = \F D_{\C B} \bot$. As $\vDash_{\san L} \bot \to \psi$, we have $\vDash_{\san L} \F D_{\C B}(\bot \to \psi)$. According to Axiom $\san K$, we have that $\vDash_{\san L} \F D_{\C B}\bot \to \F D_{\C B}\psi$.) Therefore, $M, s \vDash \F D_{\C B} \psi$.
\end{itemize}
Therefore, either $\delta \vDash_{\san L} \phi$ or $\delta \vDash_{\san L} \neg \phi$.
\end{proof}

\Modelsuniquecanon*
\begin{proof}
Let's prove by induction on $k$.
\begin{itemize}
\item \F{Base case} ($k = 0$): Let $\delta$ be 
\[\bigwedge_{p \in P \text{ and } M, s \vDash p} p \land \bigwedge_{p \in P \text{ and } M, s \not \vDash p} \neg p\]
Obviously, this $\delta$ is unique with regard to $(M, s)$.

\item \F{Inductive step} ($k > 0$): Let $w(\delta)$ be 
\[\bigwedge_{p \in P \text{ and } M, s \vDash p} p \land \bigwedge_{p \in P \text{ and } M, s \not \vDash p} \neg p\]
For any $\C B \in \PPA$ and $t \in R_{\C B}(s)$, by the hypothesis, there exists a unique $\eta \in D^P_{k - 1}$ such that $M, t \vDash \eta$. Let $\Phi_{\C B}(\delta)$ be the set of these unique $\eta$'s in $D^P_{k - 1}$. In the special case where $R_{\C B}(s) = \varnothing$, let $\Phi_{\C B}(\delta) = \varnothing$. Then, let
\[\delta = w(\delta) \land \bigwedge_{\C B \in \PPA} \nabla_{\C B} \Phi_{\C B}(\delta)\]
Then, $M, s \vDash \delta$. Note that $w(\delta)$ is unique and by the inductive hypothesis, $\Phi_{\C B}(\delta)$ is unique with regard to $R_{\C B}(s)$. Therefore, $\delta$ is unique.
\end{itemize}
\end{proof}

\Uniquecanon*
\begin{proof}
The proof consists of the following two parts:
\begin{itemize}
\item \F{Existence}: Let $\Phi = \{\delta \mid \delta \vDash_{\san L} \phi \text{ and } \delta \in D^P_k(\san L)\}$. On the one hand, it is obvious that $\bigvee \Phi \vDash_{\san L} \phi$. On the other hand, for any model $(M, s)$, if $M, s \vDash \phi$, by Proposition \ref{prop: models-unique-canon}, there exists $\delta_0 \in D^P_k(L)$ such that $M, s \vDash \delta_0$. By Proposition \ref{prop: canon-dichotomy}, for any $\delta' \in D^P_k(\san L) - \Phi$, $\delta' \vDash_{\san L} \neg \phi$. So, $\delta_0 \in \Phi$. So, $M, s \vDash \bigvee \Phi$. Therefore, $\phi \equiv_{\san L} \bigvee \Phi$.

\item \F{Uniqueness}: Suppose there is another $\Phi'$ such that $\bigvee \Phi' \equiv_{\san L} \phi$. Let $\delta_1 \in \Phi' - \Phi$. By the construction of $\Phi$, $\delta_1 \vDash_{\san L} \neg \phi$. That means there is a model $(M_1, s_1)$ of $\delta_1$ such that $M_1, s_1 \vDash \bigvee \Phi'$ and $M_1, s_1 \vDash \neg \phi$ and then $M_1, s_1 \vDash \neg \phi \land \phi$. This is contradictory. Conversely, let $\delta_2 \in \Phi - \Phi'$ and $(M_2, s_2)$ be any model of $\delta_2$. Because $\delta_2 \in D^P_k(\san L)$ and by Proposition \ref{prop: models-unique-canon}, $\delta_2$ is the unique element of $D^P_k(\san L)$ such that $M_2, s_2 \vDash \delta_2$. So $M_2, s_2 \not \vDash \bigvee \Phi'$. Note that $M_2, s_2 \vDash \phi$. This is contradictory to ``$\bigvee \Phi' \equiv_{\san L} \phi$''. Therefore, $\Phi = \Phi'$.
\end{itemize}
Hence, this proposition is proved.

\end{proof}

\BCinclude*
\begin{proof}
For every model $(M, s)$, observe that
\[R_{\C C}(s) = \bigcap_{i \in \C C}R_i(s) \subseteq \bigcap_{i \in \C B}R_i(s) = R_{\C B}(s)\]
So, item 1 holds.

Suppose $\delta \in D^P_k$ such that $M, s \vDash \delta$ and $k \geq 1$. For every $\eta \in R_{\C C}(\delta)$, by the semantics, there is $t \in R_{\C C}(s)$ such that $M, t \vDash \eta$. As $R_{\C C}(s) \subseteq R_{\C B}(s)$, $t \in R_{\C B}(s)$. Then, there is $\eta_0 \in R_{\C B}(\delta) \subseteq D^P_{k - 1}$ such that $M, t \vDash \eta_0$. By Proposition \ref{prop: models-unique-canon}, $\eta = \eta_0$. Therefore, $R_\C{C}(\delta) \subseteq R_{\C B}(\delta)$. 
\end{proof}

\Distcutinto*
\begin{proof}
Let's prove by induction on $l$.
\begin{itemize}
\item \F{Base}: When $l = 1$, $R_{\C B}(\delta^{\downarrow}) = (R_{\C B}(\delta))^{\downarrow}$ by Definition \ref{defn: uparrow}.

\item \F{inductive step}: By the hypothesis, $R_{\C B}(\delta^{\downarrow l - 1}) = R_{\C B}(\delta)^{\downarrow l - 1}$. Then, 
\[R_{\C B}(\delta^{\downarrow l}) = R_{\C B}((\delta^{\downarrow l - 1})^\downarrow) = R_{\C B}(\delta^{\downarrow l - 1})^{\downarrow} = (R_{\C B}(\delta)^{\downarrow l - 1})^{\downarrow} =  (R_{\C B}(\delta))^{\downarrow l}.\]
\end{itemize}
Therefore, for any $l < k$ and $\C B \in \PPA$, $R_{\C B}(\delta^{\downarrow l}) = R_{\C B}(\delta)^{\downarrow l}$.
\end{proof}

\Uparrowproperty*
\begin{proof}
For item 1: Let's first prove that $\delta \vDash \delta^\downarrow$ by induction on $k$. When $k \leq 1$, then $w(\delta) = \delta^\downarrow$, so $\delta \vDash \delta^\downarrow$. Suppose $k > 1$ and $(M, s)$ be any model of $\delta$. Observe that $w(\delta^\downarrow) = w(\delta)$, so $M, s \vDash w(\delta^\downarrow)$. Let $\C B \in \PPA$. 
\begin{itemize}
\item For every $t \in R_{\C B}(s)$, since $M, s \vDash \nabla_{\C B}R_{\C B}(\delta)$, there is $\eta \in R_{\C B}(\delta) \subseteq D^P_{k - 1}(\san L)$ such that $M, t \vDash \eta$. Note that, $\eta^\downarrow \in R_{\C B}(\delta)^\downarrow = R_{\C B}(\delta^\downarrow)$. By the inductive hypothesis on $k$, $\eta \vDash \eta^\downarrow$. Therefore, $M, s \vDash \F D_{\C B} \bigvee R_{\C B}(\delta^\downarrow)$.

\item For every $\gamma \in R_{\C B}(\delta^\downarrow) = R_{\C B}(\delta)^\downarrow$, there is $\eta \in R_{\C B}(\delta) \subseteq D^P_{k - 1}(\san L)$ such that $\eta^\downarrow = \gamma$. By the inductive hypothesis on $k$, $\eta \vDash \gamma$. As $M, s \vDash \nabla_{\C B}R_{\C B}(\delta)$, there is $t \in R_{\C B}(s)$ such that $M, t \vDash \eta$. So, $M, t \vDash \gamma$. Therefore, $M, s \vDash \bigwedge \hat{\F D}_{\C B} R_{\C B}(\delta^\downarrow)$.
\end{itemize}
Therefore, $\delta \vDash \delta^\downarrow$. For every $l > 0$, by the definition, $\delta^{\downarrow l} = (\delta^{\downarrow l - 1})^\downarrow$, so $\delta^{\downarrow l - 1} \vDash \delta^{\downarrow l}$. Inductively, we have that $\delta \vDash \delta^{\downarrow l}$.

For item 2: Let's first prove ``$\delta^{\downarrow l} \in D^P_{k - l}$'' by induction on $l$. Suppose item 2 holds for every $l \in \{0, \dots, k - 1\}$. By Proposition \ref{prop: dist-cut-into}, 
\[
\delta^{\downarrow l} = (\delta^{\downarrow l - 1})^\downarrow
 						= w(\delta^{\downarrow l - 1}) \land \bigwedge_{\C B \in \PPA}\nabla_{\C B} (R_{\C B}(\delta^{\downarrow l - 1}))^\downarrow
\]
By the inductive hypothesis, $R_{\C B}(\delta^{\downarrow l - 1}) \subseteq D^P_{k - 1 - l}$. Therefore, $\delta^{\downarrow l} \in D^P_{k - l}$. Since $\delta$ is $\san L$-satisfiable and by item 1, we have $\delta^{\downarrow l} \in D^P_{k - l}(\san L)$.

For item 3: Let's first prove ``$\delta^{\uparrow l} \in D^P_l$'' by induction on $l$. If $l = 0$, by Definition \ref{defn: uparrow}, $\delta^{\uparrow 0} = w(\delta) \in D^P_0$. Suppose $l > 0$. By the inductive hypothesis, for every $\C B \in \PPA$, $R_{\C B}(\delta)^{\uparrow (l - 1)} \subseteq D^P_{l - 1}$. Then, $\delta^{\uparrow l} \in D^P_l$. By item 2, $\delta^{\downarrow (k - l)} \in D^P_{k - (k - l)}(\san L) = D^P_l(\san L) \subseteq D^P_l$. By Proposition \ref{prop: models-unique-canon}, $\delta^{\uparrow l} = \delta^{\downarrow (k - l)}$. Therefore, $\delta^{\uparrow l} \in D^P_l(\san L)$

For item 4: This is a corollary of items 2 and 3. Observe that $dep(\gamma^{\downarrow h}) \leq k$. By Definition \ref{defn: uparrow}, $(\gamma^{\downarrow h})^{\uparrow k} = \gamma^{\downarrow h}$. Therefore, $\gamma^{\downarrow h} = \gamma^{\uparrow k} = (\gamma^{\downarrow h})^{\uparrow k}$.

For item 5: By item 3, it is easy to see that $(\delta^{\uparrow k_1})^{\uparrow l}, (\delta^{\uparrow k_1})^{\uparrow l} \in D^P_l(\san L)$. By Proposition \ref{prop: models-unique-canon}, $(\delta^{\uparrow k_1})^{\uparrow l} = (\delta^{\uparrow k_2})^{\uparrow l}$.
\end{proof}

\Deltapimply*
\begin{proof}
Let $(M, s)$ be a model of $\delta$. Let's prove by induction on $k$.
\begin{itemize}
\item \F{Base} ($k = 0$): $\delta$ is a minterm. $\delta^p$ is the remaining part of $\delta$ after deleting the literal $p$ or $\neg p$, so $M, s \vDash \delta^p$.

\item \F{inductive step} ($k > 0$): Similar to the inductive base, $M, s \vDash w(\delta^p)$. Let $\C B \in \PPA$.
\begin{itemize}
\item For every $\gamma \in R_{\C B}(\delta^p)$, by the definition, there is $\eta \in R_{\C B}(\delta) \subseteq D^P_{k - 1}$ such that $\eta^p = \gamma$. By the hypothesis, $\eta \vDash \eta^p$. Because $M, s \vDash \delta$, there is $t \in R_{\C B}(s)$ such that $M, t \vDash \eta$. So, $M, t \vDash \eta^p$. Therefore, $M, s \vDash \bigwedge \hat{\F{D}}_{\C B}R_{\C B}(\delta^p)$.

\item For every $t \in R_{\C B}(s)$, because $M, s \vDash \delta$, there is $\eta \in R_{\C B}(\delta)  \subseteq D^P_{k - 1}$ such that $M, t \vDash \eta$. By the hypothesis, $\eta \vDash \eta^p$. So, $M, t \vDash \eta^p$. Therefore, $M, s \vDash \F{D}_{\C B}\bigvee R_{\C B}(\delta^p)$.
\end{itemize}
Therefore, $M, s \vDash w(\delta^p) \land \bigwedge_{\C B \in \PPA} \nabla_{\C B} R_{\C B}(\delta^p)$.
\end{itemize}
Hence, $\delta \vDash \delta^p$.
\end{proof}

\Reflexive*
\begin{proof}
Since $\delta \in D^P_k(\san T_n \F D)$, there is a $\san T_n \F D$-model $(M, s)$ such that $M, s \vDash \delta$. 

By Proposition \ref{prop: uparrow-property}, we have $M, s \vDash \delta^{\downarrow (l - 1)}$ and $M, s \vDash \delta^{\downarrow l}$. Since $(M, s)$ is reflexive, then $s \in R_{\C B}(s)$ and there exists $\eta \in R_{\C B}(\delta^{\downarrow (l - 1)})$ such that $M, s \vDash \eta$. Note that both $\delta^{\downarrow l}$ and $\eta$ belong to $D^P_{k - l}$. By Proposition \ref{prop: models-unique-canon}, they are the unique canonical formula of $D^P_{k - l}$ for $(M, s)$, that is, $\delta^{\downarrow l} = \eta$. Therefore, $\delta^{\downarrow l} \in R_{\C B}(\delta^{\downarrow (l-1)})$. 

When $l \neq k$, we have $\delta^{\downarrow ((k - l) + 1)} \in R_{\C B}(\delta^{\downarrow (k - l)})$ and by Proposition \ref{prop: uparrow-property}, $\delta^{\uparrow (l - 1)} \in R_{\C B}(\delta^{\uparrow l})$; note that $\delta^{\uparrow (k - 1)} = \delta^{\downarrow} \in R_{\C B}(\delta) = R_{\C B}(\delta^{\uparrow k})$. Therefore, $\delta^{\uparrow (l - 1)} \in R_{\C B}(\delta^{\uparrow l})$.
\end{proof}


\Sfivecounter*
\begin{proof}
Let's construct $\delta^{k}_{cou}$ by an $\san{S5}_n \F D$-model $(M_k, s_k)$ and then let $\delta^{k}_{cou}$ is the unique member of $D^P_{k}(\san{S5}_n \F D)$ such that $M_k, s_k \vDash \delta^{k}_{cou}$. We write ``$\{x, y\} \in R_{\C B}$'' instead of ``$(x, y), (y, x) \in R_{\C B}$''. Consider the model $(M_2, s_2) = (S, R, V, s_2)$ satisfying the following:
\begin{itemize}
\item $S = \{s_2, t_1, t_2, t_3, t_4, v, w\}$.

\item For every $x, y \in \{s_2, t_1, t_2, t_3, t_4\}$, $\{x, y\} \in R_1$; $\{t_1, t_3\}, \{t_2, t_4\} \in R_2$; $\{t_1, w\}, \{t_4, v\} \in R_3$. For every $x \in S$, $(x, x) \in R_{\C A}$. These are all the accessibility relations.

\item $V$ is given by the following:
\begin{itemize}
\item $M_2, x \vDash \neg p \land \neg q$, for $x \in \{t_1, s_2, v\}$,

\item $M_2, t_2 \vDash p \land \neg q$,

\item $M_2, x \vDash \neg p \land q$, for $x \in \{t_3, w\}$,

\item $M_2, t_4 \vDash p \land q$.
\end{itemize}

\end{itemize}
Thus, $(M_2, s_2)$ is an $\san{S5}_n \F D$-model. Let $\delta^2_{cou}$ be the unique member of $D^P_2(\san{S5}_n \F D)$ such that $M_2, s_2 \vDash \delta^{2}_{cou}$. For $x \in \{1, 2, 3, 4\}$, let $\eta_x$ be the unique member of $R_1(\delta^2_{cou}) \subseteq D^P_1(\san{S5}_n \F D)$ such that $M, t_x \vDash \eta_x$. Let $\eta_0 = (\delta^2_{cou})^{\downarrow}$. Note that $\eta_0, \eta_1, \eta_2, \eta_3, \eta_4$ are mutully inequivalent.

Let's consider the model $(M', s') = (S', R', V', s')$ of $(\delta^2_{cou})^p$. $S'$ is a copy of $S$ and $s'$ is a copy of $s_2$. Every world $x' \in S'$ inherits the value of $x$ on $q$. Every accessibility relation is also inherited except that
\begin{itemize}
\item $\{t'_1, t'_4\} \in R'_2$ but $\{t'_1, t'_3\} \notin R'_2$,

\item $\{t'_2, t'_3\} \in R'_2$ but $\{t'_2, t'_4\} \notin R'_2$,
\end{itemize}
However, we can see that for every $x \in \{1, 2, 3, 4\}$, $M', t'_x \vDash \eta_x^p$ and $M', s' \vDash \eta_0^p$. The key point is that, for example, $R_{\{1, 2\}}(\eta_1) = \{\neg q, q\}$ but both $M', t'_3 \vDash q$ and $M', t'_4 \vDash q$. Therefore, we have $M', s' \vDash (\delta^2_{cou})^p$. Note that $\eta_0^p, \eta_1^p, \eta_2^p, \eta_3^p, \eta_4^p$ are still mutully inequivalent.

Suppose there exists $\san{K45}_n \F D$-model $(M, s)$ so that both $(M, s) \vDash \delta^2_{cou}$ and $(M, s) \bisim^{col}_p (M', s')$. Since $M, s \vDash \delta^2_\text{cou}$, define
\[H_x = \{t \in R_1(s) \mid M, t \vDash \eta_x\}\]
for $x \in \{0, 1, 2, 3, 4\}$. Since $\eta_0, \eta_1, \eta_2, \eta_3, \eta_4$ are mutually inequivalent and by Proposition \ref{prop: models-unique-canon}, $H_0, H_1, H_2, H_3, H_4$ are mutually disjoint.

Let $\rho$ be the relation that witnesses $(M, s) \bisim^{col}_p (M', s')$. Observe that, for any $x, y \in \{0, 1, 2, 3, 4\}$, if $x \neq y$, then $\eta_x \vDash \eta_x^p$ but $\eta_x \vDash \neg \eta_y^p$. Observe that, for any world $z \in R_1(s)$ and $z' \in S'$, if $(z, z') \in \rho$, then $M, z \vDash \eta_x \Longleftrightarrow M', z' \vDash \eta_x^p$. Therefore, we have
\begin{itemize}
\item $z \in H_0 \Longleftrightarrow (z, s') \in \rho$,

\item $z \in H_x \Longleftrightarrow (z, t'_x) \in \rho$.
\end{itemize}

Let $t$ be any world in $H_2$. Thus, $M, t \vDash \eta_2$ and $(t, t'_2) \in \rho$. Note that $p \land q \in R_{\{1, 2\}}(\eta_2)$. There is $u \in S$ such that $M, u \vDash p \land q$ and $u \in R_{\{1, 2\}}(t)$. Note that $u \in R_{\{1, 2\}}(t) \subseteq R_1(t)$. Since $(M, s)$ is a $\san{K45}_n$-model, the relation $R_1$ is transitive. So, $u \in R_1(s)$. Every world in $H_0 \cup H_1 \cup H_2 \cup H_3$ does not satisfy $p \land q$. Therefore, $u \in H_4$. Since $(t, t'_2) \in \rho$, by the Forth condition of $p$-bisimulation, there is $u' \in R'_{\{1, 2\}}(t'_2)$ such that $(u, u') \in \rho$. However, the only elements of $R'_{\{1, 2\}}(t'_2)$ are $t'_2$ and $t'_3$. So, $u'$ is either $t'_2$ or $t'_3$, and then $u \in H_2 \cup H_3$. This is contradictory to the fact that $u \in H_4$, because $H_0, H_1, H_2, H_3, H_4$ are mutually disjoint.

Hence, ``$(M, s) \vDash \delta^2_{cou}$'' and ``$(M, s) \bisim^{col}_p (M', s')$'' cannot hold simultaneously for any $\san{K45}_n$-model $(M, s)$.

When $k > 2$, let's add the following new worlds
\[s_3, \dots, s_k\]
to $S$ such that for $h \in \{3, \dots, k\}$
\begin{itemize}
\item $\{s_{h - 1}, s_h\} \in R_1$ if $h$ is even, while $\{s_{h - 1}, s_h\} \in R_2$ if $h$ is odd;

\item $p, q \notin V(s_h)$ if $h$ is even, while $p, q \in V(s_h)$ if $h$ is odd;

\item $(s_h, s_h) \in R_{\C A}$.
\end{itemize}
Let $M_k$ be the new model and $s_k$ be the new actual world. let $\delta^{k}_{cou}$ is the unique member of $D^P_{k}(\san{S5}_n \F D)$ such that $M_k, s_k \vDash \delta^{k}_{cou}$. We can similarly prove the conclusion for $\delta^{k}_{cou}$.

Since $\delta^k_{cou}$ breaches the Back condition of Definition \ref{defn: forgetting}, we have that $(\delta^k_{cou})^p \not \equiv_{\san L} dforget_{\san L}(\delta^k_{cou}, p)$, for every $k \geq 2$ and  $\san L$ among $\san{K45}_n \F D$, $\san{KD45}_n \F D$, and $\san{S5}_n \F D$. 
\end{proof}

\Morecounter*
\begin{proof}
Suppose there is $\san{S5}_n \F{D}$-satifiable $\gamma$ such that $\gamma^{\uparrow 2} = \delta^2_{cou}$ and $\gamma = \xi$. Let $(M, s) = (S, R, V, s)$ be any $\san{K45}_n \F{D}$-model of $\gamma$. Since $\gamma^{\uparrow 2} = \delta^2_{cou}$, then $M, s \vDash \delta^{2}_{cou}$. The formulas $\eta_0, \eta_1, \eta_2, \eta_3, \eta_4$ are the same as in Proposition \ref{prop: S5counter} and we also partition $R_1(s)$ into $H_0, H_1, H_2, H_3, H_4$. Let $t \in H_2$ and $\eta_t$ is the member of $R_1(\gamma)$ such that $M, t \vDash \eta_t$. Note that $\eta_t^{\uparrow} = \eta_2$. Because  $w(\eta_4) \in R_{\{1, 2\}}(\eta_2)$, then $R_{\{1, 2\}}(t) \cap H_4 \neq \varnothing$. Let $u \in R_{\{1, 2\}}(t) \cap H_4$ and $\eta_u$ is the member of $R_1(\gamma)$ such that $M, u \vDash \eta_u$. Since $u \in R_{\{1, 2\}}(t)$, we have $\eta_u^{\downarrow} \in R_{\{1, 2\}}(\eta_t)$; since $u \in H_4$, we have $\eta_u^{\uparrow} = \eta_4$. Since $\gamma = \xi$ and $M', s' \vDash \xi^p$, then we have $M', t'_2 \vDash \eta_t^p$. Since $\eta_u^{\downarrow} \in R_{\{1, 2\}}(\eta_t)$ and $R_{\{1, 2\}}(t'_2) = \{t'_2, t'_3\}$, either $M', t'_2 \vDash \eta_u^p$ or $M', t'_3 \vDash \eta_u^p$. Then, either $M', t'_2 \vDash \eta_4^p$ or $M', t'_3 \vDash \eta_4^p$. This is contradictory to the proof of Proposition \ref{prop: S5counter}. Therefore, $\gamma \neq \xi$.
\end{proof}

\begin{fact}\label{fact: shareSuccessors}
Let $M = (S, R, V)$ be a transitive and Euclidean model. For every $t_1, t_2 \in S$ and $\C B \in \PPA$, if $t_2 \in R_{\C B}(t_1)$, then $R_{\C B}(t_1) = R_{\C B}(t_2)$.
\end{fact}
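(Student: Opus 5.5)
We need to prove: in a transitive Euclidean model, if $t_2 \in R_{\C B}(t_1)$ then $R_{\C B}(t_1) = R_{\C B}(t_2)$, where $R_{\C B} = \bigcap_{i \in \C B} R_i$.

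The plan is to first note that the intersection $R_{\C B}$ is itself transitive and Euclidean, and then derive the equality of successor sets from those two properties directly. Both properties are universal Horn conditions, so they are preserved under intersection of relations.

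\textbf{Transitivity of $R_{\C B}$.} Suppose $(x,y),(y,z) \in R_{\C B}$. Then for each $i \in \C B$ we have $(x,y),(y,z) \in R_i$. Transitivity of $R_i$ gives $(x,z) \in R_i$ for every $i \in \C B$, hence $(x,z) \in R_{\C B}$.

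\textbf{Euclideanness of $R_{\C B}$.} Suppose $(x,y),(x,z) \in R_{\C B}$. Then for each $i \in \C B$ we have $(x,y),(x,z) \in R_i$. Euclideanness of $R_i$ gives $(y,z) \in R_i$ for every $i \in \C B$, hence $(y,z) \in R_{\C B}$.

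\textbf{The two inclusions.} Now fix $t_2 \in R_{\C B}(t_1)$.

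For $R_{\C B}(t_2) \subseteq R_{\C B}(t_1)$: take $u \in R_{\C B}(t_2)$. From $(t_1,t_2),(t_2,u) \in R_{\C B}$, transitivity of $R_{\C B}$ gives $u \in R_{\C B}(t_1)$.

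For $R_{\C B}(t_1) \subseteq R_{\C B}(t_2)$: take $u \in R_{\C B}(t_1)$. From $(t_1,t_2),(t_1,u) \in R_{\C B}$, Euclideanness of $R_{\C B}$ gives $(t_2,u) \in R_{\C B}$, i.e. $u \in R_{\C B}(t_2)$.

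There is no real obstacle here. The only point needing care is to argue through the intersected relation, rather than assuming $R_{\C B}$ inherits the axioms automatically. The witnesses $y,z$ lie in every $R_i$ with $i \in \C B$, so applying each agent's property pointwise suffices.
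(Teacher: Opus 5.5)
Your proposal is correct and follows essentially the same route as the paper: two inclusions, one from transitivity and one from Euclideanness of $R_{\C B}$. The only difference is that you explicitly check that $R_{\C B} = \bigcap_{i \in \C B} R_i$ inherits transitivity and Euclideanness from the individual $R_i$, a step the paper asserts without justification.
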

\begin{proof}
For every $u \in R_{\C B}(t_2)$, since $(t_1, t_2), (t_2, u) \in R_{B}$ and $R_{\C B}$ is transitive, then $u \in R_{\C B}(t_1)$. $R_{\C B}(t_1) \supseteq R_{\C B}(t_2)$. For every $u \in R_{\C B}(t_1)$, since $(t_1, u), (t_1, t_2) \in R_{B}$ and $R_{\C B}$ is Euclidean, then $u \in R_{\C B}(t_2)$. So, $R_{\C B}(t_1) \subseteq R_{\C B}(t_2)$.

Therefore, $R_{\C B}(t_1) = R_{\C B}(t_2)$.
\end{proof}

\IdenticalChildren*
\begin{proof}
Let $(M, s)$ be any $\san{K45}_n\F D$-model of $\delta$. Then, there is $t \in R_{\C B}(s)$ such that $M, t \vDash \eta$. By Fact \ref{fact: shareSuccessors}, $R_{\C B}(s) = R_{\C B}(t)$. Note that $R_{\C B}(s)$ is $R_{\C B}(\delta)$-complete and $R_{\C B}(t)$ is $R_{\C B}(\eta)$-complete. Therefore, $R_{\C B}(s)$ is both $R_{\C B}(\delta)$-complete and $R_{\C B}(\eta)$-complete. Note that $R_{\C B}(\delta) \subseteq D^{P}_k$ and $R_{\C B}(\eta) \subseteq D^{P}_{k - 1}$. By Proposition \ref{prop: uparrow-property}, we have that $R_{\C B}(\delta)^{\downarrow l} \subseteq D^{P}_{k - l} $, $R_{\C B}(\eta)^{\downarrow l - 1} \subseteq D^{P}_{k - l}$, and $R_{\C B}(s)$ is both $R_{\C B}(\delta)^{\downarrow l}$-complete and $R_{\C B}(\eta)^{\downarrow l - 1}$-complete. Observe that by Proposition \ref{prop: models-unique-canon}, there is a unique subset $\Phi$ of $D^P_{k - l}$ that $R_{\C B}(s)$ is $\Phi$-complete. Hence, $R_{\C B}(\delta)^{\downarrow l} = R_{\C B}(\eta)^{\downarrow l - 1}$.
\end{proof}

\Brothers*
\begin{proof}
Let $(M, s)$ be any $\san L$-model of $\delta$. There is $t \in R_{\C B}(s)$ and $t_0 \in R_{\C C}(t)$ such that $M, t \vDash \eta$ and $M, t_0 \vDash \chi$. Note that $R_{\C C}(t) \subseteq R_{\C C_1}(t)$. Since $R_{\C C_1}$ is transitive, we have $t_0 \in R_{\C C_1}(s)$. Then, there is $\eta_0 \in R_{\C C_1}(\delta)$ such that $M, t_0 \vDash \eta_0$.
\begin{itemize}
\item Note that $R_{\C C}(\eta) \subseteq R_{\C C_1}(\eta)$ and by Proposition \ref{prop: dist-identical-son} (the Identical Successors Property), $R_{\C C_1}(\eta) = R_{\C C_1}(\delta)^\downarrow$. So $\chi \in R_{\C C_1}(\delta)^\downarrow$. Since $M, t_0 \vDash \chi$ and $M, t_0 \vDash \eta_0$, and by Proposition \ref{prop: models-unique-canon}, we have that $\chi = \eta_0^\downarrow$.

\item For every $\C D \in \PP(\C C_2)$, $t_0 \in R_{\C C}(t) \subseteq R_{\C C_2}(t) \subseteq R_{\C D}(t)$. Since $R_{\C D}$ is both transitive and Euclidean, then by Fact \ref{fact: shareSuccessors}, we have $R_{\C D}(t) = R_{\C D}(t_0)$. Then, $R_{\C D}(t)$ is $R_{\C D}(\eta)$-complete while $R_{\C D}(t_0)$ is $R_{\C D}(\eta_0)$-complete. Note that $\eta, \eta_0 \in D^P_{k - 1}$. By Proposition \ref{prop: models-unique-canon}, they are the unique subset of $D^P_{k - 2}$. Therefore, $R_{\C D}(\eta) = R_{\C D}(\eta_0)$
\end{itemize}
Hence, the proposition is proved.
\end{proof}

\QeTranEu*
\begin{proof}
``$1 \Rightarrow 2$''. Let $i \in \C A$ and $u, v, w$ be any worlds in $M$. 
\begin{itemize}
\item Suppose $(u, v), (v, w) \in R_i$. Then, $v, w \in \RT(u, R_i \cup R_i^{- 1})$. Because $\RT(u, R_i \cup R_i^{- 1})$ is quasi-$i$-equivalent, we have $R_i(u) = R_i(v)$. So, $w \in R_i(u)$, namely, $(u, w) \in R_i$.

\item Suppose $(u, v), (u, w) \in R_i$. Then, $v, w \in \RT(u, R_i \cup R_i^{- 1})$. Because $\RT(u, R_i \cup R_i^{- 1})$ is quasi-$i$-equivalent, we have $R_i(u) = R_i(v)$. Since $w \in R_i(u)$, then, $w \in R_i(v)$, namely, $(v, w) \in R_i$
\end{itemize}
Hence, $M$ is transitive and Euclidean.

``$2 \Rightarrow 1$''. Let $t$ be any world in $M$ and $i \in \C A$. Note that for every $u \in \RT(t, R_i \cup R_i^{- 1})$, $\RT(t, R_i \cup R_i^{- 1}) = \RT(u, R_i \cup R_i^{- 1})$. Let $t_1, t_2$ be any members of $\RT(t, R_i \cup R_i^{- 1})$. Then, there is a sequence 
\[t_1 = u_0, u_1, \dots u_{m_1}, u_m = t_2\]
such that for every $j \in \{0, \dots, m - 1\}$, either $(u_j, u_{j + 1}) \in R_i$ or $(u_{j + 1}, u_j) \in R_i$. Let's prove by induction on $m$:
\begin{itemize}
\item \F{Base case} ($m = 1$): Since $R_i$ is both transitive and Euclidean, by Fact \ref{fact: shareSuccessors}, we have $R_i(u_0) = R_i(u_1)$.

\item \F{Inductive step} ($m > 1$): Since $R_i$ is both transitive and Euclidean, then by Fact \ref{fact: shareSuccessors}, we have $R_i(u_{m - 1}) = R_i(u_m)$. By the hypoethesis, $R_i(u_0) = R_i(u_{m - 1})$. So, $R_i(u_0) = R_i(u_m)$.
\end{itemize}
Therefore, $R_i(t_1) = R_i(t_2)$. Hence, $\RT(t, R_i \cup R_i^{- 1})$ is quasi-$i$-equivalent.

``$3 \Rightarrow 4$''. By ``$1 \Rightarrow 2$'', $M$ is transitive and Euclidean. For every $t$ in $M$ and every $i \in \C A$, since $\RT(t, R_i \cup R_i^{- 1})$ is $i$-equivalent, then $(t, t) \in R_i$. So, $M$ is reflexive.

``$4 \Rightarrow 3$''. For every $t$ in $M$ and every $i \in \C A$, by ``$2 \Rightarrow 1$'', $\RT(t, R_i \cup R_i^{- 1})$ is quasi-$i$-equivalent. Since $M$ is reflexive, $(t, t) \in R_i$. Then, $\RT(t, R_i \cup R_i^{- 1})$ is $i$-equivalent.
\end{proof}

\begin{prop}
Let $M$ be any Kripke model, $\C B \in \PPA$, and $t$ be any world in $M$. 
\begin{itemize}
\item If for every $i \in \C B$, $\RT(t, R_i \cup R_i^{- 1})$ is quasi-$i$-equivalent, then $\RT(t, R_{\C B} \cup R_{\C B}^{- 1})$ is quasi-$\C B$-equivalent.

\item If for every $i \in \C B$, $\RT(t, R_i \cup R_i^{- 1})$ is $i$-equivalent, then $\RT(t, R_{\C B} \cup R_{\C B}^{- 1})$ is $\C B$-equivalent.
\end{itemize}
\end{prop}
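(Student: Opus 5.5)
The plan is to verify the clauses of Definition \ref{defn: equivalent} for $W = \RT(t, R_{\C B} \cup R_{\C B}^{- 1})$ directly, reducing everything to the single-agent hypothesis through one observation. Since $R_{\C B} = \bigcap_{i \in \C B} R_i$, every $R_{\C B}$-edge is an $R_i$-edge for each $i \in \C B$. Hence every finite path of $(R_{\C B} \cup R_{\C B}^{- 1})$-steps is also a path of $(R_i \cup R_i^{- 1})$-steps. Consequently
\[W \subseteq W_i := \RT(t, R_i \cup R_i^{- 1}) \quad \text{for every } i \in \C B.\]

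\emph{First clause.} We need $W = \RT(u, R_{\C B} \cup R_{\C B}^{- 1})$ for every $u \in W$. This holds because $\RT(\cdot, R_x \cup R_x^{- 1})$ is the connected component under a symmetric relation. If $u \in W$, there is a path from $t$ to $u$, and its reversal is again a path of the same kind. So $t \in \RT(u, R_{\C B} \cup R_{\C B}^{- 1})$. Concatenating paths then gives both inclusions, by a routine induction on path length.

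\emph{Second clause.} Take $u_1, u_2 \in W$. By the inclusion above, $u_1, u_2 \in W_i$ for each $i \in \C B$. By the hypothesis that $W_i$ is quasi-$i$-equivalent, $R_i(u_1) = R_i(u_2)$ for every $i \in \C B$. Intersecting over $i \in \C B$ gives
\[R_{\C B}(u_1) = \bigcap_{i \in \C B} R_i(u_1) = \bigcap_{i \in \C B} R_i(u_2) = R_{\C B}(u_2).\]
This establishes the first item.

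\emph{Reflexive clause.} For the second item, the two clauses above still apply, since $i$-equivalence implies quasi-$i$-equivalence. It remains to show $(u, u) \in R_{\C B}$ for every $u \in W$. Since $u \in W_i$ and $W_i$ is $i$-equivalent, $(u, u) \in R_i$ for every $i \in \C B$. Hence $(u, u) \in \bigcap_{i \in \C B} R_i = R_{\C B}$.

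The only step requiring any care is the inclusion $W \subseteq W_i$, which is what lets us apply the agent-wise hypotheses at points of $W$ rather than only at $t$. Everything else is unwinding definitions, so I expect no real obstacle.
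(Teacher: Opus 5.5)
Your proof is correct and takes essentially the same approach as the paper. Both use $R_{\C B} \subseteq R_i$ to get $\RT(t, R_{\C B} \cup R_{\C B}^{-1}) \subseteq \RT(t, R_i \cup R_i^{-1})$, then use the agent-wise hypothesis to equate the $R_i$-successor sets, intersect over $i \in \C B$, and intersect the reflexive loops for the second item; the only cosmetic difference is that you compare two arbitrary points $u_1, u_2$ of the component, whereas the paper compares each point $u$ with $t$.
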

\begin{proof}
Let $u \in \RT(t, R_{\C B} \cup R_{\C B}^{- 1})$. It is easy to see that $\RT(t, R_{\C B} \cup R_{\C B}^{- 1}) = \RT(u, R_{\C B} \cup R_{\C B}^{- 1})$. For $i \in \C B$, since $R_{\C B} \subseteq R_i$, then we have $\RT(t, R_{\C B} \cup R_{\C B}^{- 1}) \subseteq \RT(t, R_i \cup R_i^{- 1})$; since $\RT(t, R_i \cup R_i^{- 1})$ is quasi-$i$-equivalent, then $R_i(t) = R_i(u)$. Therefore, $R_{\C B}(t) = \bigcap_{i \in \C B}R_i(t) = \bigcap_{i \in \C B}R_i(u) = R_{\C B}(u)$. Therefore, $\RT(t, R_{\C B} \cup R_{\C B}^{- 1})$ is quasi-$\C B$-equivalent.

Furthermore, if for every $i \in \C B$, $\RT(t, R_i \cup R_i^{- 1})$ is $i$-equivalent, then $(u, u) \in R_i$ for each $i \in \C B$. Then, $(u, u) \in R_{\C B}$. Then, $\RT(t, R_{\C B} \cup R_{\C B}^{- 1})$ is $\C B$-equivalent.
\end{proof}

\begin{fact}\label{fact: merge}
Given a Kripke model $M$ and any quasi-$i$-equivalent sets $W_1, W_2, \dots, W_m$, 
\begin{enumerate}
\item for every $u_1, u_2 \in W_1$, if $(u_1, u_2) \notin R_i$ before merging, then $(u_1, u_2) \notin R_i$ after merging;

\item if there is at least one world $u_0 \in W_1 \cup W_2 \cup \dots \cup W_m$ such that $(u_0, u_0) \in R_i$, and we merge them with the relation $R_i$, then $W_1 \cup W_2 \cup \dots \cup W_m$ becomes quasi-$i$-equivalent;

\item if they are $i$-equivalent and we merge them with the relation $R_i$, then $W_1 \cup W_2 \cup \dots \cup W_m$ becomes $i$-equivalent.
\end{enumerate}
\end{fact}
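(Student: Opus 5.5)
The plan is to unfold the definition of merging and prove one auxiliary claim: after merging, every world of $W = W_1 \cup \dots \cup W_m$ has exactly the same $R_i$-successors. Throughout, I read the merging rule as testing reflexivity $(u,u) \in R_i$ in the model \emph{before} merging. This is harmless, because merging only adds edges $(t,u)$ with $(u,u)$ already present, so it never creates a new reflexive world. Let $\mathrm{Ref} = \{u \in W \mid (u,u) \in R_i\}$. Merging then adds exactly the pairs $W \times \mathrm{Ref}$ to $R_i$.

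\emph{Key observation.} Let $W_j$ be quasi-$i$-equivalent, $t \in W_j$, and $v \in R_i(t)$ before merging. Then $v \in \RT(t, R_i \cup R_i^{-1}) = W_j$. Quasi-$i$-equivalence gives $R_i(v) = R_i(t) \ni v$, so $(v,v) \in R_i$. Hence old successors of worlds in $W$ lie in $\mathrm{Ref}$. Conversely, suppose $u \in W_j$ is reflexive and $t \in W_j$. Then $u \in R_i(u) = R_i(t)$, so $(t,u) \in R_i$ already. In short, for $t \in W_j$ the old successor set is $R_i(t) = \mathrm{Ref} \cap W_j$.

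\textbf{Item 1.} Let $u_1, u_2 \in W_1$ with $(u_1,u_2) \notin R_i$ before merging. By the converse part of the observation, $u_2 \notin \mathrm{Ref}$. Merging only adds edges whose target lies in $\mathrm{Ref}$, so $(u_1,u_2)$ is still not added.

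\textbf{Item 2.} Assume $u_0 \in \mathrm{Ref}$ exists. For every $t \in W$, the new successor set is (old $R_i(t)$) $\cup\, \mathrm{Ref}$, and by the observation this equals $\mathrm{Ref}$. So all members of $W$ share the same $R_i$-successors.

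For the connectivity clause, first take $t \in W$. Every world of $W$ is connected to $u_0$ through the new edges $(\cdot, u_0)$, so $W \subseteq \RT(t, R_i \cup R_i^{-1})$. For the reverse inclusion, each $W_j$ was closed under $R_i \cup R_i^{-1}$ before merging (it equals $\RT$ of any of its members), and all new edges lie inside $W \times W$. So no world outside $W$ becomes reachable, and $\RT(t, R_i \cup R_i^{-1}) = W$. Hence $W$ is quasi-$i$-equivalent.

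\textbf{Item 3.} If every $W_j$ is $i$-equivalent, then every world of $W$ is reflexive, so $\mathrm{Ref} = W \neq \varnothing$. Item 2 applies, and since merging deletes no edges, reflexivity is preserved. So $W$ is $i$-equivalent.

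The only delicate point is item 2's use of the observation that old successors are automatically reflexive. This fact is what makes the successor sets coincide after merging, and it relies on both clauses of quasi-$i$-equivalence.
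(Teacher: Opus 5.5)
Your proof is correct and follows essentially the same route as the paper: item 1 by contradiction with $R_i(u_1)=R_i(u_2)$ when $u_2$ is reflexive, item 2 by showing that every world of the union ends up with exactly the reflexive worlds as its $R_i$-successors, and item 3 as a special case. You also make explicit two steps the paper leaves implicit: that pre-existing successors are already reflexive, and that new edges stay inside $W$, so $\RT$ cannot grow beyond $W$. One small point: the claim $W\neq\varnothing$ in item 3 should be justified, either by distinctness with $m\ge 2$ or by treating the all-empty case as vacuous, as the paper does.
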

\begin{proof}
For item 1, if $(u_1, u_2)$ is added to $R_i$ after merging, then $(u_2, u_2) \in R_i$ but $R_i(u_1) \neq R_i(u_2)$ before merging. This is contrary to the definition of quasi-$i$-equivalence.

For item 2, let $U_j = \{u \mid u \in W_j \text{ and }(u, u) \in R_i\}$ for every $j \in \{1, \dots, m\}$, and $U = \bigcup_{j \in \{1, \dots, m\}} U_j$.
So, $u_0 \in U$. After merging, for every $t \in W_1 \cup W_2 \cup \dots \cup W_m$, $(t, u_0) \in R_i$, so $W_1 \cup W_2 \cup \dots \cup W_m = \RT(u_0, R_i \cup R_i^{- 1}) = \RT(t, R_i \cup R_i^{- 1})$. So after merging, for every $t \in W_1 \cup W_2 \cup \dots \cup W_m$, we have $R_i(t) = U$. Therefore, $W_1 \cup W_2 \cup \dots \cup W_m$ becomes quasi-$i$-equivalent after merging.

For item 3, if each of $W_1, W_2, \dots, W_m$ is empty, then $W_1 \cup W_2 \cup \dots \cup W_m$ is vacuously $i$-equivalent; otherwise, by item 1, $W_1 \cup W_2 \cup \dots \cup W_m = U$, so $W_1 \cup W_2 \cup \dots \cup W_m$ becomes $i$-equivalent after merging.
\end{proof}

\ConstructionLemma*

\begin{proof}
Let's construct by induction on $k$ and $d$. We label the constructed model as $(M, \C U_{\C X})^{\gamma, \xi, k, d}$, in order to show the four parameters' change in the induction. The Greek letter $\eta$ ranges over the d-canonical formulas that wholly occur in $\gamma$, while the Greek letter $\zeta$ those that wholly occur in $\xi$.  Initially, let $S$, $V$, $\rho$, and every relation $R_i$ be empty.

\F{Base case ($k = d = 0$)}: Construct $(M, \C U_{\C X})^{\gamma, \xi, 0, 0}$ as follows.

\begin{itemize}
\item[Step 1] 
For every $\C B \in \PPX$, $t' \in R'_{\C B}(s')$, $\eta \in R_{\C B}(\gamma)$, and $\zeta \in R_{\C B}(\xi)$, if $(\eta^{\uparrow (k + d)})^p = (\zeta^{\uparrow (k + d)})^p$ and $M', t' \vDash \zeta^p$, then construct a world $t$ such that for every $q \in P$, $q \in V(t)$ if and only if $w(\eta) \vDash q$. Then, add $(t, t')$ to $\rho$. Let $T^*_{\C B}$ be all such $t$ with respect to this $\C B$.

(Note that when a world $t$ is constructed, we can determine its unique $t'$, $\eta$, and $\zeta$. Let's write this $\eta$ as $\eta_t$ and this $\zeta$ as $\zeta_t$. Let's say $t'$ (resp. $\eta_t$, $\zeta_t$) constructs $t$.)

\item[Step 2] 
For every $\C B_1, \C B_2 \in \PPX$, every $t_1 \in T^*_{\C B_1}$, $t_2 \in T^*_{\C B_2}$, and $i \in \C B_1^c \cap \C B_2^c$, if $(t'_1, t'_2) \in R'_i$, then add $(t_1, t_2)$ to $R_{i}$. 

\item[Step 3] For every $\C B_1, \C B_2 \in \PPX$, every $t_1 \in T^*_{\C B_1}$, $t_2 \in T^*_{\C B_2}$, and $i \in \C B_1 \cap \C B_2$, add $(t_1, t_2)$ to $R_{i}$. 

\item[Step 4] For every $\C B \in \PPX$, let
\[T_{\C B} = \bigcup_{\C B \subseteq \C C} T^*_{\C C}\]
Extend $S$ with $\bigcup_{\C B \in \PPX}T_{\C B}$ and let $\C U_{\C X} = \{T_{\C B} \mid \C B \in \PPX\}$.

Denote by $(M, \C U_{\C X})^{\gamma, \xi, 0, 0}_4$ the submodel constructed via the first four steps.

\item[Step 5] For every $\C B \in \C P^+(\C X)$ and every $t \in T^*_{\C B}$, if there is $i \in \C B^c$ such that $R'_i(t') \neq \varnothing$, consider the multi-pointed submodel 
$(M', \C U^{t'}_{\C B^c})$, where $\C U^{t'}_{\C B^c} = \{R'_{\C D}(t') \mid \C D \in \PP(\C B^c)\}$. Let $(M^t, \C U^t_{\C B^c})$ be a copy of $(M', \C U^{t'}_{\C B^c})$ and let $\rho_t = \{(u, u') \mid \text{$u$ is a copy of $u'$.}\}$.
Add the submodel $(M^t, \C U^t_{\C B^c})$ to $(S, R, V)$, and extend $\rho$ to be $\rho \cup \rho_t$. 

(Note that so far, no edge is added between $t$ and $(M^t, \C U^{t}_{\C B^c})$. We write ``$(M^t, \C U^{t}_{\C B^c})$'' as ``$(M^t, \C U^{t})$'' if it is clear from context.)

\item[Step 6] For every $\C B \in \C P^+(\C X)$, every $t \in T^*_{\C B}$, every $i \in \C B^c$, let's denote the set $\RT(t, R_i \cup R_i^{- 1})$ within $(M, \C U_{\C X})^{\gamma, \xi, 0, 0}_4$ by $\RT(t, R_i \cup R_i^{- 1})_4$. For every $u \in \RT(t, R_i \cup R_i^{- 1})_4$, if $(M^u, \C U^{u})$ exists, which is an $\san L$-model, there is the quasi-$i$-equivalent set $W^u_i$ in $(M^u, \C U^{u})$ such that $W^u_i \supseteq R'_i(u')$. Then, merge all these sets $W^u_i$ and $\RT(t, R_i \cup R_i^{- 1})_4$ with the relation $R_i$.
\end{itemize}
Thus, $(M, \C U_{\C X})^{\gamma, \xi, 0, 0}$ is constructed. Observe that this construction does not specify the value of $d$, so $(M, \C U_{\C X})^{\gamma, \xi, 0, d}$ is also constructed for any $d \geq 0$.

\F{Inductive step}: Steps 1, 3, and 4 are the same as those of the Base. We show Steps 2, 5, and 6, where Step 2 is divided into two substeps.

\begin{itemize}

\item[Step 2-1] 
For every $\C B \in \PPX$ and every $t \in T^*_{\C B}$ that is constructed in Step 1, for every $\C C \in \PPA$ such that $\C C_1 = \C B \cap \C C \neq \varnothing$ and $\C C_2 = \C B^c \cap \C C \neq \varnothing$, and every $u' \in R'_{\C C}(t')$, \emph{we claim that} there exists $\eta \in R_{\C C_1}(\gamma)$ and $\zeta \in R_{\C C_1}(\xi)$ such that
\begin{itemize}
\item $\eta^{\downarrow} \in R_{\C C}(\eta_t)$,

\item $R_{\C D}(\eta) = R_{\C D}(\eta_t)$ for every $\C D \in \PP(\C C_2)$,

\item $M', u' \vDash \zeta^p$

\item $(\eta^{\uparrow (k + d - 1)})^p =  (\zeta^{\uparrow (k + d - 1)})^p$.
\end{itemize}
Construct a new world $u$ such that $q \in P$, $q \in V(u)$ if and only if $w(\eta) \vDash q$; add $u$ to $T^*_{\C C_1}$. Add $(t, u)$ and $(u, u)$ to $R_i$ for every $i \in \C C_2$. Add $(u, u')$ to $\rho$. (We also say $u$ is constructed by $u'$, $\eta$, and $\zeta$, like those constructed in Step 1.)

\item[Step 2-2] For every $\C B \in \PPX$, every $t \in T^*_{\C B}$, every $i \in \PP(\C B^c)$, and every $u_1, u_2 \in R_i(t)$, then add $(u_1, u_2)$ to $R_i$.
\end{itemize}

Note that, we need to prove the claim that the formulas $\eta$ and $\zeta$ exist in Step 2-1. On the one hand, since $M', t' \vDash \zeta_t^p$, by Proposition \ref{prop: brothers} (the Sibling Property), there is $\zeta \in R_{\C C_1}(\xi)$ such that 
\begin{itemize}
\item $M', u' \vDash \zeta^p$,

\item $(\zeta^\downarrow)^p \in R_{\C C}(\zeta_t)^p$

\item $R_{\C D}(\zeta)^p = R_{\C D}(\zeta_t)^p$ for every $\C D \in \PP(\C C_2)$.
\end{itemize}
Note that since $t$ is constructed in Step 1, $(\eta_t^{\uparrow (k + d)})^p = (\zeta_t^{\uparrow (k + d)})^p$. So, there is $\chi \in R_{\C C}(\eta_t)$ such that $(\chi^{\uparrow (k + d - 1)})^p = ((\zeta^\downarrow)^p)^{\uparrow (k + d - 1)}$. Note that $\chi \in D^P_{k + h - 1}$ and $\zeta \in D^P_{k + h}$ while $k + h \geq k + h - 1 \geq k + d - 1$, so by Proposition \ref{prop: uparrow-property}, we have $(\chi^{\uparrow (k + d - 1)})^p = (\zeta^{\uparrow (k + d - 1)})^p$. On the other hand, by Proposition \ref{prop: brothers} (the Sibling Property) again, there is $\eta \in R_{\C C_1}(\gamma)$ such that 
\begin{itemize}
\item $\chi = \eta^\downarrow \in R_{\C C}(\eta_t)$,

\item $R_{\C D}(\eta) = R_{\C D}(\eta_t)^p$ for every $\C D \in \PP(\C C_2)$.
\end{itemize}
So, $\chi^{\uparrow (k + d - 1)} = \eta^{\uparrow (k + d - 1)}$. Therefore, $(\eta^{\uparrow (k + d - 1)})^p = (\zeta^{\uparrow (k + d - 1)})^p$. Hence, the formulas $\eta$ and $\zeta$ are what Step 2-1 requires.

Denote by $(M, \C U_{\C X})^{\gamma, \xi, k, d}_4$ the submodel constructed via the first four steps.

\begin{itemize}
\item[Step 5] For every $\C B \in \C P^+(\C X)$ and every $t \in T^*_{\C B}$, if there is $i \in \C B^c$ such that $R'_i(t') \neq \varnothing$,, consider the multi-pointed submodel 
$(M', \C U^{t'}_{\C B^c})$, where $\C U^{t'}_{\C B^c} = \{T^{t'}_{\C D} \mid T^{t'}_{\C D} = R'_{\C D}(t') \text{ for }\C D \in \PP(\C B^c)\}$. Let's discuss the world $t$.
\begin{itemize}
\item $t$ is constructed in Step 1: Then, $(\eta_t^{\uparrow (k + d)})^p = (\zeta_t^{\uparrow (k + d)})^p$. Since $k \leq d$, then $k - 1 \leq d$. By the inductive hypothesis, a multi-pointed model 
\[(M^t, \C U^t_{\C B^c})^{\eta_t, \zeta_t, (k - 1), d} = (S^t, R^t, V^t, \C U^t_{\C B^c})\]
is constructed, where $\C U^{t}_{\C B^c} = \{T^{t}_{\C D} \mid \C D \in \PP(\C B^c)\}$, such that
\begin{itemize}
\item $T^t_{\C D}$ is $\C D$-equivalent in $(M^t, \C U^t_{\C B^c})^{\eta_t, \zeta_t, (k - 1), d}$, for $\C D \in \PP(\C B^c)$;

\item $\rho_t: (M^t, \C U^t_{\C B^c})^{\eta_t, \zeta_t, (k - 1), d} \bisim^{col}_p (M', \C U^{t'}_{\C B^c})$;

\item $T^t_{\C D}$ is $R_{\C D}(\eta_t)^{\uparrow (k - 1)}$-complete, for $\C D \in \PP(\C B^c)$.
\end{itemize}

\item $t$ is constructed in Step 2-1: Then, $(\eta_t^{\uparrow (k + d - 1)})^p = (\zeta_t^{\uparrow (k + d - 1)})^p$. Since $k \leq d$, then $k - 1 \leq d - 1$. By the inductive hypothesis, 
\[(M^t, \C U^t_{\C B^c})^{\eta_t, \zeta_t, (k - 1), (d - 1)} = (S^t, R^t, V^t, \C U^t_{\C B^c})\]
is constructed, where $\C U^{t}_{\C B^c} = \{T^{t}_{\C D} \mid \C D \in \PP(\C B^c)\}$, such that
\begin{itemize}
\item $T^t_{\C D}$ is $\C D$-equivalent in $(M^t, \C U^t_{\C B^c})^{\eta_t, \zeta_t, (k - 1), (d - 1)}$, for $\C D \in \PP(\C B^c)$;

\item $\rho_t: (M^t, \C U^t_{\C B^c})^{\eta_t, \zeta_t, (k - 1), (d - 1)} \bisim^{col}_p (M', \C U^{t'}_{\C B^c})$;

\item $T^t_{\C D}$ is $R_{\C D}(\eta_t)^{\uparrow (k - 1)}$-complete, for $\C D \in \PP(\C B^c)$.
\end{itemize}
\end{itemize}
Add the submodel, $(M^t, \C U^t_{\C B^c})^{\eta_t, \zeta_t, (k - 1), d}$ or $(M^t, \C U^t_{\C B^c})^{\eta_t, \zeta_t, (k - 1), (d - 1)}$, to $(S, R, V)$, and extend $\rho$ to be $\rho \cup \rho_t$. 

(We write ``$(M^t, \C U^t_{\C B^c})^{\eta_t, \zeta_t, (k - 1), d}$'' or ``$(M^t, \C U^t_{\C B^c})^{\eta_t, \zeta_t, (k - 1), (d - 1)}$'' as ``$(M^t, \C U^t)$'' if it is clear from context.)

\item[Step 6] For every $\C B \in \C P^+(\C X)$, every $t \in T^*_{\C B}$, every $i \in \C B^c$, let $R_i(t)_4$ be the members of $R_i(t)$ within $(M, \C U_{\C X})^{\gamma, \xi, k, d}_4$. For every $u \in R_i(t)_4 \cup \{t\}$ such that $(M^u, \C U^u)$ exists, $T^u_i$ exists, which is an $i$-equivalent set. Then, merge all these sets $T^u_i$ and $R_i(t)_4 \cup \{t\}$ with the relation $R_i$.
\end{itemize}
Then, together with the other steps, $(M, \C U_{\C X})^{\gamma, \xi, k, d}$ is constructed for the inductive step.

Let $(M, \C U_{\C X}) = (M, \C U_{\C X})^{\gamma, \xi, k, d}$ and the construction is finished.

When it is clear from context, we write ``$(M, \C U_{\C X})_4$'' instead of ``$(M, \C U_{\C X})^{\gamma, \xi, k, d}_4$''. 

For intuition, we describe the internal structure of $(M, \C U_{\C X})$ in genealogical terms: Let's also call $T_{\C B}$ the $\C B$-sons of $s$ and $s$ is their \emph{father}, although we have not yet constructed $s$. Let's call $T^*_{\C B}$ the \emph{exact $\C B$-sons} of $s$. Similarly, 
\begin{itemize}
\item if $t$ is constructed in the \F{base case}, then let's call the copies of $R'_{\C B}(t')$ the \emph{$\C B$-sons} of $t$, while $t$ is their \emph{father}.

\item if $t$ is constructed in the \F{inductive step}, let's say the members of $T^t_{\C B}$ are the \emph{$\C B$-sons} of $t$, while $t$ is their \emph{father}; denote by $T^{t*}_{\C B}$ the \emph{exact $\C B$-sons} of $t$.

\end{itemize}
If two worlds have the same father, let's call them \emph{siblings}. Observe that the model $(M, \C U_{\C X})_4$ contains all the sons of $s$, and its members are all the siblings of one another. We distinguish the $\C B$-successors of $t$, ``$R_{\C B}(t)$'', from the $\C B$-sons of $t$, ``$T^t_{\C B}$''. It is easy to see $T^t_{\C B} \subseteq R_{\C B}(t)$.

The proof is long, so we deconstruct it into the following lemma:

Lemma \ref{lem: trans-Euc} shows that $T_{\C B}$ is $\C B$-equivalent and $(M, \C U_{\C X})$ is transitive and Euclidean, namely, a $\san{K45}_n \F D$-model.

Lemma \ref{lem: Common-p-bisim} shows that $(M, \C U_{\C X}) \bisim^{col}_p (M', \C U'_{\C X})$. Now suppose both $\gamma$ and $\xi$ are $\san{KD45}_n \F D$-satisfiable and $(M', s')$ is serial. For every $t \in S$ and every $i \in \C A$, because $R'_i(t') \neq \varnothing$, by the construction and the collectve $p$-bisimulation, we have $R_i(t) \neq \varnothing$. Then, $(M, \C U_{\C X})$ is serial, namely, a $\san{KD45}_n \F D$-model.

Lemma \ref{lem: Tcomplete} shows that $T_{\C B}$ is $R_{\C B}(\gamma)^{\uparrow k}$-complete.

Then, the Construction Lemma is proved.
\end{proof}

\begin{figure}[htbp]
    \centering
\resizebox{0.67\linewidth}{!}{
\begin{tikzpicture}[
    block/.style={draw, thick, rounded corners, inner sep=10pt},
    dblock/.style={draw, dashed, thick, rounded corners, inner sep=10pt},
    world/.style={circle, draw, thick, minimum size=8mm},
    formula/.style={thick, minimum size=8mm},
    actual/.style={world, fill=gray!30},
    edgei/.style={->, shorten >=2pt, shorten <=2pt},
    edgej/.style={red, dashed, thick, shorten >=2pt, shorten <=2pt, bend right=40}, 
    edgek/.style={->, blue, dashed, thick, shorten >=2pt, shorten <=2pt, bend left=60}, 
    edgejup/.style={red, dashed, thick, shorten >=2pt, shorten <=2pt, bend left=40}, 
    edgekst/.style={->, blue, dashed, thick, shorten >=2pt, shorten <=2pt}, 
    label/.style={font=\footnotesize}
]

\node[formula] (M') at (0,1) {\Large \(M'\)};
\node[actual] (s') at (0,0) {\(s'\)};
\draw[block] (-1.5, -1) rectangle (1.8,-3);
\node[formula] (R'Bs') at (1,-2) {\Large\(R'_{\C B}(s')\)};

\node[world] (t') at (-1,-2) {\(t'\)};

\draw[edgei] (s') -- (-1.5, -1) {};
\draw[edgei] (s') -- (1.8, -1) {};

\node[formula] (xi) at (-4,0) {\(\xi\)};

\node[formula] (zeta) at (-4,-2) {\(\zeta_t\)};

\draw[edgei] (xi) -- (zeta) node[midway, right] {\(R_{\C B}\)};

\node[formula] (gamma) at (-8,0) {\(\gamma\)};

\node[formula] (eta) at (-8,-2) {\(\eta_t\)};

\draw[edgei] (gamma) -- (eta) node[midway, left] {\(R_{\C B}\)};


\node[world] (t) at (-10,-2) {\(t\)};
\draw[dblock] (-13, -1) rectangle (-9.5,-3);
\node[formula] (T*B) at (-12,-2) {\Large\(T^*_{\C B}\)};

\draw[edgejup] (xi) to (s');
\node[formula] (deltas') at (-2.25, 1.2) {\color{red}\(M', s' \vDash \xi^p\)};
\draw[edgejup] (gamma) to (xi);
\node[formula] at (-6, 1.2) {\color{red}\small\((\gamma^{\uparrow(k + d + 1)})^p = (\xi^{\uparrow(k + d + 1)})^p\)};
\draw[edgejup] (zeta) to (t');
\node[formula] at (-2.5, -2) {\color{red}\small\(M', t' \vDash \zeta_t^p\)};

\draw[edgejup] (eta) to (zeta);
\node[formula] at (-6, -0.8) {\color{red}\small\((\eta_t^{\uparrow(k + d)})^p = (\zeta_t^{\uparrow(k + d)})^p\)};


\node[formula] (M) at (-12, 0) {\Large\(M\)};

\draw[edgek] (eta) to (t);
\draw[edgek] (zeta) to (t);
\draw[edgek] (t') to (t);

\end{tikzpicture}
}

  \caption{Step 1 of Lemma \ref{lem: ConstructionK45}}
\end{figure}
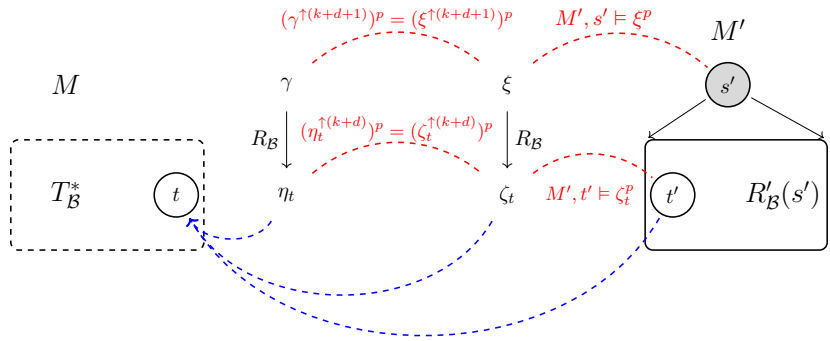

\begin{figure}[htbp]
\begin{minipage}{0.61\textwidth}
    \centering
\resizebox{0.9\linewidth}{!}{
\begin{tikzpicture}[
    block/.style={draw, thick, rounded corners, inner sep=10pt},
    dblock/.style={draw, dashed, thick, rounded corners, inner sep=10pt},
    world/.style={circle, draw, thick, minimum size=8mm},
    formula/.style={thick, minimum size=8mm},
    actual/.style={world, fill=gray!30},
    edgei/.style={->, shorten >=2pt, shorten <=2pt},
    edgej/.style={red, dashed, thick, shorten >=2pt, shorten <=2pt, bend right=40}, 
    edgek/.style={->, blue, dashed, thick, shorten >=2pt, shorten <=2pt, bend left=60}, 
    edgejup/.style={red, dashed, thick, shorten >=2pt, shorten <=2pt, bend left=40}, 
    edgekst/.style={->, blue, dashed, thick, shorten >=2pt, shorten <=2pt}, 
    label/.style={font=\footnotesize}
]

\node[formula] (M') at (0,1) {\Large \(M'\)};
\node[actual] (s') at (0,0) {\(s'\)};

\node[world] (t'_1) at (-1,-2) {\(t'_1\)};

\draw[edgei] (s') -- (t'_1) node[midway, left] {\(R'_{\C B_1}\)};

\node[world] (t'_2) at (1,-2) {\(t'_2\)};

\draw[edgei] (s') -- (t'_2) node[midway, right] {\(R'_{\C B_2}\)};

\draw[edgei] (t'_1) -- (t'_2) node[midway, below] {\(R'_{i}\)};
\node[formula] (notewhere) at (-0,-3) {\color{red}\(\text{where } i \in \C B_1^c \cap \C B_2^c\)};


\node[world] (t_1) at (-8.5,-2) {\(t_1\)};
\draw[dblock] (-10.5, -1) rectangle (-7.7,-3);
\node[formula] (T*B1) at (-9.5,-2) {\Large\(T^*_{\C B_1}\)};

\node[world] (t_2) at (-6,-2) {\(t_2\)};
\draw[dblock] (-6.8, -1) rectangle (-4,-3);
\node[formula] (T*B2) at (-5,-2) {\Large\(T^*_{\C B_2}\)};
\draw[edgei] (t_1) -- (t_2) node[midway, below] {\(R_{i}\)};

\node[formula] (M) at (-7.2, 0) {\Large\(M\)};

\draw[edgek] (-0.1, -2) to (-7.1, -2);

\end{tikzpicture}
}
  \caption{Step 2 of the base case of Lemma \ref{lem: ConstructionK45}}
\end{minipage}
\hfill
\begin{minipage}{0.38\textwidth}
    \centering
\resizebox{0.9\linewidth}{!}{
\begin{tikzpicture}[
    block/.style={draw, thick, rounded corners, inner sep=10pt},
    dblock/.style={draw, dashed, thick, rounded corners, inner sep=10pt},
    world/.style={circle, draw, thick, minimum size=8mm},
    formula/.style={thick, minimum size=8mm},
    actual/.style={world, fill=gray!30},
    edgei/.style={->, shorten >=2pt, shorten <=2pt},
    edgej/.style={red, dashed, thick, shorten >=2pt, shorten <=2pt, bend right=40}, 
    edgek/.style={->, blue, dashed, thick, shorten >=2pt, shorten <=2pt, bend left=60}, 
    edgejup/.style={red, dashed, thick, shorten >=2pt, shorten <=2pt, bend left=40}, 
    edgekst/.style={->, blue, dashed, thick, shorten >=2pt, shorten <=2pt}, 
    label/.style={font=\footnotesize}
]


\node[world] (t_1) at (-9,-2) {\(t_1\)};
\draw[dblock] (-11, -1) rectangle (-8.2,-3);
\node[formula] (T*B1) at (-10,-2) {\Large\(T^*_{\C B_1}\)};

\node[world] (t_2) at (-5.5,-2) {\(t_2\)};
\draw[dblock] (-6.3, -1) rectangle (-3.5,-3);
\node[formula] (T*B2) at (-4.5,-2) {\Large\(T^*_{\C B_2}\)};
\draw[edgei] (t_1) -- (t_2) node[midway, above] {\(R_{\C B_1 \cap \C B_2}\)};
\draw[edgei] (t_2) -- (t_1); 

\node[formula] (M) at (-7.2, 0) {\Large\(M\)};

\node[formula] (height) at (-7.2, 1) {};

\node[formula] (low) at (-7.2, -4) {};
\end{tikzpicture}
}

\caption{Step 3 of Lemma \ref{lem: ConstructionK45}}
\end{minipage}

\end{figure}

\begin{figure}[htbp]
    \centering
\resizebox{0.5\linewidth}{!}{
\begin{tikzpicture}[
    block/.style={draw, thick, rounded corners, inner sep=10pt},
    dblock/.style={draw, dashed, thick, rounded corners, inner sep=10pt},
    world/.style={circle, draw, thick, minimum size=8mm},
    formula/.style={thick, minimum size=8mm},
    actual/.style={world, fill=gray!30},
    edgei/.style={->, shorten >=2pt, shorten <=2pt},
    edgej/.style={red, dashed, thick, shorten >=2pt, shorten <=2pt, bend right=40}, 
    edgek/.style={->, blue, dashed, thick, shorten >=2pt, shorten <=2pt, bend left=60}, 
    edgejup/.style={red, dashed, thick, shorten >=2pt, shorten <=2pt, bend left=40}, 
    edgekst/.style={->, blue, dashed, thick, shorten >=2pt, shorten <=2pt}, 
    label/.style={font=\footnotesize}
]

\node[formula] (M') at (0,1) {\Large \(M'\)};
\node[actual] (s') at (0,0) {\(s'\)};

\node[world] (t') at (0,-2) {\(t'\)};

\draw[edgei] (s') -- (t') node[midway, left] {\(R'_{\C B}\)};

\draw[edgei] (t') -- (-1, -4) {};
\draw[edgei] (t') -- (1, -4) {};

\draw[block] (-1, -4) rectangle (1,-5);
\draw[block] (-1, -4) rectangle (1,-7);
\node[formula] (Ut'Bc) at (0, -4.5) {\(\C U^{t'}_{\C B^c}\)};
\node[formula] (Mt') at (0, -6) {\((M', \C U^{t'}_{\C B^c})\)};

\node[formula] (M) at (-7, 0) {\Large\(M\)};
\node[formula] (Ux) at (-10, -1.5) {\Large\(\C U_{\C X}\)};

\node[world] (t) at (-5,-2) {\(t\)};
\draw[dblock] (-7, -1.5) rectangle (-4.5,-2.5);
\node[formula] (T*B) at (-6, -2) {\Large\(T^*_{\C B}\)};

\draw[block] (-11, -1) rectangle (-3.5,-3);

\draw[block] (-6, -4) rectangle (-4,-5);
\draw[block] (-6, -4) rectangle (-4,-7);
\node[formula] (UtBc) at (-5, -4.5) {\(\C U^{t}_{\C B^c}\)};
\node[formula] (Mt) at (-5, -6) {\((M^t, \C U^{t}_{\C B^c})\)};

\draw[edgekst] (Mt') to (Mt);
\node[formula] (copy) at (-2.5, -5.8) {\color{blue}copy};
\end{tikzpicture}
}

  \caption{Step 5 of the base case of Lemma \ref{lem: ConstructionK45}}
\end{figure}

\begin{figure}[htbp]
    \centering
\resizebox{0.67\linewidth}{!}{
\begin{tikzpicture}[
    block/.style={draw, thick, rounded corners, inner sep=10pt},
    dblock/.style={draw, dashed, thick, rounded corners, inner sep=10pt},
    world/.style={circle, draw, thick, minimum size=8mm},
    formula/.style={thick, minimum size=8mm},
    actual/.style={world, fill=gray!30},
    edgei/.style={->, shorten >=2pt, shorten <=2pt},
    edgej/.style={red, dashed, thick, shorten >=2pt, shorten <=2pt, bend right=20}, 
    edgek/.style={->, blue, dashed, thick, shorten >=2pt, shorten <=2pt, bend left=60}, 
    edgejup/.style={blue, dashed, thick, shorten >=2pt, shorten <=2pt, bend left=20}, 
    edgekst/.style={->, blue, dashed, thick, shorten >=2pt, shorten <=2pt}, 
    label/.style={font=\footnotesize}
]

\node[formula] (M) at (0, 0) {\Large\(M\)};

\node[formula] (Ux) at (-4, -1) {\Large\(\C U_{\C X}\)};
\draw[block] (-5, -0.5) rectangle (5,-3);

\node[world] (t1) at (-2,-2) {\(t_1\)};

\node[world] (t2) at (0,-2) {\(t_2\)};

\node[world] (t3) at (2,-2) {\(t_3\)};

\draw[dblock] (-3, -1.5) rectangle (3,-2.5);
\node[formula] (RT) at (3, -1.2) {\small\(\RT(t_1, R_i \cup R_i^{-1})_4\)};
\draw[edgei] (t1) -- (t2) node[midway, below] {\(R_i\)};
\draw[edgei] (t3) -- (t2) node[midway, below] {\(R_i\)};

\draw[block] (-2, -4) rectangle (2,-7);
\node[formula] (Mt2) at (0, -6.5) {\((M^{t_2}, \C U^{t_2})\)};
\draw[block] (-0.8, -4.4) rectangle (0.8,-5.4);
\node[formula] (R'it'2) at (0, -4.9) {\(R'_i(t'_2)\)};

\draw[dblock] (-1.5, -4.2) rectangle (1.5,-5.6);
\node[formula] (Wt2i) at (-1.2, -4.9) {\(W^{t_2}_i\)};

\draw[block] (-7, -4) rectangle (-3,-7);
\node[formula] (Mt1) at (-5, -6.5) {\((M^{t_1}, \C U^{t_1})\)};
\draw[block] (-5.8, -4.4) rectangle (-4.2,-5.4);
\node[formula] (R'it'1) at (-5, -4.9) {\(R'_i(t'_1)\)};

\draw[dblock] (-6.5, -4.2) rectangle (-3.5,-5.6);
\node[formula] (Wt1i) at (-6.2, -4.9) {\(W^{t_1}_i\)};

\draw[block] (7, -4) rectangle (3,-7);
\node[formula] (Mt3) at (5, -6.5) {\((M^{t_3}, \C U^{t_3})\)};
\draw[block] (5.8, -4.4) rectangle (4.2,-5.4);
\node[formula] (R'it'3) at (5, -4.9) {\(R'_i(t'_3)\)};

\draw[dblock] (6.5, -4.2) rectangle (3.5,-5.6);
\node[formula] (Wt3i) at (6.2, -4.9) {\(W^{t_3}_i\)};


\draw[edgejup] (-3, -2.5) to (-6.5, -4.2);
\draw[edgejup] (-3.5,-5.6) to (-1.5, -5.6);
\draw[edgejup] (1.5,-5.6) to (3.5, -5.6);
\draw[edgejup] (6.5,-4.2) to (3,-2.5);
\node[formula] (Merge) at (2, -3.5) {\color{blue}Merge};

\end{tikzpicture}
}

  \caption{Step 6 of the base case of Lemma \ref{lem: ConstructionK45}}
\end{figure}

\begin{figure}[htbp]
    \centering
\resizebox{0.7\linewidth}{!}{
\begin{tikzpicture}[
    block/.style={draw, thick, rounded corners, inner sep=10pt},
    dblock/.style={draw, dashed, thick, rounded corners, inner sep=10pt},
    world/.style={circle, draw, thick, minimum size=8mm},
    formula/.style={thick, minimum size=8mm},
    actual/.style={world, fill=gray!30},
    edgei/.style={->, shorten >=2pt, shorten <=2pt},
    edgej/.style={red, dashed, thick, shorten >=2pt, shorten <=2pt, bend right=40}, 
    edgek/.style={->, blue, dashed, thick, shorten >=2pt, shorten <=2pt, bend left=60}, 
    edgejup/.style={red, dashed, thick, shorten >=2pt, shorten <=2pt, bend left=40}, 
    edgekst/.style={->, blue, dashed, thick, shorten >=2pt, shorten <=2pt}, 
    label/.style={font=\footnotesize}
]

\node[formula] (M') at (0,1) {\Large \(M'\)};
\node[actual] (s') at (0,0) {\(s'\)};

\node[world] (t') at (-1,-2) {\(t'\)};

\draw[edgei] (s') -- (t') node[midway, left] {\(R'_{\C B}\)};

\node[world] (u') at (1,-3) {\(u'\)};

\draw[edgei] (s') -- (u') node[midway, right] {\(R'_{\C C_1}\)};

\draw[edgei] (t') -- (u') node[midway, below] {\(R'_{\C C_2}\)};

\node[formula] (xi) at (-3,0) {\(\xi\)};

\node[formula] (zetat) at (-4,-2) {\(\zeta_t\)};

\draw[edgei] (xi) -- (zetat) node[midway, left] {\(R_{\C B}\)};

\node[formula] (zeta) at (-2,-3) {\(\zeta\)};

\draw[edgei] (xi) -- (zeta) node[midway, right] {\(R_{\C C_1}\)};

\node[formula] (zetadown) at (-3,-5) {\(\zeta^{\downarrow}\)};

\draw[edgei] (zetat) -- (zetadown) node[midway, left] {\(R_{\C C_2}\)};


\node[formula] (gamma) at (-6,0) {\(\gamma\)};

\node[formula] (etat) at (-7,-2) {\(\eta_t\)};

\draw[edgei] (gamma) -- (etat) node[midway, left] {\(R_{\C B}\)};

\node[formula] (eta) at (-5,-3) {\(\eta\)};

\draw[edgei] (gamma) -- (eta) node[midway, right] {\(R_{\C C_1}\)};

\node[formula] (etadown) at (-6,-5) {\(\eta^{\downarrow}\)};

\draw[edgei] (etat) -- (etadown) node[midway, left] {\(R_{\C C_2}\)};



\node[world] (t) at (-11,-2) {\(t\)};
\draw[dblock] (-13.5, 0) rectangle (-10.5,-2.5);
\node[formula] (T*B) at (-12.5,-2) {\Large\(T^*_{\C B}\)};

\node[world] (u) at (-9,-4) {\color{blue}\(u\)};
\draw[dblock] (-11.5, -3.5) rectangle (-8.5,-5);
\node[formula] (T*C1) at (-11,-4) {\Large\(T^*_{\C C_1}\)};
\draw[edgei] (t) -- (u) node[midway, above] {\color{blue}\(R_{\C C_2}\)};

\node[formula] (M) at (-10, 1) {\Large\(M\)};

\draw[edgejup] (gamma) to (xi);
\draw[edgejup] (xi) to (s');
\node[formula] at (-5.2, 1.2) {\color{red}\((\gamma^{\uparrow (k + d + 1)})^p = (\xi^{\uparrow (k + d + 1)})^p\)};
\node[formula] at (-1.5, 1.2 ) {\color{red}\(M', s' \vDash \xi^p\)};

\draw[edgejup] (etat) to (zetat);
\draw[edgejup] (zetat) to (t');

\draw[edgejup] (eta) to (zeta);
\draw[edgejup] (zeta) to (u');

\draw[edgek] (eta) to (u);
\draw[edgek] (zeta) to (u);
\draw[edgek] (u') to (u);
\node[formula] at (-4, -6 ) {\color{blue}construct};

\node[formula] at (-5, -7) {\color{red}where $(\eta^{\uparrow (k + d - 1)})^p = (\zeta^{\uparrow (k + d - 1)})^p$};

\end{tikzpicture}
}
  \caption{Step 2-1 of the inductive step of Lemma \ref{lem: ConstructionK45}}
\end{figure}

\begin{figure}[htbp]
    \centering
\resizebox{0.67\linewidth}{!}{
\begin{tikzpicture}[
    block/.style={draw, thick, rounded corners, inner sep=10pt},
    dblock/.style={draw, dashed, thick, rounded corners, inner sep=10pt},
    world/.style={circle, draw, thick, minimum size=8mm},
    formula/.style={thick, minimum size=8mm},
    actual/.style={world, fill=gray!30},
    edgei/.style={->, shorten >=2pt, shorten <=2pt},
    edgej/.style={red, dashed, thick, shorten >=2pt, shorten <=2pt, bend right=40}, 
    edgek/.style={->, blue, dashed, thick, shorten >=2pt, shorten <=2pt, bend left=60}, 
    edgejup/.style={red, dashed, thick, shorten >=2pt, shorten <=2pt, bend left=40}, 
    edgekst/.style={->, blue, dashed, thick, shorten >=2pt, shorten <=2pt}, 
    label/.style={font=\footnotesize}
]

\node[formula] (M') at (1.5,0) {\Large \(M'\)};
\node[actual] (s') at (1.5, -1) {\(s'\)};

\node[world] (t') at (0,-2) {\(t'\)};

\draw[edgei] (s') -- (t');

\draw[edgei] (t') -- (-1, -4) {};
\draw[edgei] (t') -- (1, -4) {};

\draw[block] (-1, -4) rectangle (1,-5);
\draw[block] (-1, -4) rectangle (1,-7);
\node[formula] (Ut'Bc) at (0, -4.5) {\(\C U^{t'}\)};
\node[formula] (Mt') at (0, -6) {\((M', \C U^{t'})\)};

\node[world] (u') at (3,-2) {\(u'\)};

\draw[edgei] (s') -- (u');

\draw[edgei] (u') -- (2, -4) {};
\draw[edgei] (u') -- (4, -4) {};

\draw[block] (2, -4) rectangle (4,-5);
\draw[block] (2, -4) rectangle (4,-7);
\node[formula] (Uu'Bc) at (3, -4.5) {\(\C U^{u'}\)};
\node[formula] (Mu') at (3, -6) {\((M', \C U^{u'})\)};

\node[formula] (M) at (-7, 0) {\Large\(M\)};
\node[formula] (Ux) at (-10, -0.5) {\Large\(\C U_{\C X}\)};
\draw[block] (-10.5, -1) rectangle (-3.5,-3);

\node[world] (u) at (-5,-2) {\(u\)};
\node[formula] (ucon2) at (-5, -1.5) {\small\color{red}constructed in Step 2-1};

\draw[block] (-7, -4) rectangle (-3,-5);
\draw[block] (-7, -4) rectangle (-3,-7);
\node[formula] (UuBc) at (-5, -4.5) {\(\C U^{u}\)};
\node[formula] (Mu) at (-5, -6) {\small\((M^u, \C U^{u})^{\eta_u, \zeta_u, (k - 1), (d - 1)}\)};

\node[world] (t) at (-9,-2) {\(t\)};
\node[formula] (ucon2) at (-9, -1.5) {\small\color{red}constructed in Step 1};

\draw[block] (-11, -4) rectangle (-7.2,-5);
\draw[block] (-11, -4) rectangle (-7.2,-7);
\node[formula] (UtBc) at (-9, -4.5) {\(\C U^{t}\)};
\node[formula] (Mt) at (-9, -6) {\small\((M^t, \C U^{t})^{\eta_u, \zeta_u, (k - 1), d}\)};

\draw[edgek] (Mt') to (Mt);
\draw[edgek] (Mu') to (Mu);
\node[formula] (copy) at (-2, -7.5) {\color{blue}inductively construct};
\end{tikzpicture}
}

  \caption{Step 5 of the inductive step of Lemma \ref{lem: ConstructionK45}}
\end{figure}

\begin{figure}[htbp]
    \centering
\resizebox{0.67\linewidth}{!}{
\begin{tikzpicture}[
    block/.style={draw, thick, rounded corners, inner sep=10pt},
    dblock/.style={draw, dashed, thick, rounded corners, inner sep=10pt},
    world/.style={circle, draw, thick, minimum size=8mm},
    formula/.style={thick, minimum size=8mm},
    actual/.style={world, fill=gray!30},
    edgei/.style={->, shorten >=2pt, shorten <=2pt},
    edgej/.style={red, dashed, thick, shorten >=2pt, shorten <=2pt, bend right=20}, 
    edgek/.style={->, blue, dashed, thick, shorten >=2pt, shorten <=2pt, bend left=60}, 
    edgejup/.style={blue, dashed, thick, shorten >=2pt, shorten <=2pt, bend left=20}, 
    edgekst/.style={->, blue, dashed, thick, shorten >=2pt, shorten <=2pt}, 
    label/.style={font=\footnotesize}
]

\node[formula] (M) at (0, 0) {\Large\(M\)};

\node[formula] (Ux) at (-4, -1) {\Large\(\C U_{\C X}\)};
\draw[block] (-5, -0.5) rectangle (5,-3);

\node[world] (t) at (-2,-1.75) {\(t\)};

\node[world] (u1) at (0.5,-1.5) {\(u_1\)};

\node[world] (u2) at (2,-2) {\(u_2\)};

\draw[dblock] (-2.5, -1) rectangle (2.5,-2.5);
\node[formula] (Rit) at (3.5, -1.2) {\small\(R_i(t)_4 \cup \{t\}\)};
\draw[edgei] (t) -- (u1);
\draw[edgei] (t) -- (u2);
\draw[edgei] (u2) -- (u1);
\draw[edgei] (u1) -- (u2);

\draw[block] (-2, -4) rectangle (2,-7);
\node[formula] (Mu1) at (0, -6.5) {\((M^{u_1}, \C U^{u_1})\)};
\node[formula] (Tu1i) at (0, -4.9) {\(T^{u_1}_i\)};
\draw[dblock] (-1.5, -4.2) rectangle (1.5,-5.6);

\draw[block] (-7, -4) rectangle (-3,-7);
\node[formula] (Mt) at (-5, -6.5) {\((M^{t}, \C U^{t})\)};
\node[formula] (Tti) at (-5, -4.9) {\(T^t_i\)};
\draw[dblock] (-6.5, -4.2) rectangle (-3.5,-5.6);

\draw[block] (7, -4) rectangle (3,-7);
\node[formula] (Mu2) at (5, -6.5) {\((M^{u_2}, \C U^{u_2})\)};
\node[formula] (Tu2i) at (5, -4.9) {\(T^{u_2}_i\)};
\draw[dblock] (6.5, -4.2) rectangle (3.5,-5.6);


\draw[edgejup] (-2.5, -2.5) to (-6.5, -4.2);
\draw[edgejup] (-3.5,-5.6) to (-1.5, -5.6);
\draw[edgejup] (1.5,-5.6) to (3.5, -5.6);
\draw[edgejup] (6.5,-4.2) to (2.5,-2.5);
\node[formula] (Merge) at (2, -3.5) {\color{blue}Merge};

\end{tikzpicture}
}

  \caption{Step 6 of the inductive step of Lemma \ref{lem: ConstructionK45}}
\end{figure}

\begin{fact}\label{fact: flow-width}
Consider the model $(M, \C U_{\C X})$ of Lemma \ref{lem: ConstructionK45}. Let $i \in \C A$ and $\C B, \C B_1, \C B_2 \in \PPA$.
\begin{enumerate}
\item For every $t_1 \in T^*_{\C B_1}$, $t_2 \in T^*_{\C B_2}$, if $(t_1, t_2) \in R_i$, then
\begin{itemize}
\item $i \in \C B_1 \Longleftrightarrow i \in \C B_2$;

\item $i \in \C B_1^c \Longleftrightarrow i \in \C B_2^c$.

\end{itemize}
\item For every constructed world $t$ and every $u \in T^{t*}_{\C B}$, $(t, u) \in R_i \Longleftrightarrow i \in \C B$.
\end{enumerate}

\end{fact}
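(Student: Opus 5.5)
The plan is to prove Fact~\ref{fact: flow-width} by induction on the stages of the construction in Lemma~\ref{lem: ConstructionK45}. I will track which $R_i$-edges can exist among constructed worlds (those built in Step 1 or Step 2-1) and between a father and its exact sons. The key observation is that every step adds $R_i$-edges only under explicit side conditions on $i$. I will show that these conditions are preserved by every later step, in particular by the merging in Step 6 and by the recursive submodels inserted in Step 5.

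For item 1, I list the steps that can add an $R_i$-edge between $t_1 \in T^*_{\C B_1}$ and $t_2 \in T^*_{\C B_2}$.
\begin{itemize}
\item Step 2 of the base case adds an edge only when $i \in \C B_1^c \cap \C B_2^c$.
\item Step 3 adds one only when $i \in \C B_1 \cap \C B_2$.
\item In Step 2-1 a new $u \in T^*_{\C C_1}$ receives the edge $(t,u)$ only for $i \in \C C_2 = \C B^c \cap \C C$. Here $t$ is an exact son of the \emph{same} father, $t \in T^*_{\C B}$, and $i \notin \C B$. Since $\C C_1 = \C C \cap \C B$, we also get $i \notin \C C_1$, so $i \in \C B^c \cap \C C_1^c$.
\item Step 2-2 closes $R_i(t)$ for $i \in \C B^c$. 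Those successors are either such $u$ (with $i \notin$ their index) or worlds already joined by Step 2. So both endpoints again lie outside.
\item Step 6 merges only the sets $\RT(t, R_i \cup R_i^{-1})_4$ or $R_i(t)_4 \cup \{t\}$ for $t \in T^*_{\C B}$ with $i \in \C B^c$, together with worlds inside the submodels $M^u$. The submodel worlds are not exact sons of this father, so they are irrelevant to item 1.
\end{itemize}
To handle the merged sets, I will show by a small induction along $R_i \cup R_i^{-1}$-paths in the stage-4 model that every exact son reachable from $t$ through $R_i$-edges added for $i$ outside the indices also has $i$ outside its index. This holds because, by the preceding cases, every edge between exact sons preserves the dichotomy. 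Hence merging keeps $i \in \C B_1^c \Leftrightarrow i \in \C B_2^c$, and the first bullet of item 1 is its complement.

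For item 2, I argue per father. If the father is $s$, as in the ambient Lemma~\ref{lem: K45KD45}, edges $(s,t)$ are added exactly for $i \in \C B$ when $t \in T_{\C B} \supseteq T^*_{\C B}$. For $t \in T^*_{\C B}$ I must rule out exact sons $t_0$ of $t$ itself. If the father is a constructed world $t$, its exact sons are exact sons at the top level of $M^t$ (with respect to indices in $\PP(\C B^c)$). The edges from $t$ to them arise only in two ways: recursively, where the inner copy of the lemma plays the role of the father, or through the Step-6 merge with relation $R_i$, $i \in \C B^c$. Merging adds $(t,v)$ only when $(v,v) \in R_i$ and $v \in T^u_i$, which is $i$-equivalent. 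Applying item 1 inside $M^t$ (by the induction hypothesis on $k$) shows that $v \in T^{t*}_{\C D}$ lies in $T^t_i$ precisely when $i \in \C D$. This yields $(t,u) \in R_i \Leftrightarrow i \in \C D$.

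The main obstacle I expect is Step 6. Merging is a global closure and could in principle propagate edges across the boundary between $i \in \C B$ and $i \in \C B^c$. I will rely on Fact~\ref{fact: merge}(1), which says that merging never adds edges inside an already quasi-$i$-equivalent set it did not already relate. I also need the invariant that each $T^u_i$ contains only sons whose index contains $i$. Establishing this invariant cleanly, simultaneously over the induction on $(k,d)$, is where care is needed.
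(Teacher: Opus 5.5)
Your proposal is correct and follows the paper's own argument. Like the paper, you check step by step that Steps 2, 2-1, 2-2 and 3 only create $R_i$-edges between exact sons on the same side of the dichotomy ``$i$ in the index'' versus ``$i$ outside the index'', and that Step 6 adds $(t,v)$ to $R_i$ precisely for $v \in T^t_i = \bigcup_{i \in \C D} T^{t*}_{\C D}$. Your path-induction treatment of the Step 6 merge for item 1 covers a case the paper passes over silently. The invariant you flag as the main obstacle is immediate from the definition of $T^t_i$ in Step 4 and the disjointness of the sets $T^{t*}_{\C D}$, so you need neither item 1 nor Fact~\ref{fact: merge} at that point.
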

\begin{proof}
For item 1, let's discuss the relationship between $i$ and $\C B_1$.
\begin{itemize}
\item $i \in \C B_1$: $(t_1, t_2) \in R_i$ is added by Step 3. So $i \in \C B_1 \cap \C B_2 \subseteq \C B_2$. 

\item $i \in \C B^c$: If $(t_1, t_2) \in R_i$ is added by Step 2 of the Base case, we have $i \in \C B_1^c \cap \C B_2^c \subseteq \C B_2^c$. If it is added by Step 2-1 of the inductive step, we have $\C B_2 \subseteq \C B_1$, so $i \in \C B_1^c \subseteq \C B_2^c$. If it is added by Step 2-2, then there is $\C B_3 \in \PPX$ and $t_3 \in T^*_{\C B_3}$ such that $\C B_1, \C B_2 \subseteq \C B_3$, and $t_1, t_2 \in R_i(t_3)$. So, $i \in \C B_3^c \subseteq \C B_2^c$. 
\end{itemize}

For item 2, note that by Step 6 of the inductive construction,  $(t, u) \in R_i$ if and only if $u \in T^t_i$ if and only if $i \in \C B$.
\end{proof}

\begin{lem}\label{lem: trans-Euc-1}
In the construction of Lemma \ref{lem: ConstructionK45}, 
$(M, \C U_{\C X})_4$ is transitive and Euclidean.
\end{lem}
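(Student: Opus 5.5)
I will prove the lemma by fixing an agent $i \in \C A$ and checking directly that $R_i$ restricted to $(M, \C U_{\C X})_4$ is transitive and Euclidean. By Proposition \ref{prop: q-e-tran-Eu}, this is the same as showing that every component $\RT(t, R_i \cup R_i^{-1})$ inside $(M, \C U_{\C X})_4$ is quasi-$i$-equivalent. The worlds of $(M, \C U_{\C X})_4$ are exactly the exact sons $t \in T^*_{\C B}$, for $\C B \in \PPX$, produced in Step 1 and, in the inductive step, in Step 2-1. By item 1 of Fact \ref{fact: flow-width}, every $R_i$-edge between exact sons joins worlds on the same side of $i$: either $i$ lies in both index sets or in neither. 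I will split into these two cases.

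\textbf{Case $i \in \C B_1 \cap \C B_2$.} Here the only $R_i$-edges come from Step 3. Step 3 adds $(t_1,t_2)$ for \emph{every} pair of exact sons whose index sets both contain $i$. So the set $W_i = \bigcup_{i \in \C B} T^*_{\C B}$ carries the universal relation $W_i \times W_i$, which is trivially transitive and Euclidean. No other step adds $R_i$-edges inside $W_i$:
\begin{itemize}
\item Step 2 (base case) requires $i \in \C B_1^c \cap \C B_2^c$;
\item Step 2-1 only adds edges for $i \in \C C_2 \subseteq \C B^c$;
\item Step 2-2 only concerns $i \in \C B^c$.
\end{itemize}

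\textbf{Case $i \notin \C B_1, \C B_2$, base case.} The edges come from Step 2 and copy $R'_i$ exactly. So for exact sons $t_1, t_2$ with $i$ outside both index sets, $(t_1,t_2) \in R_i$ iff $(t'_1,t'_2) \in R'_i$. Transitivity and Euclideanness of $R'_i$ (as $M'$ is an $\san L$-model) then transfer directly: a chain $(t_1,t_2),(t_2,t_3)$ in $R_i$ gives a chain $(t'_1,t'_2),(t'_2,t'_3)$ in $R'_i$, hence $(t'_1,t'_3) \in R'_i$, and Step 2 adds $(t_1,t_3)$. Euclideanness is the same argument.

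\textbf{Case $i \notin \C B_1, \C B_2$, inductive step.} The edges come from Steps 2-1 and 2-2. Step 2-1 attaches each new world $u$ to a unique Step-1 world $t$ with $(t,u),(u,u) \in R_i$. Step 2-2 then makes $R_i(t)$ universal on itself. Thus each component is a \emph{star-clique}: a Step-1 world $t$ together with the reflexive clique $R_i(t)$. Such a shape is transitive and Euclidean: $R_i(t) = R_i(u)$ for every $u \in R_i(t)$, and $t$ is not itself an $R_i$-successor of anything.

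The main obstacle is making sure that different stars do not interfere in a way that breaks transitivity. Concretely, I must check two things:
\begin{itemize}
\item a world $u \in T^*_{\C C_1}$ created from $t$ is never also $R_i$-connected, through another agent's construction, to a different Step-1 world $t_0$;
\item Step-1 worlds have no outgoing $R_i$-edges for $i \in \C B^c$ other than those into their own star.
\end{itemize}
Each new $u$ is a freshly constructed world tied to a single pair $(t,u')$, and Step-2-1 edges always leave a Step-1 world, so the stars are pairwise disjoint with respect to $R_i$. To make this rigorous I will enumerate, for each unordered pair of exact sons, which step could have introduced an $R_i$-edge, using item 1 of Fact \ref{fact: flow-width} to rule out the cross cases.

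Combining the cases, every $R_i$-component of $(M, \C U_{\C X})_4$ is either universal, a faithful copy of a transitive Euclidean piece of $M'$, or a star-clique. Hence $(M, \C U_{\C X})_4$ is transitive and Euclidean.
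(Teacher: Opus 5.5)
Your proposal is correct and follows essentially the same route as the paper. The paper also fixes $i$ and uses Fact~\ref{fact: flow-width} to split on whether $i$ lies in the index sets: it handles that case by the universal Step~3 edges, the base case by transferring transitivity and Euclideanness of $R'_i$ through Step~2, and the inductive case via Steps~2-1 and~2-2. Your ``star-clique'' description of the inductive-step components is a structural repackaging of the paper's direct chase on triples $u,v,w$.
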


\begin{proof}
Let $i \in \C A$ and $u, v, w$ be any worlds in $(M, \C U_{\C X})_4$. There are $\C B_1, \C B_2, \C B_3 \in \PPX$ such that $u \in T^*_{\C B_1}$, $v \in T^*_{\C B_2}$, and $w \in T^*_{\C B_3}$.

Let's check $R_i$ is transitive in $(M, \C U_{\C X})_4$. Suppose $(u, v) \in R_i$ and $(v, w) \in R_i$. We need to show that $(u, w) \in R_i$.
\begin{itemize}
\item $i \in \C B_1$: By Fact \ref{fact: flow-width}, $i \in \C B_2$. Similarly, $i \in \C B_3$ and then $i \in \C B_1 \cap \C B_3$. By Step 3, we have $(u, w) \in R_{\C B_1 \cap \C B_3} \subseteq R_i$.

\item $i \in \C B_1^c$: Consider the inductive construction:
\begin{itemize}
\item In the \F{base case}: By Step 2, $(u', v') \in R'_i$ and $(v', w') \in R'_i$. Since $(M', s')$ is transitive, $(u', w') \in R_i$. By Step 2, we have that $(u, w) \in R_i$.

\item In the \F{inductive step}: If $u$ is constructed in Step 1, then $v, w$ are constructed in Step 2-1 and $(v, w) \in R_i$ is added in Step 2-2, so $(u, w) \in R_i$. If $u$ is constructed in Step 2-1, there is $t$ such that $u, v, w \in R_i(t)$ and then by Step 2-2, we have $(u, w) \in R_i$.

\end{itemize}
\end{itemize}
Therefore, $(M, \C U_{\C X})_4$ is transitive.

Let's check $R_i$ is Euclidean in $(M, \C U_{\C X})_4$. Suppose $(u, v) \in R_i$ and $(u, w) \in R_i$. We need to show that $(v, w) \in R_i$.
\begin{itemize}
\item $i \in \C B_1$: By Fact \ref{fact: flow-width}, $i \in \C B_2$ and $i \in \C B_3$ and then $i \in \C B_2 \cap \C B_3$. By Step 3, we have $(v, w) \in R_{\C B_2 \cap \C B_3} \subseteq R_i$.

\item $i \in \C B_1^c$, consider the inductive construction:
\begin{itemize}
\item In the \F{base case}: by Step 2, $(u', v') \in R'_i$ and $(u', w') \in R'_i$. Since $(M', s')$ is Euclidean, $(v', w') \in R_i$. By Step 2, we have that $(v, w) \in R_i$.

\item In the \F{inductive step}: If $u$ is constructed in Step 1, then $v, w$ are constructed in Step 2-1, so by Step 2-2, we have $(v, w) \in R_i$. If $u$ is constructed in Step 2-1, there is $t$ such that $u, v, w \in R_i(t)$ and then by Step 2-2, we have $(v, w) \in R_i$.
\end{itemize}
\end{itemize}
Therefore, $(M, \C U_{\C X})_4$ is Euclidean.
\end{proof}

\begin{lem}\label{lem: trans-Euc}
In the construction of Lemma \ref{lem: ConstructionK45}, the following holds:
\begin{itemize}
\item $T_{\C B}$ is $\C B$-equivalent, for every $\C B \in \PPX$.

\item $(M, \C U_{\C X})$ is transitive and Euclidean.
\end{itemize}
\end{lem}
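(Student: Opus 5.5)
We argue by induction on $k$ (and $d$), following the inductive structure of the construction in Lemma \ref{lem: ConstructionK45}. The inductive hypothesis gives, for every world $t$ whose submodel $(M^t, \C U^t)$ is built in Step 5, that $(M^t, \C U^t)$ is transitive and Euclidean and that each $T^t_{\C D}$ is $\C D$-equivalent. In the base case the submodels are copies of parts of $M'$. These copies are transitive and Euclidean because $M'$ is an $\san L$-model. By Proposition \ref{prop: q-e-tran-Eu}, every component $\RT(u', R'_i \cup R'^{-1}_i)$ of such a copy is quasi-$i$-equivalent, so the sets $W^u_i$ used in Step 6 exist. Lemma \ref{lem: trans-Euc-1} then states that $(M, \C U_{\C X})_4$ is transitive and Euclidean. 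By Proposition \ref{prop: q-e-tran-Eu} again, every set $\RT(t, R_i \cup R_i^{-1})_4$ is quasi-$i$-equivalent inside $(M, \C U_{\C X})_4$.

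The main work concerns Step 6, the only step that adds edges across the pieces. We fix $i \in \C A$ and check that, in the final model, every $R_i$-connected component is quasi-$i$-equivalent. By Proposition \ref{prop: q-e-tran-Eu} this amounts to transitivity and Euclideanness of $R_i$. There are two kinds of components:
\begin{enumerate}
\item components lying entirely inside one $M^t$, or inside $(M, \C U_{\C X})_4$ with $i \in \C B$ for the relevant $T^*_{\C B}$, which Step 6 leaves untouched;
\item components created by a single merge in Step 6.
\end{enumerate}
For the first kind the claim follows from the inductive hypothesis and Lemma \ref{lem: trans-Euc-1}. For the second kind we apply Fact \ref{fact: merge}, item 2. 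This requires two things. First, the merged family consists of quasi-$i$-equivalent sets. Second, at least one merged world is $i$-reflexive. The second condition holds because every $T^u_i$ (or $W^u_i$ when it is nonempty) is $i$-equivalent, respectively contains a copy of $R'_i(u')$, whose elements are $i$-reflexive by transitivity and Euclideanness of $M'$. Fact \ref{fact: merge}, item 1, guarantees that merging creates no new edges inside an old component, so the old components are not disturbed.

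We must also rule out a world being involved in two different merges in a way that would violate quasi-equivalence. Two merges meet only if the corresponding sets $R_i(t)_4 \cup \{t\}$ (or $\RT(t, R_i \cup R_i^{-1})_4$) share a world. When they share a world they coincide, by quasi-$i$-equivalence in $(M, \C U_{\C X})_4$, so the two merges are the same operation. By Fact \ref{fact: flow-width}, item 2, the sons of $t$ are linked to $t$ via $R_i$ exactly when $i \in \C B$, so distinct submodels $M^u$ are not otherwise $R_i$-connected. We expect this bookkeeping of which components get merged, together with the check that no edge added for agent $i$ links to a component of a different merge, to be the main obstacle. We handle it by case analysis on whether $i \in \C B$ or $i \in \C B^c$, using Fact \ref{fact: flow-width}.

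Finally we show that $T_{\C B}$ is $\C B$-equivalent. For $i \in \C B$, Step 3 links every pair of worlds of $T_{\C B} = \bigcup_{\C C \supseteq \C B} T^*_{\C C}$ by $R_i$, including each world to itself. Hence $T_{\C B}$ is contained in a single $i$-reflexive component with identical $i$-successor sets. Since $T_{\C B}$ is closed under $R_{\C B}$ within $(M, \C U_{\C X})$, intersecting over $i \in \C B$ gives $R_{\C B}(t) = T_{\C B}$ for all $t \in T_{\C B}$, together with reflexivity. This yields $\C B$-equivalence, in the manner of the unnamed proposition after Proposition \ref{prop: q-e-tran-Eu}.
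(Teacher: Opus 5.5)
Your proposal is correct and follows essentially the paper's route. Lemma \ref{lem: trans-Euc-1} handles $(M,\C U_{\C X})_4$, the inductive hypothesis handles the Step 5 submodels (copies of $M'$ in the base case), and Fact \ref{fact: merge} handles the Step 6 components. Proposition \ref{prop: q-e-tran-Eu} then gives transitivity and Euclideanness. $\C B$-equivalence of $T_{\C B}$ comes from Step 3 plus the fact that no later step adds $R_{\C B}$-edges touching $T_{\C B}$; state this for incoming as well as outgoing edges, since $T_{\C B}$ must be a whole $R_{\C B}$-component.

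One small slip: in the inductive step, overlapping merge sets need not coincide. For $u$ built in Step 2-1 from $t$ with $i \in \C C_2$, one has $R_i(u)_4 \cup \{u\} = R_i(t)_4$, which omits $t$. The two sets are nested, so the larger merge subsumes the smaller and your conclusion stands.
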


\begin{proof}
For every $\C B \in \PPX$, by Lemma \ref{lem: trans-Euc-1}, we have that $T_{\C B}$ is quasi-$\C B$-equivalent in $(M, \C U_{\C X})_4$. Observe that, in Steps 5 and 6, no edge of $R_{\C B}$ is added between any member of $T_{\C B}$ and other worlds. By Step 3, every $t \in T_{\C B}$, $(t, t) \in R_{\C B}$. Therefore, $T_{\C B}$ is $\C B$-equivalent in the whole model $(M, \C U_{\C X})$.

We need to prove $\RT(u, R_i \cup R_i^{- 1})$ is quasi-$i$-equivalent for every $u \in S$ and $i \in \C A$. Let's discuss the cases of $u$ and $\RT(u, R_i \cup R_i^{- 1})$.
\begin{itemize}

\item $\RT(u, R_i \cup R_i^{- 1}) = \RT(u, R_i \cup R_i^{- 1})_4$:
\begin{itemize}
\item $u$ is in some $T^*_{\C B}$ and $i \in \C B$: Observe that $T^*_{\C B} \subseteq T_{\C B} \subseteq T_i = \RT(u, R_i \cup R_i^{- 1})$. We have known that $T_i$ is $i$-equivalent.

\item $u$ is in some $T^*_{\C B}$ and $i \in \C B^c$: Then, no $i$-edges or concatenated submodels are added to any member of $\RT(u, R_i \cup R_i^{- 1})$ in Step 6. By Lemma \ref{lem: trans-Euc-1}, $\RT(u, R_i \cup R_i^{- 1})$ is quasi-$i$-equivalent.

\end{itemize}

\item $\RT(u, R_i \cup R_i^{- 1})$ is only in some submodel $(M^t, \C U^t)$ for $t$ in $(M, \C U_{\C X})_4$: Observe that $t$ is not in $(M^t, \C U^t)$. No $i$-edges are added to any member of $\RT(u, R_i \cup R_i^{- 1})$ in Step 6. Therefore, $\RT(u, R_i \cup R_i^{- 1})$ is the same as that in $(M^t, \C U^t)$. Whether it is either a copy or inductively constructed, $(M^t, \C U^t)$ is transitive and Euclidean, so $\RT(u, R_i \cup R_i^{- 1})$ is quasi-$i$-equivalent.

\item Otherwise: By Lemma \ref{lem: trans-Euc-1}, $\RT(u, R_i \cup R_i^{- 1})_4$ is quasi-$i$-equivalent. For every $t \in \RT(u, R_i \cup R_i^{- 1})_4$, if $(M^t, \C U^t)$ exists, consider the inductive construction:
\begin{itemize}
\item In the \F{base case}: Since $(M^t, \C U^t)$ is a copy of $(M', \C U^{t'})$ which is $\san L$-model, then $W^t_i$ is quasi-$i$-equivalent in $(M', \C U^{t'})$.

\item In the \F{Inductive step}: By the inductive hypothesis of the construction, if $(M^t, \C U^t)$ constructed , then $T^t_i$ is $i$-equivalent in $(M^t, \C U^t)$.
\end{itemize}
Step 6 has merged all these quasi-$i$-equivalent sets. By Fact \ref{fact: merge}, $\RT(u, R_i \cup R_i^{- 1})$ is quasi-$i$-equivalent.
\end{itemize}
By Proposition \ref{prop: q-e-tran-Eu}, $(M, \C U_{\C X})$ is transitive and Euclidean.
\end{proof}

\begin{lem}\label{lem: Common-p-bisim}
In the construction of Lemma \ref{lem: ConstructionK45},
\[\rho: (M, \C U_{\C X}) \bisim^{col}_p (M', \C U'_{\C X})\]
\end{lem}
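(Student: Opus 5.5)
We prove the lemma by simultaneous induction on $(k, d)$, following the construction. We verify the conditions of Definition \ref{defn: col-p-bisim} for the relation $\rho$ built in the construction. The first group of conditions concerns the distinguished sets $T_{\C B}$ and $T'_{\C B} = R'_{\C B}(s')$.

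\textbf{The distinguished sets.} Every $t \in T_{\C B}$ lies in some $T^*_{\C C}$ with $\C C \supseteq \C B$. Such a $t$ is either constructed in Step 1 from some $t' \in R'_{\C C}(s')$, or in Step 2-1 from some $u' \in R'_{\C C}(t'_0)$. In the Step 2-1 case, $u'$ is reachable from $s'$ via $R'_{\C C_1}$ by transitivity. By Proposition \ref{prop: BC-include}, in both cases the witness lies in $R'_{\C B}(s')$. Conversely, take $t' \in R'_{\C B}(s')$. Since $M', s' \vDash \xi^p$, there is $\zeta \in R_{\C B}(\xi)$ with $M', t' \vDash \zeta^p$. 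The hypothesis $(\gamma^{\uparrow(k+d+1)})^p = (\xi^{\uparrow(k+d+1)})^p$ then yields $\eta \in R_{\C B}(\gamma)$ with $(\eta^{\uparrow(k+d)})^p = (\zeta^{\uparrow(k+d)})^p$. Step 1 therefore creates a world $t \in T^*_{\C B}$ with $(t,t') \in \rho$.

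\textbf{Atoms.} For worlds created in Steps 1 and 2-1, $V(t)$ agrees with $w(\eta_t)$, and $w(\eta_t)^p = w(\zeta_t)^p$ holds in $M'$ at $t'$. For copied worlds and for worlds in the inductively built submodels, the condition is inherited from $\rho_t$.

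\textbf{Forth and Back.} We split by the location of $u$.
\begin{enumerate}
\item If $u$ lies inside some $(M^t,\C U^t)$, the edges that Step 6 adds only merge $i$-equivalent classes. We argue that any new $R_{\C C}$-successor of $u$ belongs to a merged class $T^v_i$ or $R_i(t)_4$, and that its $\rho$-image stays within the corresponding merged class in $M'$. This uses $W^v_i \supseteq R'_i(v')$ together with the Euclidean-transitive structure of $M'$, so every such class maps into $R'_i(t')$.
\item If $u \in T^*_{\C B}$, its $\C C$-successors fall into three kinds. Successors with $\C C \subseteq \C B$ come from Step 3 and are matched within $T_{\C C}$, using Fact \ref{fact: shareSuccessors} in $M'$. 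Successors with $\C C \subseteq \C B^c$ come from Steps 2, 2-2 and 6 and are matched via $\rho_u$ and the $R'$-edges used to create them. Successors with mixed $\C C$ come from Step 2-1, whose worlds are created exactly from the $u' \in R'_{\C C}(t')$.
\end{enumerate}
For Back at such $u$, take $v' \in R'_{\C C}(u')$ and split $\C C = \C C_1 \cup \C C_2$ with $\C C_1 = \C C \cap \C B$ and $\C C_2 = \C C \cap \C B^c$. If $\C C_2 = \varnothing$, then $v' \in R'_{\C C}(s')$ by transitivity, and the world is supplied by the first paragraph. If $\C C_1 = \varnothing$, $v'$ lies in $T^{u'}_{\C C}$ and $\rho_u$ supplies a match. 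If $\C C$ is mixed, Step 2-1 supplies the world; this relies on the claim proved there via the Sibling Property (Proposition \ref{prop: brothers}). In the base case, the absence of Step 2-1 is compensated by Step 2, which copies the $R'_i$-edges for $i \in \C B^c$.

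\textbf{Main obstacle.} The hardest step is the Forth condition in the presence of Step 6 merges and Step 2-2 closures. Here one must show that the intersection $R_{\C C} = \bigcap_{i \in \C C} R_i$ in $M$ never produces a successor $v$ whose witness $v'$ fails to be a common $R'_i$-successor for all $i \in \C C$. We plan to prove an auxiliary invariant: for every $i$ and every merged $i$-class $W$ in $M$, the image $\rho[W]$ is contained in a single $i$-class of $M'$, and $(x,y) \in R_i$ implies $(x',y') \in R'_i$ for the $\rho$-partners. Forth for each $R_{\C C}$ then follows by intersecting over $i \in \C C$. We would establish this invariant step by step through the construction, including Fact \ref{fact: flow-width} for the interaction with Step 3. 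Care is needed with the Step 2-1 edges $(t,u)$: for these the invariant reduces to $u' \in R'_{\C C_2}(t')$, which holds by the choice of $u'$.
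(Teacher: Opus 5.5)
Your handling of the distinguished sets, of Atoms, and of Forth is sound, and for Forth you take a cleaner route than the paper. Every world of $M$ is built from exactly one world of $M'$: constructed worlds from their unique $t'$, copies from their originals, and submodel worlds by $\rho_t$ inductively. So $\rho$ is a function, and it suffices to check that $\rho$ sends each single $R_i$-edge to an $R'_i$-edge:
\begin{itemize}
\item Step 3 edges, by Euclideanness at $s'$;
\item the Step 2-1 loops $(u,u)$, because $u'\in R'_i(t')$ and $R'_i$ is Euclidean;
\item Step 2-2 edges and Step 6 merges, because the merged set maps into a single $R'_i$-component, whose reflexive points form the common successor set.
\end{itemize}
Forth for every $R_{\C C}$ then follows by intersecting over $i\in\C C$. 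This replaces the paper's case split on where $u$ and $v$ sit. State the functionality of $\rho$ explicitly, since the intersection step depends on it.

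The gap is in Back for mixed $\C C$. You say Step 2-1 supplies the witness for every $u\in T^*_{\C B}$, but Step 2-1 is applied only to worlds built in Step 1. A world $u\in T^*_{\C C'_1}$ that Step 2-1 itself produced from $(t,\C C',u')$ gets no Step 2-1 successors of its own.

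The paper handles this case through the parent $t$. Transitivity gives $v'\in R'_{\C C}(t')$. The Step 2-1 child $w$ of $t$ built from $v'$ lies in $T^*_{\C C\cap\C C'_1}$, and $(u,w)\in R_{\C C}$ by Step 3 together with Step 2-2. Your sketch needs at least this argument, but note that it tacitly requires $\C C\subseteq\C C'$.

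Suppose instead that some $j\in\C C\setminus\C C'$. Nothing in Steps 2-1, 2-2 or 3 gives $u$ an $R_j$-edge inside $(M,\C U_{\C X})_4$, so Step 6 yields $R_j(u)=T^u_j$, which lies inside $u$'s own Step 5 submodel. By Fact \ref{fact: flow-width}, $R_{\C C\cap\C C'_1}(u)$ stays inside $(M,\C U_{\C X})_4$. Hence $R_{\C C}(u)=\varnothing$.

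Here is a concrete instance with three agents and all atoms false.
\begin{itemize}
\item Take worlds $s',t',u'$ with $R'_1(x')=\{t',u'\}$ and $R'_2(x')=\{t'\}$ for every $x'$, $R'_3(t')=R'_3(u')=\{u'\}$, and $R'_3(s')=\varnothing$. This is a $\san{K45}_3\F D$-model.
\item Let $\gamma=\xi$ be the depth-3 d-canonical formula of $(M',s')$, and take $k=d=h=1$.
\item Step 1 builds $t\in T^*_{\{1,2\}}$ from $t'$.
\item Step 2-1 with $\C C'=\{1,3\}$ builds $u\in T^*_{\{1\}}$ from $u'$.
\end{itemize}
Then $t'\in R'_{\{1,2\}}(u')$, but $R_{\{1,2\}}(u)=\varnothing$, so Back fails at $(u,u')$.

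So this case cannot be closed by any argument about the construction as written. The construction would first have to give Step 2-1 worlds mixed successors of their own. The obstacle you single out, Forth under the Step 6 merges, is not the real difficulty; this one is.
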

\begin{proof}
Let's check the conditions in Definition \ref{defn: col-p-bisim}.

For $t \in T_{\C B} \in \C U_{\C X}$, there is a $\C B_0 \supseteq \C B$ such that $t \in T^*_{\C B_0}$. By Step 1 and Step 2, there is $t' \in R_{\C B_0}(s')$ that constructs $t$. As $\C B_0 \supseteq \C B$, $R_{\C B_0}(s') \subseteq R_{\C B}(s')$. So. $t' \in R_{\C B}(s')$ and $(t, t') \in \rho$.

Conversely, for every $t' \in R_{\C B}(s')$, because $M', s' \vDash \xi^p$ and $(\gamma^{\uparrow (k + d + 1)})^p = (\xi^{\uparrow (k + d + 1)})^p$, there is $\eta \in R_{\C B}(\gamma)$ and $\zeta \in R_{\C B}(\xi)$ such that $(\eta^p)^{\uparrow (k + d)} = (\zeta^p)^{\uparrow (k + d)}$ and $M', t' \vDash \zeta^p$. By Step 1, there is $t$ constructed by $t'$, $\eta$, and $\zeta$, and $(t, t') \in \rho$.

Let $u, u'$ be any worlds in $S$ and $S'$ respectively so that $(u, u') \in \rho$. 

\F{Atoms}: If $u$ is a copy of $u'$, then $V(u) = V(u')$. If it is constructed, observe that $w(\eta_u)^p = w(\zeta_u)^p$, and $M, u \vDash w(\eta_u)$ while $M', u' \vDash w(\zeta_u)^p$. Therefore, $V(u) \sim_p V(u')$.

\F{Forth}: For all $\C C \in \PPA$ and all $v \in S$, suppose $(u, v) \in R_\C{C}$. Let $v'$ be the world that constructs $v$. Then, $(v, v') \in \rho$. We need to show $(u', v') \in R'_{\C C}$. Let's discuss the cases of $u, v$.
\begin{itemize}
\item $u, v$ are in $(M, \C U_{\C X})_4$: Suppose $u \in T^*_{\C B_1}$ and $v \in T^*_{\C B_2}$ for some $\C B_1, \C B_2 \in \PPX$. Let's discuss the inclusion relationship between $\C C$ and $\C B_1$.
\begin{itemize}
\item $\C C \subseteq \C B_1$: By Fact \ref{fact: flow-width}, $\C C \subseteq \C B_2$ and then $\C C \subseteq \C B_1 \cap \C B_2$. By the construction, $u' \in R'_{\C B_1}(s')$ and $v' \in R'_{\C B_2}(s')$. Because $M'$ is Euclidean, we have $(u', v') \in R'_{\C B_1 \cap \C B_2} \subseteq R'_{\C C}$.

\item $\C C \subseteq \C B_1^c$: If $(u, v)$ is added in Step 2 of the base case, or it is added in Step 2-1 of the inductive step, it is easy to see that $(u', v') \in R'_{\C C}$. Suppose $(u, v)$ is added to $R_{\C C}$ in Step 2-2. Then, there is $t$ in $(M, \C U_{\C X})_4$ such that $u$ and $v$ are in $R_{\C C}(t)$. Then, we have $u'$ and $v'$ are in $R'_{\C C}(t')$ in $M'$. Since $R'_{\C C}$ is Euclidean, $(u', v') \in R'_{\C C}$.

\item Otherwise: Let $\C C_1 = \C C \cap \C B_1$ and $\C C_2 = \C C \cap \C B_1^c$. With the help of the cases ``$\C C \subseteq \C B_1$'' and ``$\C C \subseteq \C B_1^c$'' above, we have $(u', v') \in R'_{\C C_1} \cup R'_{\C C_2} = R'_{\C C}$.

\end{itemize}

\item $u, v$ are in the same submodel $(M^t, \C U^t)$ for to some $t$ in $(M, \C U_{\C X})_4$: Consider the construction.
\begin{itemize}
\item In the \F{base case}: $(M^t, \C U^t)$ is a copy of $(M', \C U^{t'})$. Then, $u, v$ are copies of $u', v'$ respectively. So, we have $(u', v') \in R'_{\C C}$

\item In the \F{inductive step}: The submodel $(M^t, \C U^t)$ satisfies that
\[\rho_t: (M^t, \C U^t) \bisim^{col}_p (M', \C U^{t'})\]
By the inductive hypothesis, $(u', v') \in R'_{\C C}$ and $(v, v') \in \rho_t \subseteq \rho$.
\end{itemize}

\item $u$ is in $(M, \C U_{\C X})_4$ but $v$ is not:
\begin{itemize}
\item $v$ is a $\C C$-son of $u$: Consider the construction.
\begin{itemize}
\item In the \F{base case}: $v$ is a copy of $v'$. Note that $(u, v) \in R_{\C C}$ can only be added in Step 6, and Step 6 requires that every $(v', v') \in R'_i$ for every $i \in \C C$. Since $W^u_i$ is quasi-$i$-equivalent in $(M^u, \C U^u)$, then there is $v'_0 \in R'_{\C C}(u')$ such that $(v', v'_0) \in R'_i$. Since $R'_{\C C}$ is Euclidean, $(v'_0, v') \in R'_i$. Since $R'_{\C C}$ is transitive, we have $(u', v') \in R'_i$ in $(M', s')$. Then, $v' \in  R'_{\C C}(u')$.

\item In the \F{inductive step}: By the inductive hyposis, $(v, v') \in \rho_u \subseteq \rho$ and $v' \in R_{\C C}(u')$.

\end{itemize}

\item Otherwise: $v$ is a $\C C$-son of some sibling $t_1$ of $u$. By the case ``$v$ is a $\C C$-son of $u$'', we have known that $(t'_1, v') \in R'_{\C C}$. There are three possible cases:
\begin{itemize}
\item $(u, t_1) \in R_{\C C}$: Since $t_1, u$ are in $(M, \C U_{\C X})_4$, by the case ``$u, v$ are in $(M, \C U_{\C X})_4$'', we have $(u', t'_1) \in R'_{\C C}$. Since $R'_{\C C}$ is transitive, we have that $(u', v') \in R'_{\C C}$.

\item $(t_1, u) \in R_{\C C}$: Since $t_1, u$ are in $(M, \C U_{\C X})_4$, by the case ``$u, v$ are in $(M, \C U_{\C X})_4$'', we have $(t'_1, u') \in R'_{\C C}$. Since $R'_{\C C}$ is Euclidean, we have that $(u', v') \in R'_{\C C}$.

\item There is $t_2$ in $(M, \C U_{\C X})_4$ such that $(t_1, t_2), (u, t_2) \in R_{\C C}$: Since $t_1$, $t_2$, and $u$ are in $(M, \C U_{\C X})_4$, by the case ``$u, v$ are in $(M, \C U_{\C X})_4$'', we have that $(t'_1, t'_2), (u', t'_2) \in R'_{\C C}$. Since $R'_{\C C}$ is Euclidean, we have that $(t'_2, v') \in R'_{\C C}$. Since $R'_{\C C}$ is transitive, we have that $(u', v') \in R'_{\C C}$.

\end{itemize}
\end{itemize}

\item $v$ is in $(M, \C U_{\C X})_4$ but $u$ is not: $(u, v) \in R_{\C C}$ is added in Step 6 and $(v, v) \in R_{\C C}$ holds in $(M, \C U_{\C X})_4$. Then, $u$ is a $\C C$-son of $t_3$, where either $t_3 = v$ or $t_3$ is a sibling of $v$. By the discussion above, we have that $(t'_3, u') \in R'_{\C C}$ and $(t'_3, v') \in R'_{\C C}$. Since $R'_{\C C}$ is Euclidean, we have that $(u', v') \in R'_{\C C}$.

\item $u$ and $v$ are in different submodels of Step 5: $(u, v) \in R_{\C C}$ is added in Step 6. There are $t_4, t_5$ in $(M, \C U_{\C X})_4$ such that $u$ is a $\C C$-son of $t_4$, $v$ is a $\C C$-son of $t_5$, $(t'_4, u') \in R'_{\C C}$, and $(t'_5, v') \in R'_{\C C}$. There are three possible cases:
\begin{itemize}
\item $(t_4, t_5) \in R_{\C C}$: By the case ``$u, v$ are in $(M, \C U_{\C X})_4$'', we have $(t'_4, t'_5) \in R'_{\C C}$. Since $R'_{\C C}$ is transitive, we have that $(t'_4, v') \in R'_{\C C}$. Since $R'_{\C C}$ is Euclidean, we have that $(u', v') \in R'_{\C C}$.

\item $(t_5, t_4) \in R_{\C C}$: By the case ``$u, v$ are in $(M, \C U_{\C X})_4$'', we have $(t'_5, t'_4) \in R'_{\C C}$. Since $R'_{\C C}$ is transitive, we have that $(t'_5, u') \in R'_{\C C}$. Since $R'_{\C C}$ is Euclidean, we have that $(u', v') \in R'_{\C C}$.

\item There is $t_6$ in $(M, \C U_{\C X})_4$ such that $(t_4, t_6), (t_5, t_6) \in R_{\C C}$: Since $t_4$, $t_5$, and $t_6$ are in $(M, \C U_{\C X})_4$, we have that $(t'_4, t'_6), (t'_5, t'_6) \in R'_{\C C}$. Since $R'_{\C C}$ is Euclidean, we have that $(t'_6, u'), (t'_6, v') \in R'_{\C C}$ and then $(u', v') \in R'_{\C C}$.
\end{itemize}

\end{itemize}
Therefore, $(u', v') \in R'_{\C C}$ and the Forth condition is proved.

\F{Back}: For all $\C C \in \PPA$ and all $v' \in S'$, suppose $(u', v') \in R'_\C{C}$. We need to prove that there is $v$ such that $(u, v) \in R_\C{C}$ and $(v, v') \in \rho$. Let's discuss the cases of $u$.
\begin{itemize}
\item $u$ is in $(M, \C U_{\C X})_4$: Let's suppose $u \in T^{*}_{\C B}$. Let's discuss the inclusion relationship between $\C C$ and $\C B$.
\begin{itemize}
\item $\C C \subseteq \C B$: Since $R'_{\C C}$ is transitive, $(s', v') \in R'_{\C C}$. Since $(M', s') \vDash \xi^p$, there is $\zeta \in R_{\C C}(\xi)$ such that $(M', t') \vDash \zeta^p$. Since $(\gamma^{\uparrow (k + d + 1)})^p = (\xi^{\uparrow (k + d + 1)})^p$, there is $\eta \in R_{\C C}(\gamma)$ such that $(\eta^{\uparrow (k + d)})^p = (\zeta^{\uparrow (k + d)})^p$. By Step 1, there is $v \in T^*_{\C C}$ that is constructed by $v'$ and $(v, v') \in \rho$. By Step 3, we have $(u, v) \in R_{\C C}$.

\item $\C C \subseteq \C B^c$: Since $(u', v') \in R_{\C C}$, by Steps 5 and 6, there is $v$ as a $\C C$-son of $u$ such that $(u, v) \in R_{\C C}$ and $(v, v') \in \rho_u \subseteq \rho$.

\item Otherwise: Let $\C C_1 = \C C \cap \C B$ and let $\C C_2 = \C C \cap \C B^c$. Because $(s', u') \in R'_{\C B}$ and $(u', v') \in R'_{\C C}$, we have that $(s', v') \in R'_{\C C_1}$. Consider the construction.
\begin{itemize}
\item In the \F{base case}: By Step 1, there is $v \in T^*_{\C C}$ that is constructed by $v'$ and $(v, v') \in \rho$,  and by Step 2, $(u, v) \in R_{\C C_2}$.

\item In the \F{inductive Step}: Let's consider the case of $u$
\begin{itemize}
\item $u$ is constructed in Step 1: $v$ constructed in Step 2-1. Then, we have $(u, v) \in R_{\C C_2}$.

\item $u$ is constructed in Step 2-1: There is $t$ that is constructed in Step 1 such that $u \in R_{\C C_2}(t)$ and $(t', u') \in R_{\C C}$. Since $(u', v') \in R'_{\C C}$ and $R'_{\C C}$ is transitive, then $(t', v') \in R_{\C C}$. Then, there is $v \in T^*_{\C C_1} \cap R_{\C C_2}(t)$ that is constructed in Step 2-1. By Step 2-2, we have $(u, v) \in R_{\C C_2}$.

\end{itemize}
\end{itemize}
Then by Step 3, $(u, v) \in R_{\C C_1}$. Therefore, $(u, v) \in R_{\C C}$.

\end{itemize}

\item $u$ is not in $(M, \C U_{\C X})_4$: There is $t$ in $(M, \C U_{\C X})_4$ such that $u$ is in the submodel $(M^t, \C U^t)$. Consider the construction.
\begin{itemize}
\item In the \F{base case}: $u$ is just a copy of $u'$. Since $(u', v') \in R'_{\C C}$, there is a copy $v$ of $v'$ in $(M^t, \C U^t)$ such that $(u, v) \in R_{\C C}$.

\item In the \F{inductive step}:  By the inductive hypothesis, 
\[\rho_t: (M^t, \C U^t) \bisim^{col}_p (M', \C U^{t'})\]
and there is $v$ in $(M^t, \C U^t)$ such that $(u, v) \in R_{\C C}$ and $(v, v') \in \rho_t \subseteq \rho$.
\end{itemize}
\end{itemize}
Therefore, there is $v \in S$ such that $(u, v) \in R_{\C C}$ and $(v, v') \in \rho$. The Back condition is proved.

Hence, $\rho: (M, \C U_{\C X}) \bisim^{col}_p (M', \C U'_{\C X})$.
\end{proof}

\begin{lem}\label{lem: Tcomplete}
If $(M, \C U_{\C X})$ is the model constructed in Lemma \ref{lem: ConstructionK45}, then for every $\C B \in \PPX$, $T_{\C B}$ is $R_{\C B}(\gamma)^{\uparrow k}$-complete.
\end{lem}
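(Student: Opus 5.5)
We prove the statement by induction on $k$, following the induction on $(k,d)$ used in the construction. For every constructed world $t$ we maintain the auxiliary claim $M, t \vDash \eta_t^{\uparrow k}$. Here $\eta_t$ is the formula that constructs $t$, and $k$ is the parameter of the construction in which $t$ is an exact son. Completeness of $T_{\C B}$ then follows in two directions.

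\textbf{Coverage of $R_{\C B}(\gamma)^{\uparrow k}$.} Let $\eta \in R_{\C B}(\gamma)$. Since $M', s' \vDash \xi^p$ and $(\gamma^{\uparrow (k+d+1)})^p = (\xi^{\uparrow (k+d+1)})^p$, there is $\zeta \in R_{\C B}(\xi)$ with $(\eta^{\uparrow(k+d)})^p = (\zeta^{\uparrow(k+d)})^p$. There is also $t' \in R'_{\C B}(s')$ with $M', t' \vDash \zeta^p$. Step 1 therefore builds a world $t \in T^*_{\C B} \subseteq T_{\C B}$ with $\eta_t = \eta$. Conversely, every $t \in T_{\C B}$ lies in some $T^*_{\C C}$ with $\C C \supseteq \C B$, and $\eta_t \in R_{\C C_1}(\gamma)$ for some $\C C_1 \supseteq \C B$. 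By Proposition \ref{prop: BC-include}, $\eta_t \in R_{\C B}(\gamma)$. Hence it suffices to prove the auxiliary claim, because $\eta_t^{\uparrow k} \in R_{\C B}(\gamma)^{\uparrow k}$.

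\textbf{The auxiliary claim.} When $k=0$, we have $\eta_t^{\uparrow 0} = w(\eta_t)$, which holds by the valuation chosen in Step 1 or Step 2-1. For $k>0$, fix $\C C \in \PPA$ and split into three cases according to how $\C C$ relates to the $\C B$ with $t \in T^*_{\C B}$.

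\begin{enumerate}
\item \emph{$\C C \subseteq \C B$.} By Step 3, $R_{\C C}(t) = T_{\C C}$ at the outer level. By Fact \ref{fact: shareSuccessors} and Proposition \ref{prop: dist-identical-son}, $R_{\C C}(\eta_t)^{\uparrow(k-1)}$ is the pruning of $R_{\C C}(\gamma)$. Completeness then reduces to the outer induction at level $k-1$, via Proposition \ref{prop: uparrow-property}(4,5).

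\item \emph{$\C C \subseteq \C B^c$.} Here $R_{\C C}(t)$ consists of $T^t_{\C C}$ after the merge of Step 6. This set is $R_{\C C}(\eta_t)^{\uparrow(k-1)}$-complete by the inductive hypothesis of Step 5. By Fact \ref{fact: merge}(1) and the quasi-equivalence established in Lemma \ref{lem: trans-Euc}, merging adds exactly the worlds of the sets $T^u_{i}$, for siblings $u$ that are $i$-linked to $t$. Those worlds satisfy the same successor sets by Proposition \ref{prop: dist-identical-son}, so completeness is preserved.

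\item \emph{Mixed case $\C C_1 = \C C\cap\C B \neq \varnothing \neq \C C_2 = \C C\cap\C B^c$.} The $\C C$-successors of $t$ are exactly the worlds $u$ built in Step 2-1 from the siblings $\eta$ given by Proposition \ref{prop: brothers}, with $\eta^\downarrow \in R_{\C C}(\eta_t)$. Each such $u$ satisfies $\eta_u^{\uparrow(k-1)}$, and $\eta_u^{\uparrow(k-1)} = (\eta_u^{\downarrow})^{\uparrow(k-1)} \in R_{\C C}(\eta_t)^{\uparrow(k-1)}$. Conversely, each $\chi \in R_{\C C}(\eta_t)$ is realized through some $u' \in R'_{\C C}(t')$. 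This follows from $M', t' \vDash \zeta_t^p$ and the agreement $(\eta_t^{\uparrow(k+d)})^p = (\zeta_t^{\uparrow(k+d)})^p$, once Step 2-1 has been applied.
\end{enumerate}

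\textbf{Main obstacle.} The hardest point is the mixed case, together with the precise bookkeeping of depth indices. We must check that the $\chi \in R_{\C C}(\eta_t)$ matched through the $p$-reducts is the right one. Agreement only after literal elimination at depth $k+d-1$ must still suffice to recover $\eta^{\uparrow(k-1)}$ itself, not merely its $p$-reduct. The plan is to use the margin $d \ge k$: the $p$-agreement at depth $k+d-1 \ge 2k-1$ lets Proposition \ref{prop: brothers} be applied twice, once on the $\xi$-side and once on the $\gamma$-side, as in the claim of Step 2-1. The witness is then chosen on the $\gamma$-side, so its unreduced form is controlled by $\gamma$ alone. A second point requiring care is verifying, in the case $\C C \subseteq \C B^c$, that the edges added in Step 6 introduce no worlds whose canonical type lies outside $R_{\C C}(\eta_t)^{\uparrow(k-1)}$.
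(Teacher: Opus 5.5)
Your reduction of completeness to the auxiliary claim is fine: Step~1 plus Proposition~\ref{prop: BC-include} gives both directions. The target is also right, since for a world $u$ built at a level with parameter $m$ one has $\eta_u\in D^P_{m+h}$ and $\eta_u^{\uparrow m}=\eta_u^{\downarrow h}$. The gap is that an induction on the construction parameter $k$ cannot prove the auxiliary claim, for two reasons.

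\emph{Same-level successors.} In your case $\C C\subseteq\C B$, the $\C C$-successors of $t$ are the worlds of $T_{\C C}$ at the \emph{same} level. So you need $M,v\vDash\eta_v^{\uparrow(k-1)}$ for siblings $v$ of $t$. This is not an instance of the hypothesis at parameter $k-1$, which concerns a differently built model, namely the Step~5 submodels. Deriving it from the claim at $k$ for $v$ is circular. The same problem arises for $v=t$ and for siblings in your mixed case ("each such $u$ satisfies $\eta_u^{\uparrow(k-1)}$").

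\emph{Step~6 changes the submodels.} Step~6 merges $T^t_i$ with $R_i(t)_4\cup\{t\}$ and with the sets $T^{u}_i$ of the siblings. As a result, worlds \emph{inside} a Step~5 submodel acquire new $R_i$-successors: their father $t$, siblings of $t$, and sons of those siblings. The completeness of $T^t_{\C D}$ supplied by the inductive hypothesis of Step~5 is therefore a fact about the submodel in isolation, not about $M$, and embedding does not preserve it. Your remark that merged worlds "satisfy the same successor sets" only addresses the new successors of $t$. It does not address the altered successor sets of the worlds below $t$.

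The missing idea is the paper's choice of induction. Fix the final model $M$ and prove, by induction on the pruning index $l$, that every constructed non-copy world $u$ satisfies $(\eta_u^{\downarrow h})^{\uparrow l}$. The hypothesis at $l-1$ then applies uniformly to every $R_{\C C}$-successor of $u$ in $M$, whatever level that successor sits on. For $u\in T^{t*}_{\C B}$ you must show, for each possible successor $v$, that $(\eta_v^{\downarrow h})^{\uparrow(l-1)}\in R_{\C C}(\eta_u^{\downarrow h})^{\uparrow(l-1)}$. The possible successors are:
\begin{itemize}
\item when $\C C\subseteq\C B^c$: $u$ itself, a sibling, a son, or a son of a sibling;
\item when $\C C\subseteq\C B$: $u$ or a sibling, the father $t$, a sibling of $t$, or a son of a sibling of $t$;
\item in the mixed case: $u$ or a sibling.
\end{itemize}
The Identical Successors Property, $R_{\C C}(\eta_u)=R_{\C C}(\eta_t^{\downarrow})$, together with Proposition~\ref{prop: uparrow-property}(4), absorbs depth mismatches such as $\eta_t^{\downarrow(h+2)}$ versus $\eta_t^{\downarrow h}$. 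Working with $\eta^{\downarrow h}$ rather than $\eta^{\uparrow k}$ is what lets a single index $l$ range across levels.

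The obstacle you single out, recovering $\eta$ rather than $\eta^p$ in the mixed case, is already settled by the claim proved inside Step~2-1. It is not the crux of this lemma.
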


\begin{proof}
Let's discuss the cases of $k$. 

Suppose $k = 0$. By the base case of the construction, for every $\C B \in \PPA$:
\begin{itemize}
\item For every $t \in T_{\C B}$, $t$ is constructed by $\eta_t \in R_{\C B}(\gamma)$ and $M, t \vDash w(\eta_t)$. Since $\eta_t^{\uparrow 0} = w(\eta_t)$, we have $M, t \vDash \eta_t^{\uparrow 0}$.

\item For every $\eta \in R_{\C B}(\gamma)$, since $(\xi^{\uparrow (0 + d + 1)})^p = (\gamma^{\uparrow (0 + d + 1)})^p$, there is $\zeta \in R_{\C B}(\xi)$ such that $(\zeta^{\uparrow (0 + d)})^p = (\eta^{\uparrow (0 + d)})^p$. By Step 1, there is $t \in T^*_{\C B} \subseteq T_{\C B}$ that is constructed by $\eta$ and $M, t \vDash w(\eta)$. Since $\eta^{\uparrow 0} = w(\eta)$, we have $M, t \vDash \eta^{\uparrow 0}$.
\end{itemize}
Therefore, when $k = 0$, for every $\C B \in \PPX$, $T_{\C B}$ is $R_{\C B}(\gamma)^{\uparrow k}$-complete.

Suppose $k > 0$. We will prove the following fact:
\begin{quotation}
For every $l \geq 0$, every $u \in S$ that is constructed but not a copy, 
\[M, u \vDash (\eta_u^{\downarrow h})^{\uparrow l}\]
\end{quotation}
Because $\gamma \in D^P_{k + h + 1}$ and $\eta_u$ wholly occurs in $\gamma$, then there is $m \in \{0, \dots, k\}$ such that $\eta_u \in D^P_{m + h}$. Then, $\eta_u^{\downarrow h} \in D^P_{m}$. Let's prove this fact by induction on $l$.

\F{Base case} ($l = 0$): Similar to the case of ``$k = 0$'' and regardless of the value of $m$, $M, u \vDash w(\eta_u)$. So, $M, u \vDash (\eta_u^{\downarrow h})^{\uparrow 0}$.

\F{Inductive step} ($l > 0$): If $m < l$, since $\eta_u^{\downarrow h} \in D^P_m$, we have $(\eta_u^{\downarrow h})^{\uparrow l} = (\eta_u^{\downarrow h})^{\uparrow (l - 1)} = \eta_u^{\downarrow h}$ by Proposition \ref{prop: uparrow-property}. Then, we have $M, u \vDash (\eta_u^{\downarrow h})^{\uparrow l}$ by the inductive hypothesis on $l$. 

In the following, we assume $m \geq l$. We have known $M, u \vDash w(\eta_u)$. It remains to show for any $\C C \in \PPA$, $M, u \vDash \nabla_{\C C} R_{\C C}(\eta^{\downarrow h}_u)^{\uparrow (l - 1)}$. According to the inductive construction, let's suppose $u$ is in an exact $\C B$-son of some $t$, namely, $u \in T^{t*}_{\C B}$. (In particular, when $m = k$, then $T^{t*}_{\C B}$ is $T^*_{\C B}$, $u$ is in $(M, \C U_{\C X})_4$, and actually, $t$ is $s$.) 

Let's first prove ``$M, u \vDash \F{D}_{\C C} \bigvee R_{\C C}(\eta^{\downarrow h}_u)^{\uparrow (l - 1)}$''. That is to prove for every $v \in R_{\C C}(u)$, there is a formula $\eta_v$ that constructs $v$, such that $(\eta_v^{\downarrow h})^{\uparrow (l - 1)} \in R_{\C C}(\eta_u^{\downarrow h})^{\uparrow (l - 1)}$ and $M, v \vDash (\eta_v^{\downarrow h})^{\uparrow (l - 1)}$.
Let's discuss the inclusion relationship between $\C C$ and $\C B$.

\begin{figure}[htbp]
\begin{minipage}{0.48\textwidth}
    \centering
\resizebox{0.9\linewidth}{!}{
\begin{tikzpicture}[
    block/.style={draw, thick, rounded corners, inner sep=10pt},
    dblock/.style={draw, dashed, thick, rounded corners, inner sep=10pt},
    world/.style={circle, draw, thick, minimum size=8mm},
    formula/.style={thick, minimum size=8mm},
    actual/.style={world, fill=gray!30},
    edge/.style={thick, shorten >=2pt, shorten <=2pt},
    edgei/.style={->, shorten >=2pt, shorten <=2pt},
    edgej/.style={red, dashed, thick, shorten >=2pt, shorten <=2pt, bend right=20}, 
    edgek/.style={->, blue, dashed, thick, shorten >=2pt, shorten <=2pt, bend left=30}, 
    edgejup/.style={blue, dashed, thick, shorten >=2pt, shorten <=2pt, bend left=20}, 
    edgekst/.style={->, blue, dashed, thick, shorten >=2pt, shorten <=2pt}, 
      reflexive/.style={
    ->,                     
    >=Stealth,              
    blue, dashed,
    line width=0.8pt,
    shorten >=1pt, shorten <=1pt,
    loop above,             
    distance=6mm,           
    looseness=8             
  },
  reflexive right/.style={reflexive, loop right},
  reflexive below/.style={reflexive, loop below},
  reflexive left/.style={reflexive, loop left},
    label/.style={font=\footnotesize}
]

\node[formula] (s) at (1, 2) {};
\node[world] (t) at (0, 0) {\(t\)};
\node[world] (t1) at (3, 0) {\(t_1\)};
\draw[block] (-0.5, 0.5) rectangle (5,-0.5);

\node[world] (u) at (-1, -3) {\(u\)};
\node[world] (u1) at (0.5, -3) {\(u_1\)};
\draw[block] (-2, -2.5) rectangle (2,-3.5);
\node[formula] at (1.5, -3) {\(\C U^{t}\)};

\node[world] (u2) at (4, -3) {\(u_2\)};
\draw[block] (3.5, -2.5) rectangle (6,-3.5);
\node[formula] at (5.5, -3) {\(\C U^{t_1}\)};

\node[formula] (Rc) at (2, -1.5) {\color{blue}\(R_{\C C}\)};

\draw[edgei] (t) -- (u) node [midway, right] {\small\(T^{t*}_{\C B}\)};

\draw[edge] (t) -- (-2, -2.5);
\draw[edge] (t) -- (2,-2.5);

\draw[edge] (t1) -- (3.5, -2.5);
\draw[edge] (t1) -- (6,-2.5);

\draw[reflexive left] (u) to (u);
\draw[edgek] (u) to (t);
\draw[edgekst] (u) to (u1);
\draw[edgekst] (u) to (t1);
\draw[edgek] (u) to (u2);
\end{tikzpicture}
}

  \caption*{When $\C C \subseteq \C B$}
\end{minipage}
\hfill
\begin{minipage}{0.48\textwidth}
    \centering
\resizebox{0.9\linewidth}{!}{
\begin{tikzpicture}[
    block/.style={draw, thick, rounded corners, inner sep=10pt},
    dblock/.style={draw, dashed, thick, rounded corners, inner sep=10pt},
    world/.style={circle, draw, thick, minimum size=8mm},
    formula/.style={thick, minimum size=8mm},
    actual/.style={world, fill=gray!30},
    edge/.style={thick, shorten >=2pt, shorten <=2pt},
    edgei/.style={->, shorten >=2pt, shorten <=2pt},
    edgej/.style={red, dashed, thick, shorten >=2pt, shorten <=2pt, bend right=20}, 
    edgek/.style={->, blue, dashed, thick, shorten >=2pt, shorten <=2pt, bend left=30}, 
    edgejup/.style={blue, dashed, thick, shorten >=2pt, shorten <=2pt, bend left=20}, 
    edgekst/.style={->, blue, dashed, thick, shorten >=2pt, shorten <=2pt}, 
  reflexive/.style={
    ->,                     
    >=Stealth,              
    blue, dashed,
    line width=0.8pt,
    shorten >=1pt, shorten <=1pt,
    loop above,             
    distance=6mm,           
    looseness=8             
  },
  reflexive right/.style={reflexive, loop right},
  reflexive below/.style={reflexive, loop below},
  reflexive left/.style={reflexive, loop left},
    label/.style={font=\footnotesize}
]

\node[world] (t) at (1, 2) {\(t\)};
\node[world] (u) at (0, 0) {\(u\)};
\draw[edgei] (t) -- (u) node [midway, right] {\small\(T^{t*}_{\C B}\)};

\node[world] (u0) at (3, 0) {\(u_0\)};
\draw[block] (-0.5, 0.5) rectangle (5,-0.5);
\draw[edge] (t) -- (-0.5, 0.5);
\draw[edge] (t) -- (5, 0.5);
\node[formula] at (4.5, 0) {\(\C U^{t}\)};

\node[world] (v) at (0, -3) {\(v\)};
\draw[block] (-2, -2.5) rectangle (2,-3.5);
\node[formula] at (1.5, -3) {\(\C U^{u}\)};

\node[world] (v1) at (4, -3) {\(v_1\)};
\draw[block] (3.5, -2.5) rectangle (6,-3.5);
\node[formula] at (5.5, -3) {\(\C U^{u_0}\)};

\node[formula] (Rc) at (2, -1) {\color{blue}\(R_{\C C}\)};

\draw[edge] (u) -- (-2, -2.5);
\draw[edge] (u) -- (2,-2.5);

\draw[edge] (u0) -- (3.5, -2.5);
\draw[edge] (u0) -- (6,-2.5);

\draw[reflexive left] (u) to (u);
\draw[edgekst] (u) to (v);
\draw[edgekst] (u) to (u0);
\draw[edgek] (u) to (v1);
\end{tikzpicture}
}

  \caption*{When $\C C \subseteq \C B^c$}
\end{minipage}
  \caption{Any possible $\C C$-successor of $u$, when $u$ is $t$'s exact $\C B$-son.}
\label{fig: PossibleSuccessors}
\end{figure}

Suppose $\C C \subseteq \C B^c$. As illustrated in Figure \ref{fig: PossibleSuccessors}, there are four possible cases:
\begin{enumerate}
\item[\F{1}] $v$ is a $\C C$-son of $u$: $v$ is constructed by some $\eta_v \in R_{\C C}(\eta_u) \subseteq D^P_{(m - 1) + h}$. 

\item[\F{2}] $v$ is $u$: $(u, u)$ is added to $R_{\C C}$ in Step 2-1 of constructing $(M^t, \C U^t)$.
There is a sibling $u_0$ of $u$ such that $u \in R_{\C C}(u_0)$. By Step 2-1, $\eta_u^{\downarrow} \in R_{\C C}(\eta_{u_0}) = R_{\C C}(\eta_u)$.
Then, $\eta_u^{\downarrow} \in R_{\C C}(\eta_u)$. 

\item[\F{3}] $v$ is some sibling of $u$: $\eta_v \in D^P_{m + h}$. There are two possible cases by Step 2:
\begin{itemize}
\item $u$ is constructed in Step 1 and $(u, v)$ is added to $ R_{\C C}$ in Step 2-1. Obviously, $\eta_v^{\downarrow} \in R_{\C C}(\eta_u)$.

\item there is a sibling $w$ of $u$ such that $u, v \in R_{\C C}(w)$ are added in Step 2-1, and $(u, v)$ is added to $ R_{\C C}$ in Step 2-2. Then, $\eta_v^{\downarrow} \in R_{\C C}(\eta_w) = R_{\C C}(\eta_u)$.
\end{itemize}
So, $\eta_v^{\downarrow} \in R_{\C C}(\eta_u)$.

\item[\F{4}] $v$ is a $\C C$-son of some sibling $u_1$ of $u$: Similar to Case \F{1}, we have $\eta_v \in R_{\C C}(\eta_{u_1})$. By Step 2, we have $R_{\C C}(\eta_{u_1}) = R_{\C C}(\eta_u)$. So, $\eta_v \in R_{\C C}(\eta_u)$.
\end{enumerate}
In summary: 
\begin{itemize}
\item In Case \F{1} and Case \F{4}, $\eta_v \in D^P_{(m - 1) + h}$, so $(\eta_v^{\downarrow h}) \in R_{\C C}(\eta_u^{\downarrow h})$ and then $(\eta_v^{\downarrow h})^{\uparrow(l - 1)} \in R_{\C C}(\eta_u^{\downarrow h})^{\uparrow(l - 1)}$.

\item In Case \F{2} and Case \F{3}, $\eta_v \in D^P_{m + h}$, so $\eta_v^{\downarrow (h + 1)} \in R_{\C C}(\eta_u^{\downarrow h}) \subseteq D^P_{m - 1}$. Since $m - 1 \geq l - 1$, by Proposition \ref{prop: uparrow-property},
\[(\eta^{\downarrow h}_v)^{\uparrow(l - 1)} = (\eta_v^{\downarrow (h + 1)})^{\uparrow(l - 1)} \in R_{\C C}(\eta^{\downarrow h}_u)^{\uparrow(l - 1)}.\]
\end{itemize}
By the hypothesis on $l$, $M, v \vDash (\eta^{\downarrow h}_v)^{\uparrow(l - 1)}$.

Suppose $\C C \subseteq \C B$. As illustrated in Figure \ref{fig: PossibleSuccessors}, there are four possible cases:
\begin{itemize}
\item[\F{5}] $v$ is either $u$ or a sibling of $u$: $\eta_v \in D^P_{m + h}$ and $v \in T^t_{\C C}$. By the construction, $\eta_v \in R_{\C C}(\eta_t)$. By Proposition \ref{prop: dist-identical-son} (the identical successors property), $R_{\C C}(\eta_u) = R_{\C C}(\eta_t^{\downarrow})$. So, $\eta_v^{\downarrow} \in R_{\C C}(\eta_u)$. Similar to Case \F{3}, we have $(\eta^{\downarrow h}_v)^{\uparrow(l - 1)} \in R_{\C C}(\eta^{\downarrow h}_u)^{\uparrow(l - 1)}$.

\item[\F{6}] $v$ is $t$: Then, $(u, t) \in R_{\C C}$ can only added in Step 6 and $(t, t) \in R_{\C C}$. Similar to Case \F{2}, $\eta_t^{\downarrow} \in R_{\C C}(\eta_t)$ and then, $\eta_t^{\downarrow 2} \in R_{\C C}(\eta_t^{\downarrow})$. By Proposition \ref{prop: dist-identical-son}, we have that $R_{\C C}(\eta_u) = R_{\C C}(\eta_t^\downarrow)$. So, $\eta_t^{\downarrow (h + 2)} \in R_{\C C}(\eta_u^{\downarrow h})$. Note that $\eta_t^{\downarrow h} \in D^P_{m + 1}$ while $\eta_t^{\downarrow (h + 2)} \in R_{\C C}(\eta_u^{\downarrow h}) \subseteq D^P_{m - 1}$, and $m - 1 \geq l - 1 \geq 0$. By Proposition \ref{prop: uparrow-property},
\[(\eta_t^{\downarrow h})^{\uparrow(l - 1)} = (\eta_t^{\downarrow (h + 2)})^{\uparrow(l - 1)} \in R_{\C C}(\eta_u^{\downarrow h})^{\uparrow(l - 1)}.\] 

\item[\F{7}] $v$ is some sibling of $t$: Similar to Case \F{3}, we have $\eta_v^\downarrow \in R_{\C C}(\eta_t)$. So, $\eta_t^{\downarrow 2} \in R_{\C C}(\eta_t^{\downarrow})$. By Proposition \ref{prop: dist-identical-son} (the identical successors property), $R_{\C C}(\eta_u) = R_{\C C}(\eta_t^\downarrow)$. Then, $\eta_v^{\downarrow (h + 2)}  \in R_{\C C}(\eta_u^{\downarrow h})$. Similar to Case \F{6}, $(\eta_v^{\downarrow h})^{\uparrow(l - 1)} \in R_{\C C}(\eta_u^{\downarrow h})^{\uparrow(l - 1)}$.

\item[\F{8}] $v$ is a $\C C$-son of some sibling $t_1$ of $t$: Similar to Case \F{5} and Case \F{7}, we have that $R_{\C C}(\eta_v) = R_{\C C}(\eta_u) = R_{\C C}(\eta_t^\downarrow) = R_{\C C}(\eta_{t_1}^\downarrow)$. Since $\eta_v \in R_{\C C}(\eta_{t_1})$, then $\eta_v^\downarrow = R_{\C C}(\eta_u)$. So, $(\eta^{\downarrow h}_v)^{\uparrow(l - 1)} \in R_{\C C}(\eta^{\downarrow h}_u)^{\uparrow(l - 1)}$.
\end{itemize}
In summary, $(\eta^{\downarrow h}_v)^{\uparrow(l - 1)} \in R_{\C C}(\eta^{\downarrow h}_u)^{\uparrow(l - 1)}$ and by the hypothesis on $l$, $M, v \vDash (\eta_v^{\downarrow h})^{\uparrow(l - 1)}$.

Suppose $\C C_1 = \C C \cap \C B \neq \varnothing$ and $\C C_2 = \C C \cap \C B^c \neq \varnothing$. There are two possible cases:
\begin{enumerate}
\item[\F{9}] $v$ is $u$: Combining Case \F{2} and Case \F{5}, we have $(\eta^{\downarrow h}_v)^{\uparrow(l - 1)} \in R_{\C C}(\eta^{\downarrow h}_u)^{\uparrow(l - 1)}$.

\item[\F{10}] $v$ is a sibling of $u$: Combining Case \F{3} and Case \F{5}, we have $(\eta^{\downarrow h}_v)^{\uparrow(l - 1)} \in R_{\C C}(\eta^{\downarrow h}_u)^{\uparrow(l - 1)}$.
\end{enumerate}
Then, by the hypothesis on $l$, $M, v \vDash (\eta^{\downarrow h}_{v})^{\uparrow(l - 1)}$.

Thus, it is proved that $M, u \vDash \F{D}_{\C C} \bigvee R_{\C C}(\eta^{\downarrow h}_u)^{\uparrow (l - 1)}$.

Conversely, we need to prove ``$M, u \vDash \bigwedge \hat{\F{D}}_{\C C} R_{\C C}(\eta^{\downarrow h}_u)^{\uparrow (l - 1)}$''. That is to prove for every $\eta \in R_{\C C}(\eta_u)$, there is $v \in R_{\C C}(u)$ such that $M, v \vDash (\eta^{\downarrow h})^{\uparrow(l - 1)}$. Recall that $M', u' \vDash \zeta_u^p$ and there is $d_1$ such that $(\eta_u^{\uparrow(m + d_1)})^p = (\zeta_u^{\uparrow(m + d_1)})^p$. Let's discuss the inclusion relationship between $\C C$ and $\C B$.
\begin{itemize}
\item $\C C \subseteq \C B^c$: There is $\zeta \in R_{\C C}(\zeta_u)$ such that $(\eta^{\uparrow (m + d_1 - 1)})^p = (\zeta^{\uparrow (m + d_1 - 1)})^p$. Since $M', u' \vDash \zeta_u^p$, there is $v' \in R'_{\C C}(u')$ such that $M', v' \vDash \zeta$. By the inductive construction of $(M^u, \C U^u)$, $\eta$, $\zeta$, and $v'$ construct a world $v \in T^{u*}_{\C C}$. By Step 6, $(u, v) \in R_{\C C}$. By the hypothesis on $l$, $M, v \vDash (\eta^{\downarrow h})^{\uparrow(l - 1)}$.

\item $\C C \subseteq \C B$: By Proposition \ref{prop: dist-identical-son} (the identical successors property), there is $\eta_0 \in R_{\C C}(\eta_t) \subseteq D^P_{m + h}$ such that $\eta_0^{\downarrow} = \eta$. There is $\zeta \in R_{\C C}(\zeta_t)$ such that $(\eta^{\uparrow (m + d_1)})^p = (\zeta^{\uparrow (m + d_1)})^p$. By Step 1 of constructing the model $(M^t, \C U^t)$, $\eta_0$ constructs some $v \in T^{t*}_{\C C} \subseteq T^t_{\C C}$. By Step 3, $(u, v) \in R_{\C C}$. Note that $\eta_0^{\downarrow h} \in D^P_m$, $\eta^{\downarrow h} \in D^P_{m - 1}$, and $m - 1 \geq l - 1 \geq 0$. By Proposition \ref{prop: uparrow-property}, $(\eta_0^{\downarrow h})^{\uparrow (l - 1)} = (\eta^{\downarrow h})^{\uparrow (l - 1)}$. By the hypothesis on $l$, we have $M, v \vDash(\eta_0^{\downarrow h})^{\uparrow (l - 1)}$. So, $M, v \vDash (\eta^{\downarrow h})^{\uparrow(l - 1)}$. 

\item $\C C_1 = \C C \cap \C B \neq \varnothing$ and $\C C_2 = \C C \cap \C B^c \neq \varnothing$: There is $\zeta \in R_{\C C}(\zeta_u)$ such that $(\eta^{\uparrow (m + d_1 - 1)})^p = (\zeta^{\uparrow (m + d_1 - 1)})^p$. Since $M', u' \vDash \zeta_u^p$, there is $v' \in R'_{\C C}(u')$ such that $M', v' \vDash \zeta$. By Step 2-1, there is $\eta_0 \in R_{\C C_1}(\eta_t)\subseteq D^P_{m + h}$ such that $\eta_0^{\downarrow} = \eta$ and $\eta_0$ constructs $v \in T^{t*}_{\C C_1}$. Then, we have $(u, v) \in R_{\C C}$. Note that $\eta_0^{\downarrow h} \in D^P_m$, $\eta^{\downarrow h} \in D^P_{m - 1}$, and $m - 1 \geq l - 1$. By Proposition \ref{prop: uparrow-property}, $(\eta_0^{\downarrow h})^{\uparrow (l - 1)} = (\eta^{\downarrow h})^{\uparrow (l - 1)}$. By the hypothesis on $l$, we have $M, v \vDash(\eta_0^{\downarrow h})^{\uparrow (l - 1)}$. So, $M, v \vDash (\eta^{\downarrow h})^{\uparrow(l - 1)}$.
\end{itemize}
Thus, it is proved that $M, u \vDash \bigwedge \hat{\F{D}}_{\C C} R_{\C C}(\eta^{\downarrow h}_u)^{\uparrow (l - 1)}$.

In summary, $M, u \vDash \bigwedge_{\C C \in \PPA}\nabla_{\C C} R_{\C C}(\eta^{\downarrow h}_u)^{\uparrow (l - 1)}$. Hence, the fact is proved: $M, u \vDash (\eta^{\downarrow h}_u)^{\uparrow l}$ for every $l$ and every constructed $u$.

Consider $(M, \C U_{\C X})$ and any $\C B \in \PPX$:
\begin{itemize}
\item For every $t \in T_{\C B}$, $t$ is constructed by $\eta_t \in R_{\C B}(\gamma)$. By Proposition \ref{prop: uparrow-property},
\[(\eta^{\downarrow h}_t)^{\uparrow k} = \eta_t^{\downarrow h} = \eta_t^{\uparrow k}.\] By the fact we just proved, $M, t \vDash (\eta^{\downarrow h}_t)^{\uparrow k}$. Therefore, $M, t \vDash \eta_t^{\uparrow k}$.

\item For every $\eta \in R_{\C B}(\gamma) \subseteq D^P_{k + h}$, since $(\xi^{\uparrow (k + d + 1)})^p = (\gamma^{\uparrow (k + d + 1)})^p$, there is $\zeta \in R_{\C B}(\xi)$ such that $(\zeta^{\uparrow (k + d)})^p = (\eta^{\uparrow (k + d)})^p$. By Step 1, there is $t \in T^*_{\C B} \subseteq T_{\C B}$ constructed by $\eta$. By the fact we just proved, $M, t \vDash (\eta^{\downarrow h})^{\uparrow k}$. By Proposition \ref{prop: uparrow-property},
\[(\eta^{\downarrow h})^{\uparrow k} = \eta^{\downarrow h} = \eta^{\uparrow k}.\] 
Therefore, $M, t \vDash \eta^{\uparrow k}$.

\end{itemize}
Hence, $T_{\C B}$ is $R_{\C B}(\gamma)^{\uparrow k}$-complete for every $\C B \in \PPX$.
\end{proof}

\ConstructionSfive*
\begin{proof}
This construction is almost the same as that of Lemma \ref{lem: ConstructionK45}, except Step 2-2 of the inductive step:
\begin{itemize}
\item[Step 2-2] For every $\C B \in \PPX$, every $t \in T^*_{\C B}$, add $(t, t) \in R_j$ for every $j \in \C A$, and subsequently, for every $i \in \PP(\C B^c)$, and every $u_1, u_2 \in R_i(t)$, then add $(u_1, u_2)$ to $R_i$.
\end{itemize}
Obviously, $(M, \C U_{\C X})_4$ is reflexive. By Lemma \ref{lem: trans-Euc-1}, $(M, \C U_{\C X})_4$ is transitive, Euclidean, and reflexive. Then by the inductive hypothesis, every concatenated submodel in Step 5 is also reflexive. Then, we can repeat the discussion of Lemma \ref{lem: trans-Euc-1} by replacing ``quasi-$i$-equivalent'' with ``$i$-equivalent''. By Proposition \ref{prop: q-e-tran-Eu}, $(M, \C U_{\C X})$ is reflexive. By Lemma \ref{lem: trans-Euc}, $(M, \C U_{\C X})$ is also transitive and Euclidean. Therefore, $(M, \C U_{\C X})$ is an $\san{S5}_n \F D$-model, and $T_{\C B}$ is $\C B$-equivalent in $(M, \C U_{\C X})$ for every $\C B \in \PPX$.

The proof of collective $p$-bisimulation is analogous to Lemma \ref{lem: Common-p-bisim}, and we only need to add the following discussion in the Forth condition: For all $\C C \in \PPA$ and all $v \in S$, suppose $(u, v) \in R_\C{C}$:
\begin{itemize}
\item $v$ is $u$: Since $(M', s')$ is reflexive, then $(u', u') \in R'_{\C C}$
\end{itemize}
Therefore, $(M, s) \bisim^{col}_p (M', s')$.

We still need to show $T_{\C B}$ is $R_{\C B}(\gamma)^{\uparrow k}$-complete. This proof repeats the one of Lemma \ref{lem: Tcomplete}. We only need to change the following case in that proof.
\begin{enumerate}
\item[\F{2}] $v$ is $u$: By Lemma \ref{lem: reflexive}, $\eta_u^{\downarrow} \in R_{\C C}(\eta_u)$.
\end{enumerate}
$T_{\C B}$ is $R_{\C B}(\gamma)^{\uparrow k}$-complete.

The construction lemma for $\san{S5}_n \F D$ is proved.
\end{proof}

\LTwo*
\begin{proof}
Let's slightly modify the construction in Lemma \ref{lem: ConstructionK45}. Recall that the parameters $d$ and $h$ are needed in Lemma \ref{lem: ConstructionK45}. Let's fix $d = h = 0$. Let $\xi$ be any member in $D^P_{k + 1}(\san L)$ such that $(\delta^{\uparrow (k + 0 + 1)})^p = (\xi^{\uparrow (k + 0 + 1)})^p$, namely $\delta^p = \xi^p$. The Greek letter $\eta$ ranges over the d-canonical formulas that wholly occur in $\delta$, while the Greek letter $\zeta$ over those in $\xi$. 

\F{Base case} ($k = 0$): the construction is the same as the Base case of Lemma \ref{lem: ConstructionK45}. Since $k \leq d = 0$, the $\san L_2$-model $(M, \C U_{\C X})^{\delta, \xi, 0, 0}$ is constructed.

\F{Inductive step} ($k > 0$): Steps 1, 3, 4, and 6 are the same as those of Lemma \ref{lem: ConstructionK45}. We only show Steps 2 and 5. Note that Lemma \ref{lem: ConstructionK45} required the condition ``$k \leq d$'' for Steps 2 and 5, but here $k > d = 0$.

\begin{itemize}
\item[Step 2-1] 
For every $\C B \in \PPX$, $\C C \in \PPA$ such that $\C B \cap \C C \neq \varnothing$ and $\C B^c \cap \C C \neq \varnothing$, every $t \in T^*_{\C B}$ and every $u' \in R'_{\C C}(t')$, since $(\eta_t^{\uparrow k})^p = (\zeta_t^{\uparrow k})^p$, consider the following:

Since $\C A = \{1, 2\}$, $\C B$ is a singleton and $\C C = \C A$. Denote by $i$ and $j$ the only members of $\C B$ and $\C B^c$ respectively. By Proposition \ref{prop: A2-main-lemma}, there exists $u \in T^*_i$ that is constructed by $u'$ and $\eta_u \in R_i(\delta)$ in Step 1 such that
\begin{itemize}
\item $\eta_u^{\downarrow} \in R_{\C C}(\eta_t)$,

\item $R_j(\eta_u) = R_j(\eta_t)$,

\item $M', u' \vDash \eta_u^p$,
\end{itemize}
Then add $(t, u), (u, u)$ to $R_j$. (Note that this Step 2-1 does not add any new worlds to $S$.) 

\item[Step 2-2] For every $\C B \in \PPX$, every $t \in T^*_{\C B}$, (if $\san{L}_2$ is $\san{S5}_2 \F D$, add $(t, t) \in R_j$ for every $j \in \C A$, and subsequently,) for every $i \in \PP(\C B^c)$, and every $u_1, u_2 \in R_i(t)$, then add $(u_1, u_2)$ to $R_i$.

\item[Step 5] For every $\C B \in \C P^+(\C X)$ where $\C B \neq \C A$, since $\C B$ is a singleton, denote by $i$ the only member of $\C B$ and by $j$ the only member of $\C B^c$. If $R'_j(t') \neq \varnothing$, consider the multi-pointed submodel $(M', \C U^{t'}_{\{j\}})$ where $T^{t'}_j = R'_j(t')$ and $\C U^{t'}_{\{j\}} = \{T^{t'}_j\}$. By the inductive hypothesis, a multi-pointed $\san L$-model 
\[(M^t, \C U^t_{\{j\}})^{\eta_t, \zeta_t, (k - 1), 0},\]
where $\C U^{t'}_{\{j\}} = \{T^{t'}_j\}$, has been constructed such that
\begin{itemize}
\item $T^t_j$ is $j$-equivalent;

\item $T^t_j$ is $R_j(\eta_t)$-complete;

\item $(M^t, \C U^t_{\{j\}})^{\eta_t, \zeta_t, (k - 1), 0} \bisim^{col}_p (M', \C U^{t'}_{\{j\}})$.
\end{itemize}
Add $(M^t, \C U^t_{\C B^c})^{\eta_t, \zeta_t, (k - 1), 0}$ to $(S, R, V)$.
\end{itemize}
With the other steps, the construction is finished.

Let $(M, \C U_{\C X}) = (M, \C U_{\C X})^{\delta, \delta, k, 0}$. Observe that, as long as $(M, \C U_{\C X})$ is constructed, none of Lemma \ref{lem: trans-Euc}, Lemma \ref{lem: Common-p-bisim}, and Lemma \ref{lem: Tcomplete} requires the condition ``$k \leq d \leq h$''. Therefore, we can conclude that $(M, \C U_{\C X})$ is an $\san{L}_2$-model such that
\begin{itemize}
\item $T_{\C B}$ is $\C B$-equivalent;

\item $(M, \C U_{\C X}) \bisim^{col}_p (M', \C U'_{\C X})$;

\item $T_{\C B}$ is $R_{\C B}(\delta)$-complete.
\end{itemize}
\end{proof}

\Dpcdichotomy*
\begin{proof}
Let's prove by induction on the construction of $\phi$. Similar to the proof of Proposition \ref{prop: canon-dichotomy}, we only consider the case $\phi = \F C\psi$. Note that $\delta$ is in the form 
\[\delta = w(\delta) \land \bigwedge_{\C B \in \PPA}\nabla_{\C B} R_{\C B}(\delta) \land \nabla \Phi\]
There are two possible cases:
\begin{itemize}
\item There is $\eta_0 \in \TC(\delta)$ such that $\eta_0 \vDash \neg \psi$: In this case, for any $(M, s)$, if $M, s \vDash \delta$, then by the subformula
\[M, s \vDash \bigwedge \hat{\F C}\  \TC(\delta),\]
there exists $t_0 \in \TC(s)$ such that $M, t_0 \vDash \eta_0$. By $\eta_0 \vDash \neg \psi$, $M, t_0 \vDash \neg \psi$. So, $M, s \vDash \hat{\F C} \neg \psi$. Therefore, $M, s \vDash \neg \F C \psi$

\item For any $\eta \in \TC(\delta)$, $\eta \vDash \psi$: In this case, for any $(M, s)$, if $M, s \vDash \delta$, then by the subformula
\[M, s \vDash \F C \bigvee \TC(\delta),\]
for any $t \in \TC(s)$, there exists $\gamma \in \TC(\delta)$ such that $M, t \vDash \gamma$. As ``for any $\eta \in \TC(\delta)$, $\eta \vDash \psi$'', we have that $\gamma \vDash \psi$. (In the special case where $\TC(\delta) = \varnothing$, $\nabla \TC(\delta) = \F C \bot$. As $\vDash \bot \to \psi$, we have $\vDash \F C(\bot \to \psi)$. According to Axiom $\san K$, we have that $\vDash \F C\bot \to \F C\psi$.) So, $M, s \vDash \F C \psi$.
\end{itemize}
Therefore, either $\delta \vDash \phi$ or $\delta \vDash \neg \phi$.
\end{proof}

\DpcModelsUniCan*
\begin{proof}
Let's prove by induction on $k$, and we only need to refine the base case. 
 
Suppose $k = 0$. For every $t \in \TC(s) \cup \{s\}$, let 
\[\zeta_t = \bigwedge_{p \in P \text{ and } M, t \vDash p} p \land \bigwedge_{p \in P \text{ and } M, t \not \vDash p} \neg p\]
Let $w(\delta) = \zeta_s$ and
\[\Phi = \{\zeta_t \mid t \in \TC(s)\}\]
Then, $\delta = w(\delta) \land \nabla \Phi$. Obviously, this $\delta$ is unique w.r.t. $(M, s)$.

The inductive step is the same as that of Proposition \ref{prop: models-unique-canon}. Hence, the proposition is proved.
\end{proof}

\DpcUniCan*
\begin{proof} 
Since we have proved Proposition \ref{prop: dpc-models-unique-canon}, this proof is in parallel with the proof of Proposition \ref{prop: unique-canon}.
\end{proof}

\DpcDistcutinto*
\begin{proof}
The proof is analogous to Proposition \ref{prop: dist-cut-into} because pruning does not influence common knowledge.
\end{proof}

\DpcUparrowproperty*
\begin{proof}
The proof is analogous to Proposition \ref{prop: uparrow-property} because pruning does not influence common knowledge.
\end{proof}

\ReachSplit*
\begin{proof}
By the definition, $\delta^{\uparrow l} = \delta$ when $l \geq k$, while if $l < k$, then $\delta^{\uparrow l} \in C^P_l$ and $R_{\C B}(\delta^{\uparrow l}) \subseteq C^P_{l - 1}$ for every $\C B \in \PPA$. Therefore, it is sufficient to only prove \[\TC(\delta) = \bigcup_{\C B \in \PPA}\bigcup_{\eta \in R_{\C B}(\delta)} \{w(\eta)\} \cup \TC(\eta).\] Denote the right side of the equation by $\Phi$. Since $\delta$ is satisfiable, there is $(M, s)$ such that $M, s \vDash \delta$. Observe that
\[\TC(s) = \bigcup_{\C B \in \PPA}\bigcup_{t \in R_{\C B}(s)} \{t\} \cup \TC(t)\]
Denote the right side by $T$. It is easy to see that $\TC(s)$ is $\TC(\delta)$-complete while $T$ is $\Phi$-complete. Both $\TC(\delta)$ and $\Phi$ consist of minterms of $P$. Therefore, $\TC(\delta) = \Phi$.
\end{proof}

\DPCGuide*

\begin{proof}
This proof is analogous to that of Theorem \ref{thm: guideline}.
\end{proof}

\DPCReflexive*
\begin{proof}
The proof is in parallel with that of Lemma \ref{lem: reflexive}. We only need to show that $w(\delta) \in \TC(\delta)$. Let $M, s$ be any $\san T_n \F{DPC}$-model of $\delta$. Since it is reflexive, $s \in \TC(s)$. Since $M, s \vDash \nabla \TC(s)$, there is a minterm $\zeta$ in $\TC(\delta)$ such that $M, s \vDash \zeta$. This $\zeta$ is exactly $w(\delta)$.
\end{proof}

\DPCT*
\begin{proof}
The proof is similar to that of Lemma \ref{lem: T}. By Lemma \ref{lem: dpc-KandD}, there is a model $(M^*, s^*)$ such that $M^*, s^* \vDash \delta$ and $(M^*, s^*) \bisim^{col}_p (M', s')$. We construct $(M, s)$ by adding edges $R_i$ to every world of $(M^*, s^*)$ as we did in Lemma \ref{lem: T}. It is easy to see that $(M, s) \bisim^{col}_p (M', s')$. We still need to show that $M, s \vDash \delta$.

Let's prove for every $t$ in $(M, s)$, if $t$ is constructed by some dpc-canonical formula $\eta_t$, then $M, t \vDash \eta_t^{\uparrow l}$ for every $l \geq 0$.
\begin{itemize}
\item When $l = 0$, $\eta_t^{\uparrow 0} = w(\eta_t) \land \nabla \TC(\eta_t)$. We have known $M^*, t^* \vDash w(\eta_t) \land \nabla \TC(\eta_t)$. By Lemma \ref{lem: dpc-reflexive}, $w(\eta_t) \in \TC(\delta)$. So, $M, t \vDash w(\eta_t) \land \nabla \TC(\eta_t)$.

\item Suppose $l \geq 0$. We have known that $M, t \vDash w(\eta_t) \land \nabla \TC(\eta_t)$. For every $u \in R_{\C B}(t)$, if $u \neq t$, by the hypothesis, $M, u \vDash \eta_u^{\uparrow (l - 1)}$; if $u = t$, by Lemma \ref{lem: dpc-reflexive}, $\eta_t^{\uparrow (l - 1)} \in R_{\C B}(\eta_t^{\uparrow l})$ and by the hypothesis, $M, t \vDash \eta_t^{\uparrow (l - 1)}$. Conversely, for every $\eta \in R_{\C B}(\eta_t)$, there is a constructed $u \in R_{\C B}(t)$ such that $M, u \vDash \eta_u^{\uparrow (l - 1)}$ by the hypothesis. Therefore, $M, t \vDash \eta_t^{\uparrow l}$
\end{itemize}
Therefore $M, s \vDash \delta$.

Hence, the lemma is proved.
\end{proof}

\ConstructionCommonSfive*
\begin{proof}
This construction inherits the steps of Lemma \ref{lem: ConstructionK45}, and we only need to modify the following steps:

\F{Base case}:
\begin{itemize}
\item[Step 5] For every $\C B \in \C P^+(\C X)$ such that $\C B \neq \C A$ and every $t \in T^*_{\C B}$, consider the multi-pointed submodel 
$(M', \C U^{t'}_{\C B^c})$, where $\C U^{t'}_{\C B^c} = \{R'_{\C D}(t') \mid \C D \in \PP(\C B^c)\}$. Construct $(M^t, \C U^t_{\C B^c})$ as follows: Initially, let $S^t$, $V^t$, $\rho_t$, and every edge be empty.
\begin{itemize}
\item For every $u' \in \TC(t')$ and every minterm $\chi \in \TC(\eta_t)$, if $M', u' \vDash \chi^p$, then construct $u$ in $S$ such that for every $q \in P$, $q \in V(u)$ if and only if $\chi \vDash q$. Let $(u, u') \in \rho_t$.

\item For every $\C C \in \PPA$ and every $u, v$ in $S^t$, if $(u', v') \in R'_{\C C}$, add $(u, v)$ to $R_i$ for $i \in \C C$.

\item For every $\C D \in \PP(\C B^c)$, let $U^{t*}_{\C D} = \{u \mid (u, u') \in \rho_t \text{ and } u' \in R_{\C D}(t')\}$ and 
\[U^t_{\C D} = \bigcup_{\C D \subseteq \C D_0}U^{t*}_{\C D_0}\]
Let $\C U^t_{\C B^c} = \{U^t_{\C D} \mid \C D  \in \PP(\C B^c)\}$.

\item For all the worlds $u'$ in $(M', s')$ that are unreachable from $t'$, add the induced submodel $(M', u')$ to $S^t$ as copies, and this submodel inherits their relations and valuation in $M'$ and $(v, v') \in \rho_t$ for every $v'$ and its copy $v$ in the copied submodel.
\end{itemize}
\end{itemize}
For convenience, let's still call all the worlds in $(M^t, \C U^t_{\C B^c})$ the \emph{copies} of their counterparts in $M'$, and call $(M^t, \C U^t_{\C B^c})$ a copy of $(M', \C U^{t'}_{\C B^c})$. Observe that $(M^t, \C U^t_{\C B^c})$ is still an $\san{S5}_n \F{DPC}$-model and every set $U^t_{\C D}$ is $\C D$-equivalent. Recall that we denote the set $\RT(t, R_i \cup R_i^{- 1})$ within $(M, \C U_{\C X})^{\gamma, \xi, 0, 0}_4$ by $\RT(t, R_i \cup R_i^{- 1})_4$. 
\begin{itemize}
\item[Step 6] For every $\C B \in \C P^+(\C X)$, every $t \in T^*_{\C B}$, every $i \in \C B^c$, every $u \in \RT(t, R_i \cup R_i^{- 1})_4$, $U^u_i$ in $(M^u, \C U^u)$ is an $i$-equivalent set. Then, merge all these sets $U^u_i$ and $\RT(t, R_i \cup R_i^{- 1})_4$.
\end{itemize}

\F{Inductive step}:
\begin{itemize}
\item[Step 2-2] For every $\C B \in \PPX$, every $t \in T^*_{\C B}$, add $(t, t) \in R_j$ for every $j \in \C A$, and subsequently, for every $i \in \PP(\C B^c)$, and every $u_1, u_2 \in R_i(t)$, then add $(u_1, u_2)$ to $R_i$.
\end{itemize}
Thus, the model $(M, \C U_{\C X})$ is constructed.

Observe Step 5 of the base case and we have that for every $\C B \neq \C A$, every $t \in T^*_{\C B}$, and every $u'$ in $(M', \C U^{t'}_{\C B^c})$,
\begin{itemize}
\item if it is unreachable from $t'$, then there is a copy $u \in  S^t$ such that $(u, u') \in \rho_t$;

\item if $u' \in \TC(t')$, since $M', t' \vDash \zeta_t^p$ and $\TC(\eta_t)^p = \TC(\zeta_t)^p$, there is a minterm $\chi \in \TC(\eta_t)$ that constructs a $u$ in $(M^t, \C U^t_{\C B^c})$ such that $M, u \vDash \chi$ and $(u, u') \in \rho_t$.
\end{itemize}
Then, it is easy to see that $\rho_t: (M^t, \C U^t_{\C B^c}) \bisim^{col}_p (M', \C U^{t'}_{\C B^c})$. Then, repeating the proof of Lemma \ref{lem: Common-p-bisim}, we conclude that $(M, \C U_{\C X}) \bisim^{col}_p (M', \C U'_{\C X})$.

Note that Steps 1 and 3 are the same as Lemma \ref{lem: ConstructionK45}, so $T_{\C B}$ is $\C B$-equivalent in $(M, \C U_{\C X})$ for every $\C B \in \PPX$. 

If $(M', s')$ is transitive, Euclidean, and reflexive respectively, $(M^t, \C U^t_{\C B^c})$ is also transitive, Euclidean, and reflexive respectively. Then, repeating the proofs of Lemma \ref{lem: trans-Euc}, we have that $(M, \C U_{\C X})$ is an $\san{S5}_n \F{DPC}$-model.

It remains to prove that $T_{\C B}$ is $R_{\C B}(\delta)^{\uparrow k}$-complete. The proof is also analogous to that of Lemma \ref{lem: Tcomplete} and Lemma \ref{lem: ConstructionS5}. It is augmented by proving the fact that for every constructed world $u$, $M, u \vDash \nabla \TC(\eta_u)$. 

Let's first prove $M, u \vDash \bigwedge \hat{\F{C}}\ \TC(\eta_u)$. That is to prove for every mintern $\chi \in \TC(\eta_u)$, there is $v$ such that $M, v \vDash \chi$. Suppose $\eta_u \in D^P_{m + h}$ for $m \in \{0, \dots, k\}$. Let's prove by induction on $m$
\begin{itemize}
\item \F{Base case} ($m = 0$): Suppose $u$ is an exact $\C B$-son of some $t$.
\begin{itemize}
\item $\C B \neq \C A$: By Step 5 of the base case of constructing $(M^t, \C U^t)$, the submodel $(M^u, \C U^u)$ exists and for every mintern $\chi \in \TC(\eta_u)$, there is $v$ in $(M^u, \C U^u)$ such that $M, v \vDash \chi$.

\item $\C B = \C A$: For every $i \in \C A$, $\eta_u \in R_i(\eta_t)$ and there is $u_0$ in $T^{t*}_i$ that is also constructed by $\eta_u$. Observe that $\TC(u) = \TC(u_0)$. Similar to the case above, there is $v$ in $(M^{u_0}, \C U^{u_0})$ such that $M, v \vDash \chi$.
\end{itemize}

\item \F{Inductive step} ($m > 0$):  Similar to the base case, we assume $u$ is an exact $\C B$-son of some $t$ but $\C B \neq \C A$, and by Step 5, $(M^u, \C U^u)$ exists. For every mintern $\chi \in \TC(\eta_u)$,
\begin{itemize}
\item $\chi$ is $w(\eta)$ for some successor $\eta$ of $\eta_u$: By the inductive construction, there is $v$ constructed by $\eta$ as a son of $u$ in $(M^u, \C U^u)$, such that $M, v \vDash w(\eta)$. Note that $v \in \TC(u)$.

\item $\chi \in \TC(\eta)$ for some successor $\eta$ of $\eta_u$: Note that $\eta \in D^P_{(m - 1) + h}$. By the inductive construction, there is $w$ constructed by $\eta$ as a son of $u$ in the submodel $(M^w, \C U^w)$. By the inductive hypothesis on $m$, there is $v$ in $(M^w, \C U^w)$ such that $M, v \vDash \chi$. Note that $v \in \TC(w) \subseteq \TC(u)$.
\end{itemize}
\end{itemize}
Therefore, for every mintern $\chi \in \TC(\eta_u)$, there is $v \in \TC(u)$ such that $M, v \vDash \chi$. 

We still need to prove $M, u \vDash \F{C}\bigvee\TC(\eta_u)$., that is, for every $v \in \TC(u)$, there is a mintern $\chi \in \TC(\eta_u)$ such that $M, v \vDash \chi$. Let's discuss $v$.
\begin{itemize}
\item $v$ is constructed: Since every constructed world is actually constructed by a dpc-canonical formula that wholly occurs in $\gamma$, then by Proposition \ref{prop: ShareCommonKnowledge}, $\TC(\gamma) = \TC(\eta_u) = \TC(\eta_v)$. Observe that by Proposition \ref{prop: Reach-split}, $w(\eta_v) \in \TC(\gamma)$. So, $w(\eta_v) \in \TC(\eta_u)$ and $M, v \vDash w(\eta_v)$.

\item $v$ is a copy: There is a constructed world $w$ such that $\eta_w \in D^P_{0 + h}$ and $v$ is in the submodel $(M^w, \C U^w)$. By Step 5 of the base case, $v \in \TC(w)$ and there is $\chi \in \TC(\eta_w)$ such that $M, v \vDash \chi$. Since every constructed world is actually constructed by a dpc-canonical formula that wholly occurs in $\gamma$, then by Proposition \ref{prop: ShareCommonKnowledge}, $\chi \in \TC(\eta_w) = \TC(\gamma) = \TC(\eta_u)$.
\end{itemize}
Therefore, for every $v \in \TC(u)$, there is a mintern $\chi \in \TC(\eta_u)$ such that $M, v \vDash \chi$. 

Therefore, for every constructed world $u$, $M, u \vDash \nabla \TC(\eta_u)$. Thus, $T_{\C B}$ is $R_{\C B}(\delta)^{\uparrow k}$-complete. 

Hence, the Construction Lemma for $\san{S5}_n \F{DPC}$ is proved.

\end{proof}


\bibliography{KexuWang}

\end{document}